\crefname{appsec}{Appendix}{Appendices} 
\newcommand{\id}{\mathrm{Id}}
\newcommand{\bra}[1]{\langle #1 \vert}
\newcommand{\ket}[1]{\vert #1 \rangle}
\newcommand{\scalprod}[2]{\langle #1 \vert #2 \rangle}
\newcommand{\Scalprod}[2]{\left \langle #1 \middle\vert #2 \right\rangle}
\newcommand{\tr}{\mathop{\mathrm{Tr}}}
\newcommand{\norm}[1]{\left \|#1 \right \|}
\newtheorem{theorem}{Theorem}
\newtheorem{prop}[theorem]{Proposition}
\newtheorem{lemma}[theorem]{Lemma}
\newtheorem{corollary}{Corollary}[theorem]
\theoremstyle{definition}
\newtheorem{definition}{Definition}
\def\d{0.15}
\def\dd{0.07}
\def\ddd{0.1}
\def\rad{0.04}
\tikzset{%
  state/.pic={%
      \draw (\d,\d) rectangle (1-\d,1-\d);
      \foreach \x in {(\d,\d), (\d,1-\d), (1-\d,\d), (1-\d,1-\d)}{
        \filldraw[black] \x circle (\rad);
      }
  }
}
\tikzset{
  pepstensor/.pic={
    \draw (-0.5,0)--(0.5,0);
    \draw (0,-0.5)--(0,0.5);
    \draw[fill=white] (-0.2,-0.2) rectangle (0.2,0.2); 
    \draw (0,0,0)--++(0,0,-0.75);
  }
}
\tikzset{
  pics/mpotensor/.style args={#1}{ 
    code={%
      \draw (0,-0.5)--(0,0.5);
      \draw (-0.5,0)--(0.5,0);
      \draw[fill=#1] (0,0) circle (0.25);
      \node[anchor=south west] at (0.05,0.1) {\tikzpictext};
    }
  },
  pics/mpotensor/.default=white
}
\tikzset{
  boundarympo/.pic={
    \draw (0,-0.3)--(0,0.3);
    \draw[black] (-0.5,0)--(0.5,0);
    \draw[black,fill=white] (0,0) circle (0.1);
    \node[black] at (0.0,0.0) {\tikzpictext};
  }
}
\tikzset{
  reduction/.pic={
    \draw (0,0)--(0.25,0);
    \draw[rounded corners=3pt] (0.25,-0.75) rectangle (0.5,0.75);
    \draw (0.5,-0.5) -- (0.75,-0.5);
    \draw (0.5,0.5) --(0.75,0.5);
  }
}
\tikzset{
  longreduction/.pic={
    \draw (0,-0.5)--(0.25,-0.5);
    \draw[rounded corners=3pt] (0.25,-1.25) rectangle (0.5,0.75);
    \draw (0.5,-1) -- (0.75,-1);
    \draw (0.5,0.5) --(0.75,0.5);
  }
}
\tikzset{
  nilpotent/.pic={
    \draw (0,-1)--(0,1);
    \draw (-0.5,-0.5)--(0.5,-0.5);
    \draw (-0.5,0.5)--(0.5,0.5);
    \draw[fill=white,rounded corners=3pt] (-0.15,-0.75) rectangle (0.15,0.75);
  }
}
\tikzset{
  pics/mps/.style args={#1}{ 
    code={
      \begin{scope}[rotate=#1]
        \draw (-0.5,0)--(0.5,0);
        \draw (0,0)--(0,0.5);
        \draw[fill=white] (-0.2,-0.2) rectangle (0.2,0.2); 
      \end{scope}
      \node[anchor=south west, font=\scriptsize] at (-0.55,0.1) {\tikzpictext};
    }
  },
  pics/mps/.default=0
}
\tikzset{
  mpsgauge/.pic={%
    \draw (-0.5,0)--(0.5,0);
    \draw[fill=white] (0,0) circle (0.2);
    \node[anchor=south,font=\scriptsize] at (0,0.15) {\tikzpictext};
  }
}
\def\states[#1] (#2,#3) (#4,#5){%
  \foreach \x in {#2,...,#4}
    \foreach \y in {#3,...,#5}
      \pic[#1] at (\x,\y) {state};
}
\def\truncstates[#1] (#2,#3) (#4,#5){%
  \begin{scope}
      \clip (#2+0.5,#3+0.5) rectangle (#4+0.5,#5+0.5);
      \foreach \x in {#2,...,#4}
      \foreach \y in {#3,...,#5}
      \pic[#1] at (\x,\y) {state};
  \end{scope}
}
\def\truncstateswo[#1,#2] (#3,#4) (#5,#6){%
  \begin{scope}
    \clip (#3+0.5,#4+0.5) rectangle (#5+0.5,#6+0.5);
    \foreach \x in {#3,...,#5}
      \foreach \y in {#4,...,#6}{
        \pic[#1] at (\x,\y) {state};
        \draw[#2] (\x,\y) circle (0.35);
      }
  \end{scope}
}
\tikzset{
  halfstate/.pic={
    \draw (0,0)--(-0.15,0);
    \draw[rounded corners=2pt] (-0.3,-0.5+\d-\dd) rectangle (-0.15,0.5-\d+\dd);
    \draw (-0.3,-0.5+\d)--(-0.5,-0.5+\d)--(-0.5,0.5-\d)--(-0.3,0.5-\d);
    \filldraw[black] (-0.5,-0.5+\d) circle (\rad);
    \filldraw[black] (-0.5,0.5-\d) circle (\rad);    
  }
}
\tikzset{every picture/.style={baseline={([yshift=-.5ex]current bounding box.center)}, scale=1.2,transform shape}}
\begin{document}

\title{A generalization of the injectivity condition for Projected Entangled Pair States}
\author{Andras Molnar}
\author{Yimin Ge}
\author{Norbert Schuch}
\author{J. Ignacio Cirac}
\affiliation{Max-Planck-Institut für Quantenoptik, Hans-Kopfermann-Str. 1, D-85748 Garching, Germany}

\makeatletter
\hypersetup{pdftitle={\@title},pdfauthor={Andras Molnar, Yimin Ge, Norbert Schuch, J. Ignacio Cirac}}
\makeatother

\begin{abstract}
  We introduce a family of tensor network states that we term semi-injective Projected Entangled-Pair States (PEPS). They extend the class of injective PEPS and include other states, like the ground states of the AKLT and the CZX models in square lattices. We construct parent Hamiltonians for which semi-injective PEPS are unique ground states. We also determine the necessary and sufficient conditions for two tensors to generate the same family of such states in two spatial dimensions. Using this result, we show that the third cohomology labeling of Symmetry Protected Topological phases extends to semi-injective PEPS.
\end{abstract}

\maketitle

\section{Introduction}

Tensor Network States (TNS) are expected to approximate ground states of local Hamiltonians well~\cite{Hastings2006a,Arad2013,Molnar2015}. Their local description in terms of simple tensors makes them suitable for both numerical and analytical investigations. First, this local structure enables calculations in large or infinite systems. In fact, the Density Matrix Renormalization Group (DMRG) algorithm~\cite{White1992}, which proved successful in simulating one-dimensional systems, can be re-expressed in terms of Matrix Product States (MPS)~\cite{Ostlund1995,Verstraete2004d}, the simplest TNS. This connection also motivated a provably efficient algorithm to find the ground state of a gapped one-dimensional local Hamiltonian~\cite{Arad2016}. Algorithms based on higher-dimensional generalizations of MPS, Projected Entangled Pair States
(PEPS~\cite{Verstraete2004,Jordan2007}), are now also giving the best known numerical results for certain Hamiltonians in two dimensions (see, for example, Ref.~\onlinecite{Zheng2016,Corboz2014,Corboz2011b}). Second, TNS are useful for creating and analyzing exactly solvable
models with translation symmetry. Indeed, paradigmatic wavefunctions
appearing in different areas of research have a very simple PEPS
description, where one can generate a whole family of many-body states by
contracting, according to a given geometry, the so-called auxiliary indices
of $N$ copies of a single tensor. In two-dimensional systems this
includes, for instance, the Cluster
State~\cite{Briegel2001,Verstraete2003} underlying measurement based
computation, the rotationally invariant spin-liquid
AKLT~\cite{Affleck1988,Affleck1987} and RVB~\cite{Schuch2012} states, and
other chiral~\cite{Yang2014,Poilblanc2015} and non-chiral
states~\cite{Schuch2010,Buerschaper2014,Sahinoglu2014,Bultinck2015} embodying topological
order. In particular, PEPS encompass all known non-chiral topological
orders~\cite{Levin2005}. All these states allow for the construction of
local parent Hamiltonians and, by making use of their local description,
the ground space structure and the behavior of the low-energy excitations
can be fully analyzed.

In the last years, the theory of TNS has been considerably developed. In
particular, in one dimension a general structural theory of MPS is known.
First, the ``fundamental theorem of
MPS''~\cite{Perez-Garcia2007,Cirac2017} states that any two tensors
generating the same family of wavefunctions can always be related by a
``gauge'' transformation acting on the auxiliary indices of each tensor
independently. This result allowed for the classification of
symmetry-protected topological (SPT) phases in one
dimension~\cite{Chen2010a,Schuch2011}. Indeed, let us consider an MPS
invariant under a symmetry group $G$. Then, the local gauge
transformations which relate the tensor generating the MPS and that
obtained by a symmetry action form themselves a \emph{projective}
representation of $G$~\cite{Sanz2009,Pollmann2010}. The equivalence
classes formed by those projective representations, which are labelled by
the second group cohomology ${H}^2(G,{U}(1))$, thus
characterize the different SPT phases under the action of $G$. Second, for
any MPS there exists a systematic way of constructing parent Hamiltonians
with a controlled ground space~\cite{Fannes1992,Nachtergaele1996,Perez-Garcia2007}. All this
renders MPS a general framework for the study of solvable models and the
classification of phases in one dimension.

In two dimensions, considerable progress has been made in understanding
the description of topologically ordered phases in the PEPS
framework~\cite{Schuch2010,Buerschaper2014,Sahinoglu2014,Bultinck2015,Gu2008,Buerschaper2008}.
However, a general structural theory of PEPS and, in particular, a
``fundamental theorem of PEPS'', is only known for a specific class,
termed ``injective''~\cite{Perez-Garcia2010,Fannes1992}. Those are PEPS
whose generating tensor, when considered as a map from the auxiliary to
the physical indices, can be inverted. While formally, any random PEPS
(after blocking few spins) is injective, many relevant examples such as
the square lattice AKLT model~\cite{Affleck1987}, the 
RVB~\cite{Anderson1973,Verstraete2006a} state, wavefunctions obtained
from classical statistical mechanics models~\cite{Verstraete2006a}, 
as well as all topologically ordered states, do not have this property. Their
lack of injectivity prevents us from
understanding their behavior under symmetries and, at the same time, makes
the canonical construction of parent Hamiltonians with unique ground
states for injective PEPS inapplicable.

In order to analyze SPT phases in two dimensions, one might thus
tentatively restrict to injective PEPS and apply the fundamental theorem, which yet again yields a projective local symmetry action on the auxiliary indices. However, in two dimensions the classification of projective representations in terms of group cohomology is not stable under blocking, and thus cannot be used to label phases under symmetries unless translational invariance is imposed~\cite{Chen2011a}. This is remedied by the CZX model
proposed by Chen \emph{et al.}~\cite{Chen2011a}. It is made up from a
\emph{non-injective} PEPS with a corresponding symmetry action, in a way
that it exhibits non-trivial symmetry invariants which are insensitive to
blocking.  Specifically, the symmetry action on the auxiliary indices at the boundary of any region is given by Matrix Product Unitary Operators (MPUOs). Those can be labeled by elements of the third cohomology group $H^3(G,U(1))$ of the symmetry group $G$, and therefore those elements are expected to label the different SPT phases\cite{Chen2011a,Williamson2014}.

In this paper, we introduce \emph{semi-injective} PEPS and develop the
structure theory for them. They significantly generalize injective PEPS
and include, among others, the square-lattice AKLT model, all wavefunctions based on classical models, as well as CZX-like states. Our central result is a ``fundamental theorem of semi-injective PEPS'' which states that any two tensors generating the same semi-injective PEPS are related by an invertible Matrix Product Operator (MPO) acting on their auxiliary indices. For semi-injective PEPS with an on-site symmetry, the corresponding MPOs form a representation of the group. We give a general and fully rigorous proof, based on the arguments of Chen \emph{et al.}~\cite{Chen2011a}, that these MPO representations are labeled by the third cohomology group ${H}^3(G,\mathbb
C^*)={H}^3(G,{U}(1))$, suggesting that this labels SPT
phases for all semi-injective PEPS, including those away from fixed point
wavefunctions~\cite{Gu2009} or with non-unitary symmetries. We further
show that symmetries of semi-injective PEPS must have a special two-layer
structure. This, by itself, enables the definition of an MPO acting on the
physical indices alone, and thus allows one to assign another label
${H}^3(G,\mathbb C^*)$ to the symmetry action, which we show to coincide
with that of the MPO acting on the boundary. Therefore, the SPT labeling
is  just as much a property of the physical symmetry itself as it is of
the boundary, and can in fact be directly inferred by analyzing the
structure of the physical symmetry action, without needing to invoke the
underlying PEPS. As a corollary, this implies that the ${H}^3$
label of the MPO action is well-defined and, in particular, the same on
the horizontal and vertical boundaries, also in the absence of rotational
symmetry. Furthermore, we also provide two different construction for
parent Hamiltonians with unique ground states.

The paper is structured as follows.  In \cref{sec:formalism} we introduce
the formalism, and define semi-injective PEPS.  In \cref{sec:results}, we
give an overview of the main results of the paper.  Readers who are
interested only in the results rather than the proofs might thus restrict
to \cref{sec:formalism,sec:results}.  In
\cref{sec:hamiltonian},  we discuss parent Hamiltonian for semi-injective
PEPS.  In \cref{sec:mps}, we summarize central results from the structure
theory of MPS needed for the remainder of the paper.  In
\cref{sec:canonicalform}, we develop the structure theory of
semi-injective PEPS, i.e., we give a local characterization of when two
semi-injective PEPS generate the same state.  In \cref{sec:spt}, we apply
this to characterize symmetric semi-injective PEPS, yielding the
characterization in terms of the third cohomology group.

\section{Formalism}\label{sec:formalism}

In this section we introduce MPS, PEPS and the graphical calculus commonly used in the field of tensor networks (TNs). We modify the standard notations in order to be able to represent TNs that are concatenations of several layers of two-dimensional TNS. Using this notation, we define semi-injective PEPS. We show that this class of PEPS contains all injective PEPS as well as some examples that are not known to have injective PEPS description.

\subsection{Notation}

Translationally invariant MPS are defined in terms of a single rank-three tensor $A\in \mathbb{C}^D\otimes\mathbb{C}^D\otimes\mathbb{C}^d$, $A = \sum_{\alpha, \beta, i} A_{\alpha \beta}^i \ket{\alpha}\bra{\beta} \otimes \ket{i} = \sum_i A^i \otimes\ket{i}$. The corresponding state on $n$ particles is 
\begin{equation}
\ket{V_n(A)} =\sum_{i}\tr\{A^{i_1} \dots A^{i_n}\}\ket{i_1 \dots i_n} \ .
\end{equation} 
The tensor and the corresponding state can be represented graphically as follows. Each tensor is represented with a square with lines attached to it.  The number of lines connected to the rectangle is the rank of the tensor and each line represents one index. For example, the single MPS tensor is represented as:
\begin{equation}
  \begin{tikzpicture}
    \pic[pic text = $A$] at (0,0) {mps};
  \end{tikzpicture} \ .
\end{equation}
Tensor contractions are depicted by joining the lines corresponding to the indices contracted. For example, the contraction of two MPS tensors is
\begin{equation}
  \sum_{\alpha\beta\gamma i j} A^i_{\alpha \beta} A^j_{\beta\gamma} \ket{\alpha}\bra{\gamma} \otimes\ket{ij} = \ 
  \begin{tikzpicture}[baseline=-1mm]
    \pic[pic text = $A$] at (0,0) {mps};
    \pic[pic text = $A$] at (1,0) {mps};
  \end{tikzpicture} \ .
\end{equation}
Same way, the MPS can be depicted as:
\begin{equation}
  \ket{V_n(A)} = \sum_{i}\tr\{A^{i_1} \dots A^{i_n}\}\ket{i_1 \dots i_n} = \ 
  \begin{tikzpicture}[baseline=-0.1cm]
    \node at (2,0) {$\dots$};
    \foreach \x in {0,1,3}{
      \pic[pic text=$A$] at (\x, 0) {mps};
    }
    \draw (-0.5,0)--(-0.5,-0.5)--(3.5,-0.5)--(3.5,0);
  \end{tikzpicture} \ .
\end{equation} 
We refer to the contracted legs as \emph{bonds} or \emph{virtual indices}, $D$ as the \emph{bond dimension} of the MPS tensor $A$,  the uncontracted leg as \emph{physical index}, and $d$ as the \emph{physical dimension} of $A$.

We will define PEPS now as generalizations of MPS. We will consider a square lattice, although other geometries can also be used. Take a rank-five tensor $B$:
\begin{equation}
  B = \ 
  \begin{tikzpicture}
    \pic at (0,0) {pepstensor};
  \end{tikzpicture} \ .
\end{equation}
Consider an $n\times m $ rectangular grid with periodic boundary conditions (that is, on a torus). PEPS is defined then by placing the tensor $B$ at every lattice point and contracting the neighboring  tensors:
\begin{equation}
  V_{n,m}(B) = \ 
  \begin{tikzpicture}
    \foreach \i in {0,1,2,3} \foreach \j in {0,1,2} \pic at (\i,\j) {pepstensor};
    \foreach \i/\j in {-1/1,4/1,1.5/-0.75,1.5/2.75} \node at (\i,\j) {$\dots$};
  \end{tikzpicture} \ .
\end{equation}
An equivalent description is the following. Place maximally entangled pair states on the edges of the grid. These particles are referred to as virtual particles. At every lattice point, act with an operator on the four virtual particles closest to the lattice point:
\begin{equation}
  V_{n,m}(B) = \ 
  \begin{tikzpicture}
    \begin{scope}
      \clip (0.42,0.42) rectangle (3.58,3.58);
      \foreach \x in {0,1,2,3} \foreach \y in {0,1,2,3}{
        \draw (\x+0.2,\y)--++(0.6,0);
        \filldraw (\x+0.2,\y) circle (0.05);
        \filldraw (\x+0.8,\y) circle (0.05);
        \draw (\x,\y+0.2)--++(0,0.6);
        \filldraw (\x,\y+0.2) circle (0.05);
        \filldraw (\x,\y+0.8) circle (0.05);
        \draw[red] (\x,\y) circle (0.4);
      }
    \end{scope}
    \foreach \i/\j in {0/2,4/2,2/0,2/4} \node at (\i,\j) {$\dots$};
  \end{tikzpicture} \ .
\end{equation}
Here each dot represents a virtual particle, the lines connecting them represent that they are in an entangled (here the maximally entangled) state. The red circles depict the operators acting on the four virtual particles. We call a PEPS \emph{injective} if these operators are injective maps.

In the  following, we use this notation when drawing tensor networks in two dimension. For example, a four-partite state will be depicted as
\begin{equation}
\begin{tikzpicture}
  \pic[green] at (0,0) {state};
\end{tikzpicture} \ .
\end{equation}
This four-partite state can equally be thought of as a non-translationally invariant MPS on four sites. Then each corner depicts an MPS tensor. Operators are depicted as circles or rectangles.  For example,
\begin{equation}
  \begin{tikzpicture}
    \truncstateswo[green,red] (0,0) (1,1);
  \end{tikzpicture}
\end{equation}
depicts a four-partite operator acting on the physical indices (black points) of four MPS tensors. In certain cases, we need more than two layers of tensors. In this case the upper layer is drawn bigger. For example a product of two operators acting on four MPS tensors are depicted as
\begin{equation}
\begin{tikzpicture}
\truncstateswo[green,red] (0,0) (1,1);
\draw[red, dashed] (1,1) circle (0.4);
\end{tikzpicture}.
\end{equation}
In this case the operator depicted as a solid circle acts first on the four MPS tensors, the dashed one acts second. 

We will often use a minimal rank decomposition of tensors. For convenience, we will refer to the operators in the decomposition as Schmidt vectors, even though the minimal rank decomposition is not necessarily a Schmidt decomposition, as we don't require orthogonality. For example, a minimal rank decomposition of a four-partite operator acting on four MPS tensors will be depicted as:
\begin{equation}
\begin{tikzpicture}[scale=1.5, transform shape]
  \def\d{0.2}
  \truncstates[green] (0,0) (1,1);
  \draw[red] (1.1,0.65) rectangle (1.3, 1.35);
  \draw[red] (0.7,0.65) rectangle (0.9, 1.35);
  \draw[red] (0.9,1) -- (1.1,1);
  \end{tikzpicture}.
\end{equation}
The Schmidt vectors of a four-partite state $\ket{\phi}$ can be related to its MPS description. We will therefore depict the minimal rank decomposition as 
\begin{equation}\label{eq:schmidt}
  \begin{tikzpicture}
    \pic[green] at (0,0) {state};
  \end{tikzpicture} = \ 
  \begin{tikzpicture}
    \pic[green] at (0,0) {halfstate};
    \pic[green,rotate=180] at (0,0) {halfstate};
  \end{tikzpicture} \ ,
\end{equation}
where the Schmidt vectors are denoted as:
\begin{equation}
\begin{tikzpicture}
\pic[green,rotate=90] at (0,0) {halfstate};
\end{tikzpicture}\ .
\end{equation}

\subsection{Semi-injective PEPS}\label{sec:defn}

In this section we define semi-injective PEPS. We show that this class contains all injective PEPS. Moreover, we provide examples that are not known to admit an injective PEPS description, yet they can be written as semi-injective PEPS.

\begin{definition}[Semi-injective PEPS] Let $\ket{\phi}$ be a
four-partite state with full rank reduced densities at every site, and let $O$ be
an invertible operator. The \emph{semi-injective} PEPS
$\ket{\Psi_{n\times m}(\phi, O)}$ 
is defined as
  \begin{equation}
  \ket{\Psi_{n\times m}(\phi, O)} = 
  \begin{tikzpicture}
  \truncstateswo[green,red] (0,0) (4,3)
  \end{tikzpicture}\ ,
  \end{equation}
  where the green rectangle is $\ket{\phi}$ and the red circle is $O$, and the state is defined on a torus with $n\times m$ copies of $\ket\phi$. We will often drop  $\ket{\phi}$ and $O$ and the indices $n,m$  from the notation.
\end{definition}

Note that the full rank assumption does not affect which states can be described as semi-injective PEPS, it is only needed to avoid unnecessary degrees of freedom in the operator $O$. 

These states can be written as a PEPS, but that PEPS is  in general not  injective. For example, a PEPS tensor generating the same state is 
\begin{equation}\label{eq:pepstensor}
\begin{tikzpicture}
\truncstateswo[green,red] (0,0) (1,1);
\end{tikzpicture}
\end{equation}
with an arbitrary (non necessarily translational invariant) MPS
decomposition of $\ket{\phi}$.\footnote{TN states of this form have been
used in other contexts, see e.g.\ Ref.~\onlinecite{Xie2014}.}

In the following we show that these states include injective PEPS as well as all the examples mentioned above.

\paragraph{Injective PEPS.} In this case the invertible operator is the PEPS tensor, and the four-partite state consists of two maximally entangled pairs (and a one-dimensional particle):
\begin{equation}
\begin{tikzpicture}
\clip (0.42,0.42) rectangle (3.58,3.58);
\foreach \x in {0,1,2,3} \foreach \y in {0,1,2,3}{
  \draw (\x+0.2,\y)--++(0.6,0);
  \filldraw (\x+0.2,\y) circle (0.05);
  \filldraw (\x+0.8,\y) circle (0.05);
  \draw (\x,\y+0.2)--++(0,0.6);
  \filldraw (\x,\y+0.2) circle (0.05);
  \filldraw (\x,\y+0.8) circle (0.05);
  \draw[red] (\x,\y) circle (0.4);
}
\end{tikzpicture} = 
\begin{tikzpicture}
  \truncstateswo[green,red] (0,0) (3,3);
\end{tikzpicture} 
\end{equation}
where the four-partite states are (the fourth particle is a scalar):
\begin{equation}
\begin{tikzpicture}
  \pic[green] at (0,0) {state};
\end{tikzpicture} = 
\begin{tikzpicture}[baseline=0.4cm]
\draw (0,0.2) -- (0,0.8);
\draw (0.2,0) -- (0.8,0); 
\foreach \u in {(0,0.2),(0,0.8),(0.2,0),(0.8,0)}
\filldraw \u circle (0.05);
\end{tikzpicture}
\end{equation}

\paragraph{The CZX model} The CZX model readily admits semi-injective PEPS form: the four-partite states are  GHZ states $\ket{0000}+\ket{1111}$, whereas the invertible operators are unitary operators $U_{CZX} = X^{\otimes 4} \cdot \prod_{\langle ij \rangle} CZ_{ij}$.

\paragraph{Purification of thermal states.}

Consider a commuting nearest neighbor Hamiltonian on a square lattice and the following purification of its Gibbs state:
\begin{equation}
  \ket{\Psi} = \frac{1}{\sqrt{Z}} \left(e^{-\beta H/2}\otimes\id \right)\bigotimes_i \ket{\phi^+}_i \ ,
\end{equation}
where $\ket{\phi^+} = \sum_j \ket{jj}$ is a maximally entangled pair state, and $e^{-\beta H}\otimes\id$ acts non-trivially on one of the entangled pairs at every lattice site. This state 
admits a PEPS description: as the Hamiltonian terms are commuting, $e^{-\beta H/2}$ is a product of local operators. The PEPS tensors are then simply the product of the Schmidt vectors of these local operators acting on $\ket{\phi^+}$. This tensor does not become injective after blocking exactly because of the corners: after blocking, the operators lying entirely inside the blocked region can be inverted. Note that  applying invertible operators on the tensor doesn't change injectivity. This factorizes the tensor into a product of tensors on the boundary and tensors on the corners. The boundary is injective, while the corners are not: the Schmidt vectors of the Hamiltonian terms commute, therefore any antisymmetric state on the corner is mapped to zero.

Nevertheless, these states admit a semi-injective PEPS representation. Indeed, block $2\times2$ pairs of particles. The four-partite state in the semi-injective PEPS description consists of the four pairs of particles in the state
\begin{equation}
  \begin{tikzpicture}[baseline=0.3cm]
  \draw[green] (0,0) rectangle (0.8,0.8);
  \foreach \u in {0,0.8} \foreach \v in {0,0.8}
  \filldraw (\u,\v) circle (0.05);
  \end{tikzpicture} = 
  \begin{tikzpicture}[baseline=-0.1cm,scale=0.7]
  \draw[orange] (0.5,0.5) ellipse (0.7cm and 0.1cm); 
  \draw[orange] (0.5,-0.5) ellipse (0.7cm and 0.1cm); 
  \draw[red] (0,0) ellipse (0.2cm and 0.65 cm); 
  \draw[red] (1,0) ellipse (0.2cm and 0.65 cm); 
  \foreach \x in {(0,-0.5),(0,0.5),(1,-0.5),(1,0.5)}
  \draw[fill=black] \x circle (0.05cm);
  \end{tikzpicture} = 
  \left(\prod_{\langle ij\rangle} e^{-\beta h_{ij}}\right) \otimes \id^{\otimes 4} \cdot \bigotimes_{i=1}^4 \ket{\phi^+}_i\ ,
\end{equation}
where both $i$ and $j$ runs over the particles in one block and $\ket{\phi^+}$ is the maximally entangled state. The dots represent $\ket{\phi^+}$, while the ellipses $e^{-\beta h_{ij}}\otimes \id^{\otimes 2}$. The invertible operator is a product of $e^{-\beta h_{ij}}\otimes \id^{\otimes 2}$ on the $2\times 2$ block shifted by half a lattice constant in both directions:
\begin{equation}
  \begin{tikzpicture}[baseline=-0.1cm]
  \draw[red] (0,0) circle (0.4cm);
  \end{tikzpicture} = 
  \begin{tikzpicture}[baseline=-0.1cm,scale=0.7]
  \draw[orange] (0.5,0.5) ellipse (0.7cm and 0.1cm); 
  \draw[orange] (0.5,-0.5) ellipse (0.7cm and 0.1cm); 
  \draw[red] (0,0) ellipse (0.2cm and 0.65 cm); 
  \draw[red] (1,0) ellipse (0.2cm and 0.65 cm); 
  \end{tikzpicture} \ .
\end{equation}
With this convention, the state is written as an injective PEPS as follows:
\begin{equation}
\begin{tikzpicture}[scale=0.5]
\clip (0.42,0.42) rectangle (6.58,6.58);
\foreach \x in {0,1,2,3,4,5,6} \foreach \y in {0,1,2,3,4,5,6}{
  \draw[fill=black] (\x,\y) circle (0.05cm);
  \draw[orange] (\x+0.5,\y) ellipse (0.7cm and 0.1cm); 
  \draw[red] (\x,\y+0.5) ellipse (0.2cm and 0.65 cm); 
}
\end{tikzpicture} = 
\begin{tikzpicture}
  \truncstateswo[green,red] (0,0) (3,3);
\end{tikzpicture} 
\end{equation}

\paragraph{AKLT in two dimensions.}

The two-dimensional AKLT model is a spin-2 system which is constructed as follows: place a singlet $\ket{01}-\ket{10}$ on each edge of a square lattice. Then, at each vertex, project the four virtual qubits into the 5-dimensional symmetric  subspace:
\begin{equation}
\ket{\Psi_{AKLT}}=
 \begin{tikzpicture}
    \foreach \i in {0,...,3}{
      \draw[very thick, draw=blue] (-0.4,\i)--(3.4,\i);
      \draw[very thick, draw=blue] (\i,-0.4)--(\i,3.4);
    }
    \foreach \x in {-0.2,...,2.8}{
      \foreach \y in {0,...,3}{
        \draw[white,fill=white] (\x,\y)--+(0.2,0.2)--+(0.4,0)--+(0.2,-0.2)--(\x,\y);
      }
    }
    \foreach \x in {0,...,3}{
      \foreach \y in {0,...,3}{
        \node[circle, minimum height = 0.5mm, draw=black, fill, thick, inner sep=0pt] at (\x+0.2,\y) {};
        \node[circle, minimum height = 0.5mm, draw=black, fill, thick, inner sep=0pt] at (\x-0.2,\y) {};
        \node[circle, minimum height = 0.5mm, draw=black, fill, thick, inner sep=0pt] at (\x,\y+0.2) {};
        \node[circle, minimum height = 0.5mm, draw=black, fill, thick, inner sep=0pt] at (\x,\y-0.2) {};
      }
    }
    \foreach \x in {0,...,3}{
      \foreach \y in {0,...,3}{
        \draw[thick, draw=orange] (\x,\y) circle (0.3);
      }
    }
    
  \end{tikzpicture}
\end{equation}
Here the blue lines represent singlets, the orange circles projectors into the symmetric subspace. As any virtual boundary state which is anti-symmetric on the two qubits of any corner is in the kernel of the map after appropriate applications of single-qubit $Y$s,   the PEPS tensors describing this state cannot be injective, even after blocking. 

The two-dimensional AKLT admits a semi-injective PEPS description as follows:

\begin{equation}
\begin{tikzpicture}
    \foreach \i in {0,...,3}{
      \draw[draw=blue] (-0.4,\i)--(3.4,\i);
      \draw[draw=blue] (\i,-0.4)--(\i,3.4);
    }
     Decoration
    \foreach \x in {-0.2,...,2.8}{
      \foreach \y in {0,...,3}{
        \draw[white,fill=white] (\x,\y)--+(0.2,0.2)--+(0.4,0)--+(0.2,-0.2)--(\x,\y);
      }
    }
    \foreach \x in {0,...,3}{
      \foreach \y in {0,...,3}{
        \node[circle, minimum height = 0.5mm, draw=black, fill, thick, inner sep=0pt] at (\x+0.2,\y) {};
        \node[circle, minimum height = 0.5mm, draw=black, fill, thick, inner sep=0pt] at (\x-0.2,\y) {};
        \node[circle, minimum height = 0.5mm, draw=black, fill, thick, inner sep=0pt] at (\x,\y+0.2) {};
        \node[circle, minimum height = 0.5mm, draw=black, fill, thick, inner sep=0pt] at (\x,\y-0.2) {};
      }
    }
    \foreach \x in {0,...,3}{
      \foreach \y in {0,...,3}{
        \draw[thick, draw=orange] (\x,\y) circle (0.3);
      }
    }
    
  \end{tikzpicture}
  = 
  \begin{tikzpicture}
  	\truncstateswo[green,red] (0,0) (3,3);
  \end{tikzpicture},
\end{equation}
with the four-partite state
\begin{equation}
\begin{tikzpicture}
  \pic[green] at (0,0) {state};
\end{tikzpicture}
= 
\begin{tikzpicture}[scale=.8]
 \draw[thick,blue] (0,.2) -- (0,.8);
 \draw[thick,blue] (1,.2) -- (1,.8);
 \draw[thick,blue] (.2,0) -- (.8,0);
 \draw[thick,blue] (.2,1) -- (.8,1);
 
 \foreach \x in { .2,.8}
 {
	\draw[fill=black] (\x,0) circle (0.05cm);
	\draw[fill=black] (\x,1) circle (0.05cm);
	\draw[fill=black] (1,\x) circle (0.05cm);
	\draw[fill=black] (0,\x) circle (0.05cm);
 }
 \draw[orange,rotate around={-45:(.1,.1)}] (.1,.1) ellipse (.3 and 0.15);
 \draw[orange,rotate around={45:(.9,.1)}] (.9,.1) ellipse (.3 and 0.15);
 \draw[orange,rotate around={45:(.1,.9)}] (.1,.9) ellipse (.3 and 0.15);
 \draw[orange,rotate around={-45:(.9,.9)}] (.9,.9) ellipse (.3 and 0.15);
\end{tikzpicture},
\end{equation}
where the blue lines are singlets, and the orange ellipses are the projectors into the 3-dimensional symmetric subspace (the four-partite state can thus effectively be viewed as a state on four qutrits: the one-dimensional AKLT state on four particles), and
\begin{equation}\label{eq:AKLTmap}
\begin{tikzpicture}
\draw[red] (0,0) circle (0.4cm);
\end{tikzpicture}
=
\begin{tikzpicture}[scale=.8]
 \draw[thick,blue] (0,.2) -- (0,.8);
 \draw[thick,blue] (1,.2) -- (1,.8);
 \draw[thick,blue] (.2,0) -- (.8,0);
 \draw[thick,blue] (.2,1) -- (.8,1);
 
 \foreach \x in { .2,.8}
 {
	\draw[fill=black] (\x,0) circle (0.05cm);
	\draw[fill=black] (\x,1) circle (0.05cm);
	\draw[fill=black] (1,\x) circle (0.05cm);
	\draw[fill=black] (0,\x) circle (0.05cm);
 }
 
  \foreach \x in { -.2,1.2}
 {
	\draw[black] (\x,0) circle (0.05cm);
	\draw[black] (\x,1) circle (0.05cm);
	\draw[black] (1,\x) circle (0.05cm);
	\draw[black] (0,\x) circle (0.05cm);
 }

 \foreach \x in {0,1}
 {
 	\foreach \y in {0,1}
 	{
 		\draw[thick,orange] (\x,\y) circle (.4);
 	}
 }
\end{tikzpicture}
\end{equation}
which acts on the qubits represented by the hollow dots, restricted to the symmetric subspace at each corner. It can be verified that the rank of \eqref{eq:AKLTmap} is 81. Clearly, the image of the adjoint of \eqref{eq:AKLTmap} is contained in the subspace which is symmetric in the pairs of qubits at each corner. The dimension of the latter is also 81. Thus, \eqref{eq:AKLTmap} is invertible.

\section{Summary and results}\label{sec:results}

In this section, we give a summary of the results obtained in this work.
The detailed derivations of all these results are given in the subsequent
sections.  The results extend the properties known for injective PEPS to
semi-injective ones. First, we show how to construct local Hamiltonians
for which they are the unique ground states. Next, we answer the question
under which local conditions two semi-injective PEPS generate the same
state.  We then use this result to characterize symmetries in
semi-injective PEPS. We also find that the third cohomology
classification of SPT phases naturally extends to these states, and thus
these states might be suitable to capture the physics of SPT phases.   In
the following, we give a detailed description of the results.

Consider two semi-injective PEPS generated by $(\phi_A,O_A)$ and $(\phi_B,O_B)$. Suppose that on an $n\times m$ torus, they generate states that are proportional to each other:
\begin{equation}
  \begin{tikzpicture}
    \truncstateswo[blue,purple] (0,0) (4,3)
  \end{tikzpicture} = \mu_{n,m} \ 
  \begin{tikzpicture}
    \truncstateswo[green,orange, dashed] (0,0) (4,3)
  \end{tikzpicture},
\end{equation}
where the purple circle and the blue rectangle depicts $O_A$ and $\ket{\phi_A}$, while the orange dashed circle and the green rectangle depicts $O_B$ and $\ket{\phi_B}$ and $\mu_{n,m}\in\mathbb{C}$. Inverting $O_B$, we obtain
\begin{equation}\label{eq:setup0}
  \begin{tikzpicture}
    \truncstateswo[blue,red] (0,0) (4,3)
  \end{tikzpicture} = \mu_{n,m} \ 
  \begin{tikzpicture}
    \truncstates[green] (0,0) (4,3)
  \end{tikzpicture},
\end{equation}
where the red circle denotes the invertible operator $O = O_B^{-1} O_A$.

In this setup, we prove the following:
\begin{theorem} \label{thm:1summary}
  If \cref{eq:setup0} holds for some specific $n_0,m_0\geq 3$, then  for all $n,m\in\mathbb{N}$,
  \begin{enumerate}
  	\item \cref{eq:setup0} holds with $\mu_{n,m} = \mu^{nm}$. \label{item:propconst}
    \item The action of $O$ corresponds to an MPO acting on the boundary as follows:
Take a minimal rank decomposition of the four-partite states w.r.t. the vertical cut. That is, write 
      \begin{equation}
      \begin{tikzpicture}
        \pic[blue] at (0,0) {state};
      \end{tikzpicture} = \ 
      \begin{tikzpicture}
        \pic[blue] at (0,0) {halfstate};
        \pic[blue,rotate=180] at (0,0) {halfstate};
      \end{tikzpicture} \quad \text{and} \quad 
      \begin{tikzpicture}
        \pic[green] at (0,0) {state};
      \end{tikzpicture} = \ 
      \begin{tikzpicture}
        \pic[green] at (0,0) {halfstate};
        \pic[green,rotate=180] at (0,0) {halfstate};
      \end{tikzpicture} \ .  
      \end{equation}
      Then for the Schmidt vectors defined as above, the following holds: There are two MPO tensors $X$ and $Y$ such that 
        \begin{equation}
        \begin{tikzpicture}
        \scope
        \clip (0.5,-0.5-\d) rectangle (3.5,0.5+\d);
        \foreach \x in {0,...,3}{
          \pic[blue,rotate=90] at (\x,0.5+\d) {halfstate};
          \pic[blue,rotate=-90] at (\x,-0.5-\d) {halfstate};
          \draw[red] (\x+0.5,0) circle (0.3);
        }
        \endscope
        \node[font=\tiny,anchor=east] at (0.6,0) {$\dots$};
        \node[font=\tiny,anchor=west] at (3.4,0) {$\dots$};
        \end{tikzpicture}
        =
        \mu^n\
          \begin{tikzpicture}[green,font=\scriptsize]
          \foreach \x in {0,...,2}{
            \pic[rotate=90] at (\x,0.5+\d) {halfstate};
            \pic[rotate=-90] at (\x,-0.5-\d) {halfstate};
            \pic[pic text = $X$, anchor=south west] at (\x,0.8) {boundarympo};
            \pic[pic text = $Y$, anchor=north west] at (\x,-0.8) {boundarympo};
          }
          \node[black,font=\tiny,anchor=east] at (-0.5,0.8) {$\dots$};
          \node[black,font=\tiny,anchor=west] at (2.5,0.8) {$\dots$};
          \node[black,font=\tiny,anchor=east] at (-0.5,-0.8) {$\dots$};
          \node[black,font=\tiny,anchor=west] at (2.5,-0.8) {$\dots$};
          \end{tikzpicture} \ ,
        \end{equation}
        where $\mu\in\mathbb{C}$ is the proportionality constant from
Point~\ref{item:propconst} above,  $V_n(Y) = \left(V_n(X)\right)^{-1}$ for every size $n$, and both $X$ and $Y$ become injective after blocking two tensors.

    \item 
    The operator $O$ is a four-particle non-translationally invariant MPO 
%
%
%
 	with the property that cutting the MPO into two halves yields a minimal rank decomposition of $O$. \label{item:operatorMPO}
    \item The operator $O$ is a product of two-body invertible operators:
      \begin{equation}
        O = \left(O_{14}\otimes O_{23}\right) \cdot \left(O_{12} \otimes O_{34}\right) = \left(\tilde{O}_{12} \otimes \tilde{O}_{34}\right) \cdot \left(\tilde{O}_{14}\otimes \tilde{O}_{23}\right)\ ,
      \end{equation}
      where the particles are numbered clockwise from the upper left corner and $O_{ij}$ acts on particles $i$ and $j$.  Pictorially,
      \begin{equation}\label{eq:op_twolayer0}
        \begin{tikzpicture}[baseline=-0.1cm]
        \draw[red] (0,0) circle (0.5);
        \end{tikzpicture} =
        \begin{tikzpicture}[baseline=-0.1cm,rotate=90]
        \draw[red] (-0.5,-0.5) rectangle (-0.2,0.5);
        \draw[red] (0.2,-0.5) rectangle (0.5,0.5);
        \draw[red] (-0.6,-0.6) rectangle (0.6,-0.1);
        \draw[red] (-0.6,0.1) rectangle (0.6,0.6);      
        \end{tikzpicture} =
        \begin{tikzpicture}[baseline=-0.1cm]
        \draw[red] (-0.5,-0.5) rectangle (-0.2,0.5);
        \draw[red] (0.2,-0.5) rectangle (0.5,0.5);
        \draw[red] (-0.6,-0.6) rectangle (0.6,-0.1);
        \draw[red] (-0.6,0.1) rectangle (0.6,0.6);      
        \end{tikzpicture} \ .
      \end{equation}
  \end{enumerate}
\end{theorem}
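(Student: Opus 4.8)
The plan is to reduce the two-dimensional identity \eqref{eq:setup0} to one-dimensional statements about matrix product states and then invoke the fundamental theorem of MPS (\cref{sec:mps}) twice. The key structural observation is that, as written, \eqref{eq:setup0} already says that applying $O=O_B^{-1}O_A$ at every vertex to the \emph{product} state $\bigotimes_v\ket{\phi_A}$ returns $\mu_{n_0,m_0}$ times the \emph{product} state $\bigotimes_v\ket{\phi_B}$; the fact that the right-hand side carries no correlations between plaquettes is the input I would exploit throughout. I would keep Point~1 for the very end, since the local structure must be extracted from the single torus $n_0\times m_0$ first and only then re-assembled.

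Next I would block each column into one super-site, turning both sides into an MPS on a ring of $n_0$ sites. One checks from the geometry that each plaquette straddling a column boundary has its two left particles in one super-site and its two right particles in the next, so the horizontal bond is exactly the tensor product, over the $m_0$ plaquettes in a column gap, of the Schmidt spaces of the \emph{vertical} cut of $\phi$ — this is why the vertical-cut half-states appear in Point~2 — and every $O$ acts entirely within one super-site, so the blue side is a genuine MPS. Because the half-states of $\phi_A,\phi_B$ are injective (full-rank single-site reductions) and $O$ is invertible, a single column tensor is already injective on both sides; hence the injectivity length is one and a ring of length $n_0\ge 3$ is exactly what the fundamental theorem of MPS demands. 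Applying it, the blue and green column tensors agree up to an invertible gauge $Z$ on the vertical bond $(\text{Schmidt})^{\otimes m_0}$ and a per-column scalar.

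The crux is then to upgrade this global gauge $Z$ to a \emph{local} matrix product operator in the vertical direction, which is the content of Point~2. Here I would use the vertical translation invariance of the configuration: stacking plaquettes downward exhibits both column tensors as vertical MPS, and by vectorizing $Z$ and applying the fundamental theorem of MPS a second time, now along the vertical chain, $Z$ is forced into MPO form. Reading off the tensors acting on the top and bottom halves yields the boundary MPO tensors $X$ and $Y$; the requirement that the construction closes up consistently on the torus (the product state carries no net boundary action) forces $V_n(Y)=V_n(X)^{-1}$, and normality of $X,Y$ after blocking two tensors follows again from injectivity of the half-states together with invertibility of $O$. This double use of the one-dimensional theorem, with careful bookkeeping of which bonds are being gauged, is where I expect essentially all of the difficulty to lie.

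Finally, with the explicit MPO relation of Point~2 in hand, I would reconstruct $O$ itself. Viewing the four particles on which $O$ acts as a four-site chain, its MPO bonds are inherited from the boundary bonds of $X,Y$, giving Point~3 (a four-site MPO whose bipartition into two halves is a minimal-rank decomposition). To obtain the two-layer factorization of Point~4, $O=(O_{14}\otimes O_{23})(O_{12}\otimes O_{34})=(\tilde O_{12}\otimes\tilde O_{34})(\tilde O_{14}\otimes\tilde O_{23})$, I would perform the reconstruction along both the vertical and the horizontal cut and verify that the resulting edge operators, coupling horizontally- and vertically-adjacent plaquettes, commute in the claimed order; checking this commuting two-layer structure is the most delicate combinatorial step and, alongside the MPO-gauge argument, the principal obstacle. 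Once the local relation is established, tiling it over an arbitrary $n\times m$ torus reproves \eqref{eq:setup0} for all $n,m$, and the scalar satisfies $\mu_{n,m}=\lambda_m^{\,n}=\kappa_n^{\,m}$ from the horizontal and vertical reductions, so bi-additivity of $\log\mu_{n,m}$ gives $\mu_{n,m}=\mu^{nm}$, which is Point~1.
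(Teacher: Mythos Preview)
Your column-blocking and the first application of the fundamental theorem of MPS are exactly what the paper does for Point~1 (their \cref{prop:size_indep}); the injectivity-length-one argument is correct, and the paper in fact proves Point~1 \emph{first}, directly, rather than deferring it.

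The real gap is in your argument for Point~2. You write that ``by vectorizing $Z$ and applying the fundamental theorem of MPS a second time, now along the vertical chain, $Z$ is forced into MPO form.'' I do not see how this works. The fundamental theorem compares two MPS \emph{generating the same state} and produces a gauge on the virtual bond. Here the blue and green column tensors, viewed as vertical MPS, are \emph{not} equal as states; they are related by $Z\otimes Z^{-1}$ acting on the horizontal-bond indices, which now sit among the \emph{physical} indices of the vertical MPS. That is not the hypothesis of the fundamental theorem, and vectorizing $Z$ does not by itself produce an MPS equation to which the theorem applies. Some genuinely new idea is needed to pass from ``there exists a global gauge $X_n$'' to ``$X_n$ is an MPO with a size-independent tensor.''

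The paper's mechanism is different from a second gauge argument. Starting from the boundary relation \eqref{eq:boundary0}, one applies a \emph{translation-invariant product linear functional} to the lower half; this collapses $X_n^{-1}$ to a scalar and exhibits $X_n$ itself as a translation-invariant MPS whose tensor is independent of $n$. The delicate point, which your outline also misses, is that this product functional must be chosen so that the resulting MPS is nonzero for infinitely many $n$; the paper secures this by showing that a suitable transfer-matrix trace $\tr\{M_{a,b}^n\}$ is not identically zero (\cref{eq:tr_m}). One then still has to argue that the resulting MPO is \emph{normal} (a single block), which the paper does by decomposing $V_n(\tilde X)\otimes V_n(\tilde Y)$ into normal pieces and using \cref{prop:sum_power} to kill all but one. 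None of these steps is a second invocation of the fundamental theorem.

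Your sketch for Points~3 and~4 is also thinner than the actual argument. The MPO description of $O$ is not simply ``inherited'' from $X,Y$; the paper constructs it via the operator $Z=\left(O^{(13)}\otimes O^{(24)}\right)O^{-1}\left(O^{(12)}\otimes O^{(34)}\right)$ in \eqref{eq:O_mpo_decomp}, after first establishing that Schmidt vectors of $O$ and $O^{-1}$ compose to tensor products (\cref{eq:Oinv12}). The two-layer factorization then comes from a standalone structural lemma about MPOs whose inverse is also an MPO with the factorization property \eqref{eq:mpofactor1}--\eqref{eq:mpofactor2} (their \cref{thm:mpo_layers}), not from ``checking that edge operators commute.''
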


In \cref{sec:spt} we use these results to rederive the third cohomology classification of SPT phases within the framework of semi-injective PEPS. The setup in this case is as follows:

Let $G$ be a group, $O_g$ a faithful (not necessarily unitary) representation of $G$. Let $\ket{\phi}$ be a four-partite state with full rank one-particle reduced densities. Suppose $\forall g\in G$, $O_g$ is a symmetry of the semi-injective PEPS defined by $\ket{\phi}$ and $\id$:
\begin{equation}\label{eq:spt_setup0}
\begin{tikzpicture}
\truncstateswo[blue,red] (0,0) (4,3)
\end{tikzpicture} = \mu^{nm}(g) \ 
\begin{tikzpicture}
\truncstates[blue] (0,0) (4,3)
\end{tikzpicture}\ ,
\end{equation}
where the blue squares represent $\ket{\phi}$, the red operators $O_g$.

Note that this setup can readily be applied for unitary symmetries of semi-injective PEPS: let the semi-injective PEPS be defined by the four-partite state $\ket{\phi}$ and an invertible operator $A$. Let the unitary representation of the symmetry group $G$ be $U_g$. Then, by inverting $A$ in the symmetry condition, we arrive to \cref{eq:spt_setup0} with $O_g = A^{-1} U_g A$.

Within this setup, we prove that 
\begin{theorem}
  If \cref{eq:spt_setup0} holds for some $n,m\geq 3$, then
  \begin{enumerate}
  \item $g\mapsto \mu(g)$ is a one-dimensional representation of $G$.
  \item For every $g\in G$ there are two MPO tensors $X_g$ and $Y_g$ such that 
    \begin{equation} \label{eq:spt_boundary0}
    \begin{tikzpicture}
    \scope
    \clip (0.5,-0.5-\d) rectangle (3.5,0.5+\d);
    \foreach \x in {0,...,3}{
      \pic[blue,rotate=90] at (\x,0.5+\d) {halfstate};
      \pic[blue,rotate=-90] at (\x,-0.5-\d) {halfstate};
      \draw[red] (\x+0.5,0) circle (0.3);
    }
    \endscope
    \node[font=\tiny,anchor=east] at (0.6,0) {$\dots$};
    \node[font=\tiny,anchor=west] at (3.4,0) {$\dots$};
    \end{tikzpicture}
    =
    \mu^n(g)\
      \begin{tikzpicture}[blue,font=\scriptsize]
          \foreach \x in {0,...,2}{
            \pic[rotate=90] at (\x,0.5+\d) {halfstate};
            \pic[rotate=-90] at (\x,-0.5-\d) {halfstate};
            \pic[pic text = $X_g$, anchor=south west] at (\x,0.8) {boundarympo};
            \pic[pic text = $Y_g$, anchor=north west] at (\x,-0.8) {boundarympo};
          }
      \node[black,font=\tiny,anchor=east] at (-0.5,0.8) {$\dots$};
      \node[black,font=\tiny,anchor=west] at (2.5,0.8) {$\dots$};
      \node[black,font=\tiny,anchor=east] at (-0.5,-0.8) {$\dots$};
      \node[black,font=\tiny,anchor=west] at (2.5,-0.8) {$\dots$};
      \end{tikzpicture} \ ,
    \end{equation}
    and $V_n(Y_g) = \left(V_n(X_g)\right)^{-1}$ for all $n$. Moreover, $V_n(X_g)$ and $V_n(Y_g)$ form projective representations of $G$ with $V_n(X_g) V_n(X_h) = \lambda^n (g,h) V_n (X_g X_h)$ for a two-cocycle $\lambda$. In particular, $V_n(X_g) V_n(X_h)$ has only one block in its canonical form.
    \item There is a canonical way to assign an element from $H^3(G,\mathbb{C}^*)$ to the one-block MPO representation $g\mapsto X_g$.
    \item 
	$O_g$ has a four-particle non-translationally invariant MPO representation with tensors $O_g^{(1)},O_g^{(2)},O_g^{(3)},O_g^{(4)}$ in the sense of \cref{thm:1summary}, Point \ref{item:operatorMPO}, such that  the MPO $V_n(O_g^{(1)}O_g^{(2)}O_g^{(3)}O_g^{(4)})$ 
    forms a one-block projective MPO representation and its cohomology label coincides with that of the boundary. In particular, the MPO labels obtained from the vertical and horizontal boundaries are the same.
  \end{enumerate}
\end{theorem}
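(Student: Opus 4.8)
The plan is to read the symmetry hypothesis \eqref{eq:spt_setup0} as the special case of the setup of \cref{thm:1summary} obtained by taking $\phi_A=\phi_B=\phi$, $O_A=O_g$ and $O_B=\id$, so that the operator $O=O_B^{-1}O_A$ appearing in \eqref{eq:setup0} is exactly $O_g$. Every structural conclusion of \cref{thm:1summary} then applies to each fixed $g$; the real work is to assemble the resulting $g$-indexed data into group-representation objects and to track the scalar ambiguities, which is where the cohomology will live. For the first point, \cref{thm:1summary}, Point~\ref{item:propconst} immediately gives $\mu_{n,m}(g)=\mu(g)^{nm}$ for all $n,m$. Since $O$ is a representation we have $O_{gh}=O_gO_h$, so acting on the (nonzero) semi-injective PEPS yields $\mu(gh)^{nm}=\mu(g)^{nm}\mu(h)^{nm}$; setting $r=\mu(gh)/(\mu(g)\mu(h))$ gives $r^{nm}=1$ for all admissible sizes, and since $\{nm:n,m\ge 3\}$ contains coprime integers (e.g.\ $9$ and $16$) we get $r=1$, so $\mu$ is a one-dimensional representation of $G$.

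For the second point, applying the boundary statement of \cref{thm:1summary} to each $g$ produces MPO tensors $X_g,Y_g$, injective after blocking two tensors, with $V_n(Y_g)=V_n(X_g)^{-1}$ and the pushing-through relation \eqref{eq:spt_boundary0}. To obtain the projective-representation structure I would compose the symmetry for $h$ and then $g$: because $O_gO_h=O_{gh}$, the operator that \eqref{eq:spt_boundary0} produces on the top boundary can be computed in two ways, as $V_n(X_g)V_n(X_h)$ or via the tensor $X_{gh}$ associated to the product (written $X_gX_h$ in the statement), and these must agree as operators up to the scalar already fixed above. Since the Schmidt vectors form a minimal-rank decomposition, the associated boundary MPS is normal, so the fundamental theorem of MPS from \cref{sec:mps} determines the boundary MPO up to a gauge and a single \emph{per-site} scalar; this forces $V_n(X_g)V_n(X_h)=\lambda^n(g,h)\,V_n(X_{gh})$ with one factor $\lambda(g,h)\in\mathbb{C}^*$ per site. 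Associativity $(O_gO_h)O_k=O_g(O_hO_k)$ then yields the $2$-cocycle identity for $\lambda$. Finally, because the right-hand side is the MPO of $X_{gh}$, which is injective after blocking and hence normal, the stacked MPO $V_n(X_g)V_n(X_h)$ has a single block in its canonical form.

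The core of the argument, and the step I expect to be the main obstacle, is the third point: extracting a well-defined class in $H^3(G,\mathbb{C}^*)$ from the one-block MPO representation $g\mapsto X_g$. Here I would follow the strategy of Chen \emph{et al.}, made rigorous through the MPS structure theory. The one-block (normality) property lets the fundamental theorem of MPS supply \emph{reduction tensors} $W(g,h)$ on the virtual level which intertwine the stacked MPO of $X_g$ and $X_h$ with $X_{gh}$, and these are unique up to an overall scalar. Reducing the triple stack of $X_g,X_h,X_k$ in the two associated orders produces the same MPO but differs by a phase $\omega(g,h,k)\in\mathbb{C}^*$; the pentagon relation inherited from associativity of the group product forces $\omega$ to be a $3$-cocycle, while the residual rescaling freedom of the $W(g,h)$ changes $\omega$ by a coboundary, so its class in $H^3(G,\mathbb{C}^*)$ is well defined. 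The delicate points are establishing existence and essential uniqueness of the $W(g,h)$ directly from the structure theorems \emph{without} assuming unitarity of $X_g$ (recall $O_g$ need not be unitary), and checking carefully that the cocycle/coboundary bookkeeping is consistent with the per-site scalar $\lambda$.

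For the last point I would use \cref{thm:1summary}, Point~\ref{item:operatorMPO} together with the two-layer factorization: each $O_g$ has a four-particle MPO representation $O_g^{(1)},\dots,O_g^{(4)}$, and the representation property of $O$ makes the blocked physical MPO $V_n(O_g^{(1)}O_g^{(2)}O_g^{(3)}O_g^{(4)})$ a projective MPO representation, one-block by the same normality argument as above. To match the two labels I would exhibit the explicit intertwiner provided by the Schmidt-vector (boundary-MPS) map that turns the physical action $O_g$ into the boundary action $X_g$ in \eqref{eq:spt_boundary0}: this map is injective, so it carries the physical reduction tensors to the boundary reduction tensors $W(g,h)$ and hence identifies the two $3$-cocycles up to a coboundary, proving that the physical and boundary $H^3$ labels coincide. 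The stated corollary that the horizontal and vertical boundary labels agree is then immediate, since the physical-MPO label is defined without reference to the direction of the cut and equals each of them.
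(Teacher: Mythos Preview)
Your overall strategy matches the paper's closely: you specialize \cref{thm:1summary} to $\phi_A=\phi_B=\phi$, $O=O_g$ for each fixed $g$, then use the uniqueness statement (\cref{cor:boundary_unique} in the paper) to force $V_n(X_g)V_n(X_h)=\lambda^n(g,h)V_n(X_{gh})$, and you correctly identify that the $H^3$ label is extracted by choosing reductions $V(g,h),W(g,h)$ from the stacked MPO $X_g\cdot X_h$ to $X_{gh}$, comparing the two associated reductions of a triple product, and checking that rescaling the reductions changes the resulting $3$-cocycle by a coboundary. This is exactly the paper's argument in \cref{sec:cohomology_label}, including the care you flag about non-unitarity, which the paper handles through the reduction machinery of \cref{thm:reductionexist} and \cref{thm:uniqueness}.

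There is one genuine gap in your last step. You assert that ``the physical-MPO label is defined without reference to the direction of the cut,'' and conclude that horizontal and vertical boundary labels agree because both equal the physical one. But the physical MPO you write down, $V_n(O_g^{(1)}O_g^{(2)}O_g^{(3)}O_g^{(4)})$, \emph{does} depend on where you open the four-site ring: the Schmidt-vector map in \eqref{eq:spt_boundary0} for the horizontal cut identifies the horizontal boundary label with the label of $V_n(O_g^{(1)}O_g^{(2)}O_g^{(3)}O_g^{(4)})$, while the analogous map for the vertical cut identifies the vertical boundary label with the label of the \emph{cyclically shifted} MPO $V_n(O_g^{(4)}O_g^{(1)}O_g^{(2)}O_g^{(3)})$. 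These are a priori different one-block MPO representations, and nothing you have said forces their $H^3$ labels to coincide. The paper closes this gap with a separate statement (\cref{prop:alternating_labeling}): if $V_n(X_gY_g)$ is a one-block projective MPO representation, then so is $V_n(Y_gX_g)$, and they carry the same $H^3$ label. The proof is nontrivial---it uses \cref{prop:alternating_reduction} to build ``half-shifted'' reductions on chains of odd length and then compares the two reduction orders to show the ratio of the two $3$-cocycles is a coboundary. You need either this argument or an equivalent one; without it the horizontal--vertical equality does not follow.
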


\section{Parent Hamiltonian}\label{sec:hamiltonian}

In this section, we prove that semi-injective PEPS are unique ground states of their parent Hamiltonian. Let us consider a semi-injective PEPS $\ket{\psi}$.  Corresponding to this state, we consider two parent Hamiltonian constructions. First, one can obtain the usual parent Hamiltonian by writing the state as a PEPS with the tensors in \cref{eq:pepstensor}. That is, consider a $2\times 2$ patch of the tensors. Let $S$ be the subspace generated by the tensors with arbitrary boundary conditions:
\begin{equation}\label{eq:Sdef}
  S = \left\{ 
  \begin{tikzpicture}
    \truncstateswo[green,red] (0,0) (2,2)
    \draw (0.5,0.5) rectangle (2.5,2.5);
    \draw (0,0) rectangle (3,3);
    \node at (2.75,1.5) {$\lambda$};
  \end{tikzpicture}\ 
  \vert 
  \lambda \in \mathbb{C}^{D_v^4 D_h^4}\right\}
\end{equation}
The Hamiltonian term $\tilde{h}_i$ centered around the plaquette state at position $i$ is just the projector onto $S^\perp$ and the Hamiltonian $\tilde H$ is the sum over all positions of these projectors:
\begin{align}
  \tilde{h}_i &= \mathrm{Proj}\left(S^\perp\right)_i \otimes \id\\
  \tilde{H} &= \sum_i \tilde{h}_i.
\end{align}

The second construction is  to invert the operators $O$ around a plaquette state at site $i$ and project $\phi_i$ to zero:
\begin{equation} 
h_i = \left(\prod_{\langle ji\rangle} O_j^{-1}\right)^\dagger P_i \left(\prod_{\langle ji\rangle} O_j^{-1}\right),
\end{equation}
where $j$ runs over all positions of operators that (partially) act on the plaquette state at position $i$ and the projector $P_i$ is the projector to the orthocomplement of $\mathbb{C}\ket{\phi}$: $P_i = (\id_i - \ket{\phi}_i\bra{\phi})\otimes \id$. Then the Hamiltonian $H$ is the sum of the different terms:
\begin{equation}
  H = \sum_i h_i.
\end{equation}
\begin{prop}\label{prop:parent_ham}
  The semi-injective PEPS $\ket{\psi}$ is the unique ground state of both $H$ and $\tilde{H}$ at all system sizes.
\end{prop}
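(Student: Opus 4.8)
\emph{Plan.} Both $H$ and $\tilde H$ are sums of positive semidefinite terms, so the plan is to show they are frustration-free with $\ket\psi$ in every local kernel, whence the ground-state energy is zero and the ground space equals the intersection $\bigcap_i\ker h_i$ (respectively $\bigcap_i\ker\tilde h_i$); it then remains to prove this intersection is one-dimensional. That $\ket\psi$ is annihilated by each term is immediate: on a plaquette the patch of $\ket\psi$ lies in $S$ for a suitable boundary tensor $\lambda$, so $\tilde h_i\ket\psi=0$, while inverting the four operators around plaquette $i$ returns $\ket\phi$ on that plaquette, which $P_i$ kills, so $h_i\ket\psi=0$.

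For the uniqueness of the ground state of $H$ I would conjugate by the global operator $\mathcal O=\bigotimes_v O_v$, the product of all vertex operators. Since each physical particle is a corner of exactly one $\ket\phi$ and is acted on by exactly one vertex operator, the $O_v$ have pairwise disjoint supports; hence they commute, $\mathcal O$ is invertible, and $\ket\psi=\mathcal O\bigotimes_f\ket{\phi_f}$. Let $\ket\Omega$ be a ground state and set $\ket{\Omega'}=\mathcal O^{-1}\ket\Omega$. Writing $L_i=\prod_{\langle ji\rangle}O_j^{-1}$ so that $h_i=L_i^\dagger P_iL_i$ with $P_i$ an orthogonal projector, we have $h_i\ket\Omega=0\iff P_iL_i\ket\Omega=0$. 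The product $L_i\mathcal O$ cancels precisely the four operators surrounding plaquette $i$, so $L_i\ket\Omega$ is $\ket{\Omega'}$ dressed by the vertex operators \emph{not} touching plaquette $i$; these commute with $P_i$ and are removed by invertibility, giving $h_i\ket\Omega=0\iff P_i\ket{\Omega'}=0$, i.e.\ the reduced state of $\ket{\Omega'}$ on plaquette $i$ is $\ket\phi$. Imposing this on all plaquettes, together with the fact that the plaquettes partition the particles, forces $\ket{\Omega'}=c\bigotimes_f\ket{\phi_f}$ and hence $\ket\Omega=c\ket\psi$. This argument is manifestly independent of $n,m$, yielding uniqueness at every system size.

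For $\tilde H$ I would reduce to the case just settled by proving $\bigcap_i\ker\tilde h_i\subseteq\bigcap_i\ker h_i=\mathbb C\ket\psi$. Thus, let $\ket\Omega$ have support in $S$ on every $2\times2$ patch; I must show that for each plaquette $i$ the state $L_i\ket\Omega$ carries $\ket\phi$ on plaquette $i$. The four operators appearing in $L_i$ are exactly the central vertices of the four $2\times2$ patches in which $i$ occurs as a corner, and these four patches cover the $3\times3$ block of plaquettes on which $L_i$ and $P_i$ act; so the relevant information is contained in the patch constraints. The real content is an intersection (closure) property---that local consistency with the PEPS on overlapping patches determines the state on their union---which I would establish by the standard region-growing induction, the inductive step being the gluing of two overlapping support spaces along their shared boundary.

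The hard part is precisely this gluing step for $\tilde H$. In contrast to the injective case, the PEPS tensor here is not injective and does not become injective even after blocking, since an antisymmetric component on any corner is mapped to zero; the standard injective-PEPS closure argument therefore does not apply verbatim. I would circumvent this by once more pulling the invertible vertex operators out of the patch support spaces, reducing the gluing to the underlying layer of four-partite states $\ket\phi$, where the full-rank (hence locally injective) reduced densities supply exactly the injectivity needed to run the closure argument and to conclude that, after closing the boundary on the torus, the only consistent state is $\ket\psi$. The remaining verifications---positivity, frustration-freeness, and the bookkeeping of which operators act on which plaquette---are routine.
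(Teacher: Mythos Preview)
Your argument for $H$ is correct and is essentially the paper's: both conjugate by the global product $\mathcal O=\bigotimes_v O_v$, reducing to the trivial fact that the intersection of the constraints $P_i\ket{\Omega'}=0$ over the disjoint plaquettes is the single product state $\bigotimes_f\ket{\phi_f}$.

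For $\tilde H$ you correctly set out to prove $\bigcap_i\ker\tilde h_i\subseteq\bigcap_i\ker h_i$, but then take an unnecessary detour through a region-growing/intersection argument that you yourself flag as the ``hard part''. The point you miss is that the inclusion already holds \emph{term by term}: $\ker\tilde h_i\subseteq\ker h_i$ for every single $i$. The $2{\times}2$ patch of PEPS tensors defining $S_i$ is centered at the plaquette $i$ (not at a vertex), and the four operators $O_j$ appearing in $L_i$ are precisely the four operators contained in that one patch. Every vector in $S_i$ is, by construction, obtained by applying those four $O_j$ to a state carrying $\ket\phi$ on the central plaquette $i$ (tensored with partial plaquettes contracted against the boundary data $\lambda$); applying $L_i$ simply undoes them and restores $\ket\phi$ on plaquette $i$, so $P_iL_i$ annihilates it. Hence $S_i\subseteq\ker h_i$, i.e.\ $\ker\tilde h_i\subseteq\ker h_i$, and the global inclusion follows by intersecting.

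Your description of the geometry (``$i$ occurs as a corner'' of four different patches whose central operators make up $L_i$) is what sends you into the gluing problem; it does not match the paper's definition of the patch. Once the patch is correctly centered at the plaquette $i$, no closure property, no induction, and no appeal to injectivity are needed. This is exactly the paper's one-line argument.
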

\begin{proof}
  We first prove  that $H$ has a unique ground state. Then we prove that the kernel of $\tilde{H}$ is contained in that of $H$.
  
  To see that $\ker H$ is one-dimensional, consider the following similarity transform: 
  \begin{equation}
  \left(\prod_j O_j\right)^\dagger H \left(\prod_j O_j\right) = \sum_i P_i \otimes \id \otimes \bigotimes_{j} \left(O_j^{-1}\right)^\dagger O_j^{-1}, 
  \end{equation}
  where the product runs over all sites $j$ that are not neighbors of the projector $P_i$, and the identity acts on all virtual particles that are neighbors of the four-partite state $\ket{\phi}$. The kernel of each term in the sum is $\ket{\phi}_i\otimes \bigotimes_{j\notin i} \mathcal{H}_j$, where $j$ runs over all virtual particles that are not in the four-partite state $\ket{\phi}$. Clearly the intersection of these subspaces is $\bigotimes_i \ket{\phi}_i$, that is, the kernel of $H$ is one-dimensional.
  
  To see that $\ker \tilde{H}\leq \ker H$, notice that every state in $S_i$ (defined in \cref{eq:Sdef}) is in the kernel of $h_i$. Therefore,
  \begin{equation}
  \ker \tilde{h}_i \leq \ker h_i.
  \end{equation}
  Finally, as $\ker H = \cap_i \ker h_i$ and $\ker \tilde{H} = \cap_i \ker \tilde{h}_i$, the inclusion also holds for the kernel of the total Hamiltonians.
\end{proof}

\section{Background: Matrix Product States}\label{sec:mps}

In this Section we recall some basic properties of MPS. These definitions and theorems are mainly covered in Ref. \onlinecite{Cirac2017}. First, recall some basic properties of completely positive maps.

\begin{definition}
  A completely positive map $T: \rho \mapsto T(\rho) = \sum_i A_i \rho A_i^\dagger$ is 
  \begin{itemize}
  \item \emph{irreducible} if there is no non-trivial projector $P$ such that  $T(\rho ) = PT(\rho )P^{\dagger}$ for all $\rho = P\rho P^{\dagger}$. Otherwise $T$ is reducible.
  \item \emph{primitive} if $\exists n$ such that $T^n(\rho)>0$ for all $\rho\geq 0$.
  \end{itemize}
\end{definition}

Note that then the following statements hold:
\begin{prop}\label{prop:channel_props}
  Let $T: \rho \mapsto T(\rho) = \sum_i A_i \rho A_i^\dagger$ be a completely positive map with spectral radius $r$. Then $r$ is an eigenvalue with at least one positive semidefinite eigenvector. Moreover,
  \begin{itemize}
  \item $T$ is primitive if and only if $r$ has multiplicity one, the corresponding eigenvector is positive definite, and there are no other eigenvalues of magnitude $r$.
  \item if $T$ is irreducible but not primitive, then $r$ has multiplicity one, and all eigenvalues of magnitude $r$ are $r\cdot \exp[2\pi i n/K]$ for some $K$ and  $n=1,2\dots K$. We call $K$ the periodicity of $T$.
  \item $T$ is reducible if and only if $A^i P = P A_i P$ for some non-trivial projector $P$. 
  \end{itemize}
\end{prop}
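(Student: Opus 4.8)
The plan is to treat $T$ as a linear map preserving the proper cone $\mathcal{C}$ of positive semidefinite operators and to run the non-commutative Perron--Frobenius theory on it. For the first claim, I would use that for $\mu>r$ the resolvent $(\mu\,\id-T)^{-1}=\sum_{k\ge 0}\mu^{-k-1}T^k$ is again completely positive and hence maps $\mathcal{C}$ into itself. Fixing a positive definite $\sigma$, the trace $\tr\!\big[(\mu\,\id-T)^{-1}(\sigma)\big]=\sum_k \mu^{-k-1}\tr(T^k\sigma)$ diverges as $\mu\downarrow r$ (since $\tr(T^k\sigma)^{1/k}\to r$ for faithful $\sigma$), so normalising $\rho_\mu=(\mu\,\id-T)^{-1}(\sigma)/\norm{(\mu\,\id-T)^{-1}(\sigma)}\in\mathcal{C}$ and passing to a limit point yields a nonzero $\rho_*\in\mathcal{C}$ with $(r\,\id-T)\rho_*=0$; this is the positive semidefinite Perron eigenvector (if $r=0$ the map is nilpotent and any nonzero $T^{N-1}(\sigma)$ with $\sigma\ge 0$ is a positive semidefinite element of $\ker T$). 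The reducibility bullet I would dispatch directly from the definitions: if $A_iP=PA_iP$ for a nontrivial projector $P$, then every $\rho=P\rho P$ satisfies $T(\rho)=\sum_i (PA_iP)\rho(PA_iP)^\dagger=P\,T(\rho)\,P$, so the face of $\mathcal{C}$ below $P$ is invariant and $T$ is reducible; conversely, invariance applied to $\rho=P$ gives $P^\perp T(P)P^\perp=0$, and since each Kraus term is positive semidefinite this forces $P^\perp A_iP=0$, i.e.\ $A_iP=PA_iP$.

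Next I would establish the dichotomy governed by irreducibility, which underlies both remaining bullets. The key lemma is: $T$ is irreducible if and only if its Perron eigenvector $\rho$ is positive definite and $r$ is simple. If $\rho$ had a nontrivial kernel, expanding $T(\rho)=r\rho$ in Kraus form against the support projector $P$ of $\rho$ gives $P^\perp A_i P=0$, an invariant face, contradicting irreducibility; hence $\rho>0$. For simplicity, note $T$ preserves Hermiticity, so the $r$-eigenspace is spanned by Hermitian operators; if it contained some $\sigma$ independent of $\rho$, then a real critical combination $\rho-t_0\sigma\ge 0$ would be a singular eigenvector, again producing an invariant face. Applying the same lemma to the adjoint channel $T^*$ (also completely positive, same spectral radius) shows the dual Perron eigenvector $E$ is positive definite when $T$ is irreducible; faithfulness of $E$ also rules out a Jordan block, so $r$ is algebraically simple.

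For an irreducible but non-primitive $T$ I would rescale to $r=1$ and conjugate by $\rho^{1/2}$ to obtain a \emph{unital} positive map $\hat T$ with fixed point $\id$ and faithful dual fixed point. For any peripheral eigenoperator $\hat T(V)=e^{i\theta}V$, the Kadison--Schwarz inequality $\hat T(V^\dagger V)\ge \hat T(V)^\dagger\hat T(V)$, combined with $\hat T(\id)=\id$ and faithfulness, forces equality and shows $V$ is unitary up to scale. The peripheral eigenvalues then form a finite subgroup of the unit circle, namely the $K$-th roots of unity $e^{2\pi i n/K}$, each with one-dimensional eigenspace spanned by $V_n=V_1^{\,n}$; transporting back gives the peripheral spectrum $r\,e^{2\pi i n/K}$ of $T$ with the claimed cyclic structure. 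I expect this step to be the main obstacle, as it rests on the equality analysis of the Schwarz inequality and the group structure of the peripheral eigenoperators rather than on a soft cone argument.

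Finally I would read off primitivity as ``irreducible with $K=1$''. If $T$ is primitive, then $T^N(\rho')>0$ for all nonzero $\rho'\ge 0$ forces $\rho>0$, makes irreducibility immediate (an invariant proper face could never reach $\mathrm{int}\,\mathcal{C}$), and a Birkhoff--Hopf argument shows $T^N$ is a strict contraction in the Hilbert projective metric on $\mathrm{int}\,\mathcal{C}$, giving a spectral gap and hence no peripheral spectrum besides $r$. Conversely, if $r$ is simple with positive definite eigenvector $\rho$ and no other eigenvalue of modulus $r$, then (using $E>0$ from the irreducibility lemma, normalised so $\tr(E\rho)=1$) one has $r^{-n}T^n(\sigma)\to \rho\,\tr(E\sigma)$; for every nonzero $\sigma\ge 0$ this limit is positive definite, so $T^n(\sigma)>0$ for large $n$ and $T$ is primitive. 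Assembling these pieces gives the three bulleted statements.
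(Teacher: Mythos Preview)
Your proposal is correct and follows the standard route of non-commutative Perron--Frobenius theory (resolvent/cone argument for the PSD eigenvector, Kadison--Schwarz for the cyclic peripheral spectrum, and the support-projector argument for the reducibility criterion). The paper, however, does not give its own proof of this proposition at all: it simply cites Refs.~\onlinecite{Evans1978a,Wolf2012a} (``For proofs, see e.g.\ \ldots''). Your argument is essentially a condensed reconstruction of the proofs one finds in those references, so there is nothing to compare beyond noting that you have supplied what the paper deliberately omits.
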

For proofs, see e.g. Ref.~\onlinecite{Evans1978a,Wolf2012a}. Now we define Matrix Product States. 
\begin{definition} 
  An \emph{MPS tensor} is a tensor $A\in \mathbb{C}^D \otimes (\mathbb{C}^D)^* \otimes \mathbb{C}^d$, 
  \begin{equation}
  A= \sum_{i\alpha\beta} A^i_{\alpha\beta}  \ket{\alpha} \bra{\beta} \otimes \ket{i} = \sum_i A^i \otimes \ket{i}.
  \end{equation}
  For any $n\in\mathbb{N}$, the state $V_n(A)\in(\mathbb{C}^d)^{\otimes n}$ is then defined as
  \begin{equation}
    V_n(A) = \sum_{i_1 \dots i_n} \tr\left\{A^{i_1}\dots A^{i_n}\right\} \ket{i_1\dots i_n}.
  \end{equation}
  The \emph{transfer matrix} of $A$, $T_A$ is the completely positive map $T_A:\rho\mapsto \sum_i A_i \rho A_i^{\dagger}$. We say that $A$ is
  \begin{itemize}
  \item \emph{injective}, if  $\sum_i \tr\{A^i \rho\} \ket{i} = 0$ implies $\rho=0$.
  \item \emph{normal}, if $T_A$ is primitive.  
  \item \emph{periodic}, if $T_A$ is irreducible but not primitive. 
  \end{itemize}
  An MPS is called \emph{normal, injective or periodic}, if it can be generated by a normal, injective or periodic MPS tensor.
\end{definition}

We often depict an MPS tensor and the corresponding MPS as follows:
\begin{equation}
  A \equiv \ 
  \begin{tikzpicture}[baseline=-0.1cm]
    \pic[pic text=$A$] at (0,0) {mps};
  \end{tikzpicture} \quad \Rightarrow \quad 
  V_n(A) = \ 
  \begin{tikzpicture}[baseline=-0.1cm]
    \node at (2,0) {$\dots$};
    \foreach \x in {0,1,3}{
      \pic[pic text=$A$] at (\x, 0) {mps};
    }
    \draw (-0.5,0)--(-0.5,-0.5)--(3.5,-0.5)--(3.5,0);
  \end{tikzpicture} \ .
\end{equation}
The horizontal legs of the MPS tensor $A$ are often referred as the \emph{virtual indices}, while the vertical one as the \emph{physical index} of $A$. The dimension of the virtual indices, $D$, is called the \emph{bond dimension} of $A$.

Note that, unlike in Ref. \onlinecite{Cirac2017}, for convenience, we do not suppose that the spectral radius of a normal tensor is 1. Note also that an MPS tensor is injective if and only if it has a left inverse, $C$, such that $\sum_i A^i \otimes C^i = \id$ in the sense as depicted below:
\begin{equation}
  \begin{tikzpicture}[baseline=0.5cm]
    \pic[pic text=$A$] at (0,0) {mps};
    \pic[pic text =$C$] at (0,1) {mps=180};
  \end{tikzpicture}
  = 
  \begin{tikzpicture}[baseline=0.5cm]
    \draw (0,0) -- (0.5,0)--(0.5,1)--(0,1);
    \draw (1.5,0) -- (1,0)--(1,1)--(1.5,1);
  \end{tikzpicture} .
\end{equation}
Here, and in the following, we use the following graphical calculus\cite{Orus2013} of tensors. A tensor is depicted as a box or circle, with some lines attached to it. These lines represent the indices of the tensor. Tensor contraction is depicted by joining the lines. In the picture above, for example, we have contracted the physical indices of $A$ and $C$. The result is the identity tensor from the bottom indices to the top indices. We have omitted drawing a box for the identity.

A frequently used concept in MPS theory is the blocking of tensors. 
\begin{definition}[Blocking]
  The MPS tensor $B$ is a \emph{blocking} of $A$ if $B = \sum_{i_1 \dots i_k} A^{i_1} \dots A^{i_k} \otimes \ket{i_1 \dots i_k}$. Note that $V_n(B) = V_{kn}(A)$.
\end{definition}
We will often write the above contraction of tensors as a product. That is, for any two MPS tensors $C$ and $D$, $ CD :=\sum_{i j} C^{i} D^{j} \otimes \ket{ij}$. With this notation, $B = AA\dots A$. We will use this notation even if one of the tensors does not have a physical index.

Note that a normal tensor stays normal after blocking. Moreover, injective and normal MPS are the same up to blocking:
\begin{prop}\label{prop:normal2injective}
  Any injective tensor is proportional to a normal tensor. Conversely, for any normal tensor $\exists L_0\in \mathbb{N}$ such that it becomes injective after blocking any $L\geq L_0$ tensors. The minimal such $L_0$ is called the injectivity length.
\end{prop}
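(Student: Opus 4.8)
The plan is to translate everything into a statement about the linear span $\mathcal S_L := \mathrm{span}\{A^{i_1}\cdots A^{i_L}\}\subseteq M_D$ of length-$L$ products of the Kraus operators $A^i$. By the left-inverse characterization recalled just above, $A$ is injective iff $\rho\mapsto\sum_i\tr\{A^i\rho\}\ket i$ has trivial kernel, and this kernel is exactly the annihilator of $\mathrm{span}\{A^i\}$ under the (nondegenerate) pairing $\langle B,\rho\rangle=\tr\{B\rho\}$; hence $A$ is injective $\Leftrightarrow \mathcal S_1=M_D$, and more generally the $L$-fold blocking $A^{i_1}\cdots A^{i_L}$ is injective $\Leftrightarrow \mathcal S_L=M_D$. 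So both halves reduce to understanding when the products span the full matrix algebra.

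For the first claim, assume $A$ injective, so $\mathcal S_1=M_D$. I would first note that $T_A$ is irreducible: by \cref{prop:channel_props} reducibility would give a nontrivial projector $P$ with $A^iP=PA^iP$ for all $i$, and since the $A^i$ span $M_D$ this would force $MP=PMP$ for \emph{every} $M\in M_D$, which is false. Next I would rule out periodicity: an irreducible but periodic $T_A$ of period $K\ge2$ admits, by Perron--Frobenius theory for irreducible completely positive maps, a decomposition $\mathbb{C}^D=\bigoplus_k V_k$ cyclically permuted by the Kraus operators ($A^iV_k\subseteq V_{k+1}$), which confines every $A^i$ to the proper block-cyclic subspace $\bigoplus_k\mathrm{Hom}(V_k,V_{k+1})\subsetneq M_D$, again contradicting $\mathcal S_1=M_D$. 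Thus $T_A$ is irreducible and aperiodic, i.e.\ primitive, so $A$ is normal; since rescaling $A$ only rescales $T_A$ without affecting primitivity, $A$ is in particular proportional to a normal tensor.

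For the converse, let $A$ be normal, so $T_A$ is primitive, and I must produce $L_0$ with $\mathcal S_L=M_D$ for all $L\ge L_0$. The algebraic backbone is Burnside's theorem: primitivity implies irreducibility, i.e.\ the $A^i$ have no common invariant subspace, so the \emph{unital} algebra they generate is all of $M_D$; as this algebra is the increasing union of the finite-dimensional spaces $\mathbb{C}\,\id+\mathcal S_1+\dots+\mathcal S_L$, it stabilizes, yielding some $L_1$ with $\mathbb{C}\,\id+\mathcal S_1+\dots+\mathcal S_{L_1}=M_D$. This already shows that products of \emph{bounded} length span $M_D$.

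The remaining and genuinely hard step is to remove the length restriction and reach products of a single large length $L$; this is precisely the quantum Wielandt inequality, which I would invoke from \onlinecite{Cirac2017} (or reprove). Its mechanism is that primitivity, unlike mere irreducibility, is \emph{aperiodic}: setting $P=\{p\ge1:\id\in\mathcal S_p\}$, the inclusion $\mathcal S_p\mathcal S_q\subseteq\mathcal S_{p+q}$ makes $P$ closed under addition, while aperiodicity forces $\gcd(P)=1$ (for a period-$K$ channel the identity can only occur at lengths divisible by $K$). A numerical-semigroup argument then places all large integers in $P$, so that for large $L$ and every $k\le L_1$ one has $\id\in\mathcal S_{L-k}$ and hence $\mathcal S_k=\mathcal S_k\,\id\subseteq\mathcal S_{L-k}\,\mathcal S_k\subseteq\mathcal S_L$; summing over $k\le L_1$ gives $\mathcal S_L=M_D$. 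The delicate input, where primitivity rather than irreducibility is genuinely used, is establishing $P\neq\emptyset$ with $\gcd(P)=1$ from the strict positivity $T_A^{n}(\rho)>0$ of \cref{prop:channel_props} (conveniently after gauging to the fixed point of $T_A^*$), and this is the main obstacle; the least admissible $L$ is then by definition the injectivity length $L_0$.
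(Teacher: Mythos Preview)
The paper does not give its own proof of this proposition; it simply cites Ref.~\onlinecite{Sanz2010} (the quantum Wielandt paper). Your proposal therefore goes \emph{beyond} what the paper does.

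Your first half (injective $\Rightarrow$ normal) is clean and correct: using that injectivity is equivalent to $\mathcal S_1=M_D$, reducibility is excluded because $A^iP=PA^iP$ for all $i$ would force $MP=PMP$ for every $M$, and periodicity is excluded because the block-cyclic structure confines all $A^i$ to a proper subspace. This is exactly the standard argument and is more explicit than the paper.

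For the converse, you correctly identify the quantum Wielandt inequality as the crux and, like the paper, ultimately defer to a citation. Your proposed mechanism via the additive semigroup $P=\{p\ge1:\id\in\mathcal S_p\}$ is an appealing heuristic, but note that it is not quite self-contained: showing $P\neq\emptyset$ already requires knowing that $\mathcal S_L=M_D$ for \emph{some} single $L$ (since $\id\in\mathbb{C}\,\id+\mathcal S_1+\cdots+\mathcal S_{L_1}$ from Burnside does not by itself place $\id$ in any one $\mathcal S_p$), which is essentially the Wielandt statement you are trying to prove. The actual proof in Sanz et al.\ proceeds differently, working with the action of products on \emph{vectors} (using $T_A^n(\ket v\bra v)>0$) rather than with occurrences of $\id$ in $\mathcal S_p$. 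So your outline is morally right but the semigroup step, as written, is circular; since you explicitly flag this as ``the main obstacle'' and fall back on the citation, your proposal is at the same level of rigor as the paper.
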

This statement was proven e.g. in Ref. \onlinecite{Sanz2010}. Note that $L_0$ might be bigger than the primitivity length of $T_A$, that is, the minimal $n$ for which $T_A^n(\rho)>0$ for all $\rho \geq 0$. There is, however, a universal bound depending only on the bond dimension $D$.

Note that being normal or injective are properties which are stable under taking tensor product of MPS tensors:
\begin{prop}\label{prop:injective_tensor_prod}
  The tensor product of two normal MPS tensors is normal. The tensor product of two injective MPS tensors is injective.
\end{prop}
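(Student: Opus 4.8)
The statement splits into two independent claims, one about injectivity and one about normality, and I would treat them separately, doing the injective case first since it is essentially immediate from the left-inverse characterization recalled above. That criterion says an MPS tensor $A$ is injective precisely when it admits a left inverse $C$ with $\sum_i A^i \otimes C^i = \id$, meaning that contracting the physical legs of $A$ and $C$ yields the identity on the virtual space. Given injective tensors $A$ and $B$ with left inverses $C$ and $C'$, I would form the tensor whose matrices are $C^i \otimes (C')^j$ and check directly that it is a left inverse of $A\otimes B$: summing over the joint physical index $(i,j)$ and regrouping the virtual legs of the $A$-factor with those of the $C$-factor, and likewise for $B$ and $C'$, gives $\bigl(\sum_i A^i\otimes C^i\bigr)\otimes\bigl(\sum_j B^j \otimes (C')^j\bigr) = \id\otimes\id = \id$. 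Hence $A\otimes B$ has a left inverse and is injective. Equivalently, one may argue at the level of spans: injectivity of $A$ means $\{A^i\}$ spans the full matrix algebra $\mathcal{M}_{D_A}$, and then $\{A^i\otimes B^j\}$ spans $\mathcal{M}_{D_A}\otimes\mathcal{M}_{D_B}=\mathcal{M}_{D_A D_B}$, which is exactly injectivity of the product.

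For the normal part, the starting observation is that the transfer operator factorizes, $T_{A\otimes B} = T_A \otimes T_B$, as completely positive maps on $\mathcal{M}_{D_A}\otimes\mathcal{M}_{D_B}$, since $(A\otimes B)^{(ij)} = A^i\otimes B^j$ and the Kraus sum splits over the two physical labels. I would then invoke the spectral characterization of primitivity in \cref{prop:channel_props}: a completely positive map is primitive iff its spectral radius is a simple eigenvalue, with a positive-definite eigenvector, that strictly dominates every other eigenvalue in magnitude. Since the spectrum of a tensor product of linear maps over $\mathbb{C}$ consists of the products of eigenvalues (with multiplicities multiplying), $T_A\otimes T_B$ has spectral radius $r_A r_B$ with eigenvector $\rho_A\otimes\rho_B$, and this eigenvector is positive definite because $\rho_A$ and $\rho_B$ are.

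The step that carries the real content — and the one I expect to be the main obstacle — is establishing strict dominance of $r_A r_B$. Here I would argue that $|\lambda\mu| = r_A r_B$ forces $|\lambda| = r_A$ and $|\mu| = r_B$, and then use primitivity of each factor, which guarantees that $r_A$ (respectively $r_B$) is the \emph{unique} eigenvalue of its modulus; hence $\lambda = r_A$ and $\mu = r_B$, so $r_A r_B$ is attained by exactly one pair and with multiplicity $1\times 1 = 1$. This is precisely where primitivity, rather than mere irreducibility, is indispensable: an irreducible-but-periodic factor would carry peripheral eigenvalues $r\,e^{2\pi i n/K}$, and products of two such could again have modulus $r_A r_B$, destroying strict dominance. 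Once strict dominance is in hand, \cref{prop:channel_props} yields primitivity of $T_{A\otimes B}$, i.e.\ normality of $A\otimes B$.

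As a cross-check I could instead reduce the normal case to the injective one via \cref{prop:normal2injective}: block each factor until it becomes injective, apply the injective part to conclude that the blocked product $A^{(L)}\otimes B^{(L)}$ is injective, and then note that injectivity of a blocking forces normality of the original, since an injective tensor has (up to a scalar) a primitive transfer operator and primitivity of $T^{L}$ implies primitivity of $T$. I would nevertheless keep the spectral argument as the main line, because it is self-contained and makes transparent why primitivity is exactly the hypothesis that survives the tensor product.
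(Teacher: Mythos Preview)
Your proof is correct and follows essentially the same approach as the paper: for injectivity you use the left-inverse characterization to build $A^{-1}\otimes B^{-1}$, and for normality you factorize the transfer operator as $T_A\otimes T_B$, use that $\sigma(T_A\otimes T_B)=\sigma(T_A)\cdot\sigma(T_B)$, and invoke the spectral characterization of primitivity from \cref{prop:channel_props}. Your treatment is more detailed than the paper's (in particular your explicit justification of strict dominance and the alternative span argument and blocking cross-check), but the underlying argument is the same.
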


\begin{proof}
  First, we prove that the tensor product of two normal tensors $A$ and $B$ is normal.   The transfer matrix of $A\otimes B$ is $T_A \otimes T_B$, where $T_A$ is the transfer matrix of $A$ and $T_B$ is the transfer matrix of $B$. Denote the spectrum of any operator $T$ by $\sigma (T)$. Then $\sigma(T_A\otimes T_B) = \sigma(T_A)\cdot\sigma(T_B)$. Therefore, $T_A\otimes T_B$ has a unique eigenvalue with magnitude (and value) equal the spectral radius. The corresponding eigenvector of $T_A\otimes T_B$ is $\rho_A \otimes \rho_B$ if $\rho_A$ and $\rho_B$ are the eigenvectors of $T_A$ and $T_B$ with maximum eigenvalue, respectively. $\rho_A\otimes\rho_B$ is positive and is full rank, so $T_A\otimes T_B$ is primitive.
  
  Second, the tensor product of two injective tensors is injective: if $A$ and $B$ are injective and $A^{-1}$ and $B^{-1}$ are their left inverses, then $A^{-1}\otimes B^{-1}$ is a left inverse of $A\otimes B$.
\end{proof}

\begin{prop}\label{prop:can_form_normal_mps}
  Given two normal tensors $A $ and $B$ with injectivity length at most $L$, the two MPS generated by them either become perpendicular in the thermodynamic limit, i.e.
  \begin{equation}
    \frac{\left|\Scalprod{V_n(A)}{V_n(B)}\right|}{\norm{V_n(A)}\cdot \norm{V_n(B)}}\to 0
  \end{equation}
  as $n\to \infty$, or the following three equivalent statements hold:
  \begin{itemize}
    \item $V_n(A)=\lambda^n V_n(B)$ for some $\lambda\in\mathbb{C}$ for all $n$
    \item $\exists n\geq 2L+1$ such that $V_n(A)=\lambda^n V_n(B)$ for some $\lambda\in\mathbb{C}$ 
    \item $A^i = \lambda X B^i X^{-1}$, for some $\lambda\in \mathbb{C}$ and this $X$ is unique up to a constant
  \end{itemize}
\end{prop}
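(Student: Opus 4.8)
The plan is to separate the equivalence of the three conditions from the orthogonality dichotomy, using the transfer matrix throughout. That the gauge relation $A^i = \lambda X B^i X^{-1}$ implies $V_n(A) = \lambda^n V_n(B)$ for all $n$ is immediate from cyclicity of the trace, since the factors $X$ and $X^{-1}$ cancel inside $\tr\{A^{i_1}\cdots A^{i_n}\}$; and proportionality at all $n$ trivially gives proportionality at one $n \ge 2L+1$. The content lies in the converse, that a single proportionality $V_{n_0}(A) = \lambda^{n_0} V_{n_0}(B)$ with $n_0 \ge 2L+1$ forces the gauge relation, together with the statement that its failure makes the normalized overlap decay.

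For the converse I would reduce to the injective case. By \cref{prop:normal2injective}, blocking $A$ and $B$ into groups of $L$ sites produces injective tensors; since $n_0 \ge 2L+1$ I can split the ring of $n_0$ tensors into two blocked injective super-tensors, each a contraction of at least $L$ copies. Injectivity supplies left inverses, i.e.\ tensors that, contracted with the blocked tensor on the physical index, return the identity on the virtual bonds. Contracting these inverses into both sides of $V_{n_0}(A) = \lambda^{n_0} V_{n_0}(B)$ on all but one cut collapses the loop and isolates a single matrix $X$ on the remaining bond with $A^i X = \lambda X B^i$, which is the gauge relation. Invertibility of $X$ follows by running the same collapse from the other side to produce a candidate inverse, and uniqueness up to a constant follows from normality of $B$: two solutions $X, X'$ give $X^{-1} X'$ commuting with every $B^i$, hence with the full matrix algebra these generate after blocking to injectivity, so $X^{-1} X' \propto \id$.

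For the dichotomy I would express the relevant quantities as transfer matrix traces, $\norm{V_n(A)}^2 = \tr\{T_A^n\}$, $\norm{V_n(B)}^2 = \tr\{T_B^n\}$, and $\Scalprod{V_n(A)}{V_n(B)} = \tr\{E^n\}$ with the mixed transfer matrix $E = \sum_i \overline{A^i} \otimes B^i$. Since $A$ and $B$ are normal, \cref{prop:channel_props} gives that $T_A$ and $T_B$ are primitive, so each has a simple dominant eigenvalue equal to its spectral radius $r_A$, $r_B$ with no other peripheral eigenvalue; hence $\tr\{T_A^n\} = r_A^n(1+o(1))$ and likewise for $B$. Cauchy--Schwarz gives $|\tr\{E^n\}| \le \sqrt{\tr\{T_A^n\}\,\tr\{T_B^n\}}$, so the spectral radius of $E$ obeys $\rho_E \le \sqrt{r_A r_B}$. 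If $\rho_E < \sqrt{r_A r_B}$, then $|\tr\{E^n\}|$ is exponentially smaller than the denominator and the normalized overlap tends to $0$: this is the orthogonality alternative.

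The case $\rho_E = \sqrt{r_A r_B}$ is where I expect the main obstacle. Here I would argue, after rescaling so that $r_A = r_B = 1$, that the peripheral spectrum of $E$ consists of a single nondegenerate eigenvalue $\nu$ with $|\nu| = 1$, whose eigenvector reshaped into a $D_A \times D_B$ matrix is forced by primitivity of $T_A$ and $T_B$ to be an invertible intertwiner $X$ satisfying $A^i X = \lambda X B^i$, giving the gauge relation directly. Equivalently, one shows $\rho_E = \sqrt{r_A r_B}$ keeps the normalized overlap bounded away from $0$ along a subsequence, extracts exact proportionality $V_{n_0}(A) = \lambda^{n_0} V_{n_0}(B)$ at a finite $n_0 \ge 2L+1$, and invokes the converse above. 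The delicate points are excluding Jordan blocks and degeneracies in the peripheral spectrum of $E$, and controlling the phases of the peripheral eigenvalues so that $\tr\{E^n\}$ does not merely oscillate to zero; both are resolved by tying the peripheral structure of $E$ back to the primitivity of $T_A$ and $T_B$ through the gauge $X$.
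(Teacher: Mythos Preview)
The paper does not actually prove \cref{prop:can_form_normal_mps}; it is stated as background and the proof is deferred to Ref.~\onlinecite{Cirac2017} (see the sentence immediately following the proposition: ``The proof of these statements can be found in Ref.~\onlinecite{Cirac2017}''). There is therefore no in-paper proof to compare your attempt against.

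On its own merits, your sketch follows the standard route taken in the cited reference: the implications $(3)\Rightarrow(1)\Rightarrow(2)$ are immediate, $(2)\Rightarrow(3)$ goes by blocking to injectivity and inverting all but one super-tensor to isolate the intertwiner $X$, and the dichotomy comes from comparing the spectral radius of the mixed transfer operator $E$ to $\sqrt{r_A r_B}$. Your identification of the delicate point is accurate: ruling out peripheral Jordan blocks and multiple peripheral eigenvalues of $E$ when $\rho_E=\sqrt{r_A r_B}$ is exactly where the work lies, and it is handled by showing that any peripheral eigenvector of $E$, reshaped as a matrix, must have full rank (via the positive fixed points of $T_A$ and $T_B$) and hence furnishes the gauge. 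One small caution: your phrasing ``extracts exact proportionality $V_{n_0}(A)=\lambda^{n_0}V_{n_0}(B)$ at a finite $n_0$'' from a nonvanishing subsequence of overlaps is not quite right as stated---bounded-away-from-zero normalized overlap does not by itself give exact equality at any finite size. The clean argument stays entirely at the level of $E$ and produces the intertwiner $X$ directly from its peripheral eigenvector, after which $(3)$ and hence $(1)$ follow.
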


We call the normal tensors $A$ and $B$ \emph{essentially different} if the MPSs generated by them are not proportional in the above sense. The proof of these statements can be found in Ref. \onlinecite{Cirac2017}.

\begin{corollary}\label{cor:normal_lin_indep}
  Given a set of pairwise essentially different normal tensors, $A_i$, $\exists N\in\mathbb{N}$ such that the MPS $V_n(A_i)$ are linearly independent for all $ n >N$.
\end{corollary}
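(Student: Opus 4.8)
The plan is to reduce linear independence to the nonsingularity of a Gram matrix and then to exploit the asymptotic orthogonality supplied by \cref{prop:can_form_normal_mps}. It suffices to treat a finite set $A_1,\dots,A_k$, since linear independence of an arbitrary set is by definition linear independence of each of its finite subsets. For each $n$ I would form the $k\times k$ Gram matrix $G_n$ with entries $(G_n)_{ij}=\Scalprod{V_n(A_i)}{V_n(A_j)}$. The family $\{V_n(A_i)\}_{i=1}^k$ is linearly independent precisely when $G_n$ is invertible, so the goal becomes to show $\det G_n\neq 0$ for all large $n$.

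First I would control the diagonal. Since each $A_i$ is normal, its transfer matrix is primitive, so by \cref{prop:channel_props} its spectral radius $r_i>0$ is a simple, strictly dominant eigenvalue. Writing $\mathbb{T}_{A_i}=\sum_s A_i^s\otimes \bar A_i^s$, one has $\norm{V_n(A_i)}^2=\tr\bigl(\mathbb{T}_{A_i}^{\,n}\bigr)=r_i^{\,n}\bigl(1+o(1)\bigr)$, which is strictly positive for $n$ large. Hence the diagonal matrix $D_n=\mathrm{diag}\bigl(\norm{V_n(A_1)},\dots,\norm{V_n(A_k)}\bigr)$ is invertible for large $n$, and I may pass to the normalized Gram matrix $\tilde G_n=D_n^{-1}G_n D_n^{-1}$, whose diagonal entries are identically $1$ and whose off-diagonal entries are exactly the normalized overlaps $\Scalprod{V_n(A_i)}{V_n(A_j)}/(\norm{V_n(A_i)}\,\norm{V_n(A_j)})$.

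The key step is then to invoke \cref{prop:can_form_normal_mps}: because the $A_i$ are pairwise \emph{essentially different}, the second (proportionality) alternative is excluded for each pair $i\neq j$, so the first alternative holds and the normalized overlaps tend to $0$ as $n\to\infty$. Consequently $\tilde G_n\to \id_k$ entrywise. Since $\det$ is continuous and $\det\id_k=1\neq 0$, there exists $N$ such that $\det\tilde G_n\neq 0$ for all $n>N$; as $D_n$ is invertible, $\det G_n=(\det D_n)^2\,\det\tilde G_n\neq 0$ for all $n>N$, giving the desired linear independence.

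The only genuine obstacle is a matter of asymptotic bookkeeping rather than of concept: one must check that the denominators $\norm{V_n(A_i)}$ grow in a controlled way — pinned to $r_i^{\,n}$ by primitivity — so that normalization is legitimate and so that the cross terms are negligible \emph{relative to} this growth, ensuring $\tilde G_n\to\id_k$ rather than merely that the raw off-diagonal entries are small. This is precisely where the primitivity of each transfer matrix (via \cref{prop:channel_props}) and the dichotomy of \cref{prop:can_form_normal_mps} do the work in tandem; once these asymptotics are in place, the determinant/continuity argument closes the proof.
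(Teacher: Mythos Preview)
Your proof is correct. The paper states this corollary without an explicit proof, simply as a consequence of \cref{prop:can_form_normal_mps}; your Gram-matrix argument is exactly the standard way to make that implication precise, and all steps (positivity of the diagonal via primitivity, vanishing of the normalized off-diagonal overlaps via the dichotomy in \cref{prop:can_form_normal_mps}, and continuity of the determinant) are sound.

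One minor remark: your opening sentence about reducing to finite subsets is slightly misleading if the set were genuinely infinite, since a uniform $N$ across all finite subsets would not follow from your argument. In the paper's context (and in every application of the corollary), the set of tensors is finite, so this is a non-issue --- but you may as well just assume finiteness from the outset rather than phrase it as a reduction.
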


\begin{prop}
  Any MPS $V_n(A)$ can be decomposed into a linear combination of normal and periodic MPSs:
  \begin{equation}
    V_n(A) = \sum_i \mu_i^n V_n(A_i),
  \end{equation}
  where each $A_i$ is either normal or periodic.
\end{prop}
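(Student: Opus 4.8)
The plan is to reduce the statement to pure linear algebra by exploiting two facts: that $V_n$ is a trace and is therefore invariant under a global gauge (similarity) transformation $A^i \mapsto X A^i X^{-1}$; and that any finite collection of matrices $\{A^i\}$ can be simultaneously brought to block upper triangular form whose diagonal blocks admit no common nontrivial invariant subspace. Combining these, I expect to write $V_n(A)$ as a plain sum of the MPS generated by the diagonal blocks, each of which will turn out to be either normal or periodic by the classification of irreducible completely positive maps.

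First I would choose a maximal chain of common invariant subspaces $0 = W_0 \subsetneq W_1 \subsetneq \cdots \subsetneq W_g = \mathbb{C}^D$ of the matrices $\{A^i\}$, i.e.\ $A^i W_k \subseteq W_k$ for all $i$ and $k$, refined so that no common invariant subspace lies strictly between consecutive $W_{k-1}$ and $W_k$. Passing to a basis adapted to this flag is a similarity transformation $A^i \mapsto X A^i X^{-1}$, after which each $A^i$ is block upper triangular with diagonal blocks $A_k^i$ acting on $W_k/W_{k-1}$. Since the trace of a product of block upper triangular matrices equals the sum over $k$ of the traces of the products of the $k$-th diagonal blocks, and since the seam gauge $X$ cancels under the trace, I obtain
\begin{equation}
V_n(A) = \sum_{i_1 \dots i_n} \tr\{A^{i_1}\cdots A^{i_n}\}\ket{i_1\dots i_n} = \sum_{k=1}^g V_n(A_k),
\end{equation}
where $A_k$ is the MPS tensor with matrices $A_k^i$.

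Next I would argue that each diagonal block $A_k$ has an irreducible transfer matrix. A nontrivial common invariant subspace of $\{A_k^i\}$ on $W_k/W_{k-1}$ would pull back to a common invariant subspace of $\{A^i\}$ lying strictly between $W_{k-1}$ and $W_k$, contradicting maximality of the flag; hence $\{A_k^i\}$ has no nontrivial common invariant subspace. By \cref{prop:channel_props} the transfer matrix $T_{A_k}$ is reducible precisely when a nontrivial projector $P$ with $A_k^i P = P A_k^i P$ (equivalently, such an invariant subspace) exists, so each $T_{A_k}$ is irreducible, and therefore $A_k$ is either primitive (normal) or irreducible-but-not-primitive (periodic). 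This already yields the claimed decomposition with $\mu_k = 1$; if normalized blocks are preferred one writes $A_k = \mu_k \hat A_k$ with $\hat A_k$ of transfer-matrix spectral radius one, whence $V_n(A_k) = \mu_k^n V_n(\hat A_k)$ and $V_n(A) = \sum_k \mu_k^n V_n(\hat A_k)$ with each $\hat A_k$ normal or periodic.

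The main obstacle is the step identifying the diagonal blocks as normal or periodic: it requires carefully translating between the linear-algebraic notion of a common invariant subspace of the $\{A^i\}$ and the operational notion of reducibility of the completely positive transfer map, and confirming that refining the flag to be maximal indeed forces irreducibility of every diagonal block. The remaining ingredients—gauge invariance of the trace and the block-triangular trace identity—are routine.
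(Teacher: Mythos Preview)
Your proposal is correct and is essentially the same argument as the paper's, just packaged differently: the paper peels off one common invariant subspace at a time and recurses on the two resulting diagonal blocks (induction on bond dimension), whereas you pass to a maximal flag of common invariant subspaces up front and read off all diagonal blocks at once. Both rely on the same two ingredients—\cref{prop:channel_props} to equate reducibility of $T_A$ with the existence of a common invariant subspace of the $A^i$, and the block-upper-triangular trace identity $\tr\prod A^{i_j}=\sum_k \tr\prod A_k^{i_j}$—so your worry about the ``main obstacle'' is already handled by that proposition.
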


The proof can be found in Ref. \onlinecite{Cirac2017}. We provide a simplified proof here. 
\begin{proof}
  We prove this by induction on the bond dimension $D$. If $D=1$, $A_i$ is proportional to a normal MPS. Suppose now that the statement is true for all $D<D_0$. Consider an MPS tensor $A$ with bond dimension $D_0$. If its transfer matrix $T_A$ is irreducible, then $A$ is either periodic or proportional to a normal MPS tensor. Otherwise, there exists a non-trivial projector $ P$ such that $A_i P = PA_i P$, see \Cref{prop:channel_props}. Then $V_n(A) = V_n (PAP) + V_n(QAQ)$ with $Q=1-P$. Finally, the bond dimension of $PAP$ (and $QAQ$) can be compressed to the rank of $P$ (corr. $Q$): write $P=YX$ for some  $X:\mathbb{C}^{D_0}\to \mathbb{C}^{D}$, $Y:\mathbb{C}^D \to \mathbb{C}^{D_0}$, $XY=\id_{D}$. Then $XAY$ generates the same MPS as $PAP$. The bond dimension of the resulting MPS  is smaller than $D_0$, thus by the induction hypothesis, they can be written as a linear combination of normal or periodic MPS.
\end{proof}

\begin{prop}\label{prop:periodic_mps_decomp}
  Let $A$ be a periodic MPS tensor with periodicity $K$. After blocking $K$ tensors, $V_n(A)$ decomposes into $K$ essentially different normal MPS:
  \begin{equation}
    V_{Kn}(A) = \sum_{i=1}^K V_{n}(B_i),
  \end{equation}
  where the $B_i$s are pairwise essentially different  normal MPS tensors on $K$ spins. Moreover, $V_n(A)=0$ if $n\notin K\mathbb{N}$.
\end{prop}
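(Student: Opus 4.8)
The plan is to exploit the Perron–Frobenius structure of irreducible completely positive maps to bring the Kraus operators of $A$ into a cyclic block form, and then read off every claim from there. By \Cref{prop:channel_props}, since $T_A$ is irreducible but not primitive with periodicity $K$, its spectral radius $r$ is a simple eigenvalue with a full-rank positive eigenvector $\sigma>0$, and the peripheral spectrum is exactly $\{r\omega^j : j=0,\dots,K-1\}$ with $\omega = e^{2\pi i/K}$. The standard noncommutative Perron–Frobenius theory (see Ref.~\onlinecite{Evans1978a,Wolf2012a}) then yields a $\mathbb{Z}_K$-grading $\mathbb{C}^D = \bigoplus_{p\in\mathbb{Z}_K} V_p$ and a unitary $U = \sum_p \omega^p P_p$, with $P_p$ the projector onto $V_p$, such that $U A^i U^\dagger = \omega A^i$ for all $i$; equivalently, each $A^i$ maps $V_p$ into $V_{p+1}$. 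First I would establish this cyclic structure, which I expect to be the main conceptual input; everything else is bookkeeping built on top of it.

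The vanishing statement is then immediate: since $\tr\{A^{i_1}\cdots A^{i_n}\} = \tr\{U(A^{i_1}\cdots A^{i_n})U^\dagger\} = \omega^n\,\tr\{A^{i_1}\cdots A^{i_n}\}$, the trace vanishes whenever $\omega^n\neq 1$, i.e.\ whenever $n\notin K\mathbb{N}$, so $V_n(A)=0$ there. For $n=Kn'$ I would block $K$ tensors into $\tilde A$, whose Kraus operators $\tilde A^{i_1\dots i_K}=A^{i_1}\cdots A^{i_K}$ map $V_p\to V_{p+K}=V_p$ and are therefore block diagonal, $\tilde A^{\vec{\imath}}=\bigoplus_p B_p^{\vec{\imath}}$ with $B_p^{\vec{\imath}}:=P_p\tilde A^{\vec{\imath}}P_p$. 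Because the trace of a product of block-diagonal matrices splits over blocks, $V_{Kn'}(A)=V_{n'}(\tilde A)=\sum_{p} V_{n'}(B_p)$, giving the claimed decomposition with each $B_p$ a tensor on $K$ spins.

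To see that each $B_p$ is normal, I would use $T_{\tilde A}=T_A^{K}$ and decompose the global fixed point. Since $U(\cdot)U^\dagger$ commutes with $T_A$, uniqueness of $\sigma$ forces $U\sigma U^\dagger=\sigma$, so $\sigma=\sum_p\sigma_p$ is block diagonal with each $\sigma_p>0$ on $V_p$, and $T_A(\sigma_p)=r\sigma_{p+1}$ gives $T_{B_p}(\sigma_p)=r^K\sigma_p$. The transfer operator $T_A^K$ preserves the grading of operator space $\mathcal{B}(\mathbb{C}^D)=\bigoplus_{s}\mathcal{B}_s$, where $\mathcal{B}_s$ collects maps $V_q\to V_{q+s}$, and $T_{B_p}$ is exactly its restriction to the diagonal block $\mathcal{B}(V_p)\subseteq\mathcal{B}_0$. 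Any eigenvalue of $T_A^K$ of magnitude $r^K$ is the $K$-th power of an eigenvalue of $T_A$ of magnitude $r$, hence equals $r^K$; the $K$ simple peripheral eigenvectors $U^j\sigma=\sum_p\omega^{jp}\sigma_p$ all lie in $\mathcal{B}_0$ and span $\bigoplus_p\mathbb{C}\sigma_p$, so $r^K$ occurs in each diagonal block with multiplicity one and with no further peripheral eigenvalue. By \cref{prop:channel_props}, $T_{B_p}$ is primitive, i.e.\ $B_p$ is normal.

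Finally, for pairwise essential difference I would invoke \Cref{prop:can_form_normal_mps}: it suffices to show that the normalized overlap of $V_{n'}(B_p)$ and $V_{n'}(B_q)$ vanishes as $n'\to\infty$ for $p\neq q$. The relevant mixed transfer operator $X\mapsto\sum_{\vec{\imath}} B_p^{\vec{\imath}}X(B_q^{\vec{\imath}})^\dagger$ acting on $\mathrm{Hom}(V_q,V_p)$ is precisely the restriction of $T_A^K$ to the off-diagonal sector $\mathcal{B}_{p-q}$ with $p-q\not\equiv 0\pmod K$. Since all magnitude-$r$ eigenvalues of $T_A$ sit in $\mathcal{B}_0$ (conjugation by $U^j$ maps eigenspaces bijectively, showing every peripheral eigenvalue is simple and exhausted by the $U^j\sigma$), the restriction of $T_A$ to $\mathcal{B}_{p-q}$ has spectral radius strictly below $r$, so the mixed operator has spectral radius strictly below $r^K$. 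The overlap then decays like $(r^K-\delta)^{n'}$ while each norm grows like $r^{Kn'/2}$, so the normalized overlap tends to $0$ and $B_p,B_q$ are essentially different. The delicate point throughout is this graded bookkeeping of where the peripheral spectrum of $T_A$ lives; once the cyclic block form is in hand, it controls both the normality and the essential-difference claims.
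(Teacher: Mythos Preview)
The paper does not actually prove this proposition; it simply cites Lemma~5 of Ref.~\onlinecite{Cadarso2013}. Your argument is correct and self-contained, and it is essentially the standard proof one finds in that literature: pass to the cyclic block form coming from the noncommutative Perron--Frobenius theory for irreducible CP maps, then read off vanishing, block-diagonality after blocking $K$, primitivity of each diagonal block, and decay of the cross transfer operators from the graded bookkeeping of the peripheral spectrum.

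A few minor points of presentation you may want to tighten. First, the claim ``$T_A(\sigma_p)=r\sigma_{p+1}$'' deserves one line: since $A^i$ maps $V_p\to V_{p+1}$, $T_A$ sends $\mathcal B(V_p)$ into $\mathcal B(V_{p+1})$, and then decomposing $T_A(\sigma)=r\sigma$ blockwise gives exactly this. Second, your sentence about ``conjugation by $U^j$ maps eigenspaces bijectively'' is a bit elliptic; the clean statement is that the $K$ matrices $X_j=\sum_p\omega^{jp}\sigma_p\in\mathcal B_0$ are eigenvectors of $T_A$ at the $K$ distinct peripheral eigenvalues $r\omega^{-j}$, and since those eigenvalues are simple by \cref{prop:channel_props}, these exhaust the peripheral eigenspace, forcing $T_A|_{\mathcal B_s}$ to have spectral radius strictly below $r$ for $s\not\equiv 0$. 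Third, in the essential-difference step the mixed transfer operator for $\langle V_{n'}(B_p),V_{n'}(B_q)\rangle$ is the restriction of $T_A^K$ to $\mathrm{Hom}(V_p,V_q)\subset\mathcal B_{q-p}$ rather than $\mathcal B_{p-q}$; this swap is harmless but worth writing correctly. None of these affect the validity of the argument.
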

This statement has been proven as Lemma 5 in Ref. \onlinecite{Cadarso2013}. Proposition 9 from Ref. \onlinecite{Cirac2017} is a corollary of this:

\begin{corollary}\label{cor:mps_decomp}
  For any MPS tensor $A$ $\exists K$ such that after blocking $K$ tensors, $V_{Kn}(A)$ decomposes into the following linear combination of normal tensors:
  \begin{equation}
    V_{Kn}(A) = \sum_i \left(\sum_j \mu_{ij}^n\right) V_n(B_i),
  \end{equation}
  where the $B_i$s are pairwise essentially different normal tensors on $K$ sites. 
\end{corollary}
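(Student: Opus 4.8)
The plan is to bootstrap from the two decomposition results stated just above. First I would invoke the decomposition of an arbitrary MPS into normal and periodic pieces, writing $V_n(A) = \sum_i \mu_i^n V_n(A_i)$ with each $A_i$ either normal or periodic. Let $K_i$ denote the periodicity of $A_i$ when it is periodic, and set $K_i = 1$ when $A_i$ is normal. I would then take $K$ to be the least common multiple of all the $K_i$, so that $K_i \mid K$ for every $i$.

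Next I would block $K$ tensors. Since $V_{Kn}(A) = \sum_i \mu_i^{Kn} V_{Kn}(A_i) = \sum_i (\mu_i^K)^n V_{Kn}(A_i)$, it suffices to bring each summand into the desired form. For a normal $A_i$, the $K$-fold blocking is again normal (normality is stable under blocking), and $V_{Kn}(A_i) = V_n(A_i^{(K)})$, so this term is already $(\mu_i^K)^n V_n(B)$ for a normal tensor $B$ on $K$ sites. For a periodic $A_i$ with period $K_i$, I would apply \Cref{prop:periodic_mps_decomp}: blocking $K_i$ tensors decomposes $V_{K_i m}(A_i)$ into a sum of $K_i$ essentially different normal tensors. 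Writing $Kn = K_i \cdot (K/K_i)\, n$ and re-blocking these normal pieces by the further factor $K/K_i$ (which again preserves normality), I obtain $V_{Kn}(A_i) = \sum_l V_n(\tilde B_{il})$ with each $\tilde B_{il}$ normal on $K$ sites, the prefactor $(\mu_i^K)^n$ carried along.

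At this point $V_{Kn}(A)$ is expressed as a finite linear combination $\sum_{i,l} c_{il}^n V_n(\tilde B_{il})$ of normal tensors on $K$ sites, each with a purely geometric coefficient in $n$. The final step is to collect these into essentially different classes. I would declare two of the $\tilde B_{il}$ equivalent when the MPS they generate are essentially equal, pick one representative $B_i$ per class, and use \Cref{prop:can_form_normal_mps} to write $V_n(\tilde B) = \lambda_{\tilde B}^n V_n(B_i)$ for each member $\tilde B$ of the class of $B_i$. Absorbing $\lambda_{\tilde B}$ into the coefficient turns the contribution of $\tilde B$ into $(c_{\tilde B}\lambda_{\tilde B})^n V_n(B_i)$, and summing over the class gives a coefficient of the form $\sum_j \mu_{ij}^n$ multiplying $V_n(B_i)$, which is exactly the claimed form with the $B_i$ pairwise essentially different by construction.

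I expect the only real obstacle to be bookkeeping rather than mathematics: choosing $K$ as the least common multiple so that one blocking simultaneously normalizes every periodic piece and preserves the normal ones, and tracking the exponents so that $\mu_i^{Kn}$ correctly becomes $(\mu_i^K)^n$ while the periodic contributions come out geometric in $n$. The one point worth verifying in passing is that re-blocking the normal pieces produced by \Cref{prop:periodic_mps_decomp} keeps them normal, which again follows from the stability of normality under blocking; the grouping into essentially different classes is then immediate from \Cref{prop:can_form_normal_mps} and requires no further input.
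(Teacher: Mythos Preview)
Your proposal is correct and is exactly the natural unpacking of why this statement is a corollary of the two preceding propositions; the paper itself gives no explicit proof here, merely recording the result as a corollary of \Cref{prop:periodic_mps_decomp} (and the decomposition into normal and periodic pieces) with a pointer to Ref.~\onlinecite{Cirac2017}. Your choice of $K$ as the least common multiple of the periodicities, the re-blocking of the normal pieces by $K/K_i$, and the final grouping into essentially different classes via \Cref{prop:can_form_normal_mps} are precisely the steps one would fill in, and the bookkeeping you flag (exponents, stability of normality under blocking) is handled correctly.
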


Finally, the following statement, together with \Cref{cor:normal_lin_indep}, provides the ``uniqueness'' of this decomposition:

\begin{prop}\label{prop:sum_power}
  If for $\mu_1,\dots \mu_r \in \mathbb{C}\backslash\{0\}$ and $\lambda_1,\dots \lambda_s\in\mathbb{C}\backslash\{0\}$
  \begin{equation}
    \sum_{i=1}^r \mu_i^n =\sum_{j=1}^{s} \lambda_j^n
  \end{equation}
  for all $n\in\mathbb{N}$, then $r=s$ and $\mu_i =\lambda_{p(i)}$ for some permutation $p$ and for all $i$ .
\end{prop}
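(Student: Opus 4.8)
The plan is to reduce the claim to the linear independence of geometric sequences, which then follows from a Vandermonde determinant computation. First I would pass from the two lists to their combined set of distinct values: let $z_1,\dots,z_k$ be the distinct complex numbers occurring among $\mu_1,\dots,\mu_r,\lambda_1,\dots,\lambda_s$, and for each $l$ let $a_l$ be the number of indices $i$ with $\mu_i=z_l$ and $b_l$ the number of indices $j$ with $\lambda_j=z_l$. Setting $c_l=a_l-b_l\in\mathbb{Z}$, the hypothesis becomes
\begin{equation}
\sum_{l=1}^k c_l\, z_l^{\,n}=0\qquad\text{for all }n\in\mathbb{N}.
\end{equation}
The goal is then precisely to show that every $c_l$ vanishes, since $a_l=b_l$ for all $l$ says that the two multisets $\{\mu_i\}$ and $\{\lambda_j\}$ coincide, which is exactly the assertion $r=s$ together with the existence of the matching permutation $p$.

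For the vanishing, I would specialize the relation above to the $k$ values $n=1,2,\dots,k$ and read it as a homogeneous linear system $Mc=0$ for the vector $c=(c_1,\dots,c_k)^{\mathsf T}$, where $M_{nl}=z_l^{\,n}$. Factoring one power of $z_l$ out of the $l$-th column gives $M=W\,\mathrm{diag}(z_1,\dots,z_k)$ with $W_{nl}=z_l^{\,n-1}$ a Vandermonde matrix. Hence $\det M=\bigl(\prod_{l}z_l\bigr)\prod_{l<l'}(z_{l'}-z_l)$. Since all $z_l$ are nonzero (every $\mu_i,\lambda_j$ is nonzero) and pairwise distinct by construction, this determinant is nonzero, so $M$ is invertible and $c=0$, giving $a_l=b_l$ for each $l$.

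There is essentially no hard step here; the only points requiring care are bookkeeping ones. One must collapse the two lists to distinct values with integer multiplicities \emph{before} invoking linear independence, because the $\mu_i$ (resp.\ $\lambda_j$) are not assumed distinct, and one must remember that the hypothesis only supplies the identity for $n\geq 1$, which is why the Vandermonde block is set up with the extra diagonal factor $\mathrm{diag}(z_l)$ rather than starting the powers at $n=0$. Once $c=0$ is established the conclusion about $r=s$ and the permutation $p$ is immediate.
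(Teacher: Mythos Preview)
Your argument is correct. Collapsing to the distinct values $z_1,\dots,z_k$ with integer net multiplicities $c_l$ and then using the $k\times k$ system from $n=1,\dots,k$ is the standard route, and your handling of the two delicate points (possible repetitions in the lists, and the fact that only $n\ge 1$ is available so that the factored diagonal $\mathrm{diag}(z_l)$ must be nonzero) is exactly right.

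The paper does not actually supply its own proof of this proposition; it simply cites Lemma~9 of Ref.~\onlinecite{Cuevas2015}. Your Vandermonde argument is a clean self-contained proof and is in fact the classical one, so there is nothing further to compare.
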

This statement has been proven  as Lemma 9 in Ref. \onlinecite{Cuevas2015}.

We will also consider non-translationally invariant MPS. 

\begin{definition}
  Let $d_i$ and $D_i$ ($i=1\dots k$) be positive integers. Let $X_i = \sum_{j=1}^{d_i} X_i^j \otimes \ket{j}\in \mathbb{C}^{D_i} \otimes \left(\mathbb{C}^{D_{i+1}}\right)^* \otimes \mathbb{C}^{d_i}$ be tensors for $i=1\dots k$, where we identify $k+1$ with $1$. Then the \emph{non-translationally invariant MPS} defined by these tensors is 
  \begin{equation}
    V(X_1,\dots , X_k) = \sum_{i_1=1}^{d_1} \dots \sum_{i_k=1}^{d_k} \tr\left\{X_1^{i_1} \dots X_k^{i_k} \right\} \ket{i_1 \dots i_k} \ .
  \end{equation} 
  A non-translationally invariant MPS is called \emph{injective after blocking $l$ sites} if $\forall i=1\dots k$ the tensor $X_i X_{i+1} \dots X_{i+l}$ satisfies that if $\tr \left\{ \rho X_i^{j_1} X_{i+1}^{j_2} \dots X_{i+l-1}^{j_l} \right\} \ket{j_1 \dots j_l} = 0$, then $\rho=0$.
\end{definition}

\begin{prop}
  Let $X_1,\dots X_k$ define a non-translationally invariant MPS that is injective after blocking $l$ sites. Then the MPS is also injective after blocking any $m\geq l$ sites. 
\end{prop}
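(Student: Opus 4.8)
The plan is to translate the injectivity hypothesis into a spanning statement, prove that this spanning property is stable under enlarging the block by a single site, and then iterate. Recall that injectivity after blocking $l$ at position $i$ means that the linear map $\rho \mapsto \sum_{j_1\dots j_l}\tr\{\rho\, X_i^{j_1}\cdots X_{i+l-1}^{j_l}\}\ket{j_1\dots j_l}$ is injective on $\rho \in \mathrm{Mat}(D_{i+l},D_i)$. Because the trace pairing between $\mathrm{Mat}(D_{i+l},D_i)$ and $\mathrm{Mat}(D_i,D_{i+l})$ is non-degenerate, this injectivity is equivalent to the assertion that the matrices $\{X_i^{j_1}\cdots X_{i+l-1}^{j_l}\}$, ranging over all physical indices, span the full space $\mathrm{Mat}(D_i,D_{i+l})$. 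Thus the whole proposition reduces to the claim: if this spanning property holds at \emph{every} position for block length $l$, then it holds at every position for block length $l+1$. The general statement for arbitrary $m\geq l$ then follows immediately by induction on the block length (treating $l+1$ as the new $l$, using that the cyclic index set is unchanged).

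For the inductive step I would fix a position $i$ and write each length-$(l+1)$ product as $X_i^{j_1}\cdots X_{i+l}^{j_{l+1}} = M_{(j_1\dots j_l)}\,X_{i+l}^{j_{l+1}}$, where $M_{(j_1\dots j_l)}=X_i^{j_1}\cdots X_{i+l-1}^{j_l}$. By the hypothesis at position $i$, the matrices $M_{(j_1\dots j_l)}$ already span all of $\mathrm{Mat}(D_i,D_{i+l})$. A short linear-algebra fact is that for a fixed matrix $N$ the set $\{M N : M\in\mathrm{Mat}(D_i,D_{i+l})\}$ is exactly the space of matrices whose rows lie in the row space $\mathrm{Row}(N)$ of $N$. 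Summing this over $j_{l+1}$ shows that the span of the length-$(l+1)$ products equals $\mathbb{C}^{D_i}\otimes\big(\sum_j \mathrm{Row}(X_{i+l}^{j})\big)$. Consequently the length-$(l+1)$ products span $\mathrm{Mat}(D_i,D_{i+l+1})$ if and only if $\sum_j \mathrm{Row}(X_{i+l}^{j})=\mathbb{C}^{D_{i+l+1}}$, a condition on the single tensor $X_{i+l}$.

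The remaining task, which I expect to be the main subtlety, is to establish this row-space condition; the key is that it must be fed in from the hypothesis at the \emph{neighbouring} position $i+1$ rather than at $i$ itself. There the products $X_{i+1}^{j_2}\cdots X_{i+l}^{j_{l+1}}$ span $\mathrm{Mat}(D_{i+1},D_{i+l+1})$, so the combined span of all their rows is the full space $\mathbb{C}^{D_{i+l+1}}$. Since the rows of any product $A\,X_{i+l}^{j_{l+1}}$ lie in $\mathrm{Row}(X_{i+l}^{j_{l+1}})$, this forces $\sum_j \mathrm{Row}(X_{i+l}^{j})=\mathbb{C}^{D_{i+l+1}}$, exactly as needed. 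Everything else is bookkeeping: recognizing that one must invoke the spanning hypothesis at $i+1$ and track it through the row spaces of the last tensor is the only genuinely nontrivial move. As $i$ was arbitrary, the spanning property holds at every position for block length $l+1$, and iterating yields injectivity after blocking any $m\geq l$.
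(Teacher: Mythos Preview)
Your proof is correct and follows essentially the same strategy as the paper's: both induct on the block length and, for the inductive step, use the hypothesis at two adjacent positions $i$ and $i+1$. The only difference is presentational---the paper argues directly in the kernel picture (from $\rho$ annihilated by the $(m{+}1)$-block, injectivity at $i$ gives $X_{i+m}^{j}\rho=0$, then injectivity at $i+1$ applied to $\rho M$ forces $\rho=0$), whereas you dualize via the trace pairing to the equivalent spanning formulation and isolate the row-space condition on the last tensor.
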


\begin{proof}
  We prove this by induction on $m$. For $m=l$, the statement is true by assumption. Suppose that the MPS is injective after blocking $m$ tensors. Let $\rho\in \mathbb{C}^{D_{i+m}}\otimes \mathbb{C}^{D_{i}}$ such that for $m+1$ consecutive sites 
  \begin{equation}
    \sum_{j_1\dots j_{m+1}}\tr \Big\{ \rho X_i^{j_1} X_{i+1}^{j_2} \dots X_{i+m}^{j_{m+1}} \Big\} \cdot \ket{j_1 \dots j_{m+1}} = 0
  \end{equation}
  for some $i$. Then, as the tensor $X_i \dots X_{i+m-1}$ is injective,
  \begin{equation}
    X_{i+m}^{j_{m+1}} \rho  = 0 \quad \forall j_{m+1}\in \{1,2, \dots, d_{i+m}\}.
  \end{equation}   
  Take any matrix $M\in \mathbb{C}^{D_i} \otimes \mathbb{C}^{D_{i+1}}$. Then  
  \begin{equation}
    0 = X_{i+1}^{j_{2}}  \dots X_{i+m}^{j_{m+1}} \rho M \in \mathbb{C}^{D_{i+1}} \otimes \mathbb{C}^{D_{i+1}} .
  \end{equation} 
 Then
  \begin{equation}
    \sum_{j_2\dots j_{m+1}}\tr \Big\{  X_{i+1}^{j_{2}}  \dots X_{i+m}^{j_{m+1}} \rho M \Big\} \cdot \ket{j_1 \dots j_{m+1}} = 0 \ .
  \end{equation}  
  The block of the $m$ consecutive tensors $X_{i+1} \dots X_{i+m}$ is injective, therefore $\rho M=0$. As $M$ was arbitrary, $\rho=0$, thus the MPS is injective after blocking $m+1$ sites.
\end{proof}
Finally, we introduce Matrix Product Operators (MPO).
\begin{definition}
  A \emph{Matrix Product Operator} is an operator written in MPS form:
  \begin{equation}
   V_n(X) = \sum_{i_1\dots i_n, j_1 \dots j_n} \tr \{ X^{i_1 j_1 } \dots X^{i_n j_n } \} \ket{i_1 \dots i_n} \bra{j_1 \dots j_n} \ .
  \end{equation}
\end{definition}
As MPOs are just special MPSs, all the definitions and structure theorems above apply. In particular, we will use the terminology \emph{normal, injective, periodic} for MPOs too.

\section{Canonical form}\label{sec:canonicalform}

In this section we investigate when two semi-injective PEPS defined by $(\phi_A,O_A)$ and $(\phi_B,O_B)$  describe the same state for some (sufficiently large) system size. We find that this question can be decided locally: the two states are proportional for a large system size if and only if they are proportional on a $3\times 3$ torus. Moreover, the boundary degree of freedoms are related by an invertible MPO whose inverse is also an MPO. Finally, we show that $O_B^{-1}O_A$ has to be a product of two-particle invertible operators. In \cref{app:canoniclaform_examples}, we also provide some examples that explain why the situation is more complicated than in the case of injective PEPS.

Consider two semi-injective PEPS generated by $(\phi_A,O_A)$ and $(\phi_B,O_B)$. Suppose that on an $n\times m$ torus, they generate states that are proportional to each other:
\begin{equation}
  \begin{tikzpicture}
    \truncstateswo[blue,purple] (0,0) (4,3)
  \end{tikzpicture} = \mu_{n,m} \ 
  \begin{tikzpicture}
    \truncstateswo[green,orange, dashed] (0,0) (4,3)
  \end{tikzpicture},
\end{equation}
where the purple circle and the blue rectangle depicts $O_A$ and $\ket{\phi_A}$, while the orange dashed circle and the green rectangle depicts $O_B$ and $\ket{\phi_B}$ and $\mu_{n,m}\in\mathbb{C}$. Inverting $O_B$, we obtain
\begin{equation}\label{eq:setup}
  \begin{tikzpicture}
    \truncstateswo[blue,red] (0,0) (4,3)
  \end{tikzpicture} = \mu_{n,m} \ 
  \begin{tikzpicture}
    \truncstates[green] (0,0) (4,3)
  \end{tikzpicture},
\end{equation}
where the red circle denotes the invertible operator $O = O_B^{-1} O_A$. This equation is the starting point of our investigation below. First we prove that it hold for all system sizes:

\begin{prop}\label{prop:size_indep}
  If \cref{eq:setup} holds for some $n_0\geq 3,m_0\geq 3$, then it also holds for any $n,m\in\mathbb{N}$ and the proportionality constant is $\mu_{n,m} = \mu^{nm}$.
\end{prop}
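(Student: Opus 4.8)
The plan is to collapse the two–dimensional identity \cref{eq:setup} to a one–dimensional MPS identity and then exploit the rigidity of normal MPS tensors (\cref{prop:can_form_normal_mps}) to upgrade proportionality at a single size to proportionality at all sizes, propagating first in the vertical and then in the horizontal direction, and finally reading off the multiplicative form of the constant.

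First I would fix the horizontal width $n=n_0$ and reinterpret both networks in \cref{eq:setup} as translationally invariant MPS in the vertical direction: contracting one horizontal period of the network (a row of four–partite states together with the operators coupling it to its neighbours) produces a single MPS tensor, call it $\mathcal{A}_{n_0}$ for the left–hand side and $\mathcal{B}_{n_0}$ for the operator–free right–hand side, so that the $n_0\times m$ state equals $V_m(\mathcal{A}_{n_0})$ respectively $V_m(\mathcal{B}_{n_0})$. The hypothesis then reads $V_{m_0}(\mathcal{A}_{n_0})=\mu_{n_0,m_0}\,V_{m_0}(\mathcal{B}_{n_0})$.

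The key structural input — and the step I expect to be the main obstacle — is to show that these row tensors are \emph{injective}, equivalently (by \cref{prop:normal2injective}) normal with injectivity length one. This is exactly where the defining assumptions of a semi-injective PEPS enter: because $\ket{\phi}$ has full–rank one–particle reduced densities, the underlying entangled network is injective as a map from the vertical virtual legs to the legs on which the operators act, and since the operators are invertible — and composing an injective map with invertible operators preserves injectivity — the whole row tensor stays injective. Granting this, both $\mathcal{A}_{n_0}$ and $\mathcal{B}_{n_0}$ have injectivity length $L=1$, so the threshold $2L+1=3$ is met by $m_0\ge 3$ (this is precisely why the hypothesis requires size at least three). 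Choosing any $\lambda$ with $\lambda^{m_0}=\mu_{n_0,m_0}$, the hypothesis is exactly the second of the three equivalent conditions of \cref{prop:can_form_normal_mps}, so the first one yields $V_m(\mathcal{A}_{n_0})=\lambda^{m}V_m(\mathcal{B}_{n_0})$ for all $m$; thus \cref{eq:setup} holds at every $(n_0,m)$ with $\mu_{n_0,m}=\lambda^{m}$.

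Finally I would repeat the argument in the horizontal direction. For each fixed $m$ the identity now holds at $(n_0,m)$, and reading the networks as MPS in the horizontal direction gives, by the same reduced–density/invertibility argument, injective column tensors; since $n_0\ge 3$, \cref{prop:can_form_normal_mps} upgrades the proportionality to all widths $n$, so \cref{eq:setup} holds for every $(n,m)$. It remains to pin down the constant. Running the vertical propagation at each fixed $n$ gives $\mu_{n,m}=\lambda_n^{m}$, and the horizontal one at each fixed $m$ gives $\mu_{n,m}=\kappa_m^{n}$; specializing to $n=1$ yields $\kappa_m=\mu_{1,m}=\lambda_1^{m}$, whence $\mu_{n,m}=\kappa_m^{n}=\lambda_1^{nm}$, i.e.\ $\mu_{n,m}=\mu^{nm}$ with $\mu:=\mu_{1,1}$. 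The only points demanding genuine care are the injectivity of the row/column tensors (the main obstacle) and checking that the degenerate small sizes are legitimately covered, which they are since the conclusion of \cref{prop:can_form_normal_mps} is stated for all $n\in\mathbb{N}$.
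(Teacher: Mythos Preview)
Your proposal is correct and follows essentially the same route as the paper: reinterpret the two networks as translationally invariant MPS in one direction, establish that the resulting tensors are injective (hence normal with injectivity length one), invoke \cref{prop:can_form_normal_mps} at the threshold $2L+1=3$ to propagate to all sizes in that direction, then repeat in the other direction and read off the multiplicative form of the constant.

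The one place where the paper is more concrete than your sketch is the injectivity step, and your justification there is slightly misattributed. You invoke the full-rank one-particle reduced densities of $\ket{\phi}$, but that hypothesis concerns single-particle marginals and does not by itself make an arbitrary choice of row tensor injective. What the paper does is first take the minimal-rank (Schmidt) decomposition of each four-partite state across the relevant cut (\cref{eq:schmidt2}) and use the Schmidt indices as the MPS virtual legs. With that choice the operator-free column tensor is literally a tensor product of Schmidt vectors along the column, which are linearly independent by minimality, hence the tensor is injective; the tensor with the operators $O$ differs only by invertible maps on the physical legs and therefore inherits injectivity. Once you make the virtual legs explicit in this way, the ``main obstacle'' you flagged dissolves and the rest of your argument goes through exactly as you wrote it.
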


\begin{proof}
  Take a minimal rank decomposition of the four-partite states w.r.t. the vertical cut. That is, write 
  \begin{equation}\label{eq:schmidt2}
  \begin{tikzpicture}
    \pic[blue] at (0,0) {state};
  \end{tikzpicture} = \ 
  \begin{tikzpicture}
    \pic[blue] at (0,0) {halfstate};
    \pic[blue,rotate=180] at (0,0) {halfstate};
  \end{tikzpicture} \quad \text{and} \quad 
  \begin{tikzpicture}
    \pic[green] at (0,0) {state};
  \end{tikzpicture} = \ 
  \begin{tikzpicture}
    \pic[green] at (0,0) {halfstate};
    \pic[green,rotate=180] at (0,0) {halfstate};
  \end{tikzpicture} \ .  
  \end{equation}
  Using this decomposition, \cref{eq:setup} reads as

  \def\spinRad{0.4}
  \def\schmidtCloseEnd{0.2}
  \def\schmidtFarEnd{0.33}
  \begin{equation}
  \begin{tikzpicture}[baseline = 1.4cm]
  \clip (0,.1) rectangle (4,3-.1);
  \foreach \x in {0,1,...,4}{
    \foreach \y in {0,...,3}{
      \draw[thick, draw=blue] (\x+\spinRad,\y+\spinRad) -- (\x+\spinRad,\y-\spinRad) ;
      \draw[thick, draw=blue] (\x-\spinRad,\y+\spinRad) -- (\x-\spinRad,\y-\spinRad) ;  	
      
      \draw[rounded corners=2pt,blue] (\x+\spinRad-\schmidtFarEnd,\y+\spinRad+0.1) rectangle (\x+\spinRad-\schmidtCloseEnd,\y-\spinRad-0.05);	
      \draw[rounded corners=2pt,blue] (\x-\spinRad+\schmidtFarEnd,\y+\spinRad+0.1) rectangle (\x-\spinRad+\schmidtCloseEnd,\y-\spinRad-0.05);	
      
      \draw[thick, draw=blue] (\x-\spinRad,\y+\spinRad) -- +(\schmidtCloseEnd,0) ;  
      \draw[thick, draw=blue] (\x-\spinRad,\y-\spinRad) -- +(\schmidtCloseEnd,0) ;  
      \draw[thick, draw=blue] (\x+\spinRad,\y+\spinRad) -- +(-\schmidtCloseEnd,0) ;  
      \draw[thick, draw=blue] (\x+\spinRad,\y-\spinRad) -- +(-\schmidtCloseEnd,0) ;  
      
      \draw[thick, draw=blue] (\x+\spinRad-\schmidtFarEnd,\y) -- (\x-\spinRad+\schmidtFarEnd,\y) ;  
      
      \node[circle, minimum height = 1mm, draw=black, fill, thick, inner sep=0pt] at (\x+\spinRad,\y+\spinRad) {};
      \node[circle, minimum height = 1mm, draw=black, fill, thick, inner sep=0pt] at (\x-\spinRad,\y+\spinRad) {};
      \node[circle, minimum height = 1mm, draw=black, fill, thick, inner sep=0pt] at (\x+\spinRad,\y-\spinRad) {};
      \node[circle, minimum height = 1mm, draw=black, fill, thick, inner sep=0pt] at (\x-\spinRad,\y-\spinRad) {};
      
      \draw[thick, draw=red] (\x-.5,\y-.5) circle (0.25);
      
    }
  }
  \end{tikzpicture} 
  = \mu_{n,m} \
  \begin{tikzpicture}[baseline = 1.4cm]
  \clip (0,.1) rectangle (4,3-.1);
  \foreach \x in {0,1,...,4}{
    \foreach \y in {0,...,3}{
      \draw[thick, draw=green] (\x+\spinRad,\y+\spinRad) -- (\x+\spinRad,\y-\spinRad) ;
      \draw[thick, draw=green] (\x-\spinRad,\y+\spinRad) -- (\x-\spinRad,\y-\spinRad) ;  	
      
      \draw[rounded corners=2pt,green] (\x+\spinRad-\schmidtFarEnd,\y+\spinRad+0.1) rectangle (\x+\spinRad-\schmidtCloseEnd,\y-\spinRad-0.05);	
      \draw[rounded corners=2pt,green] (\x-\spinRad+\schmidtFarEnd,\y+\spinRad+0.1) rectangle (\x-\spinRad+\schmidtCloseEnd,\y-\spinRad-0.05);	
      
      \draw[thick, draw=green] (\x-\spinRad,\y+\spinRad) -- +(\schmidtCloseEnd,0) ;  
      \draw[thick, draw=green] (\x-\spinRad,\y-\spinRad) -- +(\schmidtCloseEnd,0) ;  
      \draw[thick, draw=green] (\x+\spinRad,\y+\spinRad) -- +(-\schmidtCloseEnd,0) ;  
      \draw[thick, draw=green] (\x+\spinRad,\y-\spinRad) -- +(-\schmidtCloseEnd,0) ;  
      
      \draw[thick, draw=green] (\x+\spinRad-\schmidtFarEnd,\y) -- (\x-\spinRad+\schmidtFarEnd,\y) ;  
      
      \node[circle, minimum height = 1mm, draw=black, fill, thick, inner sep=0pt] at (\x+\spinRad,\y+\spinRad) {};
      \node[circle, minimum height = 1mm, draw=black, fill, thick, inner sep=0pt] at (\x-\spinRad,\y+\spinRad) {};
      \node[circle, minimum height = 1mm, draw=black, fill, thick, inner sep=0pt] at (\x+\spinRad,\y-\spinRad) {};
      \node[circle, minimum height = 1mm, draw=black, fill, thick, inner sep=0pt] at (\x-\spinRad,\y-\spinRad) {};
      
      
    }
  }
  \end{tikzpicture}.
  \end{equation}  
   This gives rise to an MPS description of the states with the following tensors:
  \begin{equation}\label{eq:mps_decomp}
  \begin{tikzpicture}[baseline=-.08cm]
  \node[rectangle, draw, ultra thick, minimum width=.8cm,  minimum height=.3cm, fill=blue!20] (A) at (0,0)  {};
  
  \draw[ultra thick, blue] (A.east) -- +(.3,0);
  \draw[ultra thick, blue] (A.west) -- +(-.3,0);
  
  \draw[ultra thick] (A.north) -- +(0,.2);
  \end{tikzpicture}
  = \ 
  \begin{tikzpicture}[baseline = 1.4cm]
  \clip (0,.1) rectangle (1,3-.1);
  \foreach \x in {0,1}{
    \foreach \y in {0,...,3}{

      \draw[thick, draw=blue] (\x+\spinRad,\y+\spinRad) -- (\x+\spinRad,\y-\spinRad) ;
      \draw[thick, draw=blue] (\x-\spinRad,\y+\spinRad) -- (\x-\spinRad,\y-\spinRad) ;  	
      
      \draw[rounded corners=2pt,blue] (\x+\spinRad-\schmidtFarEnd,\y+\spinRad+0.1) rectangle (\x+\spinRad-\schmidtCloseEnd,\y-\spinRad-0.05);	
      \draw[rounded corners=2pt,blue] (\x-\spinRad+\schmidtFarEnd,\y+\spinRad+0.1) rectangle (\x-\spinRad+\schmidtCloseEnd,\y-\spinRad-0.05);	
      
      \draw[thick, draw=blue] (\x-\spinRad,\y+\spinRad) -- +(\schmidtCloseEnd,0) ;  
      \draw[thick, draw=blue] (\x-\spinRad,\y-\spinRad) -- +(\schmidtCloseEnd,0) ;  
      \draw[thick, draw=blue] (\x+\spinRad,\y+\spinRad) -- +(-\schmidtCloseEnd,0) ;  
      \draw[thick, draw=blue] (\x+\spinRad,\y-\spinRad) -- +(-\schmidtCloseEnd,0) ;  
      
      \draw[thick, draw=blue] (\x+\spinRad-\schmidtFarEnd,\y) -- (\x-\spinRad+\schmidtFarEnd,\y) ;  
      
      \node[circle, minimum height = 1mm, draw=black, fill, thick, inner sep=0pt] at (\x+\spinRad,\y+\spinRad) {};
      \node[circle, minimum height = 1mm, draw=black, fill, thick, inner sep=0pt] at (\x-\spinRad,\y+\spinRad) {};
      \node[circle, minimum height = 1mm, draw=black, fill, thick, inner sep=0pt] at (\x+\spinRad,\y-\spinRad) {};
      \node[circle, minimum height = 1mm, draw=black, fill, thick, inner sep=0pt] at (\x-\spinRad,\y-\spinRad) {};
      
      \draw[thick, draw=red] (\x-.5,\y-.5) circle (0.25);
      
    }
  }
  \end{tikzpicture} \quad\text{and}\quad 
  \begin{tikzpicture}[baseline=-.08cm]
  \node[rectangle, draw, ultra thick, minimum width=.8cm,  minimum height=.3cm, fill=green!20] (A) at (0,0)  {};
  
  \draw[ultra thick, green] (A.east) -- +(.3,0);
  \draw[ultra thick, green] (A.west) -- +(-.3,0);
  
  \draw[ultra thick] (A.north) -- +(0,.2);
  \end{tikzpicture}
  = \ 
  \begin{tikzpicture}[baseline = 1.4cm]
  \clip (0,.1) rectangle (1,3-.1);
  \foreach \x in {0,1}{
    \foreach \y in {0,...,3}{

      \draw[thick, draw=green] (\x+\spinRad,\y+\spinRad) -- (\x+\spinRad,\y-\spinRad) ;
      \draw[thick, draw=green] (\x-\spinRad,\y+\spinRad) -- (\x-\spinRad,\y-\spinRad) ;  	
      
      \draw[rounded corners=2pt,green] (\x+\spinRad-\schmidtFarEnd,\y+\spinRad+0.1) rectangle (\x+\spinRad-\schmidtCloseEnd,\y-\spinRad-0.05);	
      \draw[rounded corners=2pt,green] (\x-\spinRad+\schmidtFarEnd,\y+\spinRad+0.1) rectangle (\x-\spinRad+\schmidtCloseEnd,\y-\spinRad-0.05);	
      
      \draw[thick, draw=green] (\x-\spinRad,\y+\spinRad) -- +(\schmidtCloseEnd,0) ;  
      \draw[thick, draw=green] (\x-\spinRad,\y-\spinRad) -- +(\schmidtCloseEnd,0) ;  
      \draw[thick, draw=green] (\x+\spinRad,\y+\spinRad) -- +(-\schmidtCloseEnd,0) ;  
      \draw[thick, draw=green] (\x+\spinRad,\y-\spinRad) -- +(-\schmidtCloseEnd,0) ;  
      
      \draw[thick, draw=green] (\x+\spinRad-\schmidtFarEnd,\y) -- (\x-\spinRad+\schmidtFarEnd,\y) ;  
      
      \node[circle, minimum height = 1mm, draw=black, fill, thick, inner sep=0pt] at (\x+\spinRad,\y+\spinRad) {};
      \node[circle, minimum height = 1mm, draw=black, fill, thick, inner sep=0pt] at (\x-\spinRad,\y+\spinRad) {};
      \node[circle, minimum height = 1mm, draw=black, fill, thick, inner sep=0pt] at (\x+\spinRad,\y-\spinRad) {};
      \node[circle, minimum height = 1mm, draw=black, fill, thick, inner sep=0pt] at (\x-\spinRad,\y-\spinRad) {};
      
      
    }
  }
  \end{tikzpicture},
  \end{equation}
  where the physical index of the MPS tensor is all physical indices of the virtual particles, while the virtual indices of the MPS corrspond to the virtual indices of the minimal rank decomposition of the four-partite states. These tensors are injective: the green tensor is just a tensor product of the Schmidt vectors, and as the Schmidt vectors (and their tensor product) are linearly independent, that tensor is injective. The blue tensor is obtained by acting with an invertible operator on the tensor product of Schmidt vectors, therefore it is also injective.

  Thus, using \cref{prop:can_form_normal_mps}, if \cref{eq:setup} holds for $n_0\geq 3$, $m_0\geq 3$, then it also holds when the system size in the horizontal direction is changed to any $n$ by keeping the system size in the vertical direction $m_0$. Therefore \cref{eq:setup} holds for $m_0$ and any $n$, and the proportionality constant is $\mu_{n,m_0} = \mu_{m_0}^n$ for some $\mu_{m_0} \in \mathbb{C}$. The  argumentation above holds w.r.t.\ the horizontal cut. Therefore the system size can be changed along the vertical direction too: as \cref{eq:setup} holds for $n,m_0$, it also holds for $n,m$ and the proportionality constant is then $\mu_{n,m_0}^{m/m_0} =\mu^{nm}$ for some $\mu\in\mathbb{C}$. 
\end{proof}

Note that this implies that it is decidable whether two semi-injective PEPS are equal for all system size. Moreover, it is also practically checkable: it is enough to calculate the overlap  between two states (and their norms) on a $3\times 3$ torus. The overlaps can be calculated by standard tensor network techniques. The cost of this computation scales as the 12th power of the Schmidt rank.  

Using \cref{prop:can_form_normal_mps}, we conclude that up to a constant there is a uniquely defined operator $X_n$ on the boundary for which
\begin{equation}\label{eq:boundary0}
  \begin{tikzpicture}
  \scope
  \clip (0.5,-0.5-\d) rectangle (3.5,0.5+\d);
  \foreach \x in {0,...,3}{
    \pic[blue,rotate=90] at (\x,0.5+\d) {halfstate};
    \pic[blue,rotate=-90] at (\x,-0.5-\d) {halfstate};
    \draw[red] (\x+0.5,0) circle (0.3);
  }
  \endscope
  \node[font=\tiny,anchor=east] at (0.6,0) {$\dots$};
  \node[font=\tiny,anchor=west] at (3.4,0) {$\dots$};
  \end{tikzpicture}
  = \mu^n \ 
  \begin{tikzpicture}
    \scope
      \clip (0.5,-1.1-\d) rectangle (3.5,1.1+\d);
      \draw (0,0.8)--(4,0.8);
      \draw (0,-0.8)--(4,-0.8);
      \foreach \x in {0,...,3}{
        \pic[green,rotate=90] at (\x,0.5+\d) {halfstate};
        \pic[green,rotate=-90] at (\x,-0.5-\d) {halfstate};
        \draw[green] (\x,0.5)--+(0,0.6);
        \draw[green] (\x,-0.5)--+(0,-0.6);
        \draw[fill=gray] (\x,0.8) circle (0.1);
        \draw[fill=white] (\x,-0.8) circle (0.1);
      }
    \endscope 
    \draw[fill=gray,rounded corners=3pt] (1-0.15,0.7) rectangle (3.15,0.9);
    \draw[fill=white,rounded corners=3pt] (1-0.15,-0.7) rectangle (3.15,-0.9);
    \node[anchor=west,font=\scriptsize] at (3.5,0.8) {$X_{n}$};
    \node[anchor=west,font=\scriptsize] at (3.5,-0.8) {$X^{-1}_{n}$};
    \node[font=\tiny,anchor=east] at (0.6,0) {$\dots$};
    \node[font=\tiny,anchor=west] at (3.4,0) {$\dots$};
  \end{tikzpicture} \ .
\end{equation}
This construction, however, does not reveal anything about the properties of the gauges $X_n$ and $X_n^{-1}$: they are globally defined and the definition depends on the system size. In the following we explore their structure and show that they can both be written as a normal MPOs.
\begin{theorem}\label{prop:boundary_mpo}
  Suppose \cref{eq:setup} holds for some $n,m\geq 3$. Then there are two MPO tensors $X$ and $Y$ such that 
  \begin{equation} \label{eq:boundary1}
  \begin{tikzpicture}
  \scope
  \clip (0.5,-0.5-\d) rectangle (3.5,0.5+\d);
  \foreach \x in {0,...,3}{
    \pic[blue,rotate=90] at (\x,0.5+\d) {halfstate};
    \pic[blue,rotate=-90] at (\x,-0.5-\d) {halfstate};
    \draw[red] (\x+0.5,0) circle (0.3);
  }
  \endscope
  \node[font=\tiny,anchor=east] at (0.6,0) {$\dots$};
  \node[font=\tiny,anchor=west] at (3.4,0) {$\dots$};
  \end{tikzpicture}
  =
  \mu^n\
    \begin{tikzpicture}[green,font=\scriptsize]
    \foreach \x in {0,...,2}{
      \pic[rotate=90] at (\x,0.5+\d) {halfstate};
      \pic[rotate=-90] at (\x,-0.5-\d) {halfstate};
      \pic[pic text = $X$, anchor=south west] at (\x,0.8) {boundarympo};
      \pic[pic text = $Y$, anchor=north west] at (\x,-0.8) {boundarympo};
    }
    \node[black,font=\tiny,anchor=east] at (-0.5,0.8) {$\dots$};
    \node[black,font=\tiny,anchor=west] at (2.5,0.8) {$\dots$};
    \node[black,font=\tiny,anchor=east] at (-0.5,-0.8) {$\dots$};
    \node[black,font=\tiny,anchor=west] at (2.5,-0.8) {$\dots$};
    \end{tikzpicture} \ ,
  \end{equation}
  where $\mu\in\mathbb{C}$ is the proportionality constant from \cref{prop:size_indep}, and  $V_n(Y) = \left(V_n(X)\right)^{-1}$ for every size $n$ and both $X$ and $Y$ become injective after blocking two tensors.
\end{theorem}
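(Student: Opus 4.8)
The plan is to reduce the statement to the one–dimensional canonical form theorem (\cref{prop:can_form_normal_mps}) applied in the direction transverse to the boundary on which $X_n$ is to live, and then to upgrade the abstract gauge it produces to a genuine matrix product operator by exhibiting it through an explicit, horizontally local construction.

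First I would set up the MPS whose chain runs perpendicular to that boundary. Taking the minimal-rank decomposition of the four-partite states with respect to the horizontal cut, a single horizontal strip of width $n$ becomes one tensor of a translationally invariant MPS in the vertical direction: call $\mathbb{A}$ the blue strip (which carries the operators $O$) and $\mathbb{B}$ the green strip, which, having no operators, is simply a tensor product along the strip of the horizontal-cut Schmidt vectors. As in the proof of \cref{prop:size_indep}, $\mathbb{B}$ is injective because it is a tensor product of linearly independent Schmidt vectors (\cref{prop:injective_tensor_prod}), and $\mathbb{A}$ is injective because it is obtained from $\mathbb{B}$'s tensor structure by acting with the invertible $O$; hence both are proportional to normal tensors by \cref{prop:normal2injective}. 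By \cref{prop:size_indep} their states are proportional on the vertical ring of any length $m$, with per-site constant $\mu^n$. Thus \cref{prop:can_form_normal_mps} yields a gauge $X_n$, unique up to a scalar, acting on the vertical bond $W^{\otimes n}$ (with $W$ the horizontal Schmidt space) such that $\mathbb{A} = \mu^n X_n\, \mathbb{B}\, X_n^{-1}$; this is exactly \cref{eq:boundary0}, and $X_n^{-1}$ is the gauge of the problem with the roles of $\mathbb{A}$ and $\mathbb{B}$ exchanged.

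The crux is to show that $X_n$ is an MPO with a single repeating tensor. Here I would use that the canonical-form gauge of \cref{prop:can_form_normal_mps} is not merely abstract but is constructed locally: since $\mathbb{B}$ becomes injective after blocking a fixed number $\ell$ of vertical sites, one recovers $X_n$ (up to normalization) by contracting $\ell$ vertical layers of $\mathbb{A}$ against the left-inverse of the corresponding block of $\mathbb{B}$, after verifying that the physical content of $\mathbb{A}$ indeed lies in the image of that block so the inverse applies. The essential observation is that $\mathbb{A}$ is itself a translationally invariant \emph{horizontal} MPO of finite bond dimension $\chi$ (the correlations created by the $O$'s), while the inverse green block is horizontally a product; consequently the height-$\ell$ network computing $X_n$ is horizontally translationally invariant with a finite bond dimension $\chi^{\ell}$ independent of $n$. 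Collapsing its vertical direction into a single column tensor $X$ therefore presents $X_n$ as $V_n(X)$, and the analogous construction applied to $X_n^{-1}$ produces $Y$.

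Finally I would collect the remaining properties. Translation invariance along the boundary is forced by uniqueness of the gauge: the horizontal shift $T$ leaves $\mathbb{A}$ and $\mathbb{B}$ invariant, so the shifted $X_n$ again solves the gauge equation and hence equals $X_n$ up to a scalar fixed to one by $T^n=\id$ together with the chosen normalization; this is consistent with the single-tensor form just obtained. The equality $V_n(Y)=(V_n(X))^{-1}$ is immediate because $X_n$ and $X_n^{-1}$ are honest inverses at each $n$, and injectivity of $X$ and $Y$ after blocking two tensors follows from the explicit two-layer construction, whose constituents are injective green blocks conjugated by the invertible $O$'s. The step I expect to be the main obstacle is precisely this explicit construction: one must check carefully that the bounded-height inversion of the green layer is well defined (the image condition) and that the resulting horizontally translationally invariant network genuinely closes into one MPO tensor of size-independent bond dimension, since translation invariance by itself does not guarantee an MPO — it is the finiteness of the horizontal bond dimension of $\mathbb{A}$ that makes this work.
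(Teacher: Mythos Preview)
Your setup and the derivation of the gauge $X_n$ via \cref{prop:can_form_normal_mps} are correct and coincide with the paper's starting point \cref{eq:boundary0}. The gap is in how you pass from $X_n$ to an MPO.

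Contracting $\ell$ vertical layers of $\mathbb{A}$ against the left inverse of the $\ell$-blocked $\mathbb{B}$ on the physical indices does \emph{not} produce $X_n$: it produces an object with four open virtual legs (top/bottom of both $\mathbb{A}$ and $\mathbb{B}$) that equals $\mu^{n\ell}\, X_n\otimes X_n^{-1}$. To isolate $X_n$ you must close off the bottom pair of legs, and every local, $n$-independent way of doing this (tracing, or applying a fixed product functional) may return zero. This is exactly the obstacle the paper faces and resolves: it applies a product linear functional on the lower half of \cref{eq:mps_double_boundary}, obtains a candidate MPO tensor $\tilde X$, and then has to argue---via the one-site map $F$ and the observation that $\tr M_{a,b}\neq 0$ forces $M_{a,b}$ to be non-nilpotent---that the functional can be chosen so the result is nonzero on an infinite set $S$ of system sizes. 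Your ``image condition'' is not the issue (it is automatic from $\mathbb{A}=\mu^n X_n\mathbb{B}X_n^{-1}$); the issue is nonvanishing of the projection that extracts $X_n$ from $X_n\otimes X_n^{-1}$.

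There is a second, independent step you have not addressed. Even once $X_n=V_n(\tilde X)$ and $X_n^{-1}=V_n(\tilde Y)$ for $n\in S$, nothing yet says $\tilde X,\tilde Y$ are normal, that \cref{eq:boundary1} holds for \emph{every} $n$, or that $V_n(Y)=V_n(X)^{-1}$ for every $n$. The paper uses \cref{lem:boundaryMPSinjective} (the l.h.s.\ of \cref{eq:boundary1} is a single normal block), decomposes $V_n(\tilde X)\otimes V_n(\tilde Y)$ into normal components via \cref{cor:mps_decomp}, and combines \cref{cor:normal_lin_indep} with \cref{prop:sum_power} to kill all but one block; only then do normality, validity for all $n$, and injectivity after blocking two follow. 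Your remark that injectivity after blocking two ``follows from the explicit two-layer construction, whose constituents are injective green blocks conjugated by the invertible $O$'s'' does not establish this: the MPO tensor $X$ you build is not of that form, and in any case the injectivity statement in the theorem is about the final normal $X,Y$, which your construction has not yet produced.
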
 

Before proceeding to the proof, notice that
\begin{lemma} \label{lem:boundaryMPSinjective}
  The l.h.s. of \cref{eq:boundary1} can be described by an MPS that becomes injective after blocking two tensors.
\end{lemma}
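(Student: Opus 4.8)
The plan is to exhibit the left-hand side of \cref{eq:boundary1} as a translationally invariant one-dimensional tensor network, read off its MPS tensor, and then verify injectivity after blocking two sites by constructing an explicit left inverse out of the invertible operator $O$ and the linear independence of the Schmidt vectors.

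First I would identify the MPS tensor of the boundary object. Its open (physical) legs are the Schmidt bonds of $\ket{\phi_A}$ that stick out at the top and bottom of the strip, whereas its horizontal virtual bonds are carried by the operators $O$, each of which is shared between two neighbouring columns. A single unit cell therefore consists of one top and one bottom Schmidt vector together with one operator $O$; since $O$ acts on finitely many particles, this is a genuine MPS of finite bond dimension, which settles the ``is an MPS'' part of the claim. The reason the statement is phrased with two tensors is exactly that each $O$ straddles two columns: only after blocking two columns does one complete copy of $O$ lie inside the blocked tensor, while the half-operators at the two ends of the block become the incoming and outgoing virtual legs.

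For the injectivity I would mirror the argument already used in the proof of \cref{prop:size_indep}, where the green tensor was injective because it is a tensor product of linearly independent Schmidt vectors, and the blue tensor was injective because it is obtained from such a product by acting with an invertible operator. Concretely, the tensor obtained by blocking two columns equals the tensor product of the four flanking Schmidt-vector maps (two top, two bottom) composed with the enclosed invertible $O$. Since the decomposition in \cref{eq:schmidt2} has minimal rank, the Schmidt vectors are linearly independent, and hence the top half and the bottom half, each regarded as a map from its Schmidt bond into its two physical legs, are injective; by \cref{prop:injective_tensor_prod} their tensor product is injective as well. Composing with $O$ preserves injectivity because $O$ is invertible, so a left inverse of the blocked tensor is obtained by first applying $O^{-1}$ and then the left inverses of the individual Schmidt-vector maps. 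This recovers the virtual indices from the physical ones and hence proves injectivity after blocking two.

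The main obstacle is bookkeeping rather than anything conceptual: one has to fix the assignment of the operators $O$ to unit cells so that blocking two columns encloses exactly one $O$, and then check that the two half-operators at the ends of the block act only on the virtual legs and thus do not obstruct the left-inverse construction on the physical legs. Once the index matching is carried out carefully, injectivity after blocking two follows from \cref{prop:injective_tensor_prod} and the invertibility of $O$, as above.
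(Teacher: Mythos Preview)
Your overall strategy coincides with the paper's: cut each $O$ by a minimal-rank (Schmidt) decomposition so that the horizontal virtual bond of the boundary MPS is the Schmidt index of $O$, block two sites so that one complete $O$ sits in the interior, and invert that $O$. Where your argument breaks is what happens \emph{after} you apply $O^{-1}$.

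You claim that the blocked tensor equals ``the tensor product of the four flanking Schmidt-vector maps composed with the enclosed invertible $O$'', and that the half-operators at the ends ``act only on the virtual legs''. Neither is true. After blocking two columns you have the four $\ket{\phi_A}$ Schmidt vectors, one full $O$ in the middle, \emph{and} one half-$O$ on each flank; each half-$O$ is a genuine two-body operator on the outer physical particles, carrying in addition one virtual index. Applying $O^{-1}$ removes only the middle operator, so what remains is
\[
v_i\otimes w_j,\qquad
v_i=(\text{left half-}O)_i\ \text{acting on the left }\ket{\phi_A}\text{ Schmidt vectors},
\]
and similarly for $w_j$. Your proposed left inverse ``apply $O^{-1}$, then invert the Schmidt-vector maps of $\ket{\phi_A}$'' therefore does not undo the half-$O$'s and does not recover the virtual indices.

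The missing step, which is exactly what the paper supplies, is to argue that the families $\{v_i\}$ and $\{w_j\}$ are linearly independent. This uses two ingredients you never invoke: (i) the half-$O$'s are linearly independent \emph{as operators} because you took a minimal-rank decomposition of $O$, and (ii) the one-body reduced densities of $\ket{\phi_A}$ are full rank, so acting with linearly independent operators on the Schmidt vectors of $\ket{\phi_A}$ produces linearly independent vectors. Once $\{v_i\}$ and $\{w_j\}$ are each linearly independent, so is $\{v_i\otimes w_j\}$, and injectivity of the two-blocked tensor follows. Your appeal to \cref{prop:injective_tensor_prod} and to the linear independence of the $\ket{\phi_A}$ Schmidt vectors alone does not establish this; the role of the Schmidt vectors of $O$ is essential.
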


\begin{proof}
  Take a minimal rank decomposition of the operators $O$:
  \begin{equation}
    \begin{tikzpicture}[baseline=-0.1cm]
      \draw[thick, red] (0,0) circle (0.5);
    \end{tikzpicture}
    =
    \begin{tikzpicture}[baseline=0.4cm]
      \draw[thick, red] (0,0) rectangle (0.4,1);
      \draw[thick, red] (0.6,0) rectangle (1,1);
      \draw[thick, red] (0.4,0.5)--(0.6,0.5);
    \end{tikzpicture}.
  \end{equation}
  Then the l.h.s. of \cref{eq:boundary1} is an MPS with  MPS tensor 
  \begin{equation}\label{eq:boundary_mps_decomp}
    \begin{tikzpicture}
      \pic at (0,0) {mps};
    \end{tikzpicture}
    =
    \begin{tikzpicture}
      \pic[blue,rotate=90] at (0.5,0.5+\d) {halfstate};
      \pic[blue,rotate=-90] at (0.5,-0.5-\d) {halfstate};
      \draw[red] (\d-\dd,-\d-\dd) rectangle (\d+\dd,+\d+\dd); 
      \draw[red] (1-\d-\dd,-\d-\dd) rectangle (1-\d+\dd,+\d+\dd); 
      \draw[red] (\d-\dd,0) --+(-0.15,0);
      \draw[red] (1-\d+\dd,0) --+(0.15,0);
    \end{tikzpicture} \ ,
  \end{equation}
  where the physical indices of the MPS are both the physical indices and the two virtual indices belonging to the decomposition of $\ket{\phi_A}$ on the r.h.s. of \cref{eq:boundary_mps_decomp}, while the virtual indices of the MPS are the virtual indices belonging to the decomposition of $O$ on the r.h.s. of \cref{eq:boundary_mps_decomp}.
  
  We prove now that this MPS tensor is injective after blocking two tensors. To see this, block two tensors and note that contracting the middle indices gives back $O$:
  \begin{equation}
    \begin{tikzpicture}
      \pic at (0,0) {mps};
      \pic at (1,0) {mps};
    \end{tikzpicture}
    =
    \begin{tikzpicture}
      \foreach \x in {0,1}{
        \pic[blue,rotate=90] at (\x+0.5,0.5+\d) {halfstate};
        \pic[blue,rotate=-90] at (\x+0.5,-0.5-\d) {halfstate};
        \draw[red] (\x+\d-\dd,-\d-\dd) rectangle (\x+\d+\dd,+\d+\dd); 
        \draw[red] (\x+1-\d-\dd,-\d-\dd) rectangle (\x+1-\d+\dd,+\d+\dd); 
        \draw[red] (\x+\d-\dd,0) --+(-0.15,0);
        \draw[red] (\x+1-\d+\dd,0) --+(0.15,0);
      }
    \end{tikzpicture}
    =
    \begin{tikzpicture}
      \foreach \x in {0,1}{
        \pic[blue,rotate=90] at (\x+0.5,0.5+\d) {halfstate};
        \pic[blue,rotate=-90] at (\x+0.5,-0.5-\d) {halfstate};
      }
      \draw[red] (\d-\dd,-\d-\dd) rectangle (\d+\dd,+\d+\dd); 
      \draw[red] (2-\d-\dd,-\d-\dd) rectangle (2-\d+\dd,+\d+\dd); 
        \draw[red] (\d-\dd,0) --+(-0.15,0);
        \draw[red] (2-\d+\dd,0) --+(0.15,0);
      \draw[red] (1,0) circle (0.3);
    \end{tikzpicture} \ .
  \end{equation}
  Inverting $O$ does not change the injectivity of the MPS tensor, as it is an invertible operation on its physical indices. Therefore it is enough to prove that 
  \begin{equation}
    \begin{tikzpicture}
      \draw (0,0)--+(0,0.8);
      \draw (1,0)--+(0,0.8);
      \pic at (0,0) {mps};
      \pic at (1,0) {mps};
      \draw[rounded corners=3pt,fill=white] (-0.15,0.4) rectangle (1.15,0.6); 
      \node at (-.5,.5) {\scriptsize $O^{-1}$};
    \end{tikzpicture}
    =
    \begin{tikzpicture}
      \foreach \x in {0,1}{
        \pic[blue,rotate=90] at (\x+0.5,0.5+\d) {halfstate};
        \pic[blue,rotate=-90] at (\x+0.5,-0.5-\d) {halfstate};
      }
      \draw[red] (\d-\dd,-\d-\dd) rectangle (\d+\dd,+\d+\dd); 
      \draw[red] (2-\d-\dd,-\d-\dd) rectangle (2-\d+\dd,+\d+\dd); 
        \draw[red] (\d-\dd,0) --+(-0.15,0);
        \draw[red] (2-\d+\dd,0) --+(0.15,0);
    \end{tikzpicture}
    = v_i \otimes w_j
  \end{equation}
  is injective. Both $v_i$ and $w_j$ are linearly independent, as the Schmidt vectors of $O$ are linearly independent and the one body reduced densities of the four-partite states are full rank. Therefore the vectors $v_i\otimes w_j$ are also linearly independent, that is, the corresponding tensor is injective.  
\end{proof}

We now proceed to the proof of \cref{prop:boundary_mpo}.
\begin{proof}[Proof of \cref{prop:boundary_mpo}]
  We first prove  that $X_n$ and $X_n^{-1}$ are proportional to an MPS. Write the l.h.s. of \cref{eq:boundary0} as an MPS with two physical indices:
  \begin{equation}
    \begin{tikzpicture}
      \draw (0,0)--(1,0);
      \draw (0.35,0)--+(0,0.4);
      \draw (0.65,0)--+(0,0.4);
      \draw[fill=white] (0.2,-0.2) rectangle (0.8,0.2);
    \end{tikzpicture}
    =
    \begin{tikzpicture}
      \pic[blue,rotate=90] at (0.5,0.5+\d) {halfstate};
      \pic[blue,rotate=-90] at (0.5,-0.5-\d) {halfstate};
      \draw[red] (\d-\dd,-\d-\dd) rectangle (\d+\dd,+\d+\dd); 
      \draw[red] (1-\d-\dd,-\d-\dd) rectangle (1-\d+\dd,+\d+\dd); 
      \draw[red] (\d-\dd,0) --+(-0.15,0);
      \draw[red] (1-\d+\dd,0) --+(0.15,0);
    \end{tikzpicture} \ ,
  \end{equation}
  where the left physical index of the MPS tensor corresponds to the indices on the top of the r.h.s. (physical and virtual indices of the Schmidt vector), while the right one to the indices on the bottom of the r.h.s., and the virtual indices of the MPS correspond to the Schmidt index of the decomposition of $O$. With this notation, \cref{eq:boundary0} reads as 
  \begin{equation}\label{eq:mps_double_boundary}
    \begin{tikzpicture}
      \foreach \i in {0,1,2}{
        \draw (\i,0)--(\i+1,0);
        \draw (\i+0.35,0)--+(0,0.4);
        \draw (\i+0.65,0)--+(0,0.4);
        \draw[fill=white] (\i+0.2,-0.2) rectangle (\i+0.8,0.2);
      }
      \node[font=\tiny,anchor=east] at (0,0) {$\dots$};
      \node[font=\tiny,anchor=west] at (3.0,0) {$\dots$};
    \end{tikzpicture}\ =\
    \begin{tikzpicture}[baseline = -0.1cm]
      \scope
        \clip (0.5,-1.1-\d) rectangle (3.5,1.1+\d);
        \draw (0,0.8)--(4,0.8);
        \draw (0,-0.8)--(4,-0.8);
        \foreach \x in {0,...,3}{
          \pic[green,rotate=90] at (\x,0.5+\d) {halfstate};
          \pic[green,rotate=-90] at (\x,-0.5-\d) {halfstate};
          \draw[green] (\x,0.5)--+(0,0.6);
          \draw[green] (\x,-0.5)--+(0,-0.6);
          \draw[fill=gray] (\x,0.8) circle (0.1);
          \draw[fill=white] (\x,-0.8) circle (0.1);
        }
      \endscope 
      \draw[fill=gray,rounded corners=3pt] (1-0.15,0.7) rectangle (3.15,0.9);
      \draw[fill=white,rounded corners=3pt] (1-0.15,-0.7) rectangle (3.15,-0.9);
      \node[anchor=west,font=\scriptsize] at (3.8,0.8) {$X_{n}$};
      \node[anchor=west,font=\scriptsize] at (3.8,-0.8) {$X^{-1}_{n}$};
      \node[font=\tiny,anchor=east] at (0.6,0.8) {$\dots$};
      \node[font=\tiny,anchor=west] at (3.4,0.8) {$\dots$};
      \node[font=\tiny,anchor=east] at (0.6,-0.8) {$\dots$};
      \node[font=\tiny,anchor=west] at (3.4,-0.8) {$\dots$};      
    \end{tikzpicture} \ .
  \end{equation}
  Applying a product linear functional on the lower half of the r.h.s. (and the right indices of the MPS on the l.h.s.), the equation changes to 
  \begin{equation}
    \begin{tikzpicture}[baseline=-0.1cm]
      \foreach \i in {0,1,2}{
        \draw (\i,0)--(\i+1,0);
        \draw (\i+0.35,0)--+(0,0.4);
        \draw (\i+0.65,0)--+(0,0.4);
        \draw[fill=white] (\i+0.2,-0.2) rectangle (\i+0.8,0.2);
        \draw[fill=gray] (\i+0.65,0.5) circle (0.1);
      }
      \node[font=\tiny,anchor=east] at (0,0) {$\dots$};
      \node[font=\tiny,anchor=west] at (3.0,0) {$\dots$};
    \end{tikzpicture}\ =\ \lambda_n \times \ 
    \begin{tikzpicture}[baseline = 0.4cm]
      \scope
        \clip (0.5,0) rectangle (3.5,1.1+\d);
        \draw (0,0.8)--(4,0.8);
        \foreach \x in {0,...,3}{
          \pic[green,rotate=90] at (\x,0.5+\d) {halfstate};
          \draw[green] (\x,0.5)--+(0,0.6);
        }
      \endscope 
      \draw[fill=gray,rounded corners=3pt] (1-0.15,0.7) rectangle (3.15,0.9);
      \node[anchor=west,font=\scriptsize] at (3.8,0.8) {$X_{n}$};
      \node[font=\tiny,anchor=east] at (0.6,0.8) {$\dots$};
      \node[font=\tiny,anchor=west] at (3.4,0.8) {$\dots$};
    \end{tikzpicture} \ ,
  \end{equation}
  for some $\lambda_n\in \mathbb{C}$. Notice that the Schmidt vectors on the r.h.s. can be inverted: they are an injective mapping from the Schmidt index to the physical degrees of freedom, as they are linearly independent. Therefore,
  \begin{equation}\label{eq:Xn_mpo}
    \begin{tikzpicture}
      \foreach \i in {0,1,2}{
        \draw (\i,0)--(\i+1,0);
        \draw (\i+0.35,0)--+(0,0.8);
        \draw (\i+0.65,0)--+(0,0.4);
        \draw[fill=white] (\i+0.2,-0.2) rectangle (\i+0.8,0.2);
        \draw[fill=gray] (\i+0.65,0.5) circle (0.1);
        \draw[fill=white] (\i+0.35,0.5) circle (0.1);
      }
      \node[font=\tiny,anchor=east] at (0,0) {$\dots$};
      \node[font=\tiny,anchor=west] at (3.0,0) {$\dots$};
    \end{tikzpicture}\ =\ \lambda_n \times \ 
    \begin{tikzpicture}
      \scope
        \clip (0.5,0.5) rectangle (3.5,1.1);
        \draw (0,0.8)--(4,0.8);
        \foreach \x in {0,...,3}{
          \draw (\x,0.5)--+(0,0.6);
        }
      \endscope 
      \node[font=\tiny,anchor=east] at (0.6,0.8) {$\dots$};
      \node[font=\tiny,anchor=west] at (3.4,0.8) {$\dots$};
      \draw[fill=gray,rounded corners=3pt] (1-0.15,0.7) rectangle (3.15,0.9);
      \node[anchor=west,font=\scriptsize] at (3.8,0.8) {$X_{n}$};
    \end{tikzpicture} \ ,
  \end{equation}  
  where the white circle depicts the inverse of the Schmidt vectors of $\ket{\phi_B}$. This shows that $X_n$ (and similarly $X_n^{-1}$) is an MPS with some MPS tensor $\tilde{X}$ (and $\tilde{Y}$) as long as the l.h.s. is not 0. It is thus sufficient to prove that there is a translationally invariant product linear functional (the gray circles), which is independent of $n$, that does not map the l.h.s. to 0. 
  
  Consider two linear functionals acting on the MPS tensor:
  \begin{equation}
    \begin{tikzpicture}
      \foreach \i in {0,1,2}{
        \draw (\i,0)--(\i+1,0);
        \draw (\i+0.35,0)--+(0,0.4);
        \draw (\i+0.65,0)--+(0,0.4);
        \draw[fill=white] (\i+0.2,-0.2) rectangle (\i+0.8,0.2);
        \draw[fill=gray] (\i+0.65,0.5) circle (0.1);
        \draw[fill=white] (\i+0.35,0.5) circle (0.1);
      }
    \end{tikzpicture}\ =\ \tr\left\{ M^n\right\}. 
  \end{equation}
  We show now that there are linear functionals $a,b$ such that for the corresponding $M_{a,b}$ $\tr\{M_{a,b}\}\neq 0$. Let us consider the map  $F:(a,b)\mapsto \tr\{M_{a,b}\}$. Graphically, this map is
  \begin{equation}
    F = 
    \begin{tikzpicture}
      \draw (0,0)--(1,0);
      \draw (0.35,0)--+(0,0.4);
      \draw (0.65,0)--+(0,0.4);
      \draw[fill=white] (0.2,-0.2) rectangle (0.8,0.2);
      \draw (0,0)--(0,-0.35)--(1,-0.35)--(1,0);
    \end{tikzpicture}
    \ =\ 
    \begin{tikzpicture}
      \pic[blue,rotate=90] at (0.5,0.5+\d) {halfstate};
      \pic[blue,rotate=-90] at (0.5,-0.5-\d) {halfstate};
      \draw[red] (\d-\dd,-\d-\dd) rectangle (\d+\dd,+\d+\dd); 
      \draw[red] (1-\d-\dd,-\d-\dd) rectangle (1-\d+\dd,+\d+\dd); 
      \draw[red] (\d-\dd,0) --+(-0.15,0);
      \draw[red] (1-\d+\dd,0) --+(0.15,0);
      \draw[red] (\d-\dd-0.15,0)--(\d-\dd-0.15,-0.8)--(1-\d+\dd+0.15,-0.8)--(1-\d+\dd+0.15,0);
    \end{tikzpicture} \ .
  \end{equation}   
  Notice that $F$ equals to the operator $O$ with left and right side interchanged applied to the tensor product of the Schmidt vectors of $\ket{\phi}$. As $O$ is invertible, $F$ is not zero. Therefore there are linear functionals $a,b$ such that $F(a,b)=\tr\{M_{a,b}\}\neq 0$. As $\tr\{M_{a,b}\}\neq 0$, $M_{a,b}$ is not nilpotent and thus 
  \begin{equation}\label{eq:tr_m}
    \tr \{M_{a,b}^n\} = \sum_{i=1}^{R} \xi_i^n
  \end{equation}
  for some $\xi_1,\dots \xi_R \in \mathbb{C}\backslash\{0\}$, $R>0$. Let $S:=\{ n\in\mathbb{N}\mid\tr \{M_{a,b}^n\} \neq 0\}$. Notice that $|S|=\infty$. Then, choosing the linear functional appearing in \cref{eq:Xn_mpo} to be $b$, the l.h.s.\ is non-zero for all system sizes $n\in S$. Therefore, $X_n$ can be written as an MPO for all $n\in S$. Similarly, using the linear functional $a$ instead of $b$ on the lower part of \cref{eq:mps_double_boundary}, we arrive to the conclusion that $X_n^{-1}$ is also a non-zero MPO for all $n\in S$, for the same $S$. 
 
  Therefore there is a $\lambda_{n}\in \mathbb{C}$ such that $\forall n \in S$,
  \begin{equation}\label{eq:boundary1a}
  \begin{tikzpicture}
  \scope
  \clip (0.5,-0.5-\d) rectangle (3.5,0.5+\d);
  \foreach \x in {0,...,3}{
    \pic[blue,rotate=90] at (\x,0.5+\d) {halfstate};
    \pic[blue,rotate=-90] at (\x,-0.5-\d) {halfstate};
    \draw[red] (\x+0.5,0) circle (0.3);
  }
  \endscope
  \node[font=\tiny,anchor=east] at (0.6,0) {$\dots$};
  \node[font=\tiny,anchor=west] at (3.4,0) {$\dots$};
  \end{tikzpicture}
  =
  \lambda_{n} \mu^{n}\
    \begin{tikzpicture}[green,font=\scriptsize]
    \foreach \x in {0,...,2}{
      \pic[rotate=90] at (\x,0.5+\d) {halfstate};
      \pic[rotate=-90] at (\x,-0.5-\d) {halfstate};
      \pic[pic text = $\tilde{X}$, anchor=south west] at (\x,0.8) {boundarympo};
      \pic[pic text = $\tilde{Y}$, anchor=north west] at (\x,-0.8) {boundarympo};
    }
    \node[black,font=\tiny,anchor=east] at (-0.5,0.8) {$\dots$};
    \node[black,font=\tiny,anchor=west] at (2.5,0.8) {$\dots$};
    \node[black,font=\tiny,anchor=east] at (-0.5,-0.8) {$\dots$};
    \node[black,font=\tiny,anchor=west] at (2.5,-0.8) {$\dots$};
    \end{tikzpicture} \ .
  \end{equation}  
  Here, $\mu_n$ is the proportionality constant appearing in \cref{eq:boundary1}, and $V_n(\tilde{X})$ and $V_n(\tilde{Y})$ are translationally invariant MPOs on $n$ sites, such that $V_n(\tilde{Y}) = \left(V_n(\tilde{X})\right)^{-1}/\lambda_{n}$. Their defining tensors, $\tilde{X}$ and $\tilde{Y}$ are independent of $n$. Note that the MPOs $V_n(\tilde{X})$ and $V_n(\tilde{Y})$ are defined for $\forall n\in \mathbb{N}$, but we have not yet proven that \cref{eq:boundary1a} holds for $n\notin S$.
  
  In the following we prove that \cref{eq:boundary1a} also holds  $\forall n \in \mathbb{N}$ for some injective MPO $V_n(X)$, $V_n(Y)$ with $\lambda_n=1$. 
  
  Using \cref{cor:mps_decomp}, there exists $K\in \mathbb{N}$ such that after blocking $K$ tensors, both  $V_n(\tilde{X})$ and $V_n(\tilde{Y})$ ($n\in K\mathbb{N}$) can be decomposed into a linear combination of normal MPOs. As the tensor product of normal MPSs is again a normal MPS (\cref{prop:injective_tensor_prod}), 
  $V_n(\tilde{X})\otimes V_n(\tilde{Y})$ has a decomposition into normal MPO that are tensor products. Denote these essentially different normal MPO by $V_n(X_i)\otimes V_n(Y_i)$. That is, $\forall n\in K\mathbb{N}$
  \begin{equation}\label{eq:XY_normaldecomp}
    V_n(\tilde{X})\otimes V_n(\tilde{Y}) = \sum_{i=1}^L \sum_{j=1}^{M_i} \zeta_{ij}^n V_n(X_i)\otimes V_n(Y_i),
  \end{equation} 
  where $V_n(X_i)\otimes V_n(Y_i)$ are essentially different normal MPOs. Using this decomposition in \cref{eq:boundary1a}, the l.h.s.\ is described by a normal MPO (\cref{lem:boundaryMPSinjective}), while the r.h.s. is described by the sum above for an infinite number of  system sizes (indeed, for all $n\in K \mathbb{N}\cap S$). As essentially different MPSs become linearly independent for large system sizes (\cref{cor:normal_lin_indep}), \cref{eq:XY_normaldecomp} can describe a normal MPO only if either $L=1$ or otherwise all but one $i$ satisfy
  \begin{equation}
    \sum_{j=1}^{M_i} \zeta_{ij}^n =0 \quad \forall n\in S\cap K\mathbb{N}.
  \end{equation}
  Recalling that 
  \cref{eq:tr_m} vanishes $\forall n\in \mathbb{N}\backslash S$, we conclude that 
  \begin{equation}\label{eq:power_sum_zero}
    \sum_{k=1}^{R} \xi_k^n \sum_{j=1}^{M_i} \zeta_{ij}^n = \sum_{kj} (\xi_k \zeta_{ij})^n = 0   \quad \forall n\in K\mathbb{N},
  \end{equation}
  where $i$ is chosen such that the sum of $\zeta_{ij}^n$ vanishes $\forall n \in S \cap k\mathbb{N}$. Applying \cref{prop:sum_power} to \cref{eq:power_sum_zero}, all $(\xi_k\zeta_{ij})^K=0$, that is, $\zeta_{ij}=0$ for all $j$ and all but one $i$. Therefore, $L=1$ in \cref{eq:XY_normaldecomp}. Using \cref{prop:periodic_mps_decomp}, we conclude that $V_n(\tilde{X})\otimes V_n(\tilde{Y})$ does not contain periodic MPO, therefore $K=1$. Thus, both the l.h.s.\ and the r.h.s.\ of \cref{eq:boundary1a} are proportional to normal MPOs. Using \cref{prop:can_form_normal_mps}, we conclude that the equality in \cref{eq:boundary1a} holds $\forall n \in \mathbb{N}$. We have therefore proven that there are normal MPO tensors $X$ and $Y$ (the ones appearing in the unique normal MPO in \cref{eq:XY_normaldecomp}) such that $\forall n \in \mathbb{N}$ and some $\lambda_n \in \mathbb{C}$
  \begin{equation}
  \begin{tikzpicture}
  \scope
  \clip (0.5,-0.5-\d) rectangle (3.5,0.5+\d);
  \foreach \x in {0,...,3}{
    \pic[blue,rotate=90] at (\x,0.5+\d) {halfstate};
    \pic[blue,rotate=-90] at (\x,-0.5-\d) {halfstate};
    \draw[red] (\x+0.5,0) circle (0.3);
  }
  \endscope
  \node[font=\tiny,anchor=east] at (0.6,0) {$\dots$};
  \node[font=\tiny,anchor=west] at (3.4,0) {$\dots$};
  \end{tikzpicture}
  =
  \lambda_{n} \mu^{n}\
  \begin{tikzpicture}[green,font=\scriptsize]
    \foreach \x in {0,...,2}{
      \pic[rotate=90] at (\x,0.5+\d) {halfstate};
      \pic[rotate=-90] at (\x,-0.5-\d) {halfstate};
      \pic[pic text = $X$, anchor=south west] at (\x,0.8) {boundarympo};
      \pic[pic text = $Y$, anchor=north west] at (\x,-0.8) {boundarympo};
    }
    \node[black,font=\tiny,anchor=east] at (-0.5,0.8) {$\dots$};
    \node[black,font=\tiny,anchor=west] at (2.5,0.8) {$\dots$};
    \node[black,font=\tiny,anchor=east] at (-0.5,-0.8) {$\dots$};
    \node[black,font=\tiny,anchor=west] at (2.5,-0.8) {$\dots$};
  \end{tikzpicture} \ .
  \end{equation}  
  These MPO tensors also satisfy $V_n(Y) = \left(V_n(X)\right)^{-1}/\lambda_{n} $ for all  $n \in S$. As  both $V_n(Y)$ and $V_n(X)$ are normal MPOs, the equality holds $\forall n\in \mathbb{N}$ and thus $\lambda_n=\lambda^n$ for some $\lambda\in \mathbb{C}$. Absorbing this constant into $Y$, $V_n(Y) = \left(V_n(X)\right)^{-1}$ and 
  \begin{equation}
  \begin{tikzpicture}
  \scope
  \clip (0.5,-0.5-\d) rectangle (3.5,0.5+\d);
  \foreach \x in {0,...,3}{
    \pic[blue,rotate=90] at (\x,0.5+\d) {halfstate};
    \pic[blue,rotate=-90] at (\x,-0.5-\d) {halfstate};
    \draw[red] (\x+0.5,0) circle (0.3);
  }
  \endscope
  \node[font=\tiny,anchor=east] at (0.6,0) {$\dots$};
  \node[font=\tiny,anchor=west] at (3.4,0) {$\dots$};
  \end{tikzpicture}
  =
  \mu^{n}\
  \begin{tikzpicture}[green,font=\scriptsize]
    \foreach \x in {0,...,2}{
      \pic[rotate=90] at (\x,0.5+\d) {halfstate};
      \pic[rotate=-90] at (\x,-0.5-\d) {halfstate};
      \pic[pic text = $X$, anchor=south west] at (\x,0.8) {boundarympo};
      \pic[pic text = $Y$, anchor=north west] at (\x,-0.8) {boundarympo};
    }
    \node[black,font=\tiny,anchor=east] at (-0.5,0.8) {$\dots$};
    \node[black,font=\tiny,anchor=west] at (2.5,0.8) {$\dots$};
    \node[black,font=\tiny,anchor=east] at (-0.5,-0.8) {$\dots$};
    \node[black,font=\tiny,anchor=west] at (2.5,-0.8) {$\dots$};
  \end{tikzpicture} \ .
  \end{equation}  
\end{proof}

\begin{corollary}\label{cor:boundary_unique}
  Suppose that $\forall n\in\mathbb{N}$ \cref{eq:boundary1} holds also for some other MPO $V_n(\tilde{X})$ and $V_n(\tilde{Y})$ and $V_n(\tilde{Y}) = \left(V_n(\tilde{X})\right)^{-1}$. Then $V_n(\tilde{X}) = \lambda^n V_n(X)$ and $V_n(\tilde{Y}) = \lambda^{-n} Y^{(n)}$ for some $\lambda\in\mathbb{C}$.
\end{corollary}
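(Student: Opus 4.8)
The plan is to exploit that both MPO pairs describe the \emph{same} left-hand side of \cref{eq:boundary1} with the \emph{same} constant $\mu$ (the one fixed by \cref{prop:size_indep}), so that the two right-hand sides must coincide for every $n$. First I would cancel the common nonzero factor $\mu^n$. The right-hand side of \cref{eq:boundary1} splits into a tensor product of a ``top'' object (the upper Schmidt vectors with the MPO $V_n(X)$ applied) and a ``bottom'' object (the lower Schmidt vectors with $V_n(Y)$), since the $X$- and $Y$-layers are never contracted with each other. Equating the two product expressions and using that a factorization $u\otimes v=u'\otimes v'$ of nonzero vectors is unique up to a scalar, I obtain a single number $c_n\in\mathbb{C}$ with (top)$[\tilde X]=c_n\,$(top)$[X]$ and (bottom)$[\tilde Y]=c_n^{-1}\,$(bottom)$[Y]$. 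Finally, because the Schmidt vectors are linearly independent, hence injective as maps from the virtual bond to the physical indices (as used in \cref{lem:boundaryMPSinjective}), I can strip them off on both layers to arrive at $V_n(\tilde X)=c_n V_n(X)$ and $V_n(\tilde Y)=c_n^{-1}V_n(Y)$ for all $n$.

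It then remains to show $c_n=\lambda^n$ for a single $\lambda$. Here I would use that $X$ and $Y$ are \emph{normal} (\cref{prop:boundary_mpo}). Decomposing the fixed tensor $\tilde X$ into essentially different normal tensors via \cref{cor:mps_decomp} and comparing with $V_n(\tilde X)=c_n V_n(X)$, the asymptotic linear independence of essentially different normal MPS (\cref{cor:normal_lin_indep}) forces every block not equivalent to $X$ to drop out, leaving $c_n=\sum_a\alpha_a^n$ as a genuine power sum; running the identical argument on $\tilde Y$ against $Y$ gives $c_n^{-1}=\sum_b\delta_b^n$. Multiplying the two, $1=c_nc_n^{-1}=\sum_{a,b}(\alpha_a\delta_b)^n$ for all $n$, and \cref{prop:sum_power} permits such a constant sequence only if the multiset $\{\alpha_a\delta_b\}$ consists of a single element equal to $1$; in particular there is exactly one $\alpha$, so $c_n=\lambda^n$ with $\lambda=\alpha$. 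Absorbing this scalar then yields $V_n(\tilde X)=\lambda^n V_n(X)$ and $V_n(\tilde Y)=\lambda^{-n}V_n(Y)$, as claimed.

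The main obstacle is the middle step, namely proving that the a priori arbitrary size-dependent scalar $c_n$ is in fact a power sum $\sum_a\alpha_a^n$. This is precisely where one must rule out that $\tilde X$ carries additional reducible or periodic structure: a reducible $\tilde X$ (for instance $X\oplus\zeta X$) would by itself produce $c_n=1+\zeta^n$, so it is the \emph{simultaneous} requirement on $\tilde Y$ — that its inverse is again generated by a \emph{fixed} MPO tensor — which, through the cancellation $c_nc_n^{-1}=1$ and \cref{prop:sum_power}, collapses $c_n$ to a single geometric term. I expect the only genuinely technical point to be the bookkeeping for periodic blocks, which I would dispatch (as at the end of the proof of \cref{prop:boundary_mpo}) by first passing to a suitable blocking so that all components are normal, establishing $c_{Kn}=\lambda^{Kn}$ at that level, and then extending to all $n$ using that both sides are normal MPS together with \cref{prop:can_form_normal_mps}.
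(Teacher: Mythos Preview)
Your proposal is correct and follows essentially the same route as the paper. The paper obtains the size-dependent scalar $\lambda_n$ in one line by invoking the uniqueness of the MPS gauge in \cref{eq:boundary0} (both $(V_n(X),V_n(Y))$ and $(V_n(\tilde X),V_n(\tilde Y))$ serve as gauges between the same two injective MPS tensors, hence are proportional by \cref{prop:can_form_normal_mps}), whereas you argue directly from the tensor-product factorization of the right-hand side; these are equivalent. From there, both you and the paper decompose $\tilde X$ and $\tilde Y$ into normal blocks, conclude that only blocks equivalent to $X$ (resp.\ $Y$) survive so that $\lambda_n=\sum_i\lambda_i^n$ and $\lambda_n^{-1}=\sum_j\eta_j^n$, multiply to get $1=\sum_{ij}(\lambda_i\eta_j)^n$, and finish with \cref{prop:sum_power}. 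Your extra care about periodic blocks and the need to pass to a blocking before invoking \cref{cor:mps_decomp} is warranted; the paper's proof elides this point but the argument goes through exactly as you sketch.
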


\begin{proof}
  Due to uniqueness of the gauge in \cref{eq:boundary0}, $V_n(\tilde{X}) = \lambda_n V_n(X)$ and $V_n(\tilde{Y}) = \lambda_n^{-1} V_n(Y)$. Decomposing $V_n(\tilde{X})$ and $V_n(\tilde{Y})$ to their canonical forms, we see that the only normal MPS appearing in the decomposition is $V_n(X)$ and $V_n(Y)$, and that $\lambda_n = \sum_i \lambda_i^n$ and $\lambda_n^{-1} = \sum_i \eta_i^n$. But then $1= \sum_{ij} (\lambda_i\eta_j)^n$ and thus by \cref{prop:sum_power}, $\lambda_n = \lambda^n$.
\end{proof}

It turns out that the fact that the boundaries of the two semi-injective PEPS are  related by an MPO severely restricts the form of $O$. We will indeed find that 

\begin{prop}\label{prop:o_twolayer}
  The operator $O$ from \cref{eq:setup} can be written as a product of invertible two-body operators:
  \begin{equation}
    O = \left(O_{14}\otimes O_{23}\right) \cdot \left(O_{12} \otimes O_{34}\right) = \left(\tilde{O}_{12} \otimes \tilde{O}_{34}\right) \cdot \left(\tilde{O}_{14}\otimes \tilde{O}_{23}\right)\ ,
  \end{equation}
  where the particles are numbered clockwise from the upper left corner and $O_{ij}$ acts on particles $i$ and $j$.  Pictorially,
  \begin{equation}\label{eq:op_twolayer}
    \begin{tikzpicture}[baseline=-0.1cm]
    \draw[red] (0,0) circle (0.5);
    \end{tikzpicture} =
    \begin{tikzpicture}[baseline=-0.1cm,rotate=90]
    \draw[red] (-0.5,-0.5) rectangle (-0.2,0.5);
    \draw[red] (0.2,-0.5) rectangle (0.5,0.5);
    \draw[red] (-0.6,-0.6) rectangle (0.6,-0.1);
    \draw[red] (-0.6,0.1) rectangle (0.6,0.6);      
    \end{tikzpicture} =
    \begin{tikzpicture}[baseline=-0.1cm]
    \draw[red] (-0.5,-0.5) rectangle (-0.2,0.5);
    \draw[red] (0.2,-0.5) rectangle (0.5,0.5);
    \draw[red] (-0.6,-0.6) rectangle (0.6,-0.1);
    \draw[red] (-0.6,0.1) rectangle (0.6,0.6);      
    \end{tikzpicture} \ .
  \end{equation}

\end{prop}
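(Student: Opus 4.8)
The plan is to reformulate the statement as a \emph{disentangling} problem and solve it using the invertibility of the boundary MPO from \cref{prop:boundary_mpo}. Writing the desired identity as $O\,(O_{12}^{-1}\otimes O_{34}^{-1}) = O_{14}\otimes O_{23}$, the claim is equivalent to the existence of invertible two-body gates $O_{12}$ (on the top edge, particles $1,2$) and $O_{34}$ (on the bottom edge, particles $3,4$) such that, after stripping them off, $O$ becomes a tensor product across the vertical cut $\{1,4\}\mid\{2,3\}$. In other words, I must show that the vertical operator-Schmidt decomposition $O=\sum_{\gamma} P_\gamma\otimes Q_\gamma$ (with $P_\gamma$ on the left column $1,4$ and $Q_\gamma$ on the right column $2,3$, both families linearly independent) has the special \emph{product} form $P_{\gamma}=O_{14}\,(\alpha_a\otimes\beta_c)$, $Q_\gamma=O_{23}\,(\rho_a\otimes\sigma_c)$, where the Schmidt index factorizes as $\gamma=(a,c)$ into a top-edge label $a$ carried by particles $1,2$ and a bottom-edge label $c$ carried by particles $3,4$; the gates are then read off as $O_{12}=\sum_a \alpha_a\otimes\rho_a$ and $O_{34}=\sum_c \beta_c\otimes\sigma_c$, and a direct regrouping of tensor factors gives $O=(O_{14}\otimes O_{23})(O_{12}\otimes O_{34})$. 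The second factorization in \cref{eq:op_twolayer} follows by running the identical argument with the roles of the two cuts exchanged (equivalently, rotating the lattice by ninety degrees and invoking the horizontal boundary MPO).

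The engine for producing this product structure is the boundary MPO. By the construction in the proof of \cref{prop:boundary_mpo} (see also \cref{lem:boundaryMPSinjective}), the virtual bond of the top MPO tensor $X$ is precisely the vertical Schmidt index $\gamma$ of $O$, and $X$ is normal, becomes injective after blocking two tensors, and is invertible with $V_n(Y)=V_n(X)^{-1}$ again an MPO. The key step is to exploit this invertibility: an invertible MPO whose inverse is again an MPO cannot have an arbitrary tensor; the condition $V_n(X)V_n(Y)=\id$ (a bond-dimension-one MPO) forces the bond of $X$ to \emph{factorize} and the single-site tensor to split into a left-moving and a right-moving part across its physical index. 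I would establish this bond-splitting directly from $V_n(X)V_n(Y)=\id$ together with the injectivity of $X$ and $Y$ after blocking two, reducing it to a statement about how two normal tensors can multiply to the trivial one (in the spirit of \cref{prop:can_form_normal_mps,prop:injective_tensor_prod}). Transporting this factorization back through the Schmidt vectors of $\ket{\phi}$ — which are invertible on the Schmidt index because the one-particle reduced densities are full rank — strips the dressing and yields exactly the required product form of $P_\gamma$ and $Q_\gamma$, hence the factorization $\gamma=(a,c)$.

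With the product structure in hand, I would read off the four gates as above and verify the identity by direct contraction. Invertibility of the gates is then automatic: writing $O=AB$ with $A=O_{14}\otimes O_{23}$ and $B=O_{12}\otimes O_{34}$ square operators, the invertibility of $O$ forces $B$ to be injective and $A$ surjective, whence both are invertible, and an invertible operator that is a tensor product across a cut has invertible factors. The linear independence of the Schmidt families (from minimality of the decomposition and the full-rank assumption on $\ket{\phi}$) guarantees that the reconstructed gates are the genuine, non-degenerate factors rather than an artefact of an over-complete labeling.

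The main obstacle is the middle step: rigorously extracting the bond-factorization and two-layer structure of the \emph{invertible} boundary MPO and, above all, correctly translating this one-dimensional (along-the-boundary) statement into the two-dimensional factorization across the four \emph{edges} of a single plaquette operator $O$. Care is needed because the operators here are in general neither unitary nor Hermitian, so I cannot simply cite the matrix-product-unitary classification; instead the argument must run purely at the level of the normal and injective MPS theory developed in \cref{sec:mps}. Matching the abstract MPO index to the geometric top and bottom edges, and ensuring that the two Schmidt labels $a$ and $c$ genuinely decouple onto the correct pairs of particles, is the delicate bookkeeping that carries the proof.
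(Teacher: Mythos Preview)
Your high-level strategy --- leverage the boundary MPO from \cref{prop:boundary_mpo} and deduce that the virtual index of $O$ must factorize across the top and bottom edges --- is the same as the paper's. However, you correctly flag the ``middle step'' as the main obstacle and then leave it unresolved: you assert that $V_n(X)V_n(Y)=\id$ together with injectivity after blocking two should force the bond of $X$ to split, but you give no mechanism. This is precisely the nontrivial part; the cited results \cref{prop:can_form_normal_mps,prop:injective_tensor_prod} tell you nothing about a product of two MPOs collapsing to bond dimension one. The paper's \cref{thm:mpo_layers} is the relevant lemma, but note that it requires an \emph{extra} hypothesis (condition~3, \cref{eq:mpofactor1,eq:mpofactor2}): the two-site blocks of $X$ composed with those of $Y$ must already factorize into a tensor product across the virtual bond. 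Establishing that hypothesis is the missing idea in your plan.

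The paper supplies it by a construction you do not mention: work with $O$ and $O^{-1}$ \emph{simultaneously}. Applying first the Schmidt vectors of $O$ and then those of $O^{-1}$ to the Schmidt vectors of $\ket{\phi_A}$, and contracting two such columns so that the middle becomes $O^{-1}O=\id$, yields a tensor that is a product with respect to the vertical cut (manifestly, from the left-hand side) \emph{and} with respect to the horizontal cut (from the boundary-MPO description on the right-hand side). That dual product structure is what forces the Schmidt vectors of $O$ composed with those of $O^{-1}$ to factorize (\cref{eq:Oinv12}), i.e.\ exactly condition~3 of \cref{thm:mpo_layers}. The paper then builds the four MPO tensors of $O$ explicitly via the auxiliary operator $Z=(O^{(13)}\otimes O^{(24)})\,O^{-1}\,(O^{(12)}\otimes O^{(34)})$, shows $Z$ is a four-fold tensor product, and feeds this (together with the analogous construction for $O^{-1}$) into \cref{thm:mpo_layers}. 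Your plan to extract the two-layer structure purely from the one-dimensional MPO identity $V_n(X)V_n(Y)=\id$, without invoking $O^{-1}$ on the plaquette level, does not by itself give you this factorization.
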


We will prove that $O$ has a four site long non-translationally invariant MPO decomposition, with the property that cutting the MPO into two halves yields a minimal rank decomposition of $O$. Moreover, we will show that the product of the Schmidt vectors of $O$ and $O^{-1}$ are tensor products. Before proceeding to the proof, we show that if $O$ and $O^{-1}$ are both MPO of this form, $O$ has to have the two-layer structure \eqref{eq:op_twolayer}.

\begin{lemma}\label{thm:mpo_layers}
Consider two non-translationally invariant MPOs on $n=2k$ sites with tensors $X_1,\dots X_n$ and $Y_1,\dots Y_n$. Suppose that 
\begin{enumerate}
\item $V(X_1,\dots, X_n)\cdot V(Y_1,\dots,Y_n) =\id$
\item Both $X_i X_{i+1}$ and $Y_i Y_{i+1}$ are injective for all $i=1,\dots,n$ with $n+1\equiv 1$.
\item The product of $X_i X_{i+1}$ and $Y_i Y_{i+1}$ factorizes as depicted:
\begin{align}\label{eq:mpofactor1}
\begin{tikzpicture}[scale=0.5]
  \pic[pic text = $X_{i}$, font=\footnotesize] at (0,1) {mpotensor=gray};
  \pic[pic text = $X_{i+1}$, font=\footnotesize] at (1,1) {mpotensor=gray};
  \pic[pic text = $Y_{i}$, font=\footnotesize] at (0,0) {mpotensor};
  \pic[pic text = $Y_{i+1}$, font=\footnotesize] at (1,0) {mpotensor};
\end{tikzpicture}
&=
\begin{tikzpicture}[scale=0.5]
\draw[thick] (0,-0.5)--(0,1.5);
\draw[thick] (1,-0.5)--(1,1.5);
\draw[thick] (-0.5,0)--(1.5,0);
\draw[thick] (-0.5,1)--(1.5,1);
\fill[white] (-0.2,-0.2) rectangle (1.2,1.2);
\draw[thick, rounded corners=3pt] (-0.2,-0.2) rectangle (0.2,1.2);
\draw[thick, rounded corners=3pt] (0.8,-0.2) rectangle (1.2,1.2);
\end{tikzpicture}\\ \label{eq:mpofactor2}
\begin{tikzpicture}[scale=0.5]
  \pic[pic text = $Y_{i}$, font=\footnotesize] at (0,1) {mpotensor};
  \pic[pic text = $Y_{i+1}$, font=\footnotesize] at (1,1) {mpotensor};
  \pic[pic text = $X_{i}$, font=\footnotesize] at (0,0) {mpotensor=gray};
  \pic[pic text = $X_{i+1}$, font=\footnotesize] at (1,0) {mpotensor=gray};
\end{tikzpicture}
&=
\begin{tikzpicture}[scale=0.5]
\draw[thick] (0,-0.5)--(0,1.5);
\draw[thick] (1,-0.5)--(1,1.5);
\draw[thick] (-0.5,0)--(1.5,0);
\draw[thick] (-0.5,1)--(1.5,1);
\fill[white] (-0.2,-0.2) rectangle (1.2,1.2);
\draw[thick, rounded corners=3pt, pattern=north west lines] (-0.2,-0.2) rectangle (0.2,1.2);
\draw[thick, rounded corners=3pt, pattern=north west lines] (0.8,-0.2) rectangle (1.2,1.2);
\end{tikzpicture}.
\end{align}
\end{enumerate}
Then $V(X_1,\dots, X_n)$ (and $V(Y_1,\dots, Y_n)$) admits a two layer description:
\begin{equation} \label{eq:mpo2layer1}
\begin{tikzpicture}[scale=0.5]
  \node at (-1,0) {$\dots$};
  \pic[pic text= $Y_1$, font=\footnotesize] at (0,0) {mpotensor};
  \pic[pic text= $Y_2$, font=\footnotesize] at (1,0) {mpotensor};
  \pic[pic text= $Y_3$, font=\footnotesize] at (2,0) {mpotensor};
  \pic[pic text= $Y_4$, font=\footnotesize] at (3,0) {mpotensor};
  \node at (4,0) {$\dots$};
\end{tikzpicture}
=
\begin{tikzpicture}[scale=0.5]
  \node at (-1,0.5) {$\dots$};
  \node at (4,0.5) {$\dots$};
  \scope
    \clip (-0.5,-0.5) rectangle (3.5,1.5);
    \foreach \x in {0,1,2,3}{
      \draw[thick] (\x,-0.5) -- (\x,1.5);
    }
    \foreach \x in {0,2}{
      \draw[thick, fill=white,rounded corners] (\x-0.2,-0.2) rectangle (\x+1.2,0.2);
    }
    \foreach \x in {-1,1,3}{
      \draw[thick, fill=white,rounded corners] (\x-0.2,0.8) rectangle (\x+1.2,1.2);
    }
  \endscope
\end{tikzpicture}  ,  
\end{equation}
where all two-body operators on the r.h.s.\ are invertible. \cref{eq:mpo2layer1} also holds when shifted by one site (with other invertible operators):
\begin{equation} \label{eq:mpo2layer2}
\begin{tikzpicture}[scale=0.5]
  \node at (-1,0) {$\dots$};
  \pic[pic text= $Y_1$, font=\footnotesize] at (0,0) {mpotensor};
  \pic[pic text= $Y_2$, font=\footnotesize] at (1,0) {mpotensor};
  \pic[pic text= $Y_3$, font=\footnotesize] at (2,0) {mpotensor};
  \pic[pic text= $Y_4$, font=\footnotesize] at (3,0) {mpotensor};
  \node at (4,0) {$\dots$};
\end{tikzpicture}
=
\begin{tikzpicture}[scale=0.5]
  \node at (-1,0.5) {$\dots$};
  \node at (4,0.5) {$\dots$};
  \scope
    \clip (-0.5,-0.5) rectangle (3.5,1.5);
    \foreach \x in {0,1,2,3}{
      \draw[thick] (\x,-0.5) -- (\x,1.5);
    }
    \foreach \x in {0,2}{
      \draw[thick, fill=white, postaction={pattern=north west lines},rounded corners] (\x-0.2,0.8) rectangle (\x+1.2,1.2);
    }
    \foreach \x in {-1,1,3}{
      \draw[thick, fill=white, postaction={pattern=north west lines},rounded corners] (\x-0.2,-0.2) rectangle (\x+1.2,0.2);
    }
  \endscope
\end{tikzpicture} \ .    
\end{equation}
\end{lemma}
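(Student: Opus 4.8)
The plan is to pass to the composite tensors and read off the two layers from the factorization assumptions. Write $Z_i$ for the tensor obtained by stacking $X_i$ on top of $Y_i$ and contracting the shared physical leg, so that $V(Z_1,\dots,Z_n)=V(X_1,\dots,X_n)\cdot V(Y_1,\dots,Y_n)=\id$ by assumption~(1); since these are square operators on $(\mathbb{C}^d)^{\otimes n}$, the reverse product $V(Y)V(X)=\id$ holds as well, and its tensor $W_i$ ($Y_i$ stacked on $X_i$) is governed by \eqref{eq:mpofactor2}. Assumption~(3) says precisely that every blocked pair $Z_iZ_{i+1}$ (resp.\ $W_iW_{i+1}$) factorizes as a tensor product $P_i\otimes Q_{i+1}$ across the cut $(i|i+1)$, where $P_i$ absorbs the left virtual bond and acts on site $i$, and $Q_{i+1}$ emits the right virtual bond and acts on site $i+1$; in particular the internal bond between the two sites is summed out to a product. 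The goal is then to exhibit $V(Y)$ in the brickwork form \eqref{eq:mpo2layer1}.

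First I would run the \emph{pairing argument}. Grouping the ring $\id=V(Z_1,\dots,Z_n)$ into consecutive pairs $(1,2),(3,4),\dots$ and applying \eqref{eq:mpofactor1} to each pair turns the contraction into a plain MPO whose only surviving virtual bonds connect $Q_{2j}$ to $P_{2j+1}$. Regrouping along these surviving bonds shows that $\id$ equals a tensor product, over $j$, of operators $H_j=\sum_{\beta}(Q_{2j})_\beta\otimes(P_{2j+1})_\beta$ supported on the disjoint pairs $(2j,2j+1)$. Since a tensor product of operators on disjoint pairs equals $\id$ only if each factor is a scalar multiple of $\id$, I obtain the key \emph{gluing relations}: the composite half--tensors $Q_{2j}$ and $P_{2j+1}$ are mutually inverse across their shared bond, $H_j\propto\id_{(2j,2j+1)}$, and likewise (with the pairing shifted by one site, and for the reversed composite $W$) for the other parity. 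These relations are the engine of the construction: they encode that, at every cut, the part of $Z$ coming from $Y$ is inverted by the part coming from $X$.

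The heart of the proof---and the step I expect to be hardest---is to upgrade these composite relations into genuine invertible two--body gates acting on the physical legs of $Y$ alone. Here I would use the injectivity of the blocks $Y_iY_{i+1}$ and $X_iX_{i+1}$ from assumption~(2) to strip off the $X$--layer: applying the left inverse of $X_iX_{i+1}$ to the output of $Z_iZ_{i+1}=P_i\otimes Q_{i+1}$ recovers $Y_iY_{i+1}$, and, combined with the gluing relations, forces the virtual bond of $Y$ at each cut to split as $\mathcal{B}^Y_i\cong \ell_i\otimes r_i$, carrying the operator--Schmidt bond of the single gate that straddles that cut. Concretely, consecutive cuts of $Y$ carry alternately the Schmidt bond of a lower gate (odd cuts) and of an upper gate (even cuts), so each $Y_i$ decomposes as a product of the right half of one gate and the left half of the neighbouring one. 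I would then \emph{assemble} the half--tensors: contracting the shared $Y$--bond at cut $(2j-1|2j)$ glues the left half on $Y_{2j-1}$ to the right half on $Y_{2j}$ into a two--body operator $g_{2j-1,2j}$, and the bond at cut $(2j|2j+1)$ yields $h_{2j,2j+1}$. Invertibility of each $g$ and $h$ follows from the gluing relations together with $V(X)V(Y)=V(Y)V(X)=\id$, which supply explicit inverse gates built from the $X$--tensors. The delicate bookkeeping is making the bond--splitting simultaneously consistent at all cuts and checking that the assembled objects are honest operators rather than tensors with leftover virtual legs; injectivity is exactly what guarantees this.

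Finally I would \emph{verify} by recontraction that $V(Y)=\big(\prod_j h_{2j,2j+1}\big)\big(\prod_j g_{2j-1,2j}\big)$: substituting the half--tensor decomposition of each $Y_i$ and collapsing the network reproduces the brickwork \eqref{eq:mpo2layer1}, with the lower layer acting first. The shifted identity \eqref{eq:mpo2layer2} is obtained by rerunning the argument with the pairing offset by one site, which is legitimate because assumption~(3) holds for \emph{every} $i$, so neither parity of the cut is distinguished and the same construction yields a brickwork anchored on the opposite sublattice. The statement for $V(X)$ then follows either by the symmetric argument (swapping $X$ and $Y$, which interchanges \eqref{eq:mpofactor1} and \eqref{eq:mpofactor2}) or simply by inverting the two layers of $V(Y)$, since $V(X)=V(Y)^{-1}$.
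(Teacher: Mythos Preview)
Your route diverges from the paper's, and the step you yourself flag as ``delicate'' is in fact a genuine gap. The paper does not attempt to extract the two-layer structure of $V(Y)$ from the composite tensors $Z_i$. Instead it writes down the layers \emph{directly}: take a minimal-rank (Schmidt) decomposition of each $Y_i$ as a product of two tensors along an alternating bipartition of its legs, so that contracting neighbouring halves already gives two-body operators in the brickwork pattern. The entire content of the proof is then showing these two-body operators are \emph{invertible}. One direction (injective from the outer physical legs to the internal Schmidt bond) is immediate from invertibility of $V(Y)$. For the other direction the paper argues by contradiction: if some half has a nontrivial kernel, the factorization hypothesis \eqref{eq:mpofactor1} forces that kernel to localize as a one-site projector $y$ that can be inserted between the layers without changing $V(X)V(Y)=\id$; uniqueness of inverses then pushes $y$ into a relation on $V(X)$ alone, and injectivity of the blocks $X_iX_{i+1}$ kills it.

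Your plan, by contrast, tries to read off a splitting of the $Y$--bond from the gluing relations $H_j\propto\id$. But $H_j$ is built from $P_{2j+1}$ and $Q_{2j}$, which bundle the $X$-- and $Y$--virtual bonds together; ``stripping off the $X$-layer'' by applying $(X_iX_{i+1})^{-1}$ to $P_i\otimes Q_{i+1}$ destroys the tensor-product structure (the inverse acts across both sites), so it does not yield a factorization of $Y_iY_{i+1}$ or a splitting $\mathcal B^Y_i\cong\ell_i\otimes r_i$. You are assuming the conclusion when you say ``consecutive cuts of $Y$ carry alternately the Schmidt bond of a lower gate and of an upper gate''. The missing idea is precisely the paper's contradiction argument: rather than trying to split the bond, define the gates by Schmidt-decomposing $Y_i$ first and then use \eqref{eq:mpofactor1} only to rule out a kernel.
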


Note that for the translationally invariant setting, conditions 2 and 3 are satisfied naturally after blocking some tensors. 

\begin{proof}
  Take a Schmidt decomposition of the tensors $X_1,\dots, X_n$ and $Y_1,\dots, Y_n$ in an alternating way:
  \begin{align}
  \begin{tikzpicture}[scale=0.5]
    \node at (-1,0) {$\dots$};
    \pic[pic text= $X_1$, font=\footnotesize] at (0,0) {mpotensor=gray};
    \pic[pic text= $X_2$, font=\footnotesize] at (1,0) {mpotensor=gray};
    \pic[pic text= $X_3$, font=\footnotesize] at (2,0) {mpotensor=gray};
    \pic[pic text= $X_4$, font=\footnotesize] at (3,0) {mpotensor=gray};
    \node at (4,0) {$\dots$};
  \end{tikzpicture}
  &=
  \begin{tikzpicture}[scale=0.5]
  \node at (-0.9,0.5) {$\dots$};
  \node at (3.9,0.5) {$\dots$};
  \foreach \x in {0,2}{
    \draw[thick] (\x,1)--++(-0.5,0);
    \draw[thick] (\x,0)--++(0.5,0);
  }
  \foreach \x in {1,3}{
    \draw[thick] (\x,0)--++(-0.5,0);
    \draw[thick] (\x,1)--++(0.5,0);
  }
  \foreach \x in {0,1,2,3}{
    \draw[thick] (\x,-0.5) -- (\x,1.5);
    \draw[thick,fill=gray] (\x,0) circle (0.2);
    \draw[thick,fill=gray] (\x,1) circle (0.2);
  }
  \end{tikzpicture}\\
  \begin{tikzpicture}[scale=0.5]
    \node at (-1,0) {$\dots$};
    \pic[pic text= $Y_1$, font=\footnotesize] at (0,0) {mpotensor};
    \pic[pic text= $Y_2$, font=\footnotesize] at (1,0) {mpotensor};
    \pic[pic text= $Y_3$, font=\footnotesize] at (2,0) {mpotensor};
    \pic[pic text= $Y_4$, font=\footnotesize] at (3,0) {mpotensor};
    \node at (4,0) {$\dots$};
  \end{tikzpicture}
  &=
  \begin{tikzpicture}[scale=0.5]
  \node at (-0.9,0.5) {$\dots$};
  \node at (3.9,0.5) {$\dots$};
  \foreach \x in {0,2}{
    \draw[thick] (\x,0)--++(-0.5,0);
    \draw[thick] (\x,1)--++(0.5,0);
  }
  \foreach \x in {1,3}{
    \draw[thick] (\x,1)--++(-0.5,0);
    \draw[thick] (\x,0)--++(0.5,0);
  }
  \foreach \x in {0,1,2,3}{
    \draw[thick] (\x,-0.5) -- (\x,1.5);
    \draw[thick,fill=white] (\x,0) circle (0.2);
    \draw[thick,fill=white] (\x,1) circle (0.2);
  }
  \end{tikzpicture}    
  \end{align}
  We will prove that the two-body operators defined this way are invertible. They naturally have to be injective from the outside to the middle indices, otherwise $V(X_1,\dots X_n)$ and $V(Y_1,\dots Y_n)$ would not be invertible. Suppose that there is an operator which is not injective from the middle to the outside. Suppose it happens in the lower layer of $V(X_1,\dots X_n)$. Consider a 2-site part of the MPO.
  \begin{equation}
  \begin{tikzpicture}[scale=0.5]
  \draw[thick] (0,-0.5)--(0,1.5);
  \draw[thick] (1,-0.5)--(1,1.5);
  \draw[thick] (-0.5,0)--(1.5,0);
  \draw[thick] (-0.5,1)--(1.5,1);
  \fill[white] (-0.2,-0.2) rectangle (1.2,1.2);
  \draw[thick, rounded corners=3pt] (-0.2,-0.2) rectangle (0.2,1.2);
  \draw[thick, rounded corners=3pt] (0.8,-0.2) rectangle (1.2,1.2);
  \end{tikzpicture}
  =
  \begin{tikzpicture}[scale=0.5]
    \pic[pic text = $X_{i}$, font=\footnotesize] at (0,1) {mpotensor=gray};
    \pic[pic text = $X_{i+1}$, font=\footnotesize] at (1,1) {mpotensor=gray};
    \pic[pic text = $Y_{i}$, font=\footnotesize] at (0,0) {mpotensor};
    \pic[pic text = $Y_{i+1}$, font=\footnotesize] at (1,0) {mpotensor};
  \end{tikzpicture}
  =
  \begin{tikzpicture}[scale=0.5]
  \foreach \x in {0}{
    \draw[thick] (\x,3)--++(-0.5,0);
    \draw[thick] (\x,2)--++(0.5,0);
  }
  \foreach \x in {1}{
    \draw[thick] (\x,2)--++(-0.5,0);
    \draw[thick] (\x,3)--++(0.5,0);
  }
  \foreach \x in {0,1}{
    \draw[thick] (\x,1.5) -- (\x,3.5);
    \draw[thick,fill=gray] (\x,2) circle (0.2);
    \draw[thick,fill=gray] (\x,3) circle (0.2);
  }
  \foreach \x in {0}{
    \draw[thick] (\x,0)--++(-0.5,0);
    \draw[thick] (\x,1)--++(0.5,0);
  }
  \foreach \x in {1}{
    \draw[thick] (\x,1)--++(-0.5,0);
    \draw[thick] (\x,0)--++(0.5,0);
  }
  \foreach \x in {0,1}{
    \draw[thick] (\x,-0.5) -- (\x,1.5);
    \draw[thick,fill=white] (\x,0) circle (0.2);
    \draw[thick,fill=white] (\x,1) circle (0.2);
  }
  \end{tikzpicture}    
  \end{equation}
  As we took a minimal rank decomposition, the outer tensors on the l.h.s. are invertible. Therefore, the product of the operators in the middle is a product:
  \begin{equation}
  \begin{tikzpicture}[scale=0.5]
  \draw[thick] (0,-0.5)--(0,1.5);
  \draw[thick] (1,-0.5)--(1,1.5);
  \draw[thick] (0,0)--(1,0);
  \draw[thick] (0,1)--(1,1);
  \foreach \x in {(0,0),(1,0)}{
    \draw[thick,fill=white] \x circle (0.2);
  }
  \foreach \x in {(0,1),(1,1)}{
    \draw[thick,fill=gray] \x circle (0.2);
  }
  \end{tikzpicture}
  =
  \begin{tikzpicture}[scale=0.5]
  \draw[thick] (0,-0.5)--(0,1.5);
  \draw[thick] (1,-0.5)--(1,1.5);
  \draw[thick] (0,0)--(1,0);
  \draw[thick] (0,1)--(1,1);
  \fill[white] (-0.2,-0.2) rectangle (1.2,1.2);
  \draw[thick, rounded corners=3pt] (-0.2,-0.2) rectangle (0.2,1.2);
  \draw[thick, rounded corners=3pt] (0.8,-0.2) rectangle (1.2,1.2);
  \end{tikzpicture}
  \end{equation}
  Therefore if the gray operator is not injective from top to bottom, then its kernel factorizes. Suppose the left operator on the r.h.s.\ has a non trivial kernel. Then we can insert a non-trivial projector $y$ on top that does not change the value of the product:
  \begin{equation}
  \begin{tikzpicture}[baseline=0.2cm,scale=0.5]
  \draw[thick] (0,-0.5)--(0,2.5);
  \draw[thick] (1,-0.5)--(1,2.5);
  \draw[thick] (0,0)--(1,0);
  \draw[thick] (0,1)--(1,1);
  \foreach \x in {(0,0),(1,0)}{
    \draw[thick,fill=white] \x circle (0.2);
  }
  \foreach \x in {(0,1),(1,1)}{
    \draw[thick,fill=gray] \x circle (0.2);
  }
  \draw[thick, fill=yellow] (0,2) circle (0.2);
  \end{tikzpicture}
  =
  \begin{tikzpicture}[scale=0.5]
  \draw[thick] (0,-0.5)--(0,1.5);
  \draw[thick] (1,-0.5)--(1,1.5);
  \draw[thick] (0,0)--(1,0);
  \draw[thick] (0,1)--(1,1);
  \foreach \x in {(0,0),(1,0)}{
    \draw[thick,fill=white] \x circle (0.2);
  }
  \foreach \x in {(0,1),(1,1)}{
    \draw[thick,fill=gray] \x circle (0.2);
  }
  \end{tikzpicture}.
  \end{equation}
  Inserting this back into the product $V(X_1,\dots X_n)\cdot V(Y_1,\dots Y_n)$, we get that 
  \begin{equation}
  \begin{tikzpicture}[scale=0.5]
    \node at (-0.9,2) {$\dots$};
    \node at (3.9,2) {$\dots$};  \foreach \x in {0,2}{
      \draw[thick] (\x,4)--++(-0.5,0);
      \draw[thick] (\x,2)--++(0.5,0);
    }
    \foreach \x in {1,3}{
      \draw[thick] (\x,2)--++(-0.5,0);
      \draw[thick] (\x,4)--++(0.5,0);
    }
    \foreach \x in {0,1,2,3}{
      \draw[thick] (\x,1.5) -- (\x,4.5);
      \draw[thick,fill=gray] (\x,2) circle (0.2);
      \draw[thick,fill=gray] (\x,4) circle (0.2);
    }
    \foreach \x in {0,2}{
      \draw[thick] (\x,0)--++(-0.5,0);
      \draw[thick] (\x,1)--++(0.5,0);
    }
    \foreach \x in {1,3}{
      \draw[thick] (\x,1)--++(-0.5,0);
      \draw[thick] (\x,0)--++(0.5,0);
    }
    \foreach \x in {0,1,2,3}{
      \draw[thick] (\x,-0.5) -- (\x,1.5);
      \draw[thick,fill=white] (\x,0) circle (0.2);
      \draw[thick,fill=white] (\x,1) circle (0.2);
    }
    \draw[fill=yellow] (0,3) circle (0.2);
  \end{tikzpicture}
  =\ 
  \begin{tikzpicture}[scale=0.5]
    \node at (-0.9,2) {$\dots$};
    \node at (3.9,2) {$\dots$};
    \foreach \x in {0,1,2,3}{
      \draw[thick] (\x,-0.5) -- (\x,4.5);
    }
  \end{tikzpicture}    
  \end{equation}
  As $V(Y_1,\dots Y_n)$ is invertible, its left inverse is unique and equal to $V(X_1,\dots , X_n)$. Therefore 
  \begin{equation}
  \begin{tikzpicture}[scale=0.5]
    \node at (-0.9,1) {$\dots$};
    \node at (3.9,1) {$\dots$};
    \foreach \x in {0,2}{
      \draw[thick] (\x,2)--++(-0.5,0);
      \draw[thick] (\x,0)--++(0.5,0);
    }
    \foreach \x in {1,3}{
      \draw[thick] (\x,0)--++(-0.5,0);
      \draw[thick] (\x,2)--++(0.5,0);
    }
    \foreach \x in {0,1,2,3}{
      \draw[thick] (\x,-0.5) -- (\x,2.5);
      \draw[thick,fill=gray] (\x,0) circle (0.2);
      \draw[thick,fill=gray] (\x,2) circle (0.2);
    }
    \draw[fill=yellow] (0,1) circle (0.2);
  \end{tikzpicture}
  =
  \begin{tikzpicture}[scale=0.5]
    \node at (-0.9,0.5) {$\dots$};
    \node at (3.9,0.5) {$\dots$};
    \foreach \x in {0,2}{
      \draw[thick] (\x,1)--++(-0.5,0);
      \draw[thick] (\x,0)--++(0.5,0);
    }
    \foreach \x in {1,3}{
      \draw[thick] (\x,0)--++(-0.5,0);
      \draw[thick] (\x,1)--++(0.5,0);
    }
    \foreach \x in {0,1,2,3}{
      \draw[thick] (\x,-0.5) -- (\x,1.5);
      \draw[thick,fill=gray] (\x,0) circle (0.2);
      \draw[thick,fill=gray] (\x,1) circle (0.2);
    }
  \end{tikzpicture}
  \end{equation}
  By assumption, the tensors defining the MPO are injective after blocking at least two sites. Therefore, by inverting all but one tensor, we conclude that 
  \begin{equation}
  \begin{tikzpicture}[scale=0.5]
  \foreach \x in {0}{
    \draw[thick] (\x,2)--++(-0.5,0);
    \draw[thick] (\x,0)--++(0.5,0);
  }
  \foreach \x in {0}{
    \draw[thick] (\x,-0.5) -- (\x,2.5);
    \draw[thick,fill=gray] (\x,0) circle (0.2);
    \draw[thick,fill=gray] (\x,2) circle (0.2);
  }
  \draw[fill=yellow] (0,1) circle (0.2);
  \end{tikzpicture}
  =
  \begin{tikzpicture}[scale=0.5]
  \foreach \x in {0}{
    \draw[thick] (\x,1)--++(-0.5,0);
    \draw[thick] (\x,0)--++(0.5,0);
  }
  \foreach \x in {0}{
    \draw[thick] (\x,-0.5) -- (\x,1.5);
    \draw[thick,fill=gray] (\x,0) circle (0.2);
    \draw[thick,fill=gray] (\x,1) circle (0.2);
  }
  \end{tikzpicture}
  \end{equation}
  But this is not possible unless the yellow tensor is the identity. Thus, the two-body operators are invertible.
\end{proof}

We  now proceed to the proof of \cref{prop:o_twolayer}. Note that it is enough to show that both $O$ and $O^{-1}$ admit an MPO description that satisfy the conditions of \cref{thm:mpo_layers}.

\begin{proof}[Proof of \cref{prop:o_twolayer}]

  Write the l.h.s.\ of \cref{eq:boundary1} as an injective MPS with tensors defined in \cref{eq:boundary_mps_decomp}. The r.h.s.\ of \cref{eq:boundary1} is also an injective MPS. Therefore, the generating tensors are related by a gauge transformation:
  \begin{equation}
    \begin{tikzpicture}
      \pic[blue,rotate=90] at (0.5,0.5+\d) {halfstate};
      \pic[blue,rotate=-90] at (0.5,-0.5-\d) {halfstate};
      \draw[red] (\d-\dd,-\d-\dd) rectangle (\d+\dd,+\d+\dd); 
      \draw[red] (1-\d-\dd,-\d-\dd) rectangle (1-\d+\dd,+\d+\dd); 
      \draw[red] (\d-\dd,0) --+(-0.15,0);
      \draw[red] (1-\d+\dd,0) --+(0.15,0);
      \scope[xshift=0.30cm]
        \draw[red,rounded corners=2pt] (1-\d-\dd,-\d-\dd) rectangle (1-\d+\dd,+\d+\dd);       
        \draw[red] (1-\d+\dd,0.15) --+(0.15,0);
        \draw[red] (1-\d+\dd,-0.15) --+(0.15,0);      
      \endscope
      \scope[xshift=-0.30cm]
        \draw[red,rounded corners=2pt] (\d-\dd,-\d-\dd) rectangle (\d+\dd,+\d+\dd); 
        \draw[red] (\d-\dd,0.15) --+(-0.15,0);
        \draw[red] (\d-\dd,-0.15) --+(-0.15,0);
      \endscope 

    \end{tikzpicture} \ 
    =\ \mu \
    \begin{tikzpicture}[green,font=\scriptsize]
      \pic[rotate=90] at (0,0.5+\d) {halfstate};
      \pic[rotate=-90] at (0,-0.5-\d) {halfstate};
      \pic[anchor = south west, pic text =$X$] at (0,0.8) {boundarympo};
      \pic[anchor = north west, pic text =$Y$] at (0,-0.8) {boundarympo};
    \end{tikzpicture} \ .
  \end{equation}  
  Absorbing the gauge in the decomposition of the operator, we have
  \begin{equation}\label{eq:boundary_mpo_tensor_def}
    \begin{tikzpicture}
      \pic[blue,rotate=90] at (0.5,0.5+\d) {halfstate};
      \pic[blue,rotate=-90] at (0.5,-0.5-\d) {halfstate};
      \draw[red] (\d-\dd,-\d-\dd) rectangle (\d+\dd,+\d+\dd); 
      \draw[red] (1-\d-\dd,-\d-\dd) rectangle (1-\d+\dd,+\d+\dd); 
      \draw[red] (\d-\dd,0.15) --+(-0.15,0);
      \draw[red] (1-\d+\dd,0.15) --+(0.15,0);
      \draw[red] (\d-\dd,-0.15) --+(-0.15,0);
      \draw[red] (1-\d+\dd,-0.15) --+(0.15,0);
    \end{tikzpicture} \
    =\ \mu \
    \begin{tikzpicture}[green,font=\scriptsize]
      \pic[rotate=90] at (0,0.5+\d) {halfstate};
      \pic[rotate=-90] at (0,-0.5-\d) {halfstate};
      \pic[anchor = south west, pic text =$X$] at (0,0.8) {boundarympo};
      \pic[anchor = north west, pic text =$Y$] at (0,-0.8) {boundarympo};
    \end{tikzpicture} \ ,
  \end{equation}
  where the red rectangles depict a minimal rank decomposition of the operator $O$. As $V_n(X)$ and $V_n(Y)$ are inverses of each other, the inverse relation of \cref{eq:boundary1} reads
  \begin{equation} \label{eq:boundary_inv}
    \begin{tikzpicture}[blue,font=\scriptsize]
        \foreach \x in {0,...,2}{
          \pic[rotate=90] at (\x,0.5+\d) {halfstate};
          \pic[rotate=-90] at (\x,-0.5-\d) {halfstate};
          \pic[pic text = $Y$, anchor=south west] at (\x,0.8) {boundarympo};
          \pic[pic text = $X$, anchor=north west] at (\x,-0.8) {boundarympo};
        }
    \node[black,font=\tiny,anchor=east] at (-0.5,0.8) {$\dots$};
    \node[black,font=\tiny,anchor=west] at (2.5,0.8) {$\dots$};
    \node[black,font=\tiny,anchor=east] at (-0.5,-0.8) {$\dots$};
    \node[black,font=\tiny,anchor=west] at (2.5,-0.8) {$\dots$};
    \end{tikzpicture} \ 
    =
    \mu^n\
    \begin{tikzpicture}
    \scope
    \clip (0.5,-0.5-\d) rectangle (3.5,0.5+\d);
    \foreach \x in {0,...,3}{
      \pic[green,rotate=90] at (\x,0.5+\d) {halfstate};
      \pic[green,rotate=-90] at (\x,-0.5-\d) {halfstate};
      \draw[red,dashed] (\x+0.5,0) circle (0.3);
    }
    \endscope
    \node[font=\tiny,anchor=east] at (0.6,0) {$\dots$};
    \node[font=\tiny,anchor=west] at (3.4,0) {$\dots$};
    \end{tikzpicture}\ ,
  \end{equation}  
  where the dashed red circles denote $O^{-1}$. Therefore, with an appropriate minimal rank decomposition of $O^{-1}$, the generating tensors are related as follows:
  \begin{equation}
    \begin{tikzpicture}[blue,font=\scriptsize]
      \pic[rotate=90] at (0,0.5+\d) {halfstate};
      \pic[rotate=-90] at (0,-0.5-\d) {halfstate};
      \pic[anchor = south west, pic text =$Y$] at (0,0.8) {boundarympo};
      \pic[anchor = north west, pic text =$X$] at (0,-0.8) {boundarympo};
    \end{tikzpicture} \ 
    =\ \mu\ 
    \begin{tikzpicture}
      \pic[green,rotate=90] at (0.5,0.5+\d) {halfstate};
      \pic[green,rotate=-90] at (0.5,-0.5-\d) {halfstate};
      \draw[red,densely dashed] (\d-\dd,-\d-\dd) rectangle (\d+\dd,+\d+\dd); 
      \draw[red,densely dashed] (1-\d-\dd,-\d-\dd) rectangle (1-\d+\dd,+\d+\dd); 
      \draw[red] (\d-\dd,0.15) --+(-0.15,0);
      \draw[red] (1-\d+\dd,0.15) --+(0.15,0);
      \draw[red] (\d-\dd,-0.15) --+(-0.15,0);
      \draw[red] (1-\d+\dd,-0.15) --+(0.15,0);
    \end{tikzpicture} \ ,
  \end{equation}  
  where the dashed rectangles denote the Schmidt decomposition of $O^{-1}$. Therefore, applying the Schmidt vectors of $O$ and then $O^{-1}$ to the Schmidt vectors of $\ket{\phi_A}$, we obtain
  \begin{equation}\label{eq:schmidt_then_inverse}
    \begin{tikzpicture}
      \pic[blue,rotate=90] at (0.5,0.5+\d) {halfstate};
      \pic[blue,rotate=-90] at (0.5,-0.5-\d) {halfstate};
      \draw[red] (\d-\dd,-\d-\dd) rectangle (\d+\dd,+\d+\dd); 
      \draw[red] (1-\d-\dd,-\d-\dd) rectangle (1-\d+\dd,+\d+\dd); 
      \draw[red] (\d-\dd,0.15) --+(-0.15-\dd,0);
      \draw[red] (1-\d+\dd,0.15) --+(0.15+\dd,0);
      \draw[red] (\d-\dd,-0.15) --+(-0.15-\dd,0);
      \draw[red] (1-\d+\dd,-0.15) --+(0.15+\dd,0);
      \draw[red,densely dashed] (\d-2*\dd,-\d-2*\dd) rectangle (\d+2*\dd,+\d+2*\dd); 
      \draw[red,densely dashed] (1-\d-2*\dd,-\d-2*\dd) rectangle (1-\d+2*\dd,+\d+2*\dd); 
      \draw[red] (\d-2*\dd,0.2) --+(-0.2,0);
      \draw[red] (1-\d+2*\dd,0.2) --+(0.2,0);
      \draw[red] (\d-2*\dd,-0.2) --+(-0.2,0);
      \draw[red] (1-\d+2*\dd,-0.2) --+(0.2,0);
    \end{tikzpicture} \
    =\ \mu\ 
    \begin{tikzpicture}[font=\scriptsize]
      \pic[green,rotate=90] at (0,0.5+\d) {halfstate};
      \pic[green,rotate=-90] at (0,-0.5-\d) {halfstate};
      \pic[green,anchor = south west, pic text =$X$] at (0,0.8) {boundarympo};
      \pic[green, anchor = north west, pic text =$Y$] at (0,-0.8) {boundarympo};
      \begin{scope}[xshift=-0.5cm]
        \draw[red,densely dashed] (\d-\dd,-\d-\dd) rectangle (\d+\dd,+\d+\dd); 
        \draw[red,densely dashed] (1-\d-\dd,-\d-\dd) rectangle (1-\d+\dd,+\d+\dd); 
        \draw[red] (\d-\dd,0.15) --+(-0.15,0);
        \draw[red] (1-\d+\dd,0.15) --+(0.15,0);
        \draw[red] (\d-\dd,-0.15) --+(-0.15,0);
        \draw[red] (1-\d+\dd,-0.15) --+(0.15,0);
      \end{scope}
    \end{tikzpicture} \ =
    \begin{tikzpicture}[blue,font=\scriptsize]
      \pic[rotate=90] at (0,0.5+\d) {halfstate};
      \pic[rotate=-90] at (0,-0.5-\d) {halfstate};
      \pic[anchor = south west, pic text =$Y$] at (0,0.8) {boundarympo};
      \pic[anchor = north west, pic text =$X$] at (0,-0.8) {boundarympo};
      \pic[anchor = south west, pic text =$X$] at (0,1.3) {boundarympo};
      \pic[anchor = north west, pic text =$Y$] at (0,-1.3) {boundarympo};
    \end{tikzpicture} \ .    
  \end{equation}  
 Contracting two copies of \cref{eq:schmidt_then_inverse}, the middle operator is $O O^{-1} = \id$, so
  \begin{equation}
    \begin{tikzpicture}
      \foreach \x in {0,1}{
        \pic[blue,rotate=90] at (\x+0.5,0.5+\d) {halfstate};
        \pic[blue,rotate=-90] at (\x+0.5,-0.5-\d) {halfstate};
      }
      \draw[red] (\d-\dd,-\d-\dd) rectangle (\d+\dd,+\d+\dd); 
      \draw[red,densely dashed] (\d-2*\dd,-\d-2*\dd) rectangle (\d+2*\dd,+\d+2*\dd); 
      \draw[red] (\d-2*\dd,0.2) --+(-0.2,0);
      \draw[red] (\d-\dd,-0.15) --+(-0.15-\dd,0);
      \draw[red] (\d-\dd,0.15) --+(-0.15-\dd,0);
      \draw[red] (\d-2*\dd,-0.2) --+(-0.2,0);
      \draw[red] (2-\d-\dd,-\d-\dd) rectangle (2-\d+\dd,+\d+\dd); 
      \draw[red,densely dashed] (2-\d-2*\dd,-\d-2*\dd) rectangle (2-\d+2*\dd,+\d+2*\dd); 
      \draw[red] (2-\d+\dd,0.15) --+(0.15+\dd,0);
      \draw[red] (2-\d+\dd,-0.15) --+(0.15+\dd,0);
      \draw[red] (2-\d+2*\dd,0.2) --+(0.2,0);
      \draw[red] (2-\d+2*\dd,-0.2) --+(0.2,0);
    \end{tikzpicture} \
    =
    \begin{tikzpicture}[blue,font=\scriptsize]
      \foreach \x in {0,1}{
        \pic[rotate=90] at (\x,0.5+\d) {halfstate};
        \pic[rotate=-90] at (\x,-0.5-\d) {halfstate};
        \pic[anchor = south west, pic text =$Y$] at (\x,0.8) {boundarympo};
        \pic[anchor = north west, pic text =$X$] at (\x,-0.8) {boundarympo};
        \pic[anchor = south west, pic text =$X$] at (\x,1.3) {boundarympo};
        \pic[anchor = north west, pic text =$Y$] at (\x,-1.3) {boundarympo};
      }
    \end{tikzpicture} \ .     
  \end{equation}
  Notice that the l.h.s. is a product w.r.t. the vertical cut, whereas the r.h.s. is product w.r.t. the horizontal cut. Therefore both sides have to be product w.r.t. both vertical and horizontal cuts. Note that then $V_n(X)$ and $V_n(Y)$ satisfy the conditions of \cref{thm:mpo_layers} and thus are products of invertible two-body operators in the sense of \cref{eq:mpo2layer1,eq:mpo2layer2}. Similarly, both terms on the l.h.s. factorize w.r.t. the horizontal cut. As the  one-body reduced densities of $\ket{\phi_A}$ are full rank, the product of the Schmidt vectors of $O$ and $O^{-1}$ factorize:
  
  \begin{equation}\label{eq:Oinv12}
    \left(O^{-1}\right)^{(13)}_{kl}  O^{(13)}_{ij} = A^{(1)}_{ik} \otimes A^{(3)}_{jl}.
  \end{equation}
  The same holds for the Schmidt vectors of all neighboring bipartition in any order. Similarly, the equation holds for the bipartition $(13)-(24)$ and also for the reordering of $O$ and $O^{-1}$. \cref{eq:Oinv12} can be pictorially represented as
  \begin{equation}\label{eq:factorize-1}
  \begin{tikzpicture}
  \def \c {purple}
  \draw[red, thick] (0.8,-0.4) rectangle (1.2,0.4);
  \draw[red, thick] (1.2,0.2)--++(0.3,0);
  \draw[red, thick] (1.2,-0.2)--++(0.3,0);
  \draw[\c, thick,densely dashed] (0.7,-0.5) rectangle (1.3,0.5);
  \draw[\c, thick] (1.3,0.3)--++(0.2,0);
  \draw[\c, thick] (1.3,-0.3)--++(0.2,0);
  \end{tikzpicture}
  =
  \begin{tikzpicture}
  \draw[thick,fill=white,rounded corners=3pt] (0.4,0.1) rectangle (0.6,0.6);
  \draw[thick,fill=white,rounded corners=3pt] (0.4,-0.1) rectangle (0.6,-0.6);
  \draw[thick] (0.6,0.2)--++(0.2,0);
  \draw[thick] (0.6,0.5)--++(0.2,0);
  \draw[thick] (0.6,-0.2)--++(0.2,0);
  \draw[thick] (0.6,-0.5)--++(0.2,0);
  \end{tikzpicture}\ .
  \end{equation}
  Consider  the operator
  \begin{equation}
    Z = \left(O^{(13)}_{j_1 j_2} \otimes O^{(24)}_{j_3 j_4} \right) O^{-1} \left(O^{(12)}_{i_1 i_2} \otimes O^{(34)}_{i_3 i_4}\right) = 
    \begin{tikzpicture}[baseline=0.4cm]
      \draw[thick, purple,densely dashed] (-0.3,-0.3) rectangle (1.3,1.3);
      \foreach \x/\y  in {0.2/0.15,0.8/-0.15}{
        \foreach \z in {0,1}{
          \draw[thick, red] (\z,\x)--+(0,\y); 
          \draw[thick, red] (\x,\z)--+(\y,0); 
        } 
      }
      \draw[thick, red] (-0.2,-0.2) rectangle (1.2,0.2);
      \draw[thick, red] (-0.2,0.8) rectangle (1.2,1.2);
      \draw[thick, red] (-0.4,-0.4) rectangle (0.2,1.4);
      \draw[thick, red] (0.8,-0.4) rectangle (1.4,1.4);
    \end{tikzpicture}
    \ .
  \end{equation}
  Note that $Z$ factorizes w.r.t. the bipartition $(13)-(24)$: to see this, decompose $O^{-1}$ w.r.t. the bipartition $(12)-(34)$. Then
  \begin{equation}
    O^{-1} \left(O^{(12)}_{i_1 i_2} \otimes O^{(34)}_{i_3 i_4}\right) =
    \begin{tikzpicture}[rotate=90]
      \foreach \i in {0,180}{
        \begin{scope}[rotate around={\i:(1.5,0)}]
          \def \c {purple}
          \draw[red, thick] (0.8,-0.4) rectangle (1.2,0.4);
          \draw[red, thick] (1.2,0.2)--++(0.2,0);
          \draw[red, thick] (1.2,-0.2)--++(0.2,0);
          \draw[\c, thick,densely dashed] (0.7,-0.5) rectangle (1.3,0.5);
          \draw[\c, thick] (1.3,0.3)--++(0.2,0);
          \draw[\c, thick] (1.3,-0.3)--++(0.2,0);  
        \end{scope}
      }
    \end{tikzpicture}
    =
    \begin{tikzpicture}[rotate=90]
      \foreach \i in {0,180}{
        \begin{scope}[rotate around={\i:(0.8,0)}]
          \draw[thick,fill=white,rounded corners=3pt] (0.4,0.1) rectangle (0.6,0.6);
          \draw[thick,fill=white,rounded corners=3pt] (0.4,-0.1) rectangle (0.6,-0.6);
          \draw[thick] (0.6,0.2)--++(0.1,0);
          \draw[thick] (0.6,0.5)--++(0.2,0);
          \draw[thick] (0.6,-0.2)--++(0.1,0);
          \draw[thick] (0.6,-0.5)--++(0.2,0);
        \end{scope}
      }
    \end{tikzpicture}\ ,
  \end{equation}  
  therefore it factorizes w.r.t. the bipartition $(13)-(24)$, and so does $Z$. Similarly, $Z$ also factorizes w.r.t. the bipartition $(12)-(34)$. Therefore, $Z$ is a four-partite product, 
  \begin{equation}\label{eq:O_mpo_decomp}
  Z = 
  \begin{tikzpicture}[baseline=0.4cm]
  \draw[thick, purple,densely dashed] (-0.3,-0.3) rectangle (1.3,1.3);
  \foreach \x/\y  in {0.2/0.15,0.8/-0.15}{
    \foreach \z in {0,1}{
      \draw[thick, red] (\z,\x)--+(0,\y); 
      \draw[thick, red] (\x,\z)--+(\y,0); 
    } 
  }
  \draw[thick, red] (-0.2,-0.2) rectangle (1.2,0.2);
  \draw[thick, red] (-0.2,0.8) rectangle (1.2,1.2);
  \draw[thick, red] (-0.4,-0.4) rectangle (0.2,1.4);
  \draw[thick, red] (0.8,-0.4) rectangle (1.4,1.4);
  \end{tikzpicture}
  =
  \begin{tikzpicture}[baseline=0.4cm]
  \draw[thick] (0,0)--(0,0.3);
  \draw[thick] (0,0)--(0.3,0);
  \draw[thick] (0,1)--(0.3,1);
  \draw[thick] (0,1)--(0,0.7);
  \draw[thick] (1,1)--(0.7,1);
  \draw[thick] (1,1)--(1,0.7);
  \draw[thick] (1,0)--(0.7,0);
  \draw[thick] (1,0)--(1,0.3);
  \foreach \x in {(0,0),(0,1),(1,0),(1,1)}{
    \draw[thick,fill=white] \x circle (0.1);
  }
  \end{tikzpicture}
  \end{equation}
  As contracting the open indices of $Z$ gives back the operator $O O^{-1} O = O$, and as $Z$ has a tensor product structure, this construction gives rise to an MPO description of $O$.
  Similarly, contracting only the vertical (horizontal) indices the lower (upper) two layers gives $O^{-1}O =\id$ ($OO^{-1} =\id$) on the lower (upper) two layers, and thus we obtain a minimal rank decomposition of $O$ in the horizontal (vertical) cut. As the Schmidt vectors are linearly independent, the MPO tensors become injective after blocking two tensors. 

  The above construction can be repeated for $O^{-1}$. This leads to an MPO decomposition of $O^{-1}$.
  
  These two decompositions satisfy the conditions of \cref{thm:mpo_layers}: the MPOs become injective after blocking two tensors, moreover, the product of two neighboring tensors of $O$ and $O^{-1}$ factorizes. Therefore, $O$ (and $O^{-1}$) is a product of invertible two-body operators.    
%
%
\end{proof}

The above form provides an equivalent characterization of when two semi-injective PEPS are equal for all system sizes. Before stating the theorem, we introduce two swap operators on four particles. The horizontal swap, $H_A$, exchanges the virtual particles of $\ket{\phi_A}$ in the horizontal direction:
\begin{equation}
  \begin{tikzpicture}
    \pic at (0,0) {state};
    \node at (0,0) {4};
    \node at (0,1) {1};
    \node at (1,1) {2};
    \node at (1,0) {3};
  \end{tikzpicture} 
  \rightarrow 
  \begin{tikzpicture}
    \pic at (0,0) {state};
    \node at (0,0) {3};
    \node at (0,1) {2};
    \node at (1,1) {1};
    \node at (1,0) {4};
  \end{tikzpicture}  . 
\end{equation}
 The vertical swap, $V_A$, reflects the particles of $\ket{\phi_A}$ in the vertical direction:
 \begin{equation}
   \begin{tikzpicture}
     \pic at (0,0) {state};
     \node at (0,0) {4};
     \node at (0,1) {1};
     \node at (1,1) {2};
     \node at (1,0) {3};
   \end{tikzpicture} 
   \rightarrow 
   \begin{tikzpicture}
     \pic at (0,0) {state};
     \node at (0,0) {1};
     \node at (0,1) {4};
     \node at (1,1) {3};
     \node at (1,0) {2};
   \end{tikzpicture}.   
 \end{equation}
We denote the product of $H_A$ and $V_A$ as $S_A$: $S_A=H_A V_A = V_A H_A$. 
Define $H_B, V_B $ and $S_B$ similarly for $\ket{\phi_B}$. 
Note that $H_A$ and $H_B$ are different in general as the Hilbert spaces of the virtual particles might differ.
\begin{theorem}\label{thm:canonicalform}
  Two semi-injective PEPS are equal (\cref{eq:setup} holds) if and only if the following conditions are satisfied:
  \begin{itemize}
    \item The operator $O$ factorizes into two-body operators as
    \begin{equation}
      \begin{tikzpicture}[baseline=-0.1cm]
      \draw[red] (0,0) circle (0.5);
      \end{tikzpicture} =
      \begin{tikzpicture}[baseline=-0.1cm,rotate=90]
      \draw[red] (-0.5,-0.5) rectangle (-0.2,0.5);
      \draw[red] (0.2,-0.5) rectangle (0.5,0.5);
      \draw[red] (-0.6,-0.6) rectangle (0.6,-0.1);
      \draw[red] (-0.6,0.1) rectangle (0.6,0.6);      
      \end{tikzpicture}       
    \end{equation}
    \item	The Schmidt vectors of the four-partite states satisfy:
    \begin{align} \label{eq:swap_schmidt_vert}
      \begin{tikzpicture}[baseline=-0.1cm]
      \draw[thick,blue] (0,0.7)--(0,0.2)--(1,0.2)--(1,0.7); 
      \draw[thick,blue] (0,-0.7)--(0,-0.2)--(1,-0.2)--(1,-0.7); 
      \foreach \x in {(0,0.2),(0,-0.2),(1,0.2),(1,-0.2)} \filldraw \x circle (0.1);
      \draw[rounded corners=3pt,blue] (-0.2,0.7) rectangle (1.2,0.9);
      \draw[rounded corners=3pt,blue] (-0.2,-0.7) rectangle (1.2,-0.9);  
      \draw (0.5,0.9)--++(0,0.2);
      \draw (0.5,-0.9)--++(0,-0.2);
      \draw[red, thick] (0.5,0) ellipse (0.8cm and 0.5cm);
      \end{tikzpicture}
      &=
      \begin{tikzpicture}[baseline=-0.1cm]
      \def\c{green}
      \draw[thick,\c] (0,0.7)--(0,0.2)--(1,0.2)--(1,0.7); 
      \draw[thick,\c] (0,-0.7)--(0,-0.2)--(1,-0.2)--(1,-0.7); 
      \foreach \x in {(0,0.2),(0,-0.2),(1,0.2),(1,-0.2)} \filldraw \x circle (0.1);
      \draw[rounded corners=3pt,\c] (-0.2,0.7) rectangle (1.2,0.9);
      \draw[rounded corners=3pt,\c] (-0.2,-0.7) rectangle (1.2,-0.9);  
      \draw[thick,\c] (0.5,0.9)--++(0,0.2);      
      \draw[thick,\c] (0.5,-0.9)--++(0,-0.2);
      \end{tikzpicture}\\ \label{eq:swap_schmidt_hori}
      \begin{tikzpicture}[rotate=90,baseline=0.4cm]
      \draw[thick,blue] (0,0.7)--(0,0.2)--(1,0.2)--(1,0.7); 
      \draw[thick,blue] (0,-0.7)--(0,-0.2)--(1,-0.2)--(1,-0.7); 
      \foreach \x in {(0,0.2),(0,-0.2),(1,0.2),(1,-0.2)} \filldraw \x circle (0.1);
      \draw[rounded corners=3pt,blue] (-0.2,0.7) rectangle (1.2,0.9);
      \draw[rounded corners=3pt,blue] (-0.2,-0.7) rectangle (1.2,-0.9);  
      \draw (0.5,0.9)--++(0,0.2);
      \draw (0.5,-0.9)--++(0,-0.2);
      \draw[red, thick] (0.5,0) ellipse (0.8cm and 0.5cm);
      \end{tikzpicture}
      &=
      \begin{tikzpicture}[rotate=90,baseline=0.4cm]
      \def\c{green}
      \draw[thick,\c] (0,0.7)--(0,0.2)--(1,0.2)--(1,0.7); 
      \draw[thick,\c] (0,-0.7)--(0,-0.2)--(1,-0.2)--(1,-0.7); 
      \foreach \x in {(0,0.2),(0,-0.2),(1,0.2),(1,-0.2)} \filldraw \x circle (0.1);
      \draw[rounded corners=3pt,\c] (-0.2,0.7) rectangle (1.2,0.9);
      \draw[rounded corners=3pt,\c] (-0.2,-0.7) rectangle (1.2,-0.9);  
      \draw[thick,\c] (0.5,0.9)--++(0,0.2);      
      \draw[thick,\c] (0.5,-0.9)--++(0,-0.2);
      \end{tikzpicture}\ , 
    \end{align}
    where the horizontal ellipse denotes $H_BOH_A$, and the vertical ellipse denotes $V_BOV_A$.
  \end{itemize}
\end{theorem}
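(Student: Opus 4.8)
The plan is to prove the two directions of the equivalence separately, viewing the factorization of $O$ and the two Schmidt-vector conditions as together repackaging, in a purely local form, the content of the boundary relation \cref{eq:boundary1}.

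For the forward direction (equality $\Rightarrow$ the conditions) the factorization of $O$ into invertible two-body operators is already available as \cref{prop:o_twolayer}, so nothing new is needed there. To extract \cref{eq:swap_schmidt_vert} and \cref{eq:swap_schmidt_hori}, I would start from the single-site boundary identity \cref{eq:boundary_mpo_tensor_def}, which writes one blue boundary tensor (the Schmidt vectors of $\ket{\phi_A}$ dressed by the Schmidt halves of $O$) as $\mu$ times one green Schmidt vector dressed by the boundary MPO tensors $X,Y$. Since \cref{thm:mpo_layers} (applied inside the proof of \cref{prop:o_twolayer}) guarantees that $V_n(X)$ and $V_n(Y)$ themselves split into invertible two-body operators, these factors can be peeled off the green side and reabsorbed, leaving a purely on-site identity between the blue and green Schmidt vectors. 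The swaps enter precisely at this step: $O$ naturally acts \emph{across} the cut, so to express it as an operator on two aligned Schmidt halves one must reflect the two relevant particles, turning $O$ into the reflected operator $H_BOH_A$ appearing in \cref{eq:swap_schmidt_vert} and, for the other cut direction, into $V_BOV_A$ as in \cref{eq:swap_schmidt_hori}.

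For the backward direction (the conditions $\Rightarrow$ equality) I would reassemble the global state. Using the factorization, I rewrite the blue PEPS by distributing each on-site $O=(O_{14}\otimes O_{23})(O_{12}\otimes O_{34})$ into two-body operators living on the horizontal and vertical bonds of the lattice. Taking minimal-rank decompositions of these two-body operators and of the four-partite states, the global network factorizes into a product of local pieces, each of which is exactly the left-hand side of \cref{eq:swap_schmidt_vert} or \cref{eq:swap_schmidt_hori}, with the swaps $H,V$ reappearing as the bookkeeping that aligns a shared two-body operator with the two Schmidt halves it connects. Replacing each local piece by the corresponding green Schmidt vectors via the hypotheses converts the blue network into the green network site by site, the accumulated scalars combining into a single constant; by \cref{prop:size_indep} it suffices to carry this out on one torus of size $\geq 3$, after which the constant is forced to be $\mu^{nm}$ and \cref{eq:setup} follows for all sizes.

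The main obstacle I anticipate is the backward direction, specifically the combinatorial bookkeeping of the swaps and of which two-body factor is absorbed into which Schmidt half. One must check that the two local conditions---one per cut direction---are simultaneously consistent with the \emph{single} two-layer factorization of $O$, whose two layers mix horizontal and vertical bonds, and that the reassembly tiles the torus without leaving uncontracted indices or mismatched bonds. In particular I would need to verify that the vertical and horizontal conditions genuinely glue together globally rather than merely each reproducing the boundary relation for one cut; this is where the full-rank assumption on the one-particle reduced densities and the invertibility of the two-body operators are essential, since they are what let the local identities be inverted and propagated consistently across neighboring plaquettes.
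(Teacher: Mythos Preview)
Your forward direction is overcomplicated. The paper dispatches necessity in one line: the factorization is \cref{prop:o_twolayer}, and the two Schmidt-vector conditions are literally the statement that the two PEPS agree on an $n\times 1$ (resp.\ $1\times n$) torus, which follows immediately from \cref{prop:size_indep} together with the injective-MPS fundamental theorem (\cref{prop:can_form_normal_mps}). Your proposed route via \cref{eq:boundary_mpo_tensor_def} and peeling off the two-body factors of $V_n(X),V_n(Y)$ is not obviously wrong, but those factors act on the \emph{Schmidt} (virtual) indices, not the physical ones, so ``peeling and reabsorbing'' does not land you at \cref{eq:swap_schmidt_vert}; at best you would recover a gauge-dressed version that still needs further work.

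Your backward direction has a genuine gap, and it is not merely bookkeeping. After distributing $O=(O_{14}\otimes O_{23})(O_{12}\otimes O_{34})$ over the lattice you obtain a layer of horizontal two-body operators followed by a layer of vertical ones acting on $\bigotimes\ket{\phi_A}$. The resulting local pieces are single two-body operators acting on pairs of corners; they are \emph{not} the left-hand sides of \cref{eq:swap_schmidt_vert} or \cref{eq:swap_schmidt_hori}, which involve the full four-body operator $H_BOH_A$ (resp.\ $V_BOV_A$) acting on two complete Schmidt halves. So the hypotheses cannot be applied to those pieces directly, and there is no evident way to reassemble them into the required form.

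The paper's sufficiency argument uses a different, and rather clever, trick. It introduces the four swapped variants $O^{-1}$, $H_BOH_A$, $V_AO^{-1}V_B$, $S_BOS_A$, and observes that the two-layer structure of $O$ forces the composite built from alternating $H_BOH_A$ and $O^{-1}$ (placed on half-shifted plaquettes) to be a tensor product across the horizontal cut; its vertical swap, built from $S_BOS_A$ and $V_AO^{-1}V_B$, is then the same operator with top and bottom layers exchanged. Applying both expressions to the $(\ket{\phi_A},\id)$ PEPS, the \emph{lower} layer in each case is precisely one of the hypotheses \cref{eq:swap_schmidt_vert}, \cref{eq:swap_schmidt_hori} and converts every $\ket{\phi_A}$ to $\ket{\phi_B}$; what remains is $O^{-1}$ (resp.\ $V_AO^{-1}V_B$) on the shifted plaquettes, and one more application of \cref{eq:swap_schmidt_hori} finishes the job. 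The point is that the two-layer factorization is used not to split $O$ into pieces, but to show that a carefully chosen composite of $O$'s and $O^{-1}$'s is a product in the right direction, so that the full four-body Schmidt conditions can be invoked wholesale.
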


Note that the last two conditions are equivalent to the property that the two states are equal on an $n\times 1$ and a $1\times n$ torus for all $n$, therefore they are easily checkable.

\begin{proof}
  The necessity of these conditions is clear from above. We now prove the sufficiency. Let
  \begin{align}
    O^{-1} &=
    \begin{tikzpicture}[baseline=-0.1cm]
    \draw[red,dashed] (0,0) circle (0.5);
    \end{tikzpicture}\ ,\\
    H_BOH_A &=
    \begin{tikzpicture}[baseline=-0.1cm]
    \draw[red] (0,0) ellipse (0.7cm and 0.5cm);
    \end{tikzpicture}\ ,\\
    V_AO^{-1}V_B &=
    \begin{tikzpicture}[baseline=-0.1cm]
    \draw[red,dashed] (0,0) ellipse (0.5cm and 0.7cm);
    \end{tikzpicture}\ ,\\
    S_BOS_A &=
    \begin{tikzpicture}[baseline=-0.1cm]
    \draw[purple,rounded corners=2mm] (-0.6,-0.6) rectangle (0.6,0.6); 
    \end{tikzpicture}  \   .    
  \end{align}  
  Due to the two layer structure of $O$ and $O^{-1}$ (\cref{eq:op_twolayer}), the following operator is a product in the horizontal cut:
  \begin{equation}
    \begin{tikzpicture}[baseline=-0.1cm]
    \clip (0,-1) rectangle (3,1);
    \foreach \x in {0,1.5,3}{
      \draw[red] (\x,0) ellipse (0.7cm and 0.5cm);
      \draw[red, dashed] (\x+0.75,0) circle (0.5);
    }
    \end{tikzpicture} = A\otimes B\ ,
  \end{equation}
  where $H_BOH_A$ is the lower layer.
  The vertical swap of the previous operator is 
  \begin{equation}
    \begin{tikzpicture}[baseline=-0.1cm]
    \clip (0,-1) rectangle (3,1);
    \foreach \x in {0,1.5,3}{
      \draw[purple,rounded corners=2mm] (\x-0.6,-0.6) rectangle (\x+0.6,0.6); 
      \draw[red,dashed] (\x+0.75,0) ellipse (0.5cm and 0.7cm);
    }
    \end{tikzpicture} = B\otimes A\ ,
  \end{equation}  
  where $S_BOS_A$ is the lower layer.
  Consider now these operators acting on the semi-injective PEPS defined by $\ket{\phi_A}$  and $\id$:
  
  \begin{equation}
    \begin{tikzpicture}
    \clip (0.5,0.5) rectangle (4.4,4.4);
    \foreach \x in {0,1.3,2.6,3.9}
    \foreach \y in {0,1.3,2.6,3.9}{
      \draw[blue] (\x,\y) rectangle (\x+1,\y+1);
      \draw[red] (\x+0.5,\y+1.15) ellipse (0.62cm and 0.47cm);
      \draw[red, dashed] (\x+1.15,\y+1.15) circle (0.5cm); 
      \foreach \u in {0,1} \foreach \v in {0,1}
      \filldraw (\x+\u,\y+\v) circle (0.05);
    }
    \end{tikzpicture} = 
    \begin{tikzpicture}
    \clip (0.5,0.5) rectangle (4.4,4.4);
    \foreach \x in {0,1.3,2.6,3.9}
    \foreach \y in {0,1.3,2.6,3.9}{
      \draw[blue] (\x,\y) rectangle (\x+1,\y+1);
      \draw[purple,rounded corners=1.5mm] (\x-0.11,\y-0.11) rectangle (\x+1.11,\y+1.11); 
      \draw[red,dashed] (\x+1.15,\y+0.5) ellipse (0.47cm and 0.62cm);
      \foreach \u in {0,1} \foreach \v in {0,1}
      \filldraw (\x+\u,\y+\v) circle (0.05);
    }
    \end{tikzpicture}
  \end{equation}
  From \cref{eq:swap_schmidt_vert} and \eqref{eq:swap_schmidt_hori}, the action of the lower layers on each side is to change $\ket{\phi_A}$ to $\ket{\phi_B}$. Therefore,
  \begin{equation}
    \begin{tikzpicture}
      \truncstateswo[green,{red,dashed}] (0,0) (3,3);
    \end{tikzpicture} = 
    \begin{tikzpicture}
      \truncstates[green] (0,0) (3,3);
      \clip (0.5,0.5) rectangle (3.5,3.5);
      \foreach \x in {1,2,3} \foreach \y in {0,1,2,3}
        \draw[red,dashed] (\x,\y+0.5) ellipse (0.36cm and 0.47cm);
    \end{tikzpicture}\ .
  \end{equation}
  Using once more \cref{eq:swap_schmidt_hori}, the r.h.s.\ is a tensor product of $\phi_A$ at every position:
  \begin{equation}
    \begin{tikzpicture}
      \truncstateswo[green,{red,dashed}] (0,0) (3,3);
    \end{tikzpicture} = 
    \begin{tikzpicture}
      \truncstates[blue] (0,0) (3,3)
    \end{tikzpicture}\ .
  \end{equation}  
  applying $O$ on both sides on each site, we see that \cref{eq:setup} holds. 
\end{proof}

As a simple application, one can derive the canonical form of injective PEPS \cite{Perez-Garcia2010}.
\begin{corollary}
  Two injective PEPS generate the same state if and only if they are related by a product gauge transformation.
\end{corollary}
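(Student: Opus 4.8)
The plan is to obtain this as a specialization of \cref{thm:canonicalform} to the injective case. First I would recall from \cref{sec:defn} that an injective PEPS is exactly the semi-injective PEPS whose four-partite state $\ket{\phi}$ is a pair of maximally entangled states (the two bonds meeting at the site, together with a trivial fourth particle) and whose invertible operator is the injective PEPS tensor itself. Two injective PEPS generating the same state necessarily have the same bond dimension, so both are described by the \emph{same} $\ket{\phi_A}=\ket{\phi_B}$ (a product of maximally entangled pairs), and the only difference resides in the two invertible tensors $O_A$ and $O_B$. The states coincide precisely when \cref{eq:setup} holds with $O=O_B^{-1}O_A$, so it suffices to unpack what the conditions of \cref{thm:canonicalform} say for this particular $\ket{\phi}$.

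By \cref{thm:canonicalform}, \cref{eq:setup} holds if and only if $O$ factorizes into invertible two-body operators as in \cref{eq:op_twolayer} and the Schmidt vectors satisfy \cref{eq:swap_schmidt_vert,eq:swap_schmidt_hori}. The observation I would exploit is that, for a maximally entangled pair, an operator acting on one half can be transferred to the other half as its transpose, $(M\otimes\id)\ket{\phi^+}=(\id\otimes M^{T})\ket{\phi^+}$. Since $\ket{\phi}$ is nothing but two such pairs, contracting each two-body factor of $O$ against the bond it straddles collapses that factor into a single-site matrix living on the bond. The two layers of \cref{eq:op_twolayer} then reorganize into one gauge matrix $X$ acting on the horizontal bonds and one gauge matrix $Y$ acting on the vertical bonds, so that $O_B^{-1}O_A$ becomes the action of $X\otimes Y\otimes X^{-1}\otimes Y^{-1}$ on the virtual legs of the tensor, which is precisely a product gauge transformation relating $A$ and $B$.

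The step I expect to be the main obstacle is the geometric bookkeeping needed to verify that the matrix pushed onto one end of a shared bond is the inverse of the matrix pushed onto the other end, so that the global contraction is genuinely unchanged. Concretely, one must track through \cref{eq:op_twolayer} which half-bond each two-body operator meets and then combine the transpose identity above with the swap relations \cref{eq:swap_schmidt_vert,eq:swap_schmidt_hori} and the invertibility of $O$ to conclude that the two single-site matrices sitting on the ends of a common bond are mutually inverse (up to the transpose dictated by the entangled pair). Once this is established the forward direction is complete. The converse is immediate and requires no appeal to \cref{thm:canonicalform}: a product gauge transformation inserts $X X^{-1}$ and $YY^{-1}$ on each bond, leaves every bond contraction invariant, and therefore produces the same PEPS at every system size.
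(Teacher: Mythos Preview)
Your approach is correct and, like the paper, specializes \cref{thm:canonicalform} to the injective case; however, you take a more circuitous route. You lean on the two-layer factorization of $O$ from \cref{eq:op_twolayer} and then invoke the transpose identity for maximally entangled pairs to collapse each two-body piece to a single-site gauge, leaving yourself the ``main obstacle'' of checking that the matrices on the two ends of each bond are mutual inverses. The paper sidesteps this entirely by reading off the Schmidt-vector conditions \cref{eq:swap_schmidt_vert,eq:swap_schmidt_hori} directly: for the injective four-partite state (two maximally entangled pairs plus a trivial corner), the Schmidt vectors along either cut are themselves product states, so the condition that $H_BOH_A$ (resp.\ $V_BOV_A$) maps Schmidt vectors to Schmidt vectors already forces $O$ to split into a tensor product on the two relevant particles with one factor being the inverse of the other. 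Combining the horizontal and vertical conditions immediately gives the full product gauge, without any transpose bookkeeping. Your approach works but does more; the paper's use of the Schmidt conditions is the shorter path, and in particular you need not assume equal bond dimensions up front---the Schmidt conditions applied with possibly different $\ket{\phi_A},\ket{\phi_B}$ already force the Schmidt ranks (hence the bond dimensions) to coincide.
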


\begin{proof}
  The conditions of \cref{thm:canonicalform} that Schmidt vectors map to Schmidt vectors read as
  \begin{equation}
    \begin{tikzpicture}[baseline=-0.1cm]
    \draw[fill=black] (-0.2,0.2) circle (0.05cm);
    \draw[fill=black] (-0.2,-0.2) circle (0.05cm);
    \draw[fill=black] (0,0) circle (0.05cm);
    \draw[fill=black] (1,0) circle (0.05cm);
    \draw (0,0)--(1,0);
    \draw (-0.2,0.2) -- (-0.2,0.7);
    \draw (-0.2,-0.2) -- (-0.2,-0.7);
    \draw[red] (0.35,0) ellipse (0.8cm and 0.5cm);      
    \end{tikzpicture}
    =
    \begin{tikzpicture}[baseline=-0.1cm]
    \draw[fill=black] (-0.2,0.2) circle (0.05cm);
    \draw[fill=black] (-0.2,-0.2) circle (0.05cm);
    \draw[fill=black] (-0,0) circle (0.05cm);
    \draw[fill=black] (1,0) circle (0.05cm);
    \draw (0,0)--(1,0);
    \draw (-0.2,0.2) -- (-0.2,1);
    \draw (-0.2,-0.2) -- (-0.2,-1);
    \draw[fill=white] (-0.2,0.6) circle (0.2cm);
    \draw[fill=white] (-0.2,-0.6) circle (0.2cm);
    \node[anchor=west] at (0,0.6) {$X$};
    \node[anchor=west] at (0,-0.6) {$X^{-1}$};
    \end{tikzpicture}\ ,
  \end{equation}  
  therefore the operator $O$ is a product on the two leftmost particles (and on one particle it is the inverse of the other). Similarly the other condition implies that the operator is a product on the two rightmost particles. Therefore $O$ has a product structure in the desired form.
\end{proof}

We now show that if the span of the Schmidt vectors of both states w.r.t. both the vertical and horizontal cut contain product states, then  $\ket{\phi_A}$ and $\ket{\phi_B}$ are SLOCC\cite{Bennett1999}-equivalent, that is, there are invertible operators $O_1, O_2, O_3, O_4$ acting on the virtual particles  such that $O_1 \otimes O_2 \otimes O_3 \otimes O_4 \ket{\phi_A} = \ket{\phi_B}$. Pictorially,
\begin{equation}
  \begin{tikzpicture}
    \pic[blue] at (0,0) {state};
    \foreach \x/\y in {\d/\d,1-\d/\d,1-\d/1-\d,\d/1-\d}
      \draw[red] (\x,\y) circle (0.15cm);
  \end{tikzpicture} =
  \begin{tikzpicture}
    \pic[green] {state};
  \end{tikzpicture}\ .
\end{equation}

Note that there are examples for states that don't have product states in the span of their Schmidt vectors, but they generate the same state and are not SLOCC equivalent. 
For example consider 
\begin{equation}
  \ket{\phi_A} =  
  \begin{tikzpicture}
    \pic at (0,0) {state};
    \draw (\d,1+\d)--(1-\d,1+\d);
    \draw[fill=black] (\d,1+\d) circle (\rad);
    \draw[fill=black] (1-\d,1+\d) circle (\rad);
  \end{tikzpicture}\ , \qquad
  \ket{\phi_B} = 
  \begin{tikzpicture}
    \pic at (0,0) {state};
    \draw (\d,-\d)--(1-\d,-\d);
    \draw[fill=black] (\d,-\d) circle (\rad);
    \draw[fill=black] (1-\d,-\d) circle (\rad);
  \end{tikzpicture}\ ,
\end{equation}
then the semi-injective PEPS defined by $\ket{\phi_A}$ and $\id$ and $\ket{\phi_B}$ and $\id$  (more precisely the isomorphism that rearranges the tensor product to the right order) are the same on every torus, yet these states are not SLOCC equivalent.

\begin{theorem}\label{thm:slocc}
  If the span of the Schmidt vectors of both four-partite states in \cref{eq:setup} contains a product state  for both the vertical and horizontal cut on both sides, then the two four-partite states are SLOCC equivalent.
\end{theorem}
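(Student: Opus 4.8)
The plan is to invoke the characterization of \cref{thm:canonicalform}: since \cref{eq:setup} holds, the operator $O$ already has the two-layer form \eqref{eq:op_twolayer}, and the Schmidt vectors of $\ket{\phi_A}$ and $\ket{\phi_B}$ obey the swap relations \eqref{eq:swap_schmidt_vert}--\eqref{eq:swap_schmidt_hori}. The goal is to manufacture four invertible single-particle maps $O_1,\dots,O_4$ with $(O_1\otimes O_2\otimes O_3\otimes O_4)\ket{\phi_A}=\ket{\phi_B}$, and the point is that the product states in the spans of the Schmidt vectors are exactly what upgrade the \emph{two-body} data of the canonical form to genuinely \emph{one-body} maps.

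First I would convert each product state into a contraction functional. If the span of the left Schmidt vectors of $\ket{\phi_A}$ (vertical cut, particles $\{1,4\}$) contains a product $\ket{a_1}\otimes\ket{a_4}$, then, because the right Schmidt vectors $\{R_i\}$ on $\{2,3\}$ are linearly independent, there is a functional $\bra{f}$ on $\{2,3\}$ with $(\id_{14}\otimes\bra{f})\ket{\phi_A}=\ket{a_1}\otimes\ket{a_4}$; likewise for the right pair, for the top/bottom pairs of the horizontal cut, and for $\ket{\phi_B}$. These functionals let me project the four-partite relation down to a single pair, and eventually to a single site.

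Next I would feed these product states through the swap relations. The layer of $O$ respecting the vertical cut, $O_{14}\otimes O_{23}$ in \eqref{eq:op_twolayer}, acts within each pair, so on the right pair it is a two-body gate $O_{23}$, while the crossing layer $O_{12}\otimes O_{34}$ mixes across the cut; relation \eqref{eq:swap_schmidt_vert} says that together they carry the Schmidt structure of $\ket{\phi_A}$ onto that of $\ket{\phi_B}$. Evaluating this on the product Schmidt vector $\ket{p_2}\otimes\ket{p_3}$ of $\ket{\phi_A}$ and contracting the complementary pair with the $\ket{\phi_B}$-functional should collapse the crossing layer to a scalar factor, leaving an identity in which $O_{23}$ sends a product input to (a multiple of) a product in the span of the $\ket{\phi_B}$ right Schmidt vectors. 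Running this for the two intersecting cuts at each corner — particle $1$ lies in both $\{1,4\}$ and $\{1,2\}$, and so on — should pin down on each site a candidate $O_i$ as the unique invertible operator matching the $\ket{\phi_A}$-data to the $\ket{\phi_B}$-data there, with invertibility guaranteed by the full-rank reduced-density assumption on both four-partite states.

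The hard part will be precisely this collapse of a two-body gate to a product: a priori $O_{23}$ entangles particles $2$ and $3$, and nothing in the swap relations alone forbids a ``dangling EPR pair'' of the kind in the counterexample stated just before the theorem, where \cref{eq:setup} holds on every torus yet SLOCC equivalence fails. The product-state hypothesis is exactly the input that excludes this case — it ensures each particle can be disentangled along at least one cut, so that the single-particle maps extracted from the vertical and the horizontal analyses are forced to agree on the shared corner particles and assemble into a global product $\bigotimes_i O_i$. I would conclude by verifying $(O_1\otimes\cdots\otimes O_4)\ket{\phi_A}=\ket{\phi_B}$ directly, using that $\ket{\phi_A}$ has full-rank single-site reduced densities, so that agreement on the product-state functionals together with the two independent cut relations determines the four-partite state uniquely.
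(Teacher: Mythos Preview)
Your proposal captures the right starting point (invoke \cref{thm:canonicalform}, use the product-state hypothesis to cut two-body data down to one-body), but the central step is left as a hope rather than an argument. You write that contracting with a $\ket{\phi_B}$-functional ``should collapse the crossing layer to a scalar factor'', and you explicitly flag this as the hard part. It is, and nothing in your outline explains why it succeeds. The crossing layer $O_{12}\otimes O_{34}$ genuinely entangles the pair $\{2,3\}$ with $\{1,4\}$, and feeding in a product on one side while contracting on the other does not, in general, produce a product output; the swap relations \eqref{eq:swap_schmidt_vert}--\eqref{eq:swap_schmidt_hori} only tell you the result lands in the Schmidt span of $\ket{\phi_B}$, not that it is product. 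Your subsequent steps (matching corner operators across the two cuts, checking invertibility) all rest on this unjustified collapse.

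The paper sidesteps the difficulty by introducing an \emph{intermediate} four-partite state $\ket{\xi}$, and this is the idea you are missing. Rather than work with both layers of $O$ simultaneously, apply only the lower layer $(O_{12}\otimes O_{34})$ to the $\ket{\phi_A}$ side and move the inverted upper layer to the $\ket{\phi_B}$ side. On the full torus the left side is now a product along one direction and the right side along the other, so the common value factorises completely into copies of some $\ket{\xi}$. This gives two \emph{separate} relations, each involving only one two-body layer: vertical two-body operators take $\ket{\phi_A}$-Schmidt-vectors to $\ket{\xi}$-Schmidt-vectors, and horizontal two-body operators take $\ket{\phi_B}$-Schmidt-vectors to $\ket{\xi}$-Schmidt-vectors. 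Now the product-state hypothesis does real work: pick a product vector in the $\ket{\phi_A}$ Schmidt span, apply a product functional on the bottom, and the single remaining two-body operator yields one-particle maps on each site (your functional trick, but applied to one layer, where it actually goes through). Running the inverse relation guarantees invertibility. Thus $\ket{\phi_A}$ is SLOCC-equivalent to $\ket{\xi}$, likewise $\ket{\phi_B}$, and transitivity finishes. The decoupling via $\ket{\xi}$ is what turns your ``should collapse'' into a one-line fact.
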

 
\begin{proof}
  By \Cref{thm:canonicalform}, \cref{eq:setup} implies
  \begin{equation}
    \begin{tikzpicture}
    \truncstates[blue] (0,0) (4,3)
    \foreach \x in {1,...,4}{
      \foreach \y in {1,...,3}{
        \draw[thin, draw=red] (\x-\d-\dd,\y-\d-\dd) rectangle (\x-\d+\dd,\y+\d+\dd);
        \draw[thin, draw=red] (\x+\d-\dd,\y-\d-\dd) rectangle (\x+\d+\dd,\y+\d+\dd);
        \draw[thin, draw=red] (\x-\d-\ddd,\y-\d-\ddd) rectangle (\x+\d+\ddd,\y-\d+\ddd);
        \draw[thin, draw=red] (\x-\d-\ddd,\y+\d-\ddd) rectangle (\x+\d+\ddd,\y+\d+\ddd);
      }
    }
    \end{tikzpicture} = \ 
    \begin{tikzpicture}
      \truncstates[green] (0,0) (4,3)
    \end{tikzpicture}
  \end{equation}
  Inverting the upper layer, we get 
  \begin{equation}
    \begin{tikzpicture}
    \truncstates[blue] (0,0) (4,3);
    \foreach \x in {1,...,4}{
      \foreach \y in {1,...,3}{
        \draw[thin, draw=red] (\x-\d-\dd,\y-\d-\dd) rectangle (\x-\d+\dd,\y+\d+\dd);
        \draw[thin, draw=red] (\x+\d-\dd,\y-\d-\dd) rectangle (\x+\d+\dd,\y+\d+\dd);
      }
    }
    \end{tikzpicture} = \ 
    \begin{tikzpicture}
    \truncstates[green] (0,0) (4,3);
    \foreach \x in {1,...,4}{
      \foreach \y in {1,...,3}{
        \draw[thin, draw=red, densely dashed] (\x-\d-\ddd,\y-\d-\ddd) rectangle (\x+\d+\ddd,\y-\d+\ddd);
        \draw[thin, draw=red, densely dashed] (\x-\d-\ddd,\y+\d-\ddd) rectangle (\x+\d+\ddd,\y+\d+\ddd);
      }
    }
    \end{tikzpicture}
  \end{equation}
  The l.h.s. is product in the vertical direction, the r.h.s. in the horizontal direction. Therefore the two sides describe a state that factorizes in both directions. Let  $\ket{\xi}$ be this state and denote this state with a purple square. Then, 
  \begin{equation}
    \begin{tikzpicture}
      \truncstates[blue] (0,0) (4,3);
      \foreach \x in {1,...,4}{
        \foreach \y in {1,...,3}{
          \draw[thin, draw=red] (\x-\d-\dd,\y-\d-\dd) rectangle (\x-\d+\dd,\y+\d+\dd);
          \draw[thin, draw=red] (\x+\d-\dd,\y-\d-\dd) rectangle (\x+\d+\dd,\y+\d+\dd);
        }
      }
    \end{tikzpicture} = \ 
    \begin{tikzpicture}
      \truncstates[green] (0,0) (4,3);
      \foreach \x in {1,...,4}{
        \foreach \y in {1,...,3}{
          \draw[thin, draw=red,densely dashed] (\x-\d-\ddd,\y-\d-\ddd) rectangle (\x+\d+\ddd,\y-\d+\ddd);
          \draw[thin, draw=red,densely dashed] (\x-\d-\ddd,\y+\d-\ddd) rectangle (\x+\d+\ddd,\y+\d+\ddd);
        }
      }
    \end{tikzpicture}= \ 
    \begin{tikzpicture}
      \truncstates[purple] (0,0) (4,3)
    \end{tikzpicture}
  \end{equation}
  Equivalently, for the Schmidt vectors we get 
  
  \begin{align} \label{eq:schmidt_vert}
    \begin{tikzpicture}[baseline=-0.1cm]  
    \draw[thick,blue] (0,0.7)--(0,0.2)--(1,0.2)--(1,0.7); 
    \draw[thick,blue] (0,-0.7)--(0,-0.2)--(1,-0.2)--(1,-0.7); 
    \foreach \x in {(0,0.2),(0,-0.2),(1,0.2),(1,-0.2)} \filldraw \x circle (0.1);
    \draw[rounded corners=3pt,blue] (-0.2,0.7) rectangle (1.2,0.9);
    \draw[rounded corners=3pt,blue] (-0.2,-0.7) rectangle (1.2,-0.9);  
    \draw (0.5,0.9)--++(0,0.2);
    \draw (0.5,-0.9)--++(0,-0.2);
    \draw[red, thick] (-0.2,-0.4) rectangle (0.2,0.4);
    \draw[red, thick] (0.8,-0.4) rectangle (1.2,0.4);
    \end{tikzpicture}
    &=
    \begin{tikzpicture}[baseline=-0.1cm]
    \def\c{purple}
    \draw[thick,\c] (0,0.7)--(0,0.2)--(1,0.2)--(1,0.7); 
    \draw[thick,\c] (0,-0.7)--(0,-0.2)--(1,-0.2)--(1,-0.7); 
    \foreach \x in {(0,0.2),(0,-0.2),(1,0.2),(1,-0.2)} \filldraw \x circle (0.1);
    \draw[rounded corners=3pt,\c] (-0.2,0.7) rectangle (1.2,0.9);
    \draw[rounded corners=3pt,\c] (-0.2,-0.7) rectangle (1.2,-0.9);  
    \draw[thick,\c] (0.5,0.9)--++(0,0.2);      
    \draw[thick,\c] (0.5,-0.9)--++(0,-0.2);
    \end{tikzpicture}\\ 
    \begin{tikzpicture}[rotate=90,baseline=0.4cm]
    \draw[thick,green] (0,0.7)--(0,0.2)--(1,0.2)--(1,0.7); 
    \draw[thick,green] (0,-0.7)--(0,-0.2)--(1,-0.2)--(1,-0.7); 
    \foreach \x in {(0,0.2),(0,-0.2),(1,0.2),(1,-0.2)} \filldraw \x circle (0.1);
    \draw[rounded corners=3pt,green] (-0.2,0.7) rectangle (1.2,0.9);
    \draw[rounded corners=3pt,green] (-0.2,-0.7) rectangle (1.2,-0.9);  
    \draw (0.5,0.9)--++(0,0.2);
    \draw (0.5,-0.9)--++(0,-0.2);
    \draw[red, thick,densely dashed] (-0.2,-0.4) rectangle (0.2,0.4);
    \draw[red, thick,densely dashed] (0.8,-0.4) rectangle (1.2,0.4);
    \end{tikzpicture}
    &=
    \begin{tikzpicture}[rotate=90,baseline=0.4cm]
    \def\c{purple}
    \draw[thick,\c] (0,0.7)--(0,0.2)--(1,0.2)--(1,0.7); 
    \draw[thick,\c] (0,-0.7)--(0,-0.2)--(1,-0.2)--(1,-0.7); 
    \foreach \x in {(0,0.2),(0,-0.2),(1,0.2),(1,-0.2)} \filldraw \x circle (0.1);
    \draw[rounded corners=3pt,\c] (-0.2,0.7) rectangle (1.2,0.9);
    \draw[rounded corners=3pt,\c] (-0.2,-0.7) rectangle (1.2,-0.9);  
    \draw[thick,\c] (0.5,0.9)--++(0,0.2);      
    \draw[thick,\c] (0.5,-0.9)--++(0,-0.2);
    \end{tikzpicture}
  \end{align}
  If the span of the Schmidt vectors on the l.h.s. contains a product vector, the same is true for the Schmidt vectors on the r.h.s. Therefore, choosing a product Schmidt vector on the bottom in \cref{eq:schmidt_vert} and applying a product linear functional, we get that for some not necessarily invertible operators,
  \begin{equation} \label{eq:slocc1}
    \begin{tikzpicture}[baseline=0.5cm]
    \draw[thick,blue] (0,0.7)--(0,0.2)--(1,0.2)--(1,0.7); 
    \foreach \x in {(0,0.2),(1,0.2)} \filldraw \x circle (0.1);
    \draw[rounded corners=3pt,blue] (-0.2,0.7) rectangle (1.2,0.9);
    \draw (0.5,0.9)--++(0,0.2);
    \draw[red, thick] (-0.2,0) rectangle (0.2,0.4);
    \draw[red, thick] (0.8,0) rectangle (1.2,0.4);
    \end{tikzpicture}
    =
    \begin{tikzpicture}[baseline = 0.5cm]
    \def\c{purple}
    \draw[thick,\c] (0,0.7)--(0,0.2)--(1,0.2)--(1,0.7); 
    \foreach \x in {(0,0.2),(1,0.2)} \filldraw \x circle (0.1);
    \draw[rounded corners=3pt,\c] (-0.2,0.7) rectangle (1.2,0.9);
    \draw[thick,\c] (0.5,0.9)--++(0,0.2);      
    \end{tikzpicture}\ .
  \end{equation}  
  A similar equation also holds for the lower part, as well as for both sides of $\ket{\phi_B}$. Inverting the operators appearing in \cref{eq:schmidt_vert}, by the same argument, we obtain the inverse relation 
  \begin{equation} \label{eq:slocc2}
    \begin{tikzpicture}[baseline=0.5cm]
    \draw[thick,blue] (0,0.7)--(0,0.2)--(1,0.2)--(1,0.7); 
    \foreach \x in {(0,0.2),(1,0.2)} \filldraw \x circle (0.1);
    \draw[rounded corners=3pt,blue] (-0.2,0.7) rectangle (1.2,0.9);
    \draw (0.5,0.9)--++(0,0.2);
    \end{tikzpicture}
    =
    \begin{tikzpicture}[baseline = 0.5cm]
    \def\c{purple}
    \draw[thick,\c] (0,0.7)--(0,0.2)--(1,0.2)--(1,0.7); 
    \foreach \x in {(0,0.2),(1,0.2)} \filldraw \x circle (0.1);
    \draw[rounded corners=3pt,\c] (-0.2,0.7) rectangle (1.2,0.9);
    \draw[thick,\c] (0.5,0.9)--++(0,0.2);      
    \draw[red, dashed, thick] (-0.2,0) rectangle (0.2,0.4);
    \draw[red, dashed, thick] (0.8,0) rectangle (1.2,0.4);
    \end{tikzpicture}\ .
  \end{equation}  
  and similarly along all other cuts. 
  \cref{eq:slocc1} and \eqref{eq:slocc2} ensure that the one particle operators can be chosen invertible, thus $\ket{\phi_A}$ and $\ket{\xi}$ are SLOCC equivalent. Similarly, $\ket{\phi_B}$ and $\ket{\xi}$ are SLOCC equivalent. Therefore $\ket{\phi_A}$ and $\ket{\phi_B}$ are SLOCC equivalent.
\end{proof}

\begin{corollary}\label{cor:qubit_slocc}
  If two semi-injective PEPS, defined by qubit four-partite states with genuine four-partite entanglement, are equal, then the four-partite states are SLOCC equivalent.
\end{corollary}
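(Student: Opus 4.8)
The plan is to deduce this from \Cref{thm:slocc}: it suffices to verify that, for each of $\ket{\phi_A}$ and $\ket{\phi_B}$, and for each of the vertical and horizontal cuts, the span of the corresponding Schmidt vectors contains a nonzero product vector. Recall that the vertical cut groups the virtual qubits as $\{1,4\}\mid\{2,3\}$ and the horizontal cut as $\{1,2\}\mid\{3,4\}$ (numbering the corners clockwise from the upper left), so each half of each cut carries exactly two qubits and the relevant Schmidt vectors live in a two-qubit space $\mathbb{C}^2\otimes\mathbb{C}^2$. Genuine four-partite entanglement means precisely that the state is not a product across any bipartition, so in particular the Schmidt rank across each of these two balanced cuts is at least $2$. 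Since the Schmidt vectors of a minimal rank decomposition are linearly independent, their span $W\subseteq\mathbb{C}^2\otimes\mathbb{C}^2$ therefore satisfies $\dim W\ge 2$ on every side of every cut, for both states.

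The key step is then the purely linear-algebraic claim that \emph{every subspace $W\subseteq\mathbb{C}^2\otimes\mathbb{C}^2$ with $\dim W\ge 2$ contains a nonzero product vector}. To see this, I would identify $\mathbb{C}^2\otimes\mathbb{C}^2$ with the $2\times 2$ matrices $M_2(\mathbb{C})$ via $\ket i\otimes\ket j\mapsto\ket i\bra j$, under which product vectors correspond exactly to the nonzero rank-one matrices, i.e.\ the nonzero matrices of vanishing determinant. The determinant restricts to a quadratic form $q=\det|_W$ on $W$. If $q\equiv 0$, then every nonzero element of $W$ is already a product vector. Otherwise $q$ is a nonzero homogeneous quadratic form in $\dim W\ge 2$ complex variables, and over the algebraically closed field $\mathbb{C}$ it has a nontrivial zero (equivalently, restricting to any two-dimensional subspace, a binary quadratic form $a x^2+b x y+c y^2$ always factors over $\mathbb{C}$); this produces a nonzero $M\in W$ with $\det M=0$, hence a product vector.

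Combining the two observations, all the product-vector hypotheses of \Cref{thm:slocc} are satisfied by $\ket{\phi_A}$ and $\ket{\phi_B}$, and the corollary follows at once. I expect the genuinely delicate point to be the linear-algebra claim, and in particular its reliance on the local dimension being $2$: the argument uses that product vectors form a quadric (the determinant hypersurface) in the two-qubit space, which a plane must meet over $\mathbb{C}$. This breaks down for larger local dimension, where completely entangled subspaces of dimension $\ge 2$ exist, so the qubit assumption is essential; the genuine-entanglement hypothesis is what guarantees $\dim W\ge 2$ in the first place, closing the only gap to invoking \Cref{thm:slocc}.
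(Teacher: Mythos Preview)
Your proof is correct and follows essentially the same route as the paper's: verify that the Schmidt rank across both the vertical and horizontal cuts is at least two (from genuine four-partite entanglement), observe that any subspace of $\mathbb{C}^2\otimes\mathbb{C}^2$ of dimension at least two contains a product vector, and then invoke \Cref{thm:slocc}. The paper simply asserts the two-qubit linear-algebra fact, whereas you supply the standard determinant/quadratic-form argument for it; otherwise the arguments coincide.
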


\begin{proof}
  Notice that if the four-partite states are entangled for both the  vertical and horizontal cut, then the span of the Schmidt vectors is at least two-dimensional. As any two-dimensional subspace contains a product vector, the previous theorem applies.
\end{proof}

Based on \cref{cor:qubit_slocc}, we provide a full classification of semi-injective PEPS defined with four-partite qubit states in \cref{app:qubitclassify}. 

\section{SPT phases}\label{sec:spt}

In this section we show how the third cohomology labeling of the SPT phases\cite{Chen2011a,Chen2013} extends to semi-injective PEPS. First, we show how to assign an element from the third cohomology group $H^3(G,\mathbb{C}^*)$ to a (projective) MPO representation of $G$. Here, and in the following, the action of $G$ on  $\mathbb{C}^*$ is trivial. Second,  given a group of on-site symmetries of an semi-injective PEPS, there are  three MPO representations associated to it: the boundary along the vertical cut, the boundary along the horizontal cut and finally the symmetry operators themselves. We show that the associated third cohomology labels coincide. The importance of this statement is twofold. First, the labeling is encoded in the local operators already, thus one does not have to look at the boundary of the system to find the labeling. Second, the labeling corresponding the vertical and horizontal boundary coincides despite the model not necessarily having rotational symmetry. 

\subsection{Third cohomology labeling of MPO representations}\label{sec:cohomology_label}

Consider a group $G$ and a projective MPO representation thereof, that is,  a tensor $X_g$ that generates an MPO $V_n(X_g) $ for all $ g\in G$ such that $V_n(X_g) V_n(X_h) = \lambda_n(g,h) V_n(X_{gh})$ for all $g,h\in G$, where $\lambda_n(g,h)\in\mathbb{C}$. We will restrict ourselves to MPO representations for which $\lambda_n(g,h) = \lambda^n(g,h)$. We call such MPO projective representations \emph{one-block projective MPO representations}. In this section, we show how to assign an element from the third cohomology group $H^3(G,\mathbb{C}^*)$ to such a representation.

We first show that we can suppose w.l.o.g.\ that $X_g$ is normal. The proof is analogous to \Cref{prop:boundary_mpo}.
\begin{lemma}
  Let $g\mapsto \tilde{X}_g$ be a one-block projective MPO representation of a group $G$, that is,  $V_n(\tilde{X}_g) V_n(\tilde{X}_h) = \lambda^n(g,h) V_n(\tilde{X}_{gh})$ for some $\lambda(g,h)\in \mathbb{C}$. Then $\forall g \in G$ there is a normal tensor $X_g$ such that $V_n(\tilde{X}_g) = V_n(X_g)$.
\end{lemma}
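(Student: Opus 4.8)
The plan is to reduce the statement to the inverse-pair situation already handled in \cref{prop:boundary_mpo}, manufacturing from the group law the single normal ``seed'' block that there was supplied by \cref{lem:boundaryMPSinjective}. First I would dispose of the neutral element. Setting $g=h=e$ in the defining relation gives $V_n(\tilde X_e)^2=\lambda^n(e,e)\,V_n(\tilde X_e)$, so $P_n:=\lambda^{-n}(e,e)V_n(\tilde X_e)$ is idempotent. Since the operators of a (projective) representation are invertible, $P_n$ is an invertible idempotent, hence $P_n=\id$ and $V_n(\tilde X_e)=\lambda^n(e,e)\,\id$. This is already of the desired form, with $X_e$ the one-dimensional (thus normal) tensor $\lambda(e,e)$. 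Feeding this back into $V_n(\tilde X_g)V_n(\tilde X_{g^{-1}})=\lambda^n(g,g^{-1})V_n(\tilde X_e)$ yields, for every $g$, the clean inverse relation $V_n(\tilde X_g)^{-1}=\mu_g^{-n}V_n(\tilde X_{g^{-1}})$ with $\mu_g=\lambda(g,g^{-1})\lambda(e,e)$.

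Writing $X:=\tilde X_g$ and letting $Y$ be the rescaling of $\tilde X_{g^{-1}}$ that absorbs $\mu_g$, this is precisely the configuration $V_n(Y)=V_n(X)^{-1}$ appearing at the end of \cref{prop:boundary_mpo}, and I would re-run that argument. Using \cref{cor:mps_decomp} I would choose $K$ (common to $g$ and $g^{-1}$) so that after blocking $K$ sites both factors decompose into essentially different normal MPOs; by \cref{prop:injective_tensor_prod} their tensor product then decomposes into essentially different normal tensor-product blocks,
\[
V_{Kn}(X)\otimes V_{Kn}(Y)=\sum_{i=1}^{L}\sum_{j}\zeta_{ij}^{\,n}\,V_n(X_i)\otimes V_n(Y_i).
\]
The required normal seed is now the group law itself: the operator product $V_n(X)V_n(Y)=\id$ is a single normal block (the bond-dimension-one identity MPO). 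Contracting the physical indices of the tensor-product decomposition turns $\otimes$ into the operator product, so this single normal block is expressed as a linear combination of the decomposed blocks; comparing via \cref{cor:normal_lin_indep} and killing the vanishing power sums with \cref{prop:sum_power} forces $L=1$, \cref{prop:periodic_mps_decomp} forces $K=1$, and a final use of \cref{prop:sum_power} collapses the surviving coefficient to a single power. Hence $V_n(\tilde X_g)$ is proportional to one normal MPO, and absorbing the scalar into the tensor produces the normal $X_g$ with $V_n(\tilde X_g)=V_n(X_g)$.

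The main obstacle is exactly the point where the operator product, rather than the tensor product, of normal blocks enters. Unlike the tensor product, the operator product of two normal MPOs need not be normal, and its transfer matrix is \emph{not} the tensor product of the factors' transfer matrices, so \cref{prop:injective_tensor_prod} does not apply to it directly. I would circumvent this exactly as in \cref{prop:boundary_mpo}: keep the clean tensor-product decomposition, whose blocks are genuinely normal, and carry out the comparison over an infinite family of sizes by pairing it against the single-power scalar identity $\tr[V_n(X)V_n(Y)]=d^n$ coming from the contracted identity operator, so that \cref{prop:sum_power} pins down which block-pair can survive. A minor bookkeeping issue is making $K$ uniform when $e$, $g$, and $g^{-1}$ enter simultaneously, but since only finitely many group elements are involved for each fixed $g$ this is harmless.
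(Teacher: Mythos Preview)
Your treatment of the neutral element and the reduction to the pair $V_n(X)V_n(Y)=\id$ is correct and coincides with the paper's first step.  The substantive gap is in the passage where you try to replace the second decomposition by a trace argument.  After contracting the physical indices you obtain
\[
d^{\,n}=\tr\!\big[V_n(X)V_n(Y)\big]=\sum_{i,i'}\Big(\sum_j\zeta_{(i,i')j}^{\,n}\Big)\,\tr\!\big[V_n(X_i)V_n(Y_{i'})\big],
\]
and each factor $\tr[V_n(X_i)V_n(Y_{i'})]=\sum_k\eta_{(i,i')k}^{\,n}$ is itself a power sum in the eigenvalues of the mixed transfer matrix.  \cref{prop:sum_power} then tells you that exactly one product $\zeta\,\eta$ equals $d$ and all others vanish---but a vanishing product can come from $\eta_{(i,i')k}=0$ while the $\zeta$'s (and hence the block $V_n(X_i)\otimes V_n(Y_{i'})$) are perfectly nonzero.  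Nothing prevents a pair of normal MPOs from having a nilpotent mixed transfer matrix (already for bond dimension one, e.g.\ $V_n(X_i)=\ket{0}^{\otimes n}\!\bra{1}^{\otimes n}$), so the trace carries strictly less information than the operator identity $V_n(X)V_n(Y)=\id$ and cannot by itself force $L=1$.  \cref{cor:normal_lin_indep} is likewise unavailable here, since the contracted objects $V_n(X_i)V_n(Y_{i'})$ are not normal.

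The paper closes exactly this gap by \emph{not} retreating to the tensor product.  It keeps the operator identity $(\lambda(g,g^{-1})\mu)^n\,\id=\sum_{i,j}V_n(\tilde X_g^{(i)})V_n(\tilde X_{g^{-1}}^{(j)})$, then performs a \emph{second} round of blocking so that every individual product decomposes into normal pieces $Z_g^{ijk}$, $k=1,\dots,K_{ij}$.  Now the right-hand side is a genuine sum of normal MPOs, and comparison with the single normal block $\id$ via \cref{cor:normal_lin_indep} and \cref{prop:sum_power} gives $\sum_{i,j}K_{ij}=1$; since each $K_{ij}\ge 1$ this forces $N=M=1$.  The extra layer of decomposition is precisely what converts the operator-product side into something to which the linear-independence lemma applies, and it is the step missing from your proposal.
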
 
\begin{proof}
  First we prove that $V_n(\tilde{X}_e) =\mu^n \id$ for some $\mu\in\mathbb C$, therefore there is a normal tensor $X_e$  such that $V_n(\tilde{X}_e) =V_n(X_e)$. Then, as $V_n(\tilde{X}_g)V_n(\tilde{X}_{g^{-1}})=\lambda^n(g,g^{-1}) \mu^n \id$, we will see that $V_n(\tilde{X}_g)$ can also be described with a normal MPO. 
  
  To see that $V_n(\tilde{X}_e) =\mu^n \id$, notice that, as $g\mapsto V_n(\tilde{X}_e)$ is a representation, $V_n(\tilde{X}_e) =\mu_n \id$ and that $V_n(\tilde{X}_e) V_n(\tilde{X}_e) =\mu_n^2 \id = \mu_n \lambda^n(e,e) \id$. Therefore, $\mu_n = \lambda^n(e,e)$.
  
  Let $K$ be such that after blocking $K$ tensors, $V_n(\tilde{X}_g)$ and $V_n(\tilde{X}_{g^{-1}})$ can be decomposed into a sum of $N$ and $M$ normal MPOs, respectively. That is, $\forall n \in K\mathbb{N}$
  \begin{eqnarray}
    V_n(\tilde{X}_g) = \sum_{i=1}^N V_n(\tilde{X}^{(i)}_g), \\
    V_n(\tilde{X}_{g^{-1}}) = \sum_{i=1}^M V_n(\tilde{X}^{(i)}_{g^{-1}}).    
  \end{eqnarray}
  Then their product, $\lambda^n(g,g^{-1}) \mu^n \id$, can be decomposed into a sum of at least $MN$ not necessarily essentially different normal MPOs:
  \begin{equation}
    \lambda^n(g,g^{-1}) \mu^n \id = \sum_{i=1}^N \sum_{j=1}^M V_n(\tilde{X}^{(i)}_g) V_n(\tilde{X}^{(j)}_{g^{-1}}).
  \end{equation}
  Let $L$ be such that after blocking $L$ tensors, all of these MPOs can be decomposed into normal MPOs: $\forall n\in KL\mathbb{N}$
  \begin{equation}
    \lambda^n(g,g^{-1}) \mu^n \id = \sum_{i=1}^N \sum_{j=1}^M \sum_{k=1}^{K_{ij}}V_n(Z_g^{ijk}),
  \end{equation}   
  for some normal tensors $Z_g^{ijk}$. If $i\neq i'$ or $j\neq j'$, $Z_g^{ijk}$ and $Z_g^{i'j'k'}$ are not necessarily essentially different. Collecting the essentially different terms yields
  \begin{equation}
    \lambda^n(g,g^{-1}) \mu^n \id = \sum_{i=1}^R \sum_{j=1}^{S_i} \xi_j^n V_n(Z_g^{i}),
  \end{equation}
  where $R$ is the number of essentially different terms, $Z_g^i$ are a maximal  pairwise essentially different subset of $Z_g^{ijk}$ and $S_i$ is the multiplicity with which $Z_g^i$ appears. Note that 
  \begin{equation}
    \sum_{i=1}^R S_i = \sum_{i=1}^N \sum_{j=1}^M  K_{ij}.
  \end{equation}
  As essentially different normal MPOs become linearly independent for sufficiently large  system sizes (\cref{cor:normal_lin_indep}),  \cref{prop:sum_power} implies that there can only be one term in this decomposition, that is, $R=1$ and moreover $S_1=1$. As all $K_{ij}\geq 1$, we have $N=M=1$ and thus $V_n(\tilde{X}_g)$ is normal. Therefore, $V_n(\tilde{X}_g)$ can be described by a normal MPO tensor $X_g$.
\end{proof}

The central tool in this section is comparing normal and non-normal MPS tensors that generate the same state. 
We only state the results here, the proofs are provided in \cref{app:mps_reductions}.

\begin{restatable}{prop}{reductionexist}\label{thm:reductionexist}
  Let $A$ be a normal MPS tensor, $B$ an MPS tensor such that for some $\lambda\in \mathbb{C}$
  \begin{equation}
   V_n(B) = \lambda^n V_n(A) \quad \forall n\in \mathbb{N}.
  \end{equation}
   Then there exist matrices $V,W$ such that $VW=\id$ and $\forall n \in \mathbb{N}$ and $(i_1,i_2\dots i_n) \in \{1,2,\dots d\}^n$,
  \begin{equation}
    V B^{i_1} \dots B^{i_n} W = A^{i_1} \dots A^{i_n}
  \end{equation}
\end{restatable}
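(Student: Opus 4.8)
The plan is to exhibit $A$ as a normal ``block'' of $B$ and to take for $V,W$ the corresponding embedding and projection. First, absorbing the constant into $A$ (replace $A^i$ by $\lambda A^i$, which is again normal), I may assume $\lambda=1$, so that $\tr(B^{i_1}\cdots B^{i_n})=\tr(A^{i_1}\cdots A^{i_n})$ for every $n$ and every string. I would next reduce the whole statement to the construction of a single-tensor \emph{intertwiner}: if $W$ is injective and satisfies $B^iW=WA^i$ for all $i$, then $\operatorname{ran}W$ is $B$-invariant and $B^{i_1}\cdots B^{i_n}W=WA^{i_1}\cdots A^{i_n}$, so that for \emph{any} left inverse $V$ (i.e.\ $VW=\id$) one gets $VB^{i_1}\cdots B^{i_n}W=A^{i_1}\cdots A^{i_n}$; dually, a surjective intertwiner $VB^i=A^iV$ together with any right inverse $W$ also works. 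Thus it suffices to realize $A$ as a subrepresentation or a quotient of $B$.

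Second, I would bring $B$ to canonical form. By \cref{prop:channel_props} one can simultaneously block-upper-triangularize the $B^i$ with irreducible diagonal blocks $B_k$; since the trace of a product of block-triangular matrices only sees the diagonal, $V_n(B)=\sum_k V_n(B_k)$, and each $B_k$ is normal, periodic, or has a transfer matrix of spectral radius $0$ (contributing $0$ to every $V_n$). Decomposing the periodic blocks via \cref{prop:periodic_mps_decomp} and \cref{cor:mps_decomp}, and using the uniqueness of normal decompositions (\cref{cor:normal_lin_indep} and \cref{prop:sum_power}), I would match $\sum_k V_n(B_k)=V_n(A)$ to conclude that exactly one diagonal block $B_0$ is normal with $V_n(B_0)=V_n(A)$ --- hence gauge-equivalent to $A$ by \cref{prop:can_form_normal_mps}, so that $B_0$ has bond dimension $D_A$ and $A^i=XB_0^iX^{-1}$ --- while every other diagonal block has spectral radius $0$. (A periodic block cannot carry the $A$-part, since it would contribute $0$ at sizes $n$ that are not multiples of its period.)

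The remaining, and hardest, step is to upgrade ``$A$ is the unique diagonal block of nonzero spectral radius'' into an actual injective (or surjective) intertwiner. As a composition factor $A$ is automatically a subquotient, but a priori it could sit as a \emph{middle} factor sandwiched between spectral-radius-$0$ blocks, in which case no intertwiner is visible from the flag alone; resolving this is the main obstacle. The idea is to exploit the spectral gap: by \cref{prop:normal2injective} I block $L$ tensors so that $B_0$ becomes injective, while for each spectral-radius-$0$ block $Z$ one has $\sum_{|\vec\jmath|=L}\tr\!\big((Z^{\vec\jmath})^\dagger Z^{\vec\jmath}\big)=\tr\,T_Z^L(\id)\to 0$, so the nilpotent blocks become negligible relative to $B_0$ upon blocking. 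Using the (essentially unique, by \cref{prop:channel_props}) left and right fixed points of $T_B$ at its peripheral eigenvalue $r=$ spectral radius of $T_A>0$, which by the previous step are carried by $B_0$, together with the left inverse of the injective blocked $A$ to fix normalizations, I would construct $W$ and $V$ and verify the intertwining relations $B^iW=WA^i$ (resp.\ $VB^i=A^iV$) and $VW=\id$. Once these hold, the telescoping of the first paragraph yields $VB^{i_1}\cdots B^{i_n}W=A^{i_1}\cdots A^{i_n}$ for all $n$, completing the proof.
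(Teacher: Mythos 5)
Your preparatory steps are sound: absorbing $\lambda$, block-triangularizing $B$, and invoking \cref{prop:periodic_mps_decomp}, \cref{cor:mps_decomp}, \cref{cor:normal_lin_indep} and \cref{prop:sum_power} to conclude that exactly one diagonal block $B_0$ satisfies $V_n(B_0)=V_n(A)$ (hence is gauge-related to $A$ by \cref{prop:can_form_normal_mps}) while all other diagonal blocks contribute nothing. The fatal gap is the goal you set in your first paragraph and pursue in your last: realizing $A$ as a subrepresentation ($B^iW=WA^i$ with $W$ injective) or as a quotient ($VB^i=A^iV$ with $V$ surjective) of $B$ is \emph{impossible in general}, so no spectral-gap or fixed-point argument can ever "verify the intertwining relations". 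Counterexample: let $d=2$, let $A$ be the normal $1\times 1$ tensor $A^1=(1)$, $A^2=(0)$, and let $B$ be the $3\times3$ tensor $B^1=\ket{2}\bra{2}$, $B^2=\ket{1}\bra{2}+\ket{2}\bra{3}$. Both are upper triangular with diagonals $(0,A^i,0)$, so $\tr\{B^{i_1}\cdots B^{i_n}\}=\tr\{A^{i_1}\cdots A^{i_n}\}$, i.e.\ $V_n(B)=V_n(A)$ for all $n$. Yet $B^1W=W$ forces $W\propto\ket{2}$, whence $B^2W=\ket{1}\neq 0=WA^2$; dually $VB^1=V$ forces $V\propto\bra{2}$, whence $VB^2=\bra{3}\neq 0=A^2V$. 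So $A$ is neither a sub nor a quotient of $B$, only a subquotient. Blocking does not help: $B^2(B^1)^{L-1}=\ket{1}\bra{2}$ and $(B^1)^{L-1}B^2=\ket{2}\bra{3}$ give the same obstruction for every blocking length $L$ (and even if a blocked intertwiner existed, it would only yield the identity for chain lengths divisible by $L$); the peripheral fixed points of $T_B$, here supported on $\mathrm{span}\{\ket{1},\ket{2}\}$ and $\mathrm{span}\{\ket{2},\ket{3}\}$, do not produce an intertwiner either. Meanwhile the proposition itself holds in this example with $V=\bra{2}$, $W=\ket{2}$: the asserted pair $(V,W)$ is a strictly weaker object than an intertwiner, which is exactly what your reduction overlooks.

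There are two ways to close the gap, both avoiding intertwiners. The paper's route: block $A$ to its injectivity length $L$, take a left inverse $\tilde A^{-1}$ of the blocked tensor, and study the single matrix $M=\sum_{\vec\imath}\tilde B^{\vec\imath}\otimes(\tilde A^{-1})^{\vec\imath}$. Equality of the states gives $\tr\{M^n\}=D^n$ for all $n$; the Jordan decomposition $M=S+N$ together with \cref{prop:sum_power} forces $S$ to have rank one with $SN=NS=0$, so $M^n=S^n$ beyond the nilpotency degree; reshaping the rank-one $S$ into a pair $(V,W)$ and contracting the state equality over $n$ blocked sites plus $m$ open sites yields $VB^{i_1}\cdots B^{i_m}W=A^{i_1}\cdots A^{i_m}$ for every $m$, with $m=0$ giving $VW=\id$. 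Alternatively, and closer to your own setup: once $B$ is block upper triangular with $B_0^i=X^{-1}A^iX$ in some diagonal slot, take $J$ and $P$ to be the coordinate inclusion into and projection onto that slot. In a product of block-upper-triangular matrices the intermediate block indices of any term contributing to the $(B_0,B_0)$ diagonal block are monotone, hence constant, so $PB^{i_1}\cdots B^{i_n}J=B_0^{i_1}\cdots B_0^{i_n}$; setting $V=XP$ and $W=JX^{-1}$ gives $VW=\id$ and $VB^{i_1}\cdots B^{i_n}W=A^{i_1}\cdots A^{i_n}$ directly. Your steps plus this one observation would constitute a correct proof, genuinely different from the paper's; as written, however, the final step of your plan asserts something false.
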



\begin{restatable}{definition}{reductiondef}
  The pair of operators $V,W$ in \cref{thm:reductionexist} is called a \emph{reduction} from $B$ to $A$.
\end{restatable}

\begin{restatable}{prop}{reductionnilpotent}\label{prop:reduction_nilpotent}
  Let $V,W$ be a reduction from $B$ to $A$. Let $N^i= B^i - W A^i V$. Then the algebra generated by $N^i$ is nilpotent.
\end{restatable}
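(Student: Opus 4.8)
The plan is to reduce the statement to a trace computation and then invoke the classical fact that a matrix algebra all of whose elements have vanishing trace is nilpotent. First I would fix a convenient gauge. Since $VW=\id$, the operator $P:=WV$ is a projector; choosing a basis adapted to $\mathbb{C}^D=\mathrm{ran}(P)\oplus\ker(P)$ one may assume $V=(\id\ 0)$, $W=\binom{\id}{0}$ and $P=\mathrm{diag}(\id,0)$, so that $A^i=VB^iW$ is exactly the top-left block of $B^i$. Writing $B^i=\bigl(\begin{smallmatrix}\alpha^i&\beta^i\\\gamma^i&\delta^i\end{smallmatrix}\bigr)$ we then have $A^i=\alpha^i$ and $N^i=B^i-WA^iV=\bigl(\begin{smallmatrix}0&\beta^i\\\gamma^i&\delta^i\end{smallmatrix}\bigr)$, i.e.\ $N^i$ has vanishing $(1,1)$-block. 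Nilpotency of the algebra generated by the $N^i$ is gauge-independent, so it suffices to prove it here. By the classical theorem (an algebra $\mathcal{A}\subseteq M_k(\mathbb{C})$ with $\tr X=0$ for all $X\in\mathcal{A}$ is nilpotent: $\tr(X^m)=0$ for all $m$ forces each $X$ nilpotent, and an algebra of nilpotent matrices is simultaneously strictly triangularizable), it is enough to show $\tr(N^{i_1}\cdots N^{i_n})=0$ for all $n$ and all tuples, since the algebra is spanned by such products.

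The essential input from the reduction is the family of relations $\beta^{j_1}\delta^{j_2}\cdots\delta^{j_{m-1}}\gamma^{j_m}=0$ for every $m\ge2$. To obtain these I would start from $VB^{j_1}\cdots B^{j_m}W=A^{j_1}\cdots A^{j_m}$ and rewrite the right-hand side as $(VB^{j_1}W)\cdots(VB^{j_m}W)=VB^{j_1}PB^{j_2}P\cdots PB^{j_m}W$, using $P=WV$. Inserting $\id=P+Q$ (with $Q=\id-P$) in every slot of the left-hand side and subtracting the all-$P$ term shows that the sum over all insertion patterns containing at least one $Q$ vanishes. A strong induction on $m$ then isolates the all-$Q$ term: any pattern containing both a $P$ and a $Q$ factorizes (at the $P$'s, via $P=WV$) into shorter segments $VB^{j_a}Q\cdots QB^{j_b}W$, at least one of which has length $<m$ and already vanishes by the inductive hypothesis. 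Since in the adapted basis the all-$Q$ term is precisely $VB^{j_1}QB^{j_2}Q\cdots QB^{j_m}W=\beta^{j_1}\delta^{j_2}\cdots\delta^{j_{m-1}}\gamma^{j_m}$, this proves the relations (the base case $m=2$ being $\beta^{j_1}\gamma^{j_2}=0$).

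With these relations in hand I would compute the trace combinatorially. Expanding $\tr(N^{i_1}\cdots N^{i_n})$ as a sum over closed walks on the two blocks (edge weights $N_{12}=\beta$, $N_{21}=\gamma$, $N_{22}=\delta$, $N_{11}=0$), every walk that visits block $1$ has, reading cyclically, a contiguous factor that leaves block $1$ (a $\beta$), traverses block $2$ (some $\delta$'s), and re-enters block $1$ (a $\gamma$); such a factor $\beta\,\delta\cdots\delta\,\gamma$ is zero, so only the all-block-$2$ walk survives and $\tr(N^{i_1}\cdots N^{i_n})=\tr(\delta^{i_1}\cdots\delta^{i_n})$. The same walk argument applied to the bottom-right block of the full product gives $\tr\bigl((B^{i_1}\cdots B^{i_n})_{22}\bigr)=\tr(\delta^{i_1}\cdots\delta^{i_n})$. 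It remains to kill this trace, and here I use the hypothesis $V_n(B)=V_n(A)$ (normalizing $\lambda=1$): it gives $\tr(B^{i_1}\cdots B^{i_n})=\tr(A^{i_1}\cdots A^{i_n})$, while the reduction gives $\tr\bigl((B^{i_1}\cdots B^{i_n})_{11}\bigr)=\tr(A^{i_1}\cdots A^{i_n})$; subtracting, $\tr\bigl((B^{i_1}\cdots B^{i_n})_{22}\bigr)=0$. Hence $\tr(\delta^{i_1}\cdots\delta^{i_n})=0$ and $\tr(N^{i_1}\cdots N^{i_n})=0$, which finishes the argument.

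The main obstacle I anticipate is precisely the bottom-right block $\delta=QBQ$: a priori it generates a perfectly non-nilpotent algebra, so one cannot argue that $N$ is strictly upper triangular directly. The whole point is that the products of $\delta$ have vanishing trace, and securing this requires using \emph{both} ingredients at once — the reduction (to cancel all block-$1$ excursions, via $\beta\delta\cdots\gamma=0$) and the global trace identity $V_n(B)=V_n(A)$ (to annihilate the surviving all-block-$2$ trace). Making the strong-induction step for $\beta\delta\cdots\gamma=0$ and the walk bookkeeping fully rigorous is the part demanding the most care.
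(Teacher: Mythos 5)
Your proof is correct and follows essentially the same route as the paper's: your relations $\beta^{j_1}\delta^{j_2}\cdots\delta^{j_{m-1}}\gamma^{j_m}=0$ are equivalent to the paper's key lemma ($V N^{i_1}\cdots N^{i_m} W = 0$, established by the same induction from the reduction property) written in a basis adapted to the projector $WV$, and the endgame --- every product of the $N^i$ is traceless, hence the algebra they generate is nil and therefore nilpotent --- is identical. The only difference is bookkeeping: the paper obtains $\tr\{N^{i_1}\cdots N^{i_n}\}=0$ in one stroke by expanding $\tr\{B^{i_1}\cdots B^{i_n}\}$ with $B^i = WA^iV+N^i$ and letting the lemma kill all cross terms, whereas you detour through the $(2,2)$ block and the $\delta$-walks before invoking the same global trace identity $V_n(B)=V_n(A)$.
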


\begin{restatable}{definition}{reductionnilpotencylength}\label{def:nilpotencylength}
  Let $V,W$ be a reduction from $B$ to $A$. Let $N^i= B^i - W A^i V$. Then the \emph{nilpotency length} of the reduction is the minimal $N_0$ such that $\forall n \geq N_0$
  \begin{equation}
    N^{i_1}\dots N^{i_n} = 0.
  \end{equation}
\end{restatable}

The main statement is that any two reductions are related:

\begin{restatable}{theorem}{reductionrelate}\label{thm:uniqueness}
  Let $V,W$ and $\tilde{V},\tilde{W}$ be two reductions from $B$ to a normal tensor $A$. Let the nilpotency length of both reductions be at most $N_0$. Then $\exists \lambda \in \mathbb{C}$ such that for any $n>2N_0$,
  \begin{align}
    V B^{i_1} B^{i_2} \dots B^{i_{n}} &= \lambda \tilde{V} B^{i_1} B^{i_2} \dots B^{i_{n}} \\  
    B^{i_1} B^{i_2} \dots B^{i_{n}} W &= \lambda^{-1} B^{i_1} B^{i_2} \dots B^{i_{n}} \tilde{W}. \label{eq:BBBW}
  \end{align}
\end{restatable}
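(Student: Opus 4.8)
The plan is to reduce both identities to a statement about a single fixed matrix. Writing $N^i = B^i - WA^iV$ and $\tilde N^i = B^i - \tilde W A^i\tilde V$, and abbreviating $A^{[a,b]}:=A^{i_a}\cdots A^{i_b}$, I would set $\Delta := V-\lambda\tilde V$, a fixed $D_A\times D_B$ matrix once the correct $\lambda$ is found. Since $V B^{i_1}\cdots B^{i_n}-\lambda\,\tilde V B^{i_1}\cdots B^{i_n}=\Delta\,B^{i_1}\cdots B^{i_n}$, the first identity is exactly the assertion that $\Delta$ annihilates the column space of every product $B^{i_1}\cdots B^{i_n}$ with $n>2N_0$; the second identity is the mirror statement for $\Delta':=W-\lambda^{-1}\tilde W$ acting on the right. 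The theorem thus splits into (i) finding the scalar $\lambda$, (ii) controlling the image of long products, and (iii) checking that $\Delta$ kills that image.

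For (ii) I would first record the interior-vanishing identity $\tilde V\tilde N^{i_1}\cdots\tilde N^{i_k}\tilde W=0$ for all $k\ge 1$, which follows by induction from the defining reduction property $\tilde V B^{i_1}\cdots B^{i_n}\tilde W = A^{i_1}\cdots A^{i_n}$ together with $\tilde V\tilde W=\id$ (and likewise for the $(V,W)$ reduction). Expanding each factor as $B^i=\tilde W A^i\tilde V+\tilde N^i$ and conditioning on the position of the first $\tilde W A\tilde V$-factor gives
\[
  B^{i_1}\cdots B^{i_n}=\sum_{k=1}^{n}\tilde N^{i_1}\cdots\tilde N^{i_{k-1}}\,\tilde W A^{i_k}\tilde V\,B^{i_{k+1}}\cdots B^{i_n}\;+\;\tilde N^{i_1}\cdots\tilde N^{i_n}.
\]
By \Cref{prop:reduction_nilpotent} the last term vanishes once $n\ge N_0$, so for such $n$ the image of $B^{i_1}\cdots B^{i_n}$ lies in $\mathcal V:=\sum_{k,\vec\jmath}\mathrm{im}\big(\tilde N^{j_1}\cdots\tilde N^{j_k}\tilde W\big)$.

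The hard part is (i), extracting the scalar $\lambda$, and this is where normality of $A$ and the bound $n>2N_0$ are genuinely used. With $m:=N_0-1$, the same bookkeeping yields the one-sided reductions $V B^{i_1}\cdots B^{i_n}=A^{[1,n-m]}\,V B^{i_{n-m+1}}\cdots B^{i_n}$ and $B^{i_1}\cdots B^{i_n}\tilde W = B^{i_1}\cdots B^{i_m}\tilde W\,A^{[m+1,n]}$ for $n\ge N_0$. I would then compute $V B^{i_1}\cdots B^{i_n}\tilde W$ in these two ways; for $n$ large enough that the first $m$ sites, the last $m$ sites, and a middle block are disjoint, this produces $A^{[1,n-m]}\Psi=\Phi\,A^{[m+1,n]}$, where $\Phi=V B^{[1,m]}\tilde W$ depends only on the first $m$ indices and $\Psi=V B^{[\mathrm{tail}]}\tilde W$ only on the last $m$. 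Because $A$ is normal, the middle block $A^{[m+1,n-m]}$ ranges over a spanning set of $M_{D_A}$ (this is what forces $n>2N_0$), so stripping it leaves $PX\Psi=\Phi XT$ for all $X\in M_{D_A}$, with $P=\tilde V B^{[1,m]}\tilde W=A^{[1,m]}$ and $T=A^{[\mathrm{tail}]}$. Specializing $T$ and $P$ to $\id$, again using that the $A$-blocks span, shows the relevant quantity is central and hence a scalar $\lambda$, and yields $V B^{[1,m]}\tilde W=\lambda\,\tilde V B^{[1,m]}\tilde W$ as well as $V\tilde W=\lambda\,\id$.

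Finally, for (iii): from $V\tilde W=\lambda\id$, $\tilde V\tilde W=\id$, and the relation just proved, a short induction on the number of $\tilde W A\tilde V$-factors gives $V\omega\tilde W=\lambda\,\tilde V\omega\tilde W$ for every word $\omega$ in the letters $\{B^i,\,\tilde W A^i\tilde V\}$; taking $\omega=\tilde N^{j_1}\cdots\tilde N^{j_k}$ and invoking the interior-vanishing identity gives $\Delta\,\tilde N^{j_1}\cdots\tilde N^{j_k}\tilde W=0$, i.e.\ $\Delta$ annihilates $\mathcal V$. Combined with (ii) this gives $\Delta\,B^{i_1}\cdots B^{i_n}=0$, which is the first identity (in fact already for $n\ge N_0$). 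The second identity follows by the mirror argument with $\Delta'$ acting on the right, and right-multiplying the first identity by $W$ forces the mirror constant to be $\lambda^{-1}$, so the two constants agree. The only delicate step is (i): establishing that the middle $A$-block genuinely spans $M_{D_A}$ and that the centrality extraction is valid, which is precisely where normality of $A$ is indispensable.
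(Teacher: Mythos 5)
Your steps (i) and (ii) track the paper's own argument quite closely: the two one-sided expansions are exactly the content of \cref{lem:B_expand}, and comparing $VB^{i_1}\cdots B^{i_n}\tilde W$ computed via both reductions across an injective middle block, followed by the rank-one (tensor-factorization) extraction of a scalar, is precisely how the paper obtains its key identities. However, that argument only yields $VB^{i_1}\cdots B^{i_m}\tilde W=\lambda\,\tilde V B^{i_1}\cdots B^{i_m}\tilde W\ (=\lambda A^{i_1}\cdots A^{i_m})$ for products of sufficiently many $B$-factors. The move ``specializing $T$ and $P$ to $\id$'' is not legitimate: $P(i)=A^{i_1}\cdots A^{i_m}$ and $\Phi(i)=VB^{i_1}\cdots B^{i_m}\tilde W$ are functions of the \emph{same} physical indices, so you may only take linear combinations that transform them simultaneously; you can conclude $\Phi(i)=\lambda P(i)$, but not the length-zero statement $V\tilde W=\lambda\id$. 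This is fatal for step (iii), whose induction has $V\tilde W=\lambda\id$ as its base case, and whose conclusion (``$\Delta$ annihilates $\mathcal V$'') is in fact false. A minimal counterexample: let $A$ be any normal tensor with bond dimension $D\geq 2$, pick nonzero $x,\tilde y\in\mathbb{C}^{D}$, and set
\[
B^i=\begin{pmatrix}A^i&0\\0&0\end{pmatrix},\qquad
V=\begin{pmatrix}\id & x\end{pmatrix},\quad W=\begin{pmatrix}\id\\0\end{pmatrix},\qquad
\tilde V=\begin{pmatrix}\id & 0\end{pmatrix},\quad \tilde W=\begin{pmatrix}\id\\ \tilde y^{T}\end{pmatrix}.
\]
Both pairs are reductions from $B$ to $A$ with nilpotency length $2$ (here $N^i=\begin{pmatrix}0&-A^ix\\0&0\end{pmatrix}$ and $\tilde N^i=\begin{pmatrix}0&0\\-\tilde y^{T}A^i&0\end{pmatrix}$, and all their length-two products vanish), and the theorem holds with $\lambda=1$, since $VB^{i_1}\cdots B^{i_n}=\tilde VB^{i_1}\cdots B^{i_n}=\begin{pmatrix}A^{i_1}\cdots A^{i_n}&0\end{pmatrix}$ and $B^{i_1}\cdots B^{i_n}W=B^{i_1}\cdots B^{i_n}\tilde W$. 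Yet $V\tilde W=\id+x\tilde y^{T}$ is not a multiple of the identity, $\Delta\tilde W=x\tilde y^{T}\neq0$, and $V\tilde N^{i}\tilde W=-x\tilde y^{T}A^{i}\neq0$, so every term-wise identity you need in step (iii) fails.

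The structural reason is that $\mathcal V$ is genuinely larger than the column space of long products: in the example $\mathcal V=\mathbb{C}^{D+1}$ while $\mathrm{col}(B^{i_1}\cdots B^{i_n})=\mathrm{col}(A^{i_1}\cdots A^{i_n})\oplus 0$, and $\Delta$ kills only the latter. The identities that are actually true are the \emph{summed} ones with $A$-tails attached, i.e.\ the paper's \cref{eq:uniqueness_key},
\[
\sum_{k=0}^{N_0} V\tilde N^{i_1}\cdots\tilde N^{i_k}\tilde W\,A^{i_{k+1}}\cdots A^{i_{N_0}}=\lambda\,A^{i_1}\cdots A^{i_{N_0}},
\]
in which the individual terms are not separately proportional to $A$-products: in the counterexample the $k=0$ term contributes $(\id+x\tilde y^{T})A^{i_1}\cdots A^{i_{N_0}}$ and the $k=1$ term contributes $-x\tilde y^{T}A^{i_1}\cdots A^{i_{N_0}}$, and only their sum gives $\lambda A^{i_1}\cdots A^{i_{N_0}}$. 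The paper completes the proof not by annihilating a fixed subspace but by substituting this summed identity into the full expansion of $VB^{i_1}\cdots B^{i_n}$ from \cref{lem:B_expand} (this substitution is also where the hypothesis $n>2N_0$ is used, since every term of the double sum must contain an $A$-block of length at least $N_0$). Your route would need to be repaired by replacing $\mathcal V$ with the actual column space, which in effect forces you back to the paper's substitution argument; in particular your parenthetical claim that the first identity already holds for $n\geq N_0$ is unsupported once step (iii) collapses.
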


Let us now continue how to assign an element of the third cohomology group to a one-block projective MPO representation. This discussion is essentially the same as in Ref.~\onlinecite{Chen2011a}. We include here the construction for completeness.

Let $X_{g,h} = \sum_{ijk} X^{ij}_g \otimes X^{jk}_h \otimes \ket{i}\bra{k}$ be the MPO tensor describing the product of two MPOs. As $X_{g,h}$ and $X_{gh}$ describe the same state and $X_{gh}$ is injective, $X_{g,h}$ can be reduced to $X_{gh}$ by \cref{thm:reductionexist}. Let us fix such a reduction $V(g,h), W(g,h)$ for any pair of group elements. We will assign a complex scalar to these reductions. We show that this scalar forms a three-cocycle. Different reductions then lead to different three-cocycles. We show, however, that their ratio forms a three-coboundary. Therefore, the equivalence class of the scalars is an element from the third cohomology group.

Starting from the reductions $V(g,h)$, $W(g,h)$, there are two natural ways to reduce the product of three MPOs: 
\begin{equation}
  \begin{tikzpicture}[font=\scriptsize]
    \draw (-1.5,2) -- (1.5,2);
    \pic[pic text=$X_k$] at (0,0) {mpotensor};
    \pic[pic text=$X_h$] at (0,1) {mpotensor};
    \pic[pic text=$X_g$] at (0,2) {mpotensor};
    \pic[x={(-1,0)}] at (-1.25,0.5) {reduction};
    \pic at (-1.25,0.5) {reduction};
    \pic[x={(-1,0)}] at (-2,1.5) {longreduction};
    \pic at (-2,1.5) {longreduction};
    \node at (-1.6,2.5) {$V(g,hk)$};
    \node at (-1.0,-0.5) {$V(h,k)$};
    \node at (1.6,2.5) {$W(g,hk)$};
    \node at (1.0,-0.5) {$W(h,k)$};
  \end{tikzpicture}
  =
  \begin{tikzpicture}[rotate=180,font=\scriptsize]
    \draw (-1.5,2) -- (1.5,2);
    \pic[pic text=$X_g$,rotate=180] at (0,0) {mpotensor};
    \pic[pic text=$X_h$,rotate=180] at (0,1) {mpotensor};
    \pic[pic text=$X_k$,rotate=180] at (0,2) {mpotensor};
    \pic[x={(-1,0)}] at (-1.25,0.5) {reduction};
    \pic at (-1.25,0.5) {reduction};
    \pic[x={(-1,0)}] at (-2,1.5) {longreduction};
    \pic at (-2,1.5) {longreduction};
    \node[rotate=180] at (-1.6,2.5) {$W(gh,k)$};
    \node[rotate=180] at (-1.6,-0.1) {$W(g,h)$};
    \node[rotate=180] at (1.6,2.5) {$V(gh,k)$};
    \node[rotate=180] at (1.0,-0.5) {$V(g,h)$};
\end{tikzpicture}
=
\begin{tikzpicture}[font=\scriptsize]
  \pic[pic text = $X_{ghk}$] at (0,0) {mpotensor};
\end{tikzpicture}
\end{equation}
By \cref{thm:uniqueness}, there exists a complex scalar $\lambda(g,h,k)\in\mathbb C$ such that for any sufficiently long chain,
\begin{equation}\label{eq:mpo_reduction_gh}
  \begin{tikzpicture}[font=\scriptsize]
    \draw (2,2) -- (3.5,2);
    \foreach \x in {0,1,2}{
      \pic[pic text=$X_k$] at (\x,0) {mpotensor};
      \pic[pic text=$X_h$] at (\x,1) {mpotensor};
      \pic[pic text=$X_g$] at (\x,2) {mpotensor};
    }
    \pic[x={(-1,0)}] at (-3.25,0.5) {reduction};
    \pic[x={(-1,0)}] at (-4,1.5) {longreduction};
    \node at (3.6,2.5) {$W(g,hk)$};
    \node at (3.0,-0.5) {$W(h,k)$};
  \end{tikzpicture}
  =
  \lambda(g,h,k)
  \begin{tikzpicture}[yscale=-1,font=\scriptsize]
    \draw (2,2) -- (3.5,2);
    \foreach \x in {0,1,2}{
      \pic[pic text=$X_g$,yscale=-1] at (\x,0) {mpotensor};
      \pic[pic text=$X_h$,yscale=-1] at (\x,1) {mpotensor};
      \pic[pic text=$X_k$,yscale=-1] at (\x,2) {mpotensor};
    }
    \pic[x={(-1,0)}] at (-3.25,0.5) {reduction};
    \pic[x={(-1,0)}] at (-4,1.5) {longreduction};
    \node[yscale=-1] at (3.6,2.5) {$W(gh,k)$};
    \node[yscale=-1] at (3.3,-0.5) {$W(g,h)$};
  \end{tikzpicture}
\end{equation}
We show now that this scalar $\lambda$ forms a three-cocycle due to associativity of the product. 
 For the fixed reductions $V(g,h)$ and $W(g,h)$, denote the l.h.s.\ of \cref{eq:mpo_reduction_gh} as $[g[hk]]$, the r.h.s.\ as $[[gh]k]$. Consider a product of four MPOs, $ghkl$, and the  following sequence of reductions:
\begin{equation}
  [[[gh]k]l] \rightarrow [[gh][kl]] \rightarrow [g[h[kl]]] \rightarrow [g[[hk]l]] \rightarrow [[g[hk]]l] \rightarrow [[[gh]k]l] .
\end{equation}
In this sequence, every member can be transformed to the next by changing the reduction of three consecutive group elements. Therefore, every member is related to the previous one by a scalar. Writing out these scalars, we obtain
\begin{equation}
  [[[gh]k]l] = \underbrace{\lambda(gh,k,l)^{-1} \cdot \lambda(g,h,kl)^{-1} \cdot \lambda(h,k,l) \cdot \lambda(g,hk,l)\cdot \lambda(g,h,k)}_{=1} \cdot [[[gh]k]l].
\end{equation}
As this relation is the defining relation for the three-cocycles, $\lambda:G^3\to \mathbb{C}^*$ is a three-cocycle, where $G$ acts trivially on $\mathbb{C}^*$. 

Note that the above construction depends on the fixed reductions $V(g,h),W(g,h)$ of the product of two operators. In general, changing the reduction also changes the scalar. This change, however, is not arbitrary: we prove now that it forms a three-coboundary. Consider another reduction $\tilde{V}(g,h)$ and $\tilde{W}(g,h)$ with corresponding three-cocycle $\tilde{\lambda}$. Then, denoting the reduction with $\tilde{V}(g,h)$ and $\tilde{W}(g,h)$ by round brackets (in the sense as above), using \cref{thm:uniqueness}, 
\begin{equation}
  (gh) = \omega(g,h) [gh]
\end{equation}
for some $\omega(g,h)\in \mathbb{C}$. Therefore, the two scalars $\lambda$ and $\tilde{\lambda}$ are related as follows:
\begin{align}
  ((gh)k) &= \omega(g,h) \omega(gh,k) [[gh]k]\\
  (g(hk)) &= \omega(h,k) \omega(g,hk) [g[hk]].
\end{align}
Therefore, the relation between $\lambda$ and $\tilde{\lambda}$ is
\begin{equation}
  \tilde{\lambda}(g,h,k) =  \frac{\omega(g,h) \omega(gh,k)}{\omega(h,k) \omega(g,hk) }\lambda(g,h,k)
\end{equation}
This is the defining relation of three-coboundaries, thus $\lambda/\tilde{\lambda}:G^3\to \mathbb{C}^*$ is a three-coboundary. Therefore, $\lambda$, by construction, is a three-cocycle defined up to a three-coboundary, thus, by the definition of the cohomology group, it is  an element from  $H^3(G,\mathbb{C^*})$.

Next, consider MPO representations that are translationally invariant after blocking two tensors $X$ and $Y$. The previous method assigns two possibly different labels from $H^3(G,\mathbb{C}^*)$ to the two MPO tensors $XY$ and $YX$. We will show now that these two labels are in fact equal.

\begin{prop}\label{prop:alternating_labeling}
  Let $V_n(X_gY_g)$ be a one-block projective MPO representation of $G$. Then $V_n(Y_g X_g)$ is also a one-block projective MPO representation of $G$ and their third cohomology label is the same.
\end{prop}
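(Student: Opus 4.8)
The plan is to handle Part~1 (that $g\mapsto Y_gX_g$ is again a one-block projective MPO representation with the same two-cocycle) first, and then the three-cocycle label separately. For Part~1, I would observe that $V_n(Y_gX_g)$ and $V_n(X_gY_g)$ are literally the same translationally invariant period-two MPO on the ring of $2n$ sites, read off with the blocking cut displaced by one site. Concretely, cyclicity of the trace in the MPO definition gives $V_n(Y_gX_g)=S\,V_n(X_gY_g)\,S^{-1}$, where $S$ is the one-site cyclic shift of the $2n$ physical sites. Since $S$ is a fixed invertible operator, independent of the group element, conjugation by $S$ is an algebra isomorphism, so from $V_n(X_gY_g)\,V_n(X_hY_h)=\lambda^n(g,h)\,V_n(X_{gh}Y_{gh})$ one immediately obtains $V_n(Y_gX_g)\,V_n(Y_hX_h)=\lambda^n(g,h)\,V_n(Y_{gh}X_{gh})$. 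Hence $g\mapsto Y_gX_g$ is a one-block projective MPO representation with the \emph{same} two-cocycle $\lambda$, and $Y_{gh}X_{gh}$ is (can be taken to be) normal, so the labeling construction of \cref{sec:cohomology_label} applies to it.

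For the three-cocycle I would exploit that the assigned $H^3$ class is independent of the chosen reductions (as shown just after the construction in \cref{sec:cohomology_label}): it therefore suffices to fix one family of reductions for each blocking and to relate the two resulting three-cocycles by an explicit coboundary. Reductions for the $YX$ blocking exist by \cref{thm:reductionexist} applied to the normal fusion tensor $Y_{gh}X_{gh}$ together with Part~1. The key geometric input is that the product chains underlying the two blockings are the two blockings of one and the same period-two chain, with odd-site tensors the per-site products of $X_g,X_h$ and even-site tensors the per-site products of $Y_g,Y_h$; they differ only by which bond is cut (the $X$–$Y$ bond versus the $Y$–$X$ bond). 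I would construct a reduction for the $YX$ blocking from one for the $XY$ blocking by sliding the cut across a single half-site, i.e.\ by absorbing the extra half-site tensor into the reduction caps. This is legitimate because, by assumption, $X_g$ and $Y_g$ become injective after blocking two tensors, so a half-site may be inverted on the relevant support.

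With such compatible reductions fixed, I would rerun the same pentagon of reductions $[[[gh]k]l]\rightarrow\cdots\rightarrow[[[gh]k]l]$ used to derive the cocycle condition, but now with the two blockings connected through the half-site-sliding maps. Each sliding map is pinned down up to a scalar by the uniqueness theorem \cref{thm:uniqueness}; collecting these scalars over the pair $(g,h)$ produces a function $\omega\colon G^2\to\mathbb{C}^*$, and comparing the two associativity scalars yields
\[
  \lambda_{YX}(g,h,k)=\frac{\omega(g,h)\,\omega(gh,k)}{\omega(h,k)\,\omega(g,hk)}\,\lambda_{XY}(g,h,k),
\]
which is precisely the coboundary relation. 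Hence $\lambda_{YX}$ and $\lambda_{XY}$ define the same class in $H^3(G,\mathbb{C}^*)$, proving \cref{prop:alternating_labeling}.

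The main obstacle is the bridge between the physical-level shift relation and the virtual-level reduction data: the identity $V_n(Y_gX_g)=S\,V_n(X_gY_g)\,S^{-1}$ lives on the physical indices, whereas the reductions and the cocycle live on the virtual bonds, so $S$ cannot simply transport reductions. The genuine work is therefore the half-site-sliding construction and the careful bookkeeping of the scalar ambiguities it introduces, establishing that they assemble into an honest coboundary rather than a nontrivial three-cocycle. I expect this scalar tracking, rather than any single identity, to be the delicate part of the argument.
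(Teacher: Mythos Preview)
Your Part~1 is correct and matches the paper: $V_n(Y_gX_g)$ is the one-site shift of $V_n(X_gY_g)$, hence a one-block projective MPO representation with the same projective factor.

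For the cohomology label, your overall strategy---produce a function $\omega(g,h)$ relating the two families of reductions and verify that $\lambda_{YX}/\lambda_{XY}$ is the coboundary of $\omega$---is exactly what the paper does. But your proposed mechanism for producing $\omega$ has a gap. You propose to ``absorb the extra half-site tensor into the reduction caps'' and justify this by invoking injectivity of $X_g,Y_g$ after blocking two. Two problems: first, that injectivity is not an assumption of the proposition (the paper only assumes $X_gY_g$ generates a one-block projective representation, and handles the rest by first reducing $X_gY_g$ and $Y_gX_g$ to injective normal tensors). Second, and more importantly, the reduction caps $V,W$ are matrices on the \emph{virtual} bond, so there is no way to ``absorb'' a half-site tensor (which carries a physical index) into them; the objects live in different spaces, as you yourself note in your last paragraph.

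The paper's device is different and cleaner: rather than building the $YX$ reductions out of the $XY$ ones, it keeps both families $V(g,h),W(g,h)$ and $\tilde V(g,h),\tilde W(g,h)$ independent and proves a ``mixed endcap'' lemma (\cref{prop:alternating_reduction}): the left cap $V(g,h)$ from the $XY$ blocking together with the right cap $\tilde W(g,h)$ from the $YX$ blocking reduces an \emph{odd}-length chain $X_gY_g\cdots Y_gX_g$ to $X_{gh}Y_{gh}\cdots Y_{gh}X_{gh}$, up to a scalar $\mu(g,h)$. This $\mu(g,h)$ is your $\omega(g,h)$. One then writes down the two ways of reducing a triple product on an odd chain, changes the order of the caps on each side using \cref{thm:uniqueness} (picking up $\lambda$ and $\tilde\lambda$), and reads off $\lambda/\tilde\lambda=\mu(g,hk)\mu(h,k)/\mu(g,h)\mu(gh,k)$. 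So your endgame is right, but the missing ingredient is precisely this odd-chain mixed-endcap reduction; without it you have no rigorous definition of $\omega$.
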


\begin{proof}
  As $V_n(Y_g X_g)$ is the same MPO as $V_n(X_gY_g)$, but shifted by one lattice site, it is a one-block projective MPO representation. W.l.o.g, one can suppose that both $X_g Y_g$ and $Y_g X_g$ are injective: they contain only one block, thus they can be reduced to injective MPOs. Thus, incorporating the reductions into $X_g$ and $Y_g$, we obtain two new tensors such that both $X_g Y_g$ and $Y_g X_g$ are injective. 
  
  Let $V(g,h)$ and $W(g,h)$ be reductions corresponding to the product of $X_g Y_g$ and $X_h Y_h$, while $\tilde{V}(g,h)$ and $\tilde{W}(g,h)$ be reductions for the product of $Y_g X_g$ and $X_h Y_h$. Then \cref{prop:alternating_reduction} in \cref{app:mps_reductions} implies that $V(g,h)$ and $\tilde{W}(g,h)$ reduces (up to a scalar) a chain of odd number of MPO tensors: 
  \begin{equation}
    \begin{tikzpicture}[scale=0.7, font=\footnotesize]
      \foreach \x/\t in {0/X,1/Y,2/X}{
        \pic[pic text = $\t_h$] at (\x,0)  {mpotensor};
        \pic[pic text = $\t_g$] at (\x,1)  {mpotensor};
      }
      \pic[x={(-1,0)}] at (-3.25,0.5) {reduction};
      \pic at (-1.25,0.5) {reduction};
      \node at (-1.7,1.2) {$V(g,h)$};
      \node at (3.85,1.2) {$\tilde{W}(g,h)$};
    \end{tikzpicture}
    =
    \mu(g,h)\ 
    \begin{tikzpicture}[scale=0.7]
      \foreach \x/\t in {0/X,1/Y,2/X}   
        \pic[pic text = $\t_{gh}$, font=\footnotesize] at (\x,0) {mpotensor};
    \end{tikzpicture}
  \end{equation}
  Therefore, for the product of three MPOs corresponding to $g,h$ and $k$ and a chain consisting of an odd number of MPO tensors,
  \begin{equation}\label{eq:alternate_cohomology_1}
    \begin{tikzpicture}[rotate=180,scale=0.7,font=\footnotesize]
      \node[rotate=180] at (-1.7,-0.2) {$\tilde{W}(g,h)$};
      \node[rotate=180] at (3.85,-0.2) {$V(g,h)$};
      \node[rotate=180] at (-2.7,2.2) {$\tilde{W}(gh,k)$};
      \node[rotate=180] at (4.75,2.2) {$V(gh,k)$};
      \draw (-1.5,2) -- (3.5,2);
      \foreach \x/\t in {0/X,1/Y,2/X} {
        \pic[rotate=180,pic text = $\t_g$, font=\footnotesize] at (\x,0) {mpotensor};
        \pic[rotate=180,pic text = $\t_h$, font=\footnotesize] at (\x,1) {mpotensor};
        \pic[rotate=180,pic text = $\t_k$, font=\footnotesize] at (\x,2) {mpotensor};
      }
      \pic[x={(-1,0)}] at (-3.25,0.5) {reduction};
      \pic at (-1.25,0.5) {reduction};
      \pic[x={(-1,0)}] at (-4,1.5) {longreduction};
      \pic at (-2,1.5) {longreduction};
    \end{tikzpicture}
    = \mu(g,h) \mu(gh,k)\ 
    \begin{tikzpicture}[scale=0.7]
      \foreach \x/\t in {0/X,1/Y,2/X}   
        \pic[pic text = $\t_{ghk}$, font=\scriptsize] at (\x,0) {mpotensor};
    \end{tikzpicture}
  \end{equation}
  Similarly,   
  \begin{equation}
    \begin{tikzpicture}[scale=0.7,font=\footnotesize]
      \node at (-1.7,-0.2) {$V(h,k)$};
      \node at (3.85,-0.2) {$\tilde{W}(h,k)$};
      \node at (-2.7,2.2) {$V(g,hk)$};
      \node at (4.75,2.2) {$\tilde{W}(g,hk)$};
      \draw (-1.5,2) -- (3.5,2);
      \foreach \x/\t in {0/X,1/Y,2/X} {
        \pic[pic text = $\t_k$, font=\footnotesize] at (\x,0) {mpotensor};
        \pic[pic text = $\t_h$, font=\footnotesize] at (\x,1) {mpotensor};
        \pic[pic text = $\t_g$, font=\footnotesize] at (0\x,2) {mpotensor};
      }
      \pic[x={(-1,0)}] at (-3.25,0.5) {reduction};
      \pic at (-1.25,0.5) {reduction};
      \pic[x={(-1,0)}] at (-4,1.5) {longreduction};
      \pic at (-2,1.5) {longreduction};
    \end{tikzpicture}
  = \mu(g,hk) \mu(h,k)\ 
    \begin{tikzpicture}[scale=0.7]
      \foreach \x/\t in {0/X,1/Y,2/X}   
        \pic[pic text = $\t_{ghk}$, font=\scriptsize] at (\x,0) {mpotensor};
    \end{tikzpicture}
  \end{equation}  
  If the above chain is long enough, changing the order of the reductions $\tilde{W}$ changes the above equation only by a scalar $\tilde{\lambda}(g,h,k)$:
  \begin{equation}
    \begin{tikzpicture}[scale=0.7,font=\footnotesize]
      \node at (-1.7,-0.2) {$V(h,k)$};
      \node at (-1.5,2.7) {$V(g,hk)$};
      \draw (-1.5,2) -- (2.5,2);
      \draw (-0.5,0) -- (3.5,0);
      \foreach \x/\t in {0/X,1/Y,2/X} {
        \pic[pic text = $\t_k$, font=\footnotesize] at (\x,0) {mpotensor};
        \pic[pic text = $\t_h$, font=\footnotesize] at (\x,1) {mpotensor};
        \pic[pic text = $\t_g$, font=\footnotesize] at (0\x,2) {mpotensor};
      }
      \pic at (-1.25,0.5) {reduction};
      \pic at (-2,1.5) {longreduction};
      \begin{scope}[y={(0,-1)},yshift=2cm]
        \node at (3.7,-0.3) {$\tilde{W}(g,h)$};
        \node at (3.75,2.6) {$\tilde{W}(gh,k)$};
        \pic[x={(-1,0)}] at (-4,1.5) {longreduction};
        \pic[x={(-1,0)}] at (-3.25,0.5) {reduction};
      \end{scope}
    \end{tikzpicture}
  = \tilde{\lambda}(g,h,k) \mu(g,hk) \mu(h,k)\ 
    \begin{tikzpicture}[scale=0.7]
      \foreach \x/\t in {0/X,1/Y,2/X}   
        \pic[pic text = $\t_{ghk}$, font=\scriptsize] at (\x,0) {mpotensor};
    \end{tikzpicture} \ .
  \end{equation}    
  Similarly, changing  the order of the reductions on the left side, we get (notice that the scalar associated to changing the order of the reductions on the left side is the inverse of that on the right side, see \cref{thm:uniqueness})
  \begin{equation}
    \begin{tikzpicture}[rotate=180,scale=0.7,font=\footnotesize]
      \node[rotate=180] at (-1.7,-0.2) {$\tilde{W}(g,h)$};
      \node[rotate=180] at (3.1,-0.5) {$V(g,h)$};
      \node[rotate=180] at (-1.6,2.6) {$\tilde{W}(gh,k)$};
      \node[rotate=180] at (3.55,2.6) {$V(gh,k)$};
      \draw (-1.5,2)--(3.5,2); 
      \foreach \x/\t in {0/X,1/Y,2/X} {
        \pic[rotate=180,pic text = $\t_g$, font=\footnotesize] at (\x,0) {mpotensor};
        \pic[rotate=180,pic text = $\t_h$, font=\footnotesize] at (\x,1) {mpotensor};
        \pic[rotate=180,pic text = $\t_k$, font=\footnotesize] at (\x,2) {mpotensor};
      }
      \pic[x={(-1,0)}] at (-3.25,0.5) {reduction};
      \pic at (-1.25,0.5) {reduction};
      \pic[x={(-1,0)}] at (-4,1.5) {longreduction};
      \pic at (-2,1.5) {longreduction};
    \end{tikzpicture}
    = \frac{\tilde{\lambda}(g,h,k) \mu(g,hk) \mu(h,k)}{\lambda(g,h,k)}\ 
    \begin{tikzpicture}[scale=0.7]
      \foreach \x/\t in {0/X,1/Y,2/X}   
        \pic[pic text =$\t_{ghk}$, font=\scriptsize] at (\x,0) {mpotensor};
    \end{tikzpicture}  .
  \end{equation}
  Comparing this equation with \cref{eq:alternate_cohomology_1}, we conclude that 
  \begin{equation}
    \frac{\lambda(g,h,k)}{\tilde{\lambda}(g,h,k)} = \frac{\mu(g,hk) \mu(h,k)}{\mu(g,h) \mu(gh,k)} \ .
  \end{equation}
  Therefore, the two scalars differ only by a three coboundary. That is, the two third cohomology labels corresponding to $X_g Y_g$ and $Y_g X_g$ coincide.
\end{proof}

\subsection{Third cohomology labeling of semi-injective PEPS}
We investigate the following setup. Let $G$ be a group, $O_g$ a faithful (not necessarily unitary) representation of $G$. Let $\ket{\phi}$ be a four-partite state with full rank one-particle reduced densities. Suppose $\forall g\in G$, $O_g$ is a symmetry of the semi-injective PEPS defined by $\ket{\phi}$ and $\id$:
\begin{equation}\label{eq:spt_setup}
\begin{tikzpicture}
\truncstateswo[blue,red] (0,0) (4,3)
\end{tikzpicture} = \mu_{n,m}(g) \ 
\begin{tikzpicture}
\truncstates[blue] (0,0) (4,3)
\end{tikzpicture}\ ,
\end{equation}
where the blue squares represent $\ket{\phi}$, the red operators $O_g$.

Note that this setup can readily be applied for unitary symmetries of semi-injective PEPS: let the semi-injective PEPS be defined by the four-partite state $\ket{\phi}$ and an invertible operator $A$. Let the unitary representation of the symmetry group $G$ be $U_g$. Then, by inverting $A$ in the symmetry condition, we arrive to \cref{eq:spt_setup} with $O_g = A^{-1} U_g A$.

\begin{prop}
  If \cref{eq:spt_setup} holds for some $n,m\geq 3$, then it holds for all $n,m$ and $\mu_{n,m}(g) = \mu^{nm}(g)$ for some one-dimensional representation $\mu$ of $G$.
\end{prop}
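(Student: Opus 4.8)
The plan is to obtain the size-independence and the product form $\mu^{nm}(g)$ essentially for free from the analysis already carried out, and then to read off the representation property from the fact that the symmetry is a genuine group action supported on disjoint sites.

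First I would observe that \cref{eq:spt_setup} is nothing but the special case of \cref{eq:setup} in which $\phi_A=\phi_B=\phi$ and $O=O_B^{-1}O_A=O_g$. Hence \cref{prop:size_indep} applies verbatim and gives, for each fixed $g$, that \cref{eq:spt_setup} holds for all $n,m\in\mathbb{N}$ with $\mu_{n,m}(g)=\mu(g)^{nm}$ for some $\mu(g)\in\mathbb{C}$. Since $O_g$ is invertible (being a faithful representation) and the state defined by $\phi$ and $\id$ is non-zero, the proportionality constant cannot vanish, so in fact $\mu(g)\in\mathbb{C}^*$.

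Next I would exploit the group structure. Write the symmetry at size $(n,m)$ as $\mathcal{O}_g=\prod_s (O_g)_s$, where $s$ ranges over the lattice positions at which $O_g$ acts and $(O_g)_s$ is $O_g$ applied to the four virtual particles meeting at $s$. Each copy of $\ket{\phi}$ contributes exactly one particle to each of its four surrounding operator positions, so operators at distinct $s\neq s'$ have disjoint support and commute; and at a single position $(O_g)_s(O_h)_s=(O_gO_h)_s=(O_{gh})_s$ because $g\mapsto O_g$ is a representation. Combining these two facts yields the operator identity $\mathcal{O}_g\mathcal{O}_h=\mathcal{O}_{gh}$ on the whole torus. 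Applying it to the state $\ket{\Psi}$ defined by $\phi$ and $\id$, and using that \cref{eq:spt_setup} now holds at every size,
\[
\mu_{n,m}(gh)\,\ket{\Psi}=\mathcal{O}_{gh}\ket{\Psi}=\mathcal{O}_g\mathcal{O}_h\ket{\Psi}=\mu_{n,m}(g)\,\mu_{n,m}(h)\,\ket{\Psi}.
\]
Non-vanishing of $\ket{\Psi}$ forces $\mu_{n,m}(gh)=\mu_{n,m}(g)\mu_{n,m}(h)$ for every $(n,m)$, i.e.\ $\mu(gh)^{nm}=(\mu(g)\mu(h))^{nm}$ for all $n,m$. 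Evaluating at $(n,m)=(2,1)$ and $(3,1)$ and dividing strips off the exponent and gives $\mu(gh)=\mu(g)\mu(h)$; since $\mu(g)\in\mathbb{C}^*$, this makes $\mu$ a one-dimensional representation of $G$.

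The only step requiring genuine care — and hence the main obstacle — is the bookkeeping behind $\mathcal{O}_g\mathcal{O}_h=\mathcal{O}_{gh}$: one must verify that the operator positions partition the virtual particles so that operators at different sites really do act on disjoint particles (giving commutativity), while the four particles under a single operator are exactly those on which the representation law $O_gO_h=O_{gh}$ is applied. Once this is checked, the rest is elementary, and I would deliberately extract $\mu(gh)=\mu(g)\mu(h)$ by comparing two different torus sizes rather than relying on a possibly degenerate $1\times 1$ torus.
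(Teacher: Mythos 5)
Your proposal is correct and takes essentially the same approach as the paper, whose entire proof reads ``Apply \cref{prop:size_indep} and notice that $\mu$ is a representation.'' Your write-up simply fills in what the paper leaves implicit: the observation that \cref{eq:spt_setup} is the special case $\phi_A=\phi_B=\phi$, $O=O_g$ of \cref{eq:setup} (so \cref{prop:size_indep} applies for each fixed $g$), and the verification that $\mu$ is multiplicative, which you obtain correctly from the disjoint-support factorization $\mathcal{O}_g\mathcal{O}_h=\mathcal{O}_{gh}$ of the on-site symmetry together with a comparison of two torus sizes to strip off the exponent $nm$.
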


\begin{proof}
  Apply \cref{prop:size_indep} and notice that $\mu$ is a representation.
\end{proof}

We show now that the action of the symmetries show up on the boundary as a projective MPO representation of the group $G$.

\begin{prop}\label{prop:spt_boundary}
  If \cref{eq:spt_setup} holds, then for every $g\in G$ there are two MPO tensors $X_g$ and $Y_g$ such that 
  \begin{equation} \label{eq:spt_boundary}
  \begin{tikzpicture}
  \scope
  \clip (0.5,-0.5-\d) rectangle (3.5,0.5+\d);
  \foreach \x in {0,...,3}{
    \pic[blue,rotate=90] at (\x,0.5+\d) {halfstate};
    \pic[blue,rotate=-90] at (\x,-0.5-\d) {halfstate};
    \draw[red] (\x+0.5,0) circle (0.3);
  }
  \endscope
  \node[font=\tiny,anchor=east] at (0.6,0) {$\dots$};
  \node[font=\tiny,anchor=west] at (3.4,0) {$\dots$};
  \end{tikzpicture}
  =
  \mu^n(g)\
    \begin{tikzpicture}[blue,font=\scriptsize]
        \foreach \x in {0,...,2}{
          \pic[rotate=90] at (\x,0.5+\d) {halfstate};
          \pic[rotate=-90] at (\x,-0.5-\d) {halfstate};
          \pic[pic text = $X_g$, anchor=south west] at (\x,0.8) {boundarympo};
          \pic[pic text = $Y_g$, anchor=north west] at (\x,-0.8) {boundarympo};
        }
    \node[black,font=\tiny,anchor=east] at (-0.5,0.8) {$\dots$};
    \node[black,font=\tiny,anchor=west] at (2.5,0.8) {$\dots$};
    \node[black,font=\tiny,anchor=east] at (-0.5,-0.8) {$\dots$};
    \node[black,font=\tiny,anchor=west] at (2.5,-0.8) {$\dots$};
    \end{tikzpicture} \ ,
  \end{equation}
  and $V_n(Y_g) = \left(V_n(X_g)\right)^{-1}$ for all $n$. Moreover, $V_n(X_g)$ and $V_n(Y_g)$ form projective representations of $G$ with $V_n(X_g) V_n(X_h) = \lambda^n (g,h) V_n (X_g X_h)$ for a two-cocycle $\lambda$. In particular, $V_n(X_g) V_n(X_h)$ has only one block in its canonical form.
\end{prop}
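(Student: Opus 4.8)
The plan is to build on \cref{prop:boundary_mpo} and exploit that $g\mapsto O_g$ is a representation. First I would observe that \cref{eq:spt_setup} is exactly \cref{eq:setup} in the special case $\ket{\phi_A}=\ket{\phi_B}=\ket{\phi}$ and $O=O_g$, where each $O_g$ is invertible because the representation is faithful. Hence \cref{prop:boundary_mpo} applies verbatim for every $g$ and yields normal MPO tensors $X_g,Y_g$ satisfying \cref{eq:spt_boundary} with $V_n(Y_g)=(V_n(X_g))^{-1}$ for all $n$; the proportionality constant is $\mu(g)$, which the preceding proposition already identified as a one-dimensional representation. This settles the existence half of the statement.

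For the projective property I would use that operators $O_g$ at distinct plaquettes act on disjoint virtual particles and compose as a representation within a plaquette, so that inserting $O_g$ and then $O_h$ across a cut is the same as inserting $O_{gh}$. The goal is to show that the pair $\bigl(V_n(X_g)V_n(X_h),\,V_n(Y_h)V_n(Y_g)\bigr)$ is \emph{again} a boundary MPO for $O_{gh}$, i.e.\ that it solves \cref{eq:spt_boundary} with $O=O_{gh}$ and constant $\mu^n(g)\mu^n(h)$. Concretely, starting from the left-hand side of \cref{eq:spt_boundary} for $O_{gh}=O_gO_h$, I would peel off $O_h$ using the boundary equation for $h$ and then $O_g$ using the boundary equation for $g$, cancelling the intermediate Schmidt vectors of $\ket{\phi}$ against their inverses; this is legitimate because the one-particle reduced densities of $\ket{\phi}$ are full rank, so the Schmidt vectors are injective. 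This is the same bookkeeping as in the two-layer analysis of \cref{prop:o_twolayer}, and it leaves $V_n(X_g)V_n(X_h)$ on the top boundary and $V_n(Y_h)V_n(Y_g)$ on the bottom.

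I would then invoke uniqueness. By \cref{cor:boundary_unique} a boundary MPO implementing a fixed operator is unique up to a scalar, and that scalar is forced to be of the form $\lambda^n$ via the power-sum argument of \cref{prop:sum_power}. Since both $V_n(X_{gh})$ (from the first paragraph applied to $O_{gh}$) and $V_n(X_g)V_n(X_h)$ implement the boundary of $O_{gh}$, with mutually inverse tensors on the two halves, I conclude $V_n(X_g)V_n(X_h)=\lambda^n(g,h)\,V_n(X_{gh})$ for a genuine scalar $\lambda(g,h)\in\mathbb{C}^*$. Because $X_{gh}$ is normal its canonical form is a single block, so the proportional MPO $V_n(X_g)V_n(X_h)$ has exactly one block, which is the final assertion. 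The two-cocycle identity $\lambda(g,h)\lambda(gh,k)=\lambda(h,k)\lambda(g,hk)$ then drops out by reducing $V_n(X_g)V_n(X_h)V_n(X_k)$ in the two associative orders, using that $V_n(X_{ghk})\neq 0$ for large $n$ and again \cref{prop:sum_power} to match the scalars for all $n$.

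The main obstacle is the middle step: verifying carefully that inserting $O_gO_h$ and cancelling the intermediate Schmidt vectors against their inverses genuinely reproduces $V_n(X_g)V_n(X_h)$ on top and $V_n(Y_h)V_n(Y_g)$ on the bottom, with the correct ordering of the factors and the correct placement of the top and bottom halves of the Schmidt decomposition of the single state $\ket{\phi}$. Everything there rests on the full-rank (injectivity) hypothesis, and tracking the geometry is precisely where an error would creep in; once it is done, the uniqueness and power-sum tools quoted above make the projective structure, the cocycle relation, and the one-block property essentially automatic.
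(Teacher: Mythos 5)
Your proposal is correct and takes essentially the same route as the paper: existence of $X_g,Y_g$ by applying \cref{prop:boundary_mpo} to the special case $\ket{\phi_A}=\ket{\phi_B}=\ket{\phi}$, $O=O_g$; the projective relation with $\lambda_n(g,h)=\lambda^n(g,h)$ from the uniqueness statement \cref{cor:boundary_unique} (whose proof already contains the \cref{prop:sum_power} argument); and the cocycle identity from associativity. The composition step you spell out---applying the boundary relation for $h$, commuting the row of $O_g$ past the bond MPOs, and applying the relation for $g$---is precisely what the paper leaves implicit in its one-line citation of \cref{cor:boundary_unique}.
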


\begin{proof}
  From \cref{prop:boundary_mpo}, the existence of $X_g$ and $Y_g$ is clear. From \cref{cor:boundary_unique}, it is also true that $V_n(X_g) V_n(X_h) = \lambda^n(g,h) V_n(X_g X_h)$. Due to associativity, $\lambda(g,h)\lambda(gh,k) = \lambda(g,hk)\lambda(h,k)$, and thus $\lambda$ forms a two-cocycle.
\end{proof}

Note that if we allow for blocking, there is a length scale $K$ for which $\lambda^{Kn}(g,h)$ becomes constant 1. On the other hand, the labeling with an element from the third cohomology group $H^3(G,\mathbb{C})$ corresponding to the  $g\mapsto X_g$  one-block projective MPO representation of $G$ is a scale-invariant labeling.

In the following, we show that the classification of the boundary MPO representation $V_n(X_g)$ also shows up in the MPO defined by $O_g$. To see this, we define a translationally invariant (on four sites) MPO from $O_g$ that we call $V_n(\tilde{O}_g)$. Write $O_g$ as an MPO in \cref{eq:O_mpo_decomp}, and open one of the indices. We call this tensor $\tilde{O}_g$. Pictorially,
\begin{equation}
  O_g = 
  \begin{tikzpicture}[baseline=0.4cm]
    \draw[thick] (0,0) rectangle (1,1);
    \foreach \x in {(0,0),(0,1),(1,0),(1,1)}{
      \draw[thick,fill=white] \x circle (0.1);
    }
  \end{tikzpicture} \quad
  \Rightarrow  \quad 
  \tilde{O}_g = 
  \begin{tikzpicture}[baseline=0.4cm]
  \draw[thick] (0,0)--(0,1) -- (1,1) -- (1,0);
  \draw[thick] (0,0)--(0.3,0) -- (0.3,-0.3) -- (-0.3,-0.3);
  \draw[thick] (1,0)--(0.7,0) -- (0.7,-0.3) -- (1.3,-0.3);
  \foreach \x in {(0,0),(0,1),(1,0),(1,1)}{
    \draw[thick,fill=white] \x circle (0.1);
  }
  \end{tikzpicture}
\end{equation}
This MPO plays an important role in the third cohomology labeling of semi-injective PEPS.

\begin{prop}\label{prop:spt_onsite_mpo}
  The MPOs $V_n(\tilde{O}_g)$ form a one-block projective MPO representation of $G$. Its third cohomology label is the same as that of $V_n(X_g)$.
\end{prop}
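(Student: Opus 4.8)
The plan is to use the explicit dictionary between the bulk operator $\tilde O_g$ and the boundary tensors $X_g,Y_g$ that is already contained in the proof of \cref{prop:o_twolayer}. Recall that the four-site MPO of $O_g$ in \cref{eq:O_mpo_decomp} was obtained from a minimal-rank decomposition of $O_g$, and that \cref{eq:boundary_mpo_tensor_def}, specialized to the symmetric case $\phi_A=\phi_B=\phi$ and $O=O_g$, states that contracting the local tensor of $\tilde O_g$ with the (fixed, $g$-independent) Schmidt vectors of $\phi$ reproduces $\mu(g)X_g$ on the upper boundary and $Y_g$ on the lower boundary. Since the one-particle reduced densities of $\phi$ are full rank, these Schmidt vectors are injective, so this is an intertwining relation: up to an invertible, $g$-independent dressing acting only on the physical legs, the MPO tensor $\tilde O_g$ carries exactly the boundary data $X_g$ (top) and $Y_g$ (bottom).

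First I would prove that $g\mapsto V_n(\tilde O_g)$ is a one-block projective MPO representation. One-blockness is inherited through the intertwiner: $X_g$ and $Y_g$ are normal by \cref{prop:boundary_mpo}, so their combination is normal by \cref{prop:injective_tensor_prod}, and the $g$-independent dressing is invertible on its support, whence $\tilde O_g$ reduces to a normal MPO after blocking. For the composition law I would use that $\{O_g\}$ is a genuine representation, so $V_1(\tilde O_g)V_1(\tilde O_h)=O_gO_h=O_{gh}=V_1(\tilde O_{gh})$; since $V_n(\tilde O_g)V_n(\tilde O_h)$ and $V_n(\tilde O_{gh})$ are both one-block and represent the same operator, the uniqueness of the boundary gauge (\cref{cor:boundary_unique}, via \cref{prop:can_form_normal_mps}) forces $V_n(\tilde O_g)V_n(\tilde O_h)=\lambda^n(g,h)V_n(\tilde O_{gh})$ for a scalar $\lambda(g,h)$, and associativity makes $\lambda$ a two-cocycle.

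To match the cohomology labels I would compute the three-cocycle of $\{\tilde O_g\}$ from reductions exactly as in \cref{sec:cohomology_label}, and transport it to $\{X_g\}$ along the intertwiner. Fixing reductions $V(g,h),W(g,h)$ of $X_gX_h\to X_{gh}$ and pushing them through the $g$-independent Schmidt dressing produces reductions of $\tilde O_g\tilde O_h\to\tilde O_{gh}$; because the dressing is $g$-independent it cancels in the ratios that define the associator, so that, once the lower-layer contribution is correctly accounted for, the two three-cocycles $\lambda(g,h,k)$ agree. By \cref{thm:uniqueness} the associator is independent of the choice of reductions up to a three-coboundary, so the resulting classes coincide in $H^3(G,\mathbb C^*)$.

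The main obstacle is exactly this lower-layer bookkeeping: the intertwiner ties $\tilde O_g$ to both boundary layers at once, so one must control $Y_g=X_g^{-1}$ and show it does not alter the label carried by the upper layer $X_g$. The delicate point is that the lower edge of a horizontal strip is traversed with the opposite orientation, so $V_n(Y_g)$ enters with reversed group ordering; reconciling this with the upper layer so that the net associator is $\lambda(g,h,k)$ rather than its inverse (or a square) is precisely what \cref{prop:alternating_labeling}, together with the reduction-comparison statements \cref{thm:uniqueness} and \cref{prop:alternating_reduction}, is designed to handle. Verifying this compatibility is where the real work lies; the remaining steps are bookkeeping within the reduction calculus already established.
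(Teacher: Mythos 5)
Your starting point---the intertwining relation \cref{eq:boundary_mpo_tensor_def} between $\tilde O_g$ and the boundary tensors---is also the paper's key tool, but both places where you lean on it have genuine gaps. First, the composition law. Your argument is: $V_1(\tilde O_g)V_1(\tilde O_h)=O_gO_h=O_{gh}=V_1(\tilde O_{gh})$, plus one-blockness, plus \cref{cor:boundary_unique}/\cref{prop:can_form_normal_mps}, forces $V_n(\tilde O_g)V_n(\tilde O_h)=\lambda^n(g,h)V_n(\tilde O_{gh})$ for all $n$. This does not follow: \cref{prop:can_form_normal_mps} requires agreement at some size $n\geq 2L+1$, and agreement of two normal MPOs at the single size $n=1$ (i.e.\ equality of traces of single tensors) implies nothing about larger sizes. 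Nor can the intertwiner rescue this: it only controls the action of $V_n(\tilde O_g)$ on the \emph{image} of the (injective but not surjective) Schmidt-vector MPS, which does not determine the MPO as an operator. The paper's actual argument is two-dimensional and does not factor through the boundary at all: it shows that the product of the Schmidt vectors of $O_g$, $O_h$ and $O_{(gh)^{-1}}$ factorizes---because (as in \cref{eq:schmidt_then_inverse}) one side of the resulting equation is a product w.r.t.\ the vertical cut, the other w.r.t.\ the horizontal cut, and the one-particle reduced densities of $\ket\phi$ are full rank---whence the two-site block of the stacked MPO $\tilde O_g\tilde O_h\tilde O_{(gh)^{-1}}$ is the identity with invertible end caps, giving $V_n(\tilde O_g)V_n(\tilde O_h)V_n(\tilde O_{(gh)^{-1}})=\id$ exactly (an honest, not merely projective, representation).

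Second, the label matching. You correctly identify the lower layer $Y_g$ (with its reversed orientation/group ordering) as the crux, but you then leave it unresolved, asserting that \cref{prop:alternating_labeling}, \cref{thm:uniqueness} and \cref{prop:alternating_reduction} ``are designed to handle'' it. They are not: \cref{prop:alternating_labeling} compares the two blockings $X_gY_g$ versus $Y_gX_g$ of a single chain and is used by the paper only at the very end, to show the \emph{horizontal} and \emph{vertical} boundary labels agree; it says nothing about the upper-versus-lower layer problem inside one boundary relation. The paper's resolution is a specific device you are missing: contract \emph{only the lower indices} of the boundary tensors (\cref{eq:O_coh_tensor}). This closes the $Y_g$-layer bonds site by site, so the lower layer degenerates into fixed single-site dressings and drops out of the bond structure entirely; what remains is an intertwining in which the bond space of $V_n(\tilde O_g)$ is identified with that of $V_n(X_g)$. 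Consequently any reduction $V(g,h),W(g,h)$ of $V_n(\tilde O_g)V_n(\tilde O_h)$ to $V_n(\tilde O_{gh})$ is \emph{literally the same matrices} as a reduction of $V_n(X_g)V_n(X_h)$ to $V_n(X_{gh})$, so the two constructions produce the same three-cocycle---no reconciliation of orientations is ever needed. Without this (or an equivalent) device, the step you defer as ``bookkeeping'' is exactly the missing proof.
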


\begin{proof}
  As the product of the Schmidt vectors of $O_g$ and $O_g^{-1}$ factorizes, the tensor $\sum_j \tilde{O}_g^{ij} \tilde{O}_{g^{-1}}^{jk}$ has the following structure:
  \begin{equation}
    \begin{tikzpicture}[scale=0.5]
    \foreach \x in {0,1,2,3}{
      \pic at (\x,0) {mpotensor=gray};
      \pic at (\x,1) {mpotensor};
    }
    \end{tikzpicture}
    =
    \begin{tikzpicture}[scale=0.5]
    \draw[thick] (-0.5,0)--(3.5,0);
    \draw[thick] (-0.5,1)--(3.5,1);
    \fill[white] (-0.2,-0.2) rectangle (3.2,1.2);
    \foreach \x in {0,1,2,3} \draw[thick] (\x,-0.5)--(\x,1.5);
    \draw[thick, fill=white, rounded corners=3pt] (-0.2,-0.2) rectangle (0.2,1.2);
    \draw[thick, fill=white, rounded corners=3pt] (2.8,-0.2) rectangle (3.2,1.2);
    \end{tikzpicture},
  \end{equation}
  with 
  \begin{equation}
    \begin{tikzpicture}[scale=0.5]
      \draw[thick] (0,0)--(1,0);
      \draw[thick] (0,1)--(1,1);
      \foreach \x in {0,1} \draw[thick] (\x,-0.5)--(\x,1.5);
      \draw[thick, fill=white, rounded corners=3pt] (-0.2,-0.2) rectangle (0.2,1.2);
      \draw[thick, fill=white, rounded corners=3pt] (0.8,-0.2) rectangle (1.2,1.2);
    \end{tikzpicture}=
    \begin{tikzpicture}[scale=0.5]
      \foreach \x in {0,1} \draw[thick] (\x,-0.5)--(\x,1.5);
    \end{tikzpicture}\ .
  \end{equation}
  Therefore, $V_n(\tilde{O}_g) V_n(\tilde{O}_{g^{-1}})=\id$, as it is the $n$-fold product of this tensor. 
  
  We prove now that $V_n(\tilde{O}_g)V_n(\tilde{O}_h)V_n(\tilde{O}_{(gh)^{-1}}) = \id$, and thus $V_n(\tilde{O}_g)V_n(\tilde{O}_h)=V_n(\tilde{O}_{gh})$.
  
  Consider the MPS tensor defined by the Schmidt vectors of $O_g$, $O_h$ and then $O_{(gh)^{-1}}$ acting on the Schmidt vectors of $\ket{\phi}$. Then, similar to \cref{eq:schmidt_then_inverse}, this tensor can be written as 
  \begin{equation}
    \begin{tikzpicture}
      \pic[blue,rotate=90] at (0.5,0.5+\d) {halfstate};
      \pic[blue,rotate=-90] at (0.5,-0.5-\d) {halfstate};
      \draw[red] (\d-\dd,-\d-\dd) rectangle (\d+\dd,+\d+\dd); 
      \draw[red] (1-\d-\dd,-\d-\dd) rectangle (1-\d+\dd,+\d+\dd); 
      \draw[red] (\d-\dd,0.15) --+(-0.15-\dd,0);
      \draw[red] (1-\d+\dd,0.15) --+(0.15+\dd,0);
      \draw[red] (\d-\dd,-0.15) --+(-0.15-\dd,0);
      \draw[red] (1-\d+\dd,-0.15) --+(0.15+\dd,0);
      \draw[green] (\d-2*\dd,-\d-2*\dd) rectangle (\d+2*\dd,+\d+2*\dd); 
      \draw[green] (1-\d-2*\dd,-\d-2*\dd) rectangle (1-\d+2*\dd,+\d+2*\dd); 
      \draw[green] (\d-2*\dd,0.2) --+(-0.2,0);
      \draw[green] (1-\d+2*\dd,0.2) --+(0.2,0);
      \draw[green] (\d-2*\dd,-0.2) --+(-0.2,0);
      \draw[green] (1-\d+2*\dd,-0.2) --+(0.2,0);
      \draw[red,dashed] (\d-3*\dd,-\d-3*\dd) rectangle (\d+3*\dd,+\d+3*\dd); 
      \draw[red,dashed] (1-\d-3*\dd,-\d-3*\dd) rectangle (1-\d+3*\dd,+\d+3*\dd); 
      \draw[red] (\d-3*\dd,0.25) --+(-0.2,0);
      \draw[red] (1-\d+3*\dd,0.25) --+(0.2,0);
      \draw[red] (\d-3*\dd,-0.25) --+(-0.2,0);
      \draw[red] (1-\d+3*\dd,-0.25) --+(0.2,0);
    \end{tikzpicture} \ = \
    \begin{tikzpicture}[blue,font=\scriptsize]
      \foreach \x in {0}{
        \pic[rotate=90] at (\x,0.5+\d) {halfstate};
        \pic[rotate=-90] at (\x,-0.5-\d) {halfstate};
        \pic[anchor = south west, pic text =$Y_{gh}$] at (\x,0.8) {boundarympo};
        \pic[anchor = north west, pic text =$X_{gh}$] at (\x,-0.8) {boundarympo};
        \pic[anchor = south west, pic text =$X_{h}$] at (\x,1.3) {boundarympo};
        \pic[anchor = north west, pic text =$Y_{h}$] at (\x,-1.3) {boundarympo};
        \pic[anchor = south west, pic text =$X_{g}$] at (\x,1.8) {boundarympo};
        \pic[anchor = north west, pic text =$Y_{g}$] at (\x,-1.8) {boundarympo};
      }
    \end{tikzpicture} \ ,     
  \end{equation}  
  where the red solid rectangle denotes the Schmidt vectors of $O_g$, the green one that of $O_h$, and the dashed one that of $O_{(gh)^{-1}}$. Joining two such tensors, the middle operator is $O_g O_h O_{(gh)^{-1}} = \id$, so
  \begin{equation}
    \begin{tikzpicture}
      \foreach \x in {0,1}{
        \pic[blue,rotate=90] at (\x+0.5,0.5+\d) {halfstate};
        \pic[blue,rotate=-90] at (\x+0.5,-0.5-\d) {halfstate};
      }
      \draw[red] (\d-\dd,-\d-\dd) rectangle (\d+\dd,+\d+\dd); 
      \draw[red] (2-\d-\dd,-\d-\dd) rectangle (2-\d+\dd,+\d+\dd); 
      \draw[red] (\d-\dd,0.15) --+(-0.15-\dd,0);
      \draw[red] (2-\d+\dd,0.15) --+(0.15+\dd,0);
      \draw[red] (\d-\dd,-0.15) --+(-0.15-\dd,0);
      \draw[red] (2-\d+\dd,-0.15) --+(0.15+\dd,0);
      \draw[green] (\d-2*\dd,-\d-2*\dd) rectangle (\d+2*\dd,+\d+2*\dd); 
      \draw[green] (2-\d-2*\dd,-\d-2*\dd) rectangle (2-\d+2*\dd,+\d+2*\dd); 
      \draw[green] (\d-2*\dd,0.2) --+(-0.2,0);
      \draw[green] (2-\d+2*\dd,0.2) --+(0.2,0);
      \draw[green] (\d-2*\dd,-0.2) --+(-0.2,0);
      \draw[green] (2-\d+2*\dd,-0.2) --+(0.2,0);
      \draw[red,dashed] (\d-3*\dd,-\d-3*\dd) rectangle (\d+3*\dd,+\d+3*\dd); 
      \draw[red,dashed] (2-\d-3*\dd,-\d-3*\dd) rectangle (2-\d+3*\dd,+\d+3*\dd); 
      \draw[red] (\d-3*\dd,0.25) --+(-0.2,0);
      \draw[red] (2-\d+3*\dd,0.25) --+(0.2,0);
      \draw[red] (\d-3*\dd,-0.25) --+(-0.2,0);
      \draw[red] (2-\d+3*\dd,-0.25) --+(0.2,0);
    \end{tikzpicture} \
    =
    \begin{tikzpicture}[blue,font=\scriptsize]
      \foreach \x in {0,1}{
        \pic[rotate=90] at (\x,0.5+\d) {halfstate};
        \pic[rotate=-90] at (\x,-0.5-\d) {halfstate};
        \pic[anchor = south west, pic text =$Y_{gh}$] at (\x,0.8) {boundarympo};
        \pic[anchor = north west, pic text =$X_{gh}$] at (\x,-0.8) {boundarympo};
        \pic[anchor = south west, pic text =$X_{h}$] at (\x,1.3) {boundarympo};
        \pic[anchor = north west, pic text =$Y_{h}$] at (\x,-1.3) {boundarympo};
        \pic[anchor = south west, pic text =$X_{g}$] at (\x,1.8) {boundarympo};
        \pic[anchor = north west, pic text =$Y_{g}$] at (\x,-1.8) {boundarympo};
      }
    \end{tikzpicture} \ .     
  \end{equation}  
  
  As the l.h.s.\ factorizes w.r.t.\ the vertical cut, and the r.h.s.\ factorizes w.r.t.\ the horizontal cut, and the one particle reduced densities of $\ket\phi$ are full rank, the product of the Schmidt vectors of $O_g$, $O_h$ and $O_{(gh)^{-1}}$ also factorizes, and thus
  \begin{equation}
    \begin{tikzpicture}[scale=0.5]
    \foreach \x in {0,1,2,3}{
      \pic at (\x,0) {mpotensor=gray};
      \pic at (\x,1) {mpotensor};
      \pic at (\x,2) {mpotensor};
    }
    \end{tikzpicture}
    =
    \begin{tikzpicture}[scale=0.5]
    \draw[thick] (-0.5,0)--(3.5,0);
    \draw[thick] (-0.5,1)--(3.5,1);
    \draw[thick] (-0.5,2)--(3.5,2);
    \fill[white] (-0.2,-0.2) rectangle (3.2,2.2);
    \foreach \x in {0,1,2,3} \draw[thick] (\x,-0.5)--(\x,2.5);
    \draw[thick, fill=white, rounded corners=3pt] (-0.2,-0.2) rectangle (0.2,2.2);
    \draw[thick, fill=white, rounded corners=3pt] (2.8,-0.2) rectangle (3.2,2.2);
    \end{tikzpicture},
  \end{equation}
  with 
  \begin{equation}
    \begin{tikzpicture}[scale=0.5]
      \draw[thick] (0,0)--(1,0);
      \draw[thick] (0,1)--(1,1);
      \draw[thick] (0,2)--(1,2);
      \foreach \x in {0,1} \draw[thick] (\x,-0.5)--(\x,2.5);
      \draw[thick, fill=white, rounded corners=3pt] (-0.2,-0.2) rectangle (0.2,2.2);
      \draw[thick, fill=white, rounded corners=3pt] (0.8,-0.2) rectangle (1.2,2.2);
    \end{tikzpicture}=
    \begin{tikzpicture}[scale=0.5]
      \foreach \x in {0,1} \draw[thick] (\x,-0.5)--(\x,2.5);
    \end{tikzpicture}\ .
  \end{equation}

Therefore  $V_n(\tilde{O}_g)V_n(\tilde{O}_h)V_n(\tilde{O}_{(gh)^{-1}}) = \id$, as it is the $n$-fold product of this tensor. This means that $V_n(\tilde{O}_g)$ is an MPO representation.

As an MPO representation is also a one-block projective MPO representation, one can label this MPO representation with an element from the third cohomology group $H^3(G,\mathbb{C}^*)$. We now show that this label coincides with that of the projective MPO representation of $G$ on the boundary. To see this, partially contract the MPS tensors describing the boundary of the state (defined in \cref{eq:boundary_mpo_tensor_def}). That is, contract only the lower indices:
  \begin{equation}\label{eq:O_coh_tensor}
    \begin{tikzpicture}
      \pic[blue,rotate=90] at (0.5,0.5+\d) {halfstate};
      \pic[blue,rotate=-90] at (0.5,-0.5-\d) {halfstate};
      \draw[red] (\d-\dd,-\d-\dd) rectangle (\d+\dd,+\d+\dd); 
      \draw[red] (1-\d-\dd,-\d-\dd) rectangle (1-\d+\dd,+\d+\dd); 
      \draw[red] (\d-\dd,0.15) --+(-0.15,0);
      \draw[red] (1-\d+\dd,0.15) --+(0.15,0);
      \draw[red] (\d-\dd,-0.15) --+(-0.15,0);
      \draw[red] (1-\d+\dd,-0.15) --+(0.15,0);
      \draw[red] (1-\d+\dd+0.15,-0.15)--(1-\d+\dd+0.15,-1)--(\d-\dd-0.15,-1)--(\d-\dd-0.15,-0.15);
    \end{tikzpicture} \
    =\ \mu \
    \begin{tikzpicture}[blue,font=\scriptsize]
      \pic[rotate=90] at (0,0.5+\d) {halfstate};
      \pic[rotate=-90] at (0,-0.5-\d) {halfstate};
      \pic[anchor = south west, pic text =$X_g$] at (0,0.8) {boundarympo};
      \pic[anchor = north west, pic text =$Y_g$] at (0,-0.8) {boundarympo};
      \draw[black] (0.5,-0.8)--(0.5,-1.3)--(-0.5,-1.3)--(-0.5,-0.8);
    \end{tikzpicture} \ .
  \end{equation}
Notice that the red MPO tensor acting on the l.h.s.\ is exactly $\tilde{O}_g$. After contracting these tensors, \cref{eq:O_coh_tensor} reads
\begin{equation}
\begin{tikzpicture}[scale=0.7,font=\scriptsize]
\def \c {purple}
\def \cc {orange}

\draw[thick,red] (0,0.2)--++(-0.3,0);
\foreach \x in {0,1.4,2.8}{
  \draw[thick,red] (\x, 0.2)--++(0,-0.4);
  \draw[thick,red] (\x+1, 0.2)--++(0,-0.4);   
  \draw[thick,red] (\x+1, 0.2)--++(0.3,0);  
  \draw[thick,red] (\x,-0.26)--++(1,0); 
  \foreach \y in {(\x,0.2),(\x,-0.2),(\x+1,0.2),(\x+1,-0.2)} 
  \fill[white] \y circle (0.14);  
  \draw[thick,blue] (\x,0.7)--(\x,0.2)--(\x+1,0.2)--(\x+1,0.7); 
  \draw[thick,blue] (\x,-0.7)--(\x,-0.2)--(\x+1,-0.2)--(\x+1,-0.7);
  \foreach \y in {(\x,0.2),(\x,-0.2),(\x+1,0.2),(\x+1,-0.2)} 
  \draw[thick,red] \y circle (0.14);
  \foreach \y in {(\x,0.2),(\x,-0.2),(\x+1,0.2),(\x+1,-0.2)} 
  \filldraw \y circle (0.07);
  \draw[rounded corners=3pt,blue] (\x-0.1,0.7) rectangle (\x+1.1,0.9);
  \draw[rounded corners=3pt,blue] (\x-0.1,-0.7) rectangle (\x+1.1,-0.9);  
  \draw (\x+0.5,0.9)--++(0,0.2);
  \draw (\x+0.5,-0.9)--++(0,-0.2);
}
  \node at (4.4,0) {$\dots$};
  \node at (-0.6,0) {$\dots$};
\end{tikzpicture}
=
\begin{tikzpicture}[scale=0.7,font=\scriptsize]
\def \c {yellow}
\def \cc {orange}
\foreach \x in {0, 1.4, 2.8}{
  \draw[thick,blue] (\x,0.7)--(\x,0.2)--(\x+1,0.2)--(\x+1,0.7); 
  \draw[thick,blue] (\x,-0.7)--(\x,-0.2)--(\x+1,-0.2)--(\x+1,-0.7); 
  \foreach \y in {(\x,0.2),(\x,-0.2),(\x+1,0.2),(\x+1,-0.2)} \filldraw \y circle (0.1);
  \draw[rounded corners=3pt,blue] (\x-0.1,0.7) rectangle (\x+1.1,0.9);
  \draw[rounded corners=3pt,blue] (\x-0.1,-0.7) rectangle (\x+1.1,-0.9);  
  \draw[thick,blue] (\x+0.5,0.9)--++(0,0.6);      
  \draw[thick,blue] (\x+0.5,-0.9)--++(0,-0.6);
  \draw (\x-0.2,1.2)--(\x+1.2,1.2);      
  \draw (\x+0.2,-1.2)--(\x+0.8,-1.2) -- (\x+0.8,-1.4)--(\x+0.2,-1.4)--cycle; 
  \draw[fill=white] (\x+0.5,1.2) circle (0.1);
  \draw[fill=gray] (\x+0.5,-1.2) circle (0.1);
  
}
  \node at (4.4,0) {$\dots$};
  \node at (-0.6,0) {$\dots$};
\end{tikzpicture}.
\end{equation}
By construction, the red MPO appearing on the l.h.s.\ is $V_n(\tilde{O}_g)$. Therefore, if $V(g,h)$,$W(g,h)$ is a reduction from $V_n(\tilde{O}_g)V_n(\tilde{O}_h)$ to $V_n(\tilde{O}_{gh})$, then it is also a reduction from $V_n(X_g) V_n(X_h)$ to $V_n(X_{gh})$. As the third cohomology is assigned to the MPO representation with the help of these reductions, $V_n(O_g)$ is classified by the same third cohomology class as $V_n(X_g)$.
\end{proof}

The above proof can be repeated for the vertical boundary instead of the horizontal one. This means that the third cohomology label of the vertical boundary is the same as that of $V_n(\tilde{O}'_g)$, where $\tilde{O}'_g = D_g A_g B_g C_g $, if $O_g = A_g B_g C_g D_g$. \cref{prop:alternating_labeling} implies that the third cohomology labeling of $V_n(\tilde{O}'_g)$ and $V_n(\tilde{O}_g)$ coincide, therefore the third cohomology labeling of the horizontal and vertical boundary coincide.

\section{Conclusion}

In this work we introduced a new class of PEPS, semi-injective PEPS. We showed that semi-injective PEPS are a generalization of injective PEPS and that some important examples that are not known to have an injective PEPS description  naturally admit a semi-injective PEPS description. We showed that they are unique ground state of their parent Hamiltonian. We also derived a canonical form, i.e., a way to decide locally if two semi-injective PEPS are equal. One of the necessary conditions is that the boundaries of the two states are related by an invertible MPO. Using this result, the third cohomology labeling of SPT phases extends naturally to semi-injective PEPS, suggesting that these states are appropriate to capture the relevant physics of SPT phases. Using the canonical form, we have found that the third cohomology label of the SPT phase is not only encoded on the edge of the model, but also directly in the symmetry operators. 
 
\section{Acknowledgements}
 The authors would like to thank  Barbara Kraus, Frank Verstraete, David Pérez-García and Vedran Dunjko for helpful discussions.
This project has been supported by the European Union through the ERC
Starting Grant WASCOSYS (No.~636201).

\appendix
\crefalias{section}{appsec}

\section{Examples for canonical form}\label{app:canoniclaform_examples}

In the injective PEPS case, if two tensors generate the same state, then they are related by a product gauge transformation. In the case of semi-injective PEPS, this is no longer true as the following example shows.

Let $A$ be the following MPS tensor:
\begin{eqnarray}
A^0 & = & \left(\begin{matrix}1 & 0 \\ 0 & 2 \end{matrix}\right) \\
A^1 & = & \left(\begin{matrix}24 & -10 \\ 17 & -3 \end{matrix}\right). 
\end{eqnarray}
This tensor was constructed in such a way that it is $Z$ symmetric for size $4$, but not for longer chains: for the tensor $B$ with $B^0 = A^0$ and $B^1 = -A^1$, $V_4(B) = V_4(A)$, but $V_5(A)\neq V_5(B)$. The tensors $A$ and $B$ are also normal, after blocking two tensors they become injective. \Cref{prop:can_form_normal_mps} also implies that $A$ and $B$ are not related by a gauge transform. There is also no gauge relating the tensors after blocking four of them: $\nexists X: \ XBBBBX^{-1}=AAAA$.

Consider two semi-injective PEPS. Let $\Psi_A$ be defined by $\phi_A=V_4(A)$ and $\id$, $\Psi_B$ by $\phi_B=V_4(B)$ and $\id$. By construction, $\Psi_A=\Psi_B$. We will show, however, that the PEPS tensors defined by grouping four MPS tensors:
\begin{equation}
\begin{tikzpicture}
\truncstates[blue] (0,0) (1,1);
\end{tikzpicture} \quad \text{and} \quad 
\begin{tikzpicture}
\truncstates[green] (0,0) (1,1);
\end{tikzpicture}
\end{equation}
are not related by a gauge, where the blue tensors are $A$ and the green ones are $B$. We prove that by contradiction. Suppose there are such gauges, $X$ and $Y$:
\begin{equation}
\begin{tikzpicture}[font=\tiny]
  \scope
    \clip (0.3,0.3) rectangle (1.7,1.7);
    \foreach \x in {0,1} \foreach \y in {0,1} \pic[blue] at (\x,\y) {state};
    \foreach \ang in {0,90,180,270} 
    \draw[rounded corners=2pt,fill=white, rotate around={\ang:(1,1)}] (0.5,0.7) rectangle (0.6,1.3);
  \endscope
  \node at (1.5,0.3) {$Y^{-1}$};
  \node at (1.4,1.6) {$Y$};
  \node at (1.85,1.35) {$X^{-1}$};
  \node at (0.3,1.3) {$X$};
\end{tikzpicture} \ = \  
\begin{tikzpicture}
\truncstates[green] (0,0) (1,1);
\end{tikzpicture}\ .
\end{equation}
Inverting $Y$ and $Y^{-1}$, we get that 
\begin{equation}
\begin{tikzpicture}[font=\tiny]
  \scope
  \clip (0.3,0.3) rectangle (1.7,1.7);
  \foreach \x in {0,1} \foreach \y in {0,1} \pic[blue] at (\x,\y) {state};
  \foreach \ang in {0,180} 
  \draw[rounded corners=2pt,fill=white, rotate around={\ang:(1,1)}] (0.5,0.7) rectangle (0.6,1.3);
  \endscope
  \node at (1.85,1.35) {$X^{-1}$};
  \node at (0.3,1.3) {$X$};
\end{tikzpicture} \ = \  
\begin{tikzpicture}[font=\tiny]
  \scope
    \clip (0.3,0.3) rectangle (1.7,1.7);
    \foreach \x in {0,1} \foreach \y in {0,1} \pic[green] at (\x,\y) {state};
    \foreach \ang in {90,270} 
    \draw[rounded corners=2pt,fill=white, rotate around={\ang:(1,1)}] (0.5,0.7) rectangle (0.6,1.3);
  \endscope
  \node at (1.4,0.3) {$Y$};
  \node at (1.6,1.6) {$Y^{-1}$};
\end{tikzpicture}\ .
\end{equation}
Notice that the l.h.s.\ is product w.r.t.\ the vertical cut, whereas the r.h.s.\ is product w.r.t.\ the horizontal cut. As $A$ and $B$ become injective after blocking two tensors, both $X$ and $Y$ have to be product operators, and thus
\begin{equation}
\begin{tikzpicture}
\clip (0.3,0.3) rectangle (1.7,1.7);
\foreach \x in {0,1} \foreach \y in {0,1} \pic[blue] at (\x,\y) {state};
\foreach \ang in {0,90,180,270}{
\draw[fill=white, rotate around={\ang:(1,1)}] (0.55,1.15) circle (0.08cm);
\draw[fill=white, rotate around={\ang:(1,1)}] (0.55,0.85) circle (0.08cm);
} 
\end{tikzpicture} \ = \  
\begin{tikzpicture}
\truncstates[green] (0,0) (1,1);
\end{tikzpicture}\ .
\end{equation}
But this would mean that after blocking four tensors, $BBBB=XAAAAX^{-1}$ for some gauge $X$. As this is a contradiction, the two given PEPS tensors generating the same semi-injective PEPS are not related by a gauge.

\section{MPS reductions}\label{app:mps_reductions}

In this Section, we present the proofs of the theorems about reductions of MPS used in \cref{sec:spt}. 

\reductionexist*
 
\begin{proof}
  Suppose the injectivity length (see \cref{prop:normal2injective}) of $A$ is $L$. Let $\tilde{A}$ and $\tilde{B}$ denote the tensors obtained from  $A$ and $B$ by blocking them $L$ times, respectively. Then  $\tilde{A}$ has a left inverse, $\tilde{A}^{-1}$. Take the Jordan decomposition of the following matrix: 
  \begin{equation}\label{eq:BA_jordan_decomp}
    \begin{tikzpicture}[baseline=0.5cm]
      \pic[pic text=$\tilde{B}$] at (0,0) {mps};
      \pic[pic text =$\tilde{A}^{-1}$] at (0,1) {mps=180};
    \end{tikzpicture}
    = 
    \begin{tikzpicture}
      \draw (-0.5,0)--(0.9,0);
      \draw (-0.5,1)--(0.9,1);
      \draw[fill=white] (0,-0.2) rectangle (0.4,1.2);
      \node[font=\tiny] at (0.2,0.5) {$S$};
    \end{tikzpicture}
     + 
    \begin{tikzpicture}
      \draw (-0.5,0)--(0.9,0);
      \draw (-0.5,1)--(0.9,1);
      \draw[fill=white] (0,-0.2) rectangle (0.4,1.2);
      \node[font=\tiny] at (0.2,0.5) {$N$};
    \end{tikzpicture}
  \end{equation}
  where $S$ is semi-simple (diagonalizable), $N$ is nilpotent (upper triangular in the basis in which $S$ is diagonal) and $[S,N]=0$. $B$ and $A$ generate the same state, thus
  \begin{equation}
    \begin{tikzpicture}[baseline=0.5cm]
      \draw[decorate, decoration={brace,amplitude=5pt}] (-0.5,2)--(3.5,2) node[midway,anchor=south, yshift=3pt] {$n$}; 
      \node at (2,0) {$\dots$};
      \node at (2,1) {$\dots$};
      \foreach \x in {0,1,3}{
        \pic[pic text=$\tilde{B}$] at (\x, 0) {mps};
        \pic[pic text=$\tilde{A}^{-1}$] at (\x, 1) {mps=180};
      }
      \draw (-0.5,0)--(-0.5,-0.5)--(3.5,-0.5)--(3.5,0);
      \draw (-0.5,1)--(-0.5,1.75)--(3.5,1.75)--(3.5,1);
    \end{tikzpicture}
    =
    \begin{tikzpicture}[baseline=0.5cm]
      \draw[decorate, decoration={brace,amplitude=5pt}] (-0.5,2)--(3.5,2) node[midway,anchor=south, yshift=3pt] {$n$}; 
      \node at (2,0) {$\dots$};
      \node at (2,1) {$\dots$};
      \foreach \x in {0,1,3}{
        \pic[pic text=$\tilde{A}$] at (\x, 0) {mps};
        \pic[pic text=$\tilde{A}^{-1}$] at (\x, 1) {mps=180};
      }
      \draw (-0.5,0)--(-0.5,-0.5)--(3.5,-0.5)--(3.5,0);
      \draw (-0.5,1)--(-0.5,1.75)--(3.5,1.75)--(3.5,1);
    \end{tikzpicture} \ .
  \end{equation}
  The r.h.s\ is $D^n$, where $D$ is the bond dimension of $A$, as it is $n$ times the trace of $\id$. Using the Jordan decomposition \cref{eq:BA_jordan_decomp}, the l.h.s.\ is $\tr (S+N)^n = \tr S^n$. Therefore $\tr S^n=D^n$, thus \cref{prop:sum_power} implies that the rank of $S$ is 1. $[S,N]=0$ therefore implies that $SN=NS=0$. Thus, $(S+N)^n = S^n +N^n = S^n$ if $n$ is larger than the nilpotency rank of $N$. Then,  as $A$ and $B$ generate the same state, for all $n$ and $m$, 
  \begin{equation}\label{eq:exist_reduction_10}
    \begin{tikzpicture}[baseline=0]
      \draw[decorate, decoration={brace,amplitude=5pt}] (-0.5,1.6)--(2.45,1.6) node[midway,anchor=south, yshift=3pt] {$n$}; 
      \draw[decorate, decoration={brace,amplitude=5pt}] (2.55,1.6)--(4.5,1.6) node[midway,anchor=south, yshift=3pt] {$m$};    \draw (-0.5,-0.75) rectangle (4.5,0);
      \foreach \x in {3,4}{
        \pic[pic text=$B$] at (\x, 0) {mps};
      }
      \foreach \x in {0,1,2}{
        \pic[pic text=$\tilde{B}$] at (\x, 0) {mps};
        \pic[pic text=$\tilde{A}^{-1}$] at (\x, 1) {mps=180};
      }
    \end{tikzpicture}\
    =
    D^{n-1}\cdot \ 
    \begin{tikzpicture}[baseline=0]
      \draw[decorate, decoration={brace,amplitude=5pt}] (-0.5,1.6)--(1.5,1.6) node[midway,anchor=south, yshift=3pt] {$m$};   
      \pic[pic text=$A$] at (0,0) {mps};
      \pic[pic text=$A$] at (1,0) {mps};
      \draw (1.5,0)--(1.5,-0.75)--(-1.5,-0.75)--(-1.5,1)--(-2,1);
      \draw (-0.5,0)--(-1,0)--(-1,1)--(-0.5,1);
    \end{tikzpicture}\ ,
  \end{equation}
  where we have used $n-1$ times that $\tr \id =D$. As $S$ is rank one, there are matrices $V$ and $W$ such that $S$ can be written as
  \begin{equation}
    \begin{tikzpicture}
      \draw (-0.5,0)--(0.9,0);
      \draw (-0.5,1)--(0.9,1);
      \draw[fill=white] (0,-0.2) rectangle (0.4,1.2);
      \node[font=\tiny] at (0.2,0.5) {$S$};
    \end{tikzpicture} \ =\ 
    \begin{tikzpicture}
      \draw (-0.5,0)--(0,0);
      \draw (0.5,0)--(1,0);
      \draw (-0.5,1)--(0,1);
      \draw (0.5,1)--(1,1);
      \pic[rotate=90] at (0,0.5) {mpsgauge};
      \pic[rotate=90] at (0.5,0.5) {mpsgauge};
      \node[font=\tiny] at (0,0.5) {$W$};
      \node[font=\tiny] at (0.5,0.5) {$V$};
    \end{tikzpicture}\ .
  \end{equation}
  Therefore, as $N^n=0$, the l.h.s.\ can be rewritten as
  \begin{equation}
    \begin{tikzpicture}[baseline=0]
      \draw[decorate, decoration={brace,amplitude=5pt}] (-0.5,1.6)--(2.45,1.6) node[midway,anchor=south, yshift=3pt] {$n$}; 
      \draw[decorate, decoration={brace,amplitude=5pt}] (2.55,1.6)--(4.5,1.6) node[midway,anchor=south, yshift=3pt] {$m$};    \draw (-0.5,-0.75) rectangle (4.5,0);
      \foreach \x in {3,4}{
        \pic[pic text=$B$] at (\x, 0) {mps};
      }
      \foreach \x in {0,1,2}{
        \pic[pic text=$\tilde{B}$] at (\x, 0) {mps};
        \pic[pic text=$\tilde{A}^{-1}$] at (\x, 1) {mps=180};
      }
    \end{tikzpicture}\
    =
    D^{n-1}\cdot \ 
    \begin{tikzpicture}[baseline=0]
      \draw[decorate, decoration={brace,amplitude=5pt}] (-0.5,1.6)--(1.5,1.6) node[midway,anchor=south, yshift=3pt] {$m$};   
      \pic[pic text=$B$] at (0,0) {mps};
      \pic[pic text=$B$] at (1,0) {mps};
      \draw (1.5,0)--(1.5,-0.75)--(-1.5,-0.75)--(-1.5,1)--(-2,1);
      \draw (-0.5,0)--(-1,0)--(-1,1)--(-0.5,1);
      \pic[rotate=90] at (-1,0.5) {mpsgauge};
      \pic[rotate=90] at (-1.5,0.5) {mpsgauge};
      \node[font=\tiny] at (-1,0.5) {$V$};
      \node[font=\tiny] at (-1.5,0.5) {$W$};
    \end{tikzpicture}\ ,
  \end{equation}
  Therefore, comparing this with the r.h.s.\ of \cref{eq:exist_reduction_10}, for all $m$,
  \begin{equation}
    V B^{i_1} \dots B^{i_ m} W = A^{i_1} \dots A^{i_m}.
  \end{equation}
  For $m=0$, $VW=\id$. 
\end{proof}

\reductionnilpotent*

Before proceeding to the proof, we need the following simple statement:
\begin{lemma}\label{lem:VNW=0}
  Let $V,W$ be a reduction from $B$ to an injectie MPS tensor $A$, $N^i = B^i - W A^i V $. Then for any $m>0$,
  \begin{equation}
    V N^{i_1} N^{i_2} \dots N^{i_m} W = 0 .
  \end{equation}
\end{lemma}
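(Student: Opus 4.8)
The plan is to reduce everything to the two defining properties of a reduction: $VW=\id$ and, for every word $(i_1,\dots,i_k)$ (the empty word recovering $VW=\id$), the relation $V B^{i_1}\cdots B^{i_k} W = A^{i_1}\cdots A^{i_k}$ from \cref{thm:reductionexist}. With $N^i = B^i - W A^i V$, the base case $m=1$ is then immediate: $V N^{i} W = V B^{i} W - (VW) A^{i} (VW) = A^{i} - A^{i} = 0$. Note that injectivity of $A$ is never used in the computation itself; it only enters through the existence of the reduction.

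For general $m$ I would not attack $V N^{i_1}\cdots N^{i_m} W$ head-on, but first establish the cleaner auxiliary identity
\begin{equation}\label{eq:VN_aux}
  V N^{i_1}\cdots N^{i_m} = V B^{i_1}\cdots B^{i_m} - A^{i_1}\, V B^{i_2}\cdots B^{i_m},
\end{equation}
where for $m=1$ the trailing factor $V B^{i_2}\cdots B^{i_m}$ is read as $V$. This identity is the heart of the argument: it shows that multiplying a whole string of ``error terms'' $N^i$ by $V$ on the left collapses it into a single telescoping correction.

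I would prove \eqref{eq:VN_aux} by induction on $m$. The base case is the computation $V N^{i_1} = V B^{i_1} - A^{i_1} V$, using $VW=\id$. For the inductive step I multiply the identity at length $m$ on the right by $N^{i_{m+1}} = B^{i_{m+1}} - W A^{i_{m+1}} V$ and expand the four resulting terms; each time a block of the form $V B^{i_1}\cdots B^{i_k} W$ or $V B^{i_2}\cdots B^{i_k} W$ appears, it is rewritten as $A^{i_1}\cdots A^{i_k}$ (resp.\ $A^{i_2}\cdots A^{i_k}$) via the reduction property, and the two terms carrying a trailing factor $A^{i_1}\cdots A^{i_{m+1}} V$ cancel, leaving exactly the claimed form at length $m+1$.

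With \eqref{eq:VN_aux} in hand, the lemma follows by multiplying on the right by $W$ and applying the reduction property once more, at lengths $m$ and $m-1$:
\[
  V N^{i_1}\cdots N^{i_m} W = A^{i_1}\cdots A^{i_m} - A^{i_1}\bigl(A^{i_2}\cdots A^{i_m}\bigr) = 0.
\]
The only genuine subtlety is guessing the correct auxiliary identity so that the telescoping cancellation occurs; once \eqref{eq:VN_aux} is in place, every step is a mechanical application of $VW=\id$ and the reduction property. The main obstacle is therefore conceptual (spotting the telescoping structure) rather than technical, and the calculation is otherwise entirely routine.
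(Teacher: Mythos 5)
Your proof is correct. It rests on the same two ingredients as the paper's --- $VW=\id$ and the reduction property of \cref{thm:reductionexist} --- but the inductive structure is genuinely different. The paper performs a strong induction on $m$ directly on the sandwiched quantity $VN^{i_1}\cdots N^{i_m}W$: writing $N^{i_1}=B^{i_1}-WA^{i_1}V$, the cross term contains the shorter product $VN^{i_2}\cdots N^{i_m}W$, which vanishes by the induction hypothesis, so $N^{i_1}$ may be replaced by $B^{i_1}$; repeating this for $N^{i_2},\dots,N^{i_{m-1}}$ and then expanding the last factor $N^{i_m}$ leaves $VB^{i_1}\cdots B^{i_m}W - VB^{i_1}\cdots B^{i_{m-1}}W\,A^{i_m}VW$, which cancels by the reduction property. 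You instead run a simple induction on the \emph{strengthened} claim $VN^{i_1}\cdots N^{i_m} = VB^{i_1}\cdots B^{i_m} - A^{i_1}VB^{i_2}\cdots B^{i_m}$, which carries no closing $W$, and only append $W$ at the very end; your inductive step checks out, since the two terms ending in $A^{i_1}\cdots A^{i_{m+1}}V$ cancel exactly as you say. What your route buys: iterating your auxiliary identity yields $VB^{i_1}\cdots B^{i_n}=\sum_{l=0}^{n}A^{i_1}\cdots A^{i_l}\,VN^{i_{l+1}}\cdots N^{i_n}$, which is precisely the second expansion in \cref{lem:B_expand} before the nilpotency cutoff is applied; in the paper that expansion is derived afterwards \emph{from} \cref{lem:VNW=0}, so your single induction effectively delivers both statements at once, at the price of having to guess the right auxiliary identity. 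You are also right that injectivity of $A$ plays no role in either argument beyond guaranteeing, via \cref{thm:reductionexist}, that a reduction exists.
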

 
\begin{proof}
  We prove this by induction on $m$. For $m=1$, 
  \begin{equation}
    V N^i W  = V B^i W - VWA^i VW = A^i - A^i = 0.
  \end{equation}
  Suppose the statement is true for all $n<m$. Then, writing $N^{i_1} = B^{i_1} - W A^{i_1} V$ and using the induction hypothesis,
  \begin{equation}
    V N^{i_1} N^{i_2} \dots  N^{i_{m}} W =     V B^{i_1} N^{i_2} \dots  N^{i_{m}} W 
  \end{equation}
  Similarly, $N^{i_2}, \dots, N^{i_{m-1}}$ can be changed to $B^{i_2}, \dots, B^{i_{m-1}}$:
  \begin{equation}
    V N^{i_1} N^{i_2} \dots  N^{i_{m}} W =     V B^{i_1} \dots B^{i_{m-1}}   N^{i_{m}} W. 
  \end{equation}
  Writing now $N^{i_{m}} = B^{i_{m}} - W A^{i_{m}} V$, we arrive to 
  \begin{equation}
    V N^{i_1}  \dots  N^{i_{m}} W =  V B^{i_1} \dots   B^{i_{m}} W -    V B^{i_1} \dots B^{i_{m-1}}   W A^{i_{m}} VW =0. 
  \end{equation}  
\end{proof}
 
\begin{proof}[Proof of \cref{prop:reduction_nilpotent}]
  $B$ and $A$ generate the same state:
  \begin{equation}
     \tr\left\{ B^{i_1} B^{i_2} \dots B^{i_{n}}\right\} = \tr\left\{ A^{i_1} A^{i_2} \dots A^{i_{n}}\right\} .
  \end{equation}
  Write $B^i = W A^i V + N^i$ and expand the product on the l.h.s. As $V$ and $W$ form a  reduction, $V N^{i_1} \dots N^{i_m} W =0$ for any $m>0$ and all $i_1,\dots i_m$ by \cref{lem:VNW=0}, and thus all terms cancel except the products of $A$ and the products of $N$. Therefore
  \begin{equation}
    \tr\left\{ N^{i_1} N^{i_2} \dots N^{i_{n}}\right\} =0.
  \end{equation} 
  This means that $\tr\{Z\}=0$ for every element $Z$ in the algebra generated by $N^i$. Thus, in particular, for every $n>0$, $\tr\{Z^{n}\} =0$. Therefore the algebra generated by $N^i$ is a nil algebra, and thus nilpotent\cite{nagata1952}. That is,
  \begin{equation}
    N^{i_1} N^{i_2} \dots N^{i_{n}}=0
  \end{equation}
  for large enough $n$.
\end{proof}


\reductionrelate*

Before proceeding to the proof, we need the following calculation that we use repeatedly:
\begin{lemma}\label{lem:B_expand}
  Let $V,W$ be a reduction from $B$ to a normal tensor $A$, $N^i = B^i - WA^iV$. Let $N_0$ be the nilpotency length of the reduction. Then the following equations hold:
  \begin{align}
    B^{i_1} B^{i_2} \dots B^{i_n} &= \sum_{0\leq k\leq l \leq n} N^{i_1}\dots N^{i_k} W A^{i_{k+1}}\dots A^{i_{l}} V N^{i_{l+1}}\dots N^{i_n}\\
    V B^{i_1} B^{i_2} \dots B^{i_n} &= \sum_{\max(0,n-N_0)\leq l \leq n}  A^{i_{1}}\dots A^{i_{l}} V N^{i_{l+1}}\dots N^{i_n}\\
    B^{i_1} B^{i_2} \dots B^{i_n} W &= \sum_{0\leq k\leq \min(N_0,n)} N^{i_1}\dots N^{i_k} W A^{i_{k+1}}\dots A^{i_{n}}. 
  \end{align}
\end{lemma}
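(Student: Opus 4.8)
The plan is to prove the first identity directly and then read off the other two by multiplying it on the left by $V$ and on the right by $W$, using $VW=\id$ together with \cref{lem:VNW=0} to annihilate the unwanted terms, and finally invoking the nilpotency length $N_0$ from \cref{def:nilpotencylength} (finite by \cref{prop:reduction_nilpotent}) to truncate the resulting sums.

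First I would establish
\[
B^{i_1}\cdots B^{i_n}=\sum_{0\le k\le l\le n} N^{i_1}\cdots N^{i_k}\,WA^{i_{k+1}}\cdots A^{i_l}V\,N^{i_{l+1}}\cdots N^{i_n}
\]
by induction on $n$, starting from $B^i=WA^iV+N^i$ and the convention that the diagonal $k=l$ collects into the single pure-$N$ product $N^{i_1}\cdots N^{i_n}$. In the inductive step I multiply the length-$n$ formula on the right by $B^{i_{n+1}}=WA^{i_{n+1}}V+N^{i_{n+1}}$ and distribute. The $N^{i_{n+1}}$ piece simply appends an $N$ to every term, producing exactly the length-$(n+1)$ terms with $l\le n$ (including the pure-$N$ term). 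In the $WA^{i_{n+1}}V$ piece, a term ending in $V\,N^{i_{l+1}}\cdots N^{i_n}$ acquires the factor $V\,N^{i_{l+1}}\cdots N^{i_n}\,W$, which by \cref{lem:VNW=0} vanishes unless $l=n$, in which case it reduces to $VW=\id$ and merges the two reduction blocks into $WA^{i_{k+1}}\cdots A^{i_{n+1}}V$; acting on the pure-$N$ term it appends a fresh block $N^{i_1}\cdots N^{i_n}WA^{i_{n+1}}V$. Together these are precisely the length-$(n+1)$ terms with $l=n+1$, completing the induction.

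Given the first identity, I obtain the second by multiplying on the left by $V$: in each term the prefactor $V\,N^{i_1}\cdots N^{i_k}\,W$ vanishes by \cref{lem:VNW=0} unless $k=0$ (where it is $VW=\id$), so only $k=0$ survives and $V B^{i_1}\cdots B^{i_n}=\sum_{0\le l\le n} A^{i_1}\cdots A^{i_l}\,V\,N^{i_{l+1}}\cdots N^{i_n}$. Since $N^{i_{l+1}}\cdots N^{i_n}$ is a product of $n-l$ copies of $N$, it is zero as soon as $n-l\ge N_0$, which truncates the sum to $\max(0,n-N_0)\le l\le n$ (the boundary term $l=n-N_0$ being allowed to vanish). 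Symmetrically, multiplying on the right by $W$ and noting that $V\,N^{i_{l+1}}\cdots N^{i_n}\,W$ survives only for $l=n$ gives $B^{i_1}\cdots B^{i_n}W=\sum_{0\le k\le n} N^{i_1}\cdots N^{i_k}\,WA^{i_{k+1}}\cdots A^{i_n}$, and the nilpotency of $N^{i_1}\cdots N^{i_k}$ truncates this to $k\le\min(N_0,n)$.

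The main obstacle is the first identity, where one must argue cleanly that, among the $2^n$ terms in the expansion of $\prod_j(WA^{i_j}V+N^{i_j})$, the survivors are exactly those whose $WAV$-factors form a single contiguous (possibly empty) block: this is where $VW=\id$ merges adjacent blocks and \cref{lem:VNW=0} kills any block separated by interior $N$'s. Casting this as the induction above is preferable to setting up the contiguous-block combinatorics head-on, since it isolates the one nontrivial cancellation step and keeps the bookkeeping of the diagonal $k=l$ term under control; once it is in place, the remaining two identities are immediate, needing only a single further multiplication by $V$ or $W$ and the nilpotency bound.
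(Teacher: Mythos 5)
Your proof is correct and takes essentially the same route as the paper: the paper's (very terse) proof likewise writes each $B^{i_j} = WA^{i_j}V + N^{i_j}$, expands the product, kills all terms whose $WAV$-factors are non-contiguous using \cref{lem:VNW=0} together with $VW=\id$, and truncates the remaining sums with the nilpotency length; your induction is just an organized bookkeeping of that same expansion, and deriving the second and third identities by multiplying the first by $V$ or $W$ is the same mechanism the paper invokes. Your explicit convention that the diagonal $k=l$ contributes the single pure-$N$ term, and your remark that the boundary terms $l=n-N_0$ and $k=N_0$ may vanish harmlessly, are sensible clarifications rather than departures.
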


\begin{proof}
  Write $B^{i_j} = WA^{i_j} V + N^{i_j}$ for all $j$ in $B^{i_1} \dots B^{i_n}$ and expand the expression. Using \cref{lem:VNW=0} and the definition of the nilpotency length (\cref{def:nilpotencylength}), we arrive at the desired equations.
\end{proof}

\begin{proof}[Proof of \cref{thm:uniqueness}]
  Let $L$ be the injectivity length of $A$ and let $m=2N_0+L$. Consider
  \begin{equation}
  C^{i_1 \dots i_m} = V B^{i_1} B^{i_2} \dots B^{i_{m}} \tilde{W}.
  \end{equation}
  Using \cref{lem:B_expand} with  $B^i = W A^i V + N^i$, we have
  \begin{equation}
  C^{i_1 \dots i_m} = \sum_{k=N_0+L}^{m} A^{i_1} \dots A^{i_k} V N^{i_{k+1}}  \dots N^{i_m} \tilde{W},
  \end{equation}
  as $m-N_0 = N_0 +L$. Similarly, using \cref{lem:B_expand} with $B^i = \tilde{W} A^i \tilde{V} + \tilde{N}^i$, we get
  \begin{equation}
  C^{i_1 \dots i_m} = \sum_{k=0}^{N_0} V \tilde{N}^{i_{1}}  \dots \tilde{N}^{i_k} \tilde{W} A^{i_{k+1}} \dots A^{i_m}. 
  \end{equation}
  Note that the MPS tensor at position $k=N_0+1$ to $N_0+L$ is  $A^{i_k}$ in both expressions. By assumption, that block is injective. Applying its inverse and comparing the two expressions, we conclude that 
  \begin{align}\label{eq:uniqueness_key}
  \sum_{k=0}^{N_0} V \tilde{N}^{i_{1}}  \dots \tilde{N}^{i_k} \tilde{W} A^{i_{k+1}} \dots A^{i_{N_0}} &= \lambda A^{i_{1}} \dots A^{i_{N_0}}\\
  \sum_{k=0}^{N_0} A^{i_{1}} \dots A^{i_{k}} V N^{i_{k+1}}  \dots N^{i_{N_0}} \tilde{W}  &= \lambda^{-1} A^{i_{1}} \dots A^{i_{N_0}}   \label{eq:uniqueness_key2}
  \end{align}
  for some $\lambda\in\mathbb{C}$. But then, using \cref{lem:B_expand} for $V B^{i_1} B^{i_2} \dots B^{i_{n}}$ with $B^i = \tilde{W} A^i \tilde{V} + \tilde{N}^i$, we get
  \begin{equation}
  V B^{i_1} B^{i_2} \dots B^{i_{n}} = \sum_{k=0}^{N_0} \sum_{l=n-N_0}^{n} V \tilde{N}^{i_{1}}  \dots \tilde{N}^{i_k} \tilde{W} A^{i_{k+1}} \dots A^{i_{l}} \tilde{V} N^{i_{l+1}}  \dots N^{i_n}.
  \end{equation}
  If $n\geq 2N_0$, then $l\geq N_0$. Therefore the left part of the r.h.s.\ can be replaced  using \cref{eq:uniqueness_key}:
  \begin{equation}
  V B^{i_1} B^{i_2} \dots B^{i_{n}} = \lambda \sum_{l=n-N_0}^{n}  A^{i_{1}} \dots A^{i_{l}} \tilde{V} N^{i_{l+1}}  \dots N^{i_n} = \lambda \tilde{V} B^{i_1} B^{i_2} \dots B^{i_{n}},
  \end{equation}
  where the last equation holds by using \cref{lem:B_expand} for $\tilde{V} B^{i_1} B^{i_2} \dots B^{i_{n}}$ with $B^i = \tilde{W} A^i \tilde{V} + \tilde{N}^i$.
  \Cref{eq:BBBW} can be proven similarly using \cref{eq:uniqueness_key2}.
\end{proof}

We now consider MPSs that are translationally invariant after blocking two sites. 

\begin{prop}\label{prop:alternating_reduction}
  Let $A\in \mathbb{C}^{D_1} \otimes \mathbb{C}^{D_2} \otimes \mathbb{C}^{d_1}$ and $B\in \mathbb{C}^{D_2} \otimes \mathbb{C}^{D_1} \otimes \mathbb{C}^{d_2}$ be two tensors such that both $AB$ and $BA$ are normal MPS tensors. Let $C\in \mathbb{C}^{\tilde{D}_1} \otimes \mathbb{C}^{\tilde{D}_2} \otimes \mathbb{C}^{d_1}$ and $D\in \mathbb{C}^{\tilde{D}_2} \otimes \mathbb{C}^{\tilde{D}_1} \otimes \mathbb{C}^{d_1}$ be two tensors such that $V_n(CD) = V_n(AB)$. Then $V_n(BA)=V_n(DC)$ and if $V,W$ are reductions of $CD$ to $AB$ and $\tilde{V},\tilde{W}$ are reductions of $DC$ to $BA$, then for a sufficiently long chain,
  \begin{align}
    AB\dots BA &= \lambda V CD\dots DC \tilde{W} \\
    BA\dots AB &= \mu \tilde{V} DC \dots CD W
  \end{align}
\end{prop}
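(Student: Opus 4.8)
The plan is to first establish the auxiliary equality $V_n(BA)=V_n(DC)$ — which is precisely what makes the reduction $\tilde V,\tilde W$ of $DC$ to $BA$ available via \cref{thm:reductionexist} — and then to prove the two mixed reduction identities by imitating the proof of \cref{thm:uniqueness}, the only genuinely new feature being that the left and right blockings of the chain are staggered by one tensor.

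For the first claim I would observe that $V_n(BA)$ arises from $V_n(AB)$ by one fixed cyclic relabelling of the $2n$ physical sites. Using cyclicity of the trace, the coefficient of $\ket{l_1 m_1\cdots l_n m_n}$ in $V_n(BA)=\sum\tr\{B^{l_1}A^{m_1}\cdots B^{l_n}A^{m_n}\}\ket{\cdots}$ equals the coefficient of $\ket{m_n l_1 m_1 l_2\cdots m_{n-1}l_n}$ in $V_n(AB)$. Hence there is a fixed permutation $P_n$ of the tensor factors with $V_n(BA)=P_n V_n(AB)$, and by the identical computation (which uses only cyclicity and the alternating pattern of physical spaces, not the specific tensors) $V_n(DC)=P_n V_n(CD)$. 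Since $V_n(AB)=V_n(CD)$ by hypothesis, applying $P_n$ to both sides yields $V_n(BA)=V_n(DC)$.

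For the main identities I would set $N:=CD-W(AB)V$ and $\tilde N:=DC-\tilde W(BA)\tilde V$; by \cref{prop:reduction_nilpotent} both generate nilpotent algebras, and I let $N_0$ bound both nilpotency lengths and $L$ be a common injectivity length for the normal tensors $AB$ and $BA$ (\cref{prop:normal2injective}). The object to analyse is $Q:=V\,(CD)^{k}\,C\,\tilde W$ for $k$ large, together with the elementary but crucial rewriting $(CD)^kC=C(DC)^k$. Expanding the first $k$ blocks from the left with \cref{lem:B_expand} gives $Q=\sum_{l\ge k-N_0}(AB)^{l}\,V\,N^{\,k-l}\,C\,\tilde W$, a sum in which every term carries at least $k-N_0\ge N_0+L$ injective $AB$-blocks on the left; expanding the last $k$ blocks from the right (after rewriting $Q=V\,C\,(DC)^k\,\tilde W$) gives $Q=\sum_{p\le N_0}V\,C\,\tilde N^{\,p}\,\tilde W\,(BA)^{\,k-p}$, in which every term carries at least $N_0+L$ injective $BA$-blocks on the right. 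Exactly as in the proof of \cref{thm:uniqueness}, I would then isolate the stretch of the chain that is simultaneously covered by $\ge L$ injective blocks in both expansions, strip this injective middle factor by applying its left and right inverses, and match the two residual boundary pieces; this forces the existence of a scalar $\lambda$ with $(AB)^kA=\lambda\,V(CD)^kC\tilde W$ for all sufficiently long chains, which is the first displayed identity. The second identity $BA\cdots AB=\mu\,\tilde V\,DC\cdots CD\,W$ follows verbatim after interchanging the roles $(A,B,C,D,V,W)\leftrightarrow(B,A,D,C,\tilde V,\tilde W)$.

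The main obstacle is the bookkeeping in this final matching step. Because the left expansion is organised into $AB$-blocks while the right expansion is organised into $BA$-blocks offset by a single tensor, the two families of injective blocks do not align, and one must choose $k$ comfortably beyond $2N_0+L$ so that the staggered overlap still contains $L$ consecutive injective blocks of each type before the injectivity of $AB$ and $BA$ can be used to cancel the middle factor. Apart from tracking this one-site offset, every step is a direct transcription of \cref{lem:VNW=0,lem:B_expand} and the matching argument of \cref{thm:uniqueness}.
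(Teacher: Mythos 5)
Your proposal is correct and takes essentially the same route as the paper's own proof: the shift/cyclic-relabelling argument for $V_n(BA)=V_n(DC)$, the two expansions of $Q=V(CD)^kC\tilde W$ (from the left via the $V,W$ reduction, and from the right via the $\tilde V,\tilde W$ reduction after rewriting $(CD)^kC=C(DC)^k$), and the final step of stripping the injective middle stretch to match the boundary pieces up to a scalar are precisely the steps in the paper. The only differences are cosmetic bookkeeping of the one-site stagger, which the paper handles by demanding $2k>2M+2\tilde M+2L$ and you handle by taking $k$ comfortably beyond $2N_0+L$.
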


\begin{proof}
  First, notice that  $V_n(BA)=V_n(DC)$, as $V_n(BA)$ is $V_n(AB)$ shifted by half a lattice constant, while $V_n(DC)$ is $V_n(CD)$ shifted by half a lattice constant. 
   
  Next, using \cref{lem:B_expand} with $CD = W AB V + N_1N_2$, we  have
  \begin{equation}
    V\underbrace{CD\dots D }_{2k} C\tilde{W} = \sum_{i=0}^{M} \underbrace{AB \dots B}_{2k-2i} V \underbrace{N_1 N_2 \dots N_1 N_2}_{2i} C\tilde{W},  
  \end{equation}  
  where $M$ is the injectivity length of the reduction $V,W$. Similarly, using \cref{lem:B_expand} with $DC = \tilde{W} BA \tilde{V} + \tilde{N}_1\tilde{N}_2$, we have
  \begin{equation}
    VC\underbrace{D\dots DC }_{2k} \tilde{W} = \sum_{i=0}^{\tilde{M}} VC \underbrace{\tilde{N}_1 \tilde{N}_2 \dots \tilde{N}_1 \tilde{N}_2}_{2i} \tilde{W} \underbrace{BA \dots BA}_{2k-2i}, 
  \end{equation}
  Where $\tilde{M}$ is the injectivity length of $\tilde{V}, \tilde{W}$.  Therefore, if $2k>2M+2\tilde{M}+2L$, where $L$ is the injectivity length of $AB$, then 
  \begin{equation}
    V\underbrace{CD\dots C}_{2k+1}\tilde{W} = \underbrace{AB\dots A}_{2N+1} \underbrace{B\dots A}_{2k-1-2N} \underbrace{\tikz[baseline=0cm]{\draw (0,0) rectangle (1.5,0.3);}}_{2N+1} = \underbrace{\tikz[baseline=0cm]{\draw (0,0) rectangle (1.5,0.3);}}_{2N+1} \underbrace{B\dots A}_{2k-1-2N} \underbrace{B\dots A}_{2N+1}
  \end{equation}
  As the middle part is injective, the last equation can hold only if $AB\dots BA = \lambda V CD\dots DC \tilde{W}$. The other equation can be proven similarly.
\end{proof}

\section{The qubit case}\label{app:qubitclassify}

In this section, we characterize how two semi-injective PEPS defined by $(\ket{\phi_A}, O)$ and $(\ket{\phi_B},\id)$ with $\ket{\phi_A},\ket{\phi_B}\in \left(\mathbb{C}^2\right)^{\otimes 4}$ can generate the same state. We restrict ourselves to the case where $\ket{\phi_A}$ and $\ket{\phi_B}$ do not factorize in either direction. Using \cref{cor:qubit_slocc}, $\ket{\phi_A}$ and $\ket{\phi_B}$ are SLOCC equivalent, and thus we can suppose $\ket{\phi_A}=\ket{\phi_B}$ (by changing $O$). Notice that the state $\ket{\xi}$ appearing in the proof of \cref{thm:slocc} (see \cref{eq:schmidt_vert}) is also SLOCC equivalent with $\phi_A$. We can thus suppose that $\ket{\phi_A}=\ket{\phi_B}=\ket{\xi}$.  Therefore, given $\ket{\phi_A}$, we only need to characterize all pairs of two-body invertible operators such that \cref{eq:schmidt_vert} holds. 

Let us fix $\ket{\phi_A}=\ket{\phi_B}=\ket{\xi}$. We start the investigation with  a state such that in the horizontal cut it has Schmidt rank two. As the span of the Schmidt vectors contains a product state and we are only interested in $\ket{\phi_A}$ up to SLOCC equivalence, w.l.o.g.\ we can suppose that a basis of its reduced density on the upper two particles is  
\begin{align}
\ket{\Psi_1} &= \vert 00 \rangle\\
\ket{\Psi_2} &= a \vert 01 \rangle+b \vert 10 \rangle+c \vert 11 \rangle,
\end{align}
whereas a basis of its reduced density on the lower two particles is 
\begin{align}
\ket{\Phi_1} &= \vert 00 \rangle\\
\ket{\Phi_2} &= A \vert 01 \rangle+B \vert 10 \rangle+C \vert 11 \rangle\ .
\end{align}
In this setting, we are looking for invertible two body operators $O_1$ and $O_2$ such that 
  \begin{equation}
    \begin{tikzpicture}[baseline=-0.1cm]  
    \draw[thick,blue] (0,0.7)--(0,0.2)--(1,0.2)--(1,0.7); 
    \draw[thick,purple] (0,-0.7)--(0,-0.2)--(1,-0.2)--(1,-0.7); 
    \foreach \x in {(0,0.2),(0,-0.2),(1,0.2),(1,-0.2)} \filldraw \x circle (0.1);
    \draw[rounded corners=3pt,blue] (-0.2,0.7) rectangle (1.2,0.9);
    \draw[rounded corners=3pt,purple] (-0.2,-0.7) rectangle (1.2,-0.9);  
    \draw (0.5,0.9)--++(0,0.2);
    \draw (0.5,-0.9)--++(0,-0.2);
    \draw[red, thick] (-0.2,-0.4) rectangle (0.2,0.4);
    \draw[red, thick] (0.8,-0.4) rectangle (1.2,0.4);
    \end{tikzpicture}
    =
    \begin{tikzpicture}[baseline=-0.1cm]
    \def\c{blue}
    \draw[thick,\c] (0,0.7)--(0,0.2)--(1,0.2)--(1,0.7); 
    \draw[thick,purple] (0,-0.7)--(0,-0.2)--(1,-0.2)--(1,-0.7); 
    \foreach \x in {(0,0.2),(0,-0.2),(1,0.2),(1,-0.2)} \filldraw \x circle (0.1);
    \draw[rounded corners=3pt,\c] (-0.2,0.7) rectangle (1.2,0.9);
    \draw[rounded corners=3pt,purple] (-0.2,-0.7) rectangle (1.2,-0.9);  
    \draw[thick,\c] (0.5,0.9)--++(0,0.2);      
    \draw[thick,purple] (0.5,-0.9)--++(0,-0.2);
    \end{tikzpicture}\ ,
  \end{equation}
where the left red rectangle represents $O_1$, the right one $O_2$, while the blue Schmidt vectors are $\Psi_{1/2}$, the purple ones are $\Phi_{1/2}$. This gives four times sixteen equations on the matrix elements of $O_1$ and $O_2$. Checking these equations can be done in any CAS. The following cases can be distinguished.

\begin{itemize}
\item $C\neq 0$, $c\neq 0$. In this case, the operators are ($\alpha,\beta,\gamma$ are free parameters):
\begin{align}
O_1  =& \ket{00}\bra{00} + \alpha \ket{01}\bra{01} + \frac{b}{c}(\alpha-1)\ket{00}\bra{01} + \beta \ket{10}\bra{10}+ \frac{B}{C}(\beta-1)\ket{00}\bra{10} +  \nonumber\\
&\gamma \ket{11}\bra{11} + \frac{b}{c}(\gamma-\beta)\ket{10}\bra{11}+\frac{B}{C}(\gamma - \alpha) \ket{01}\bra{11}+ \frac{Bb}{Cc} (1+\gamma-\alpha-\beta)\ket{00}\bra{11}\\
O_2  =& \ket{00}\bra{00} + \frac{1}{\alpha} \ket{01}\bra{01} + \frac{a}{c}(\frac{1}{\alpha}-1)\ket{00}\bra{01} + \frac{1}{\beta} \ket{10}\bra{10}+ \frac{A}{C}(\frac{1}{\beta}-1)\ket{00}\bra{10} +  \nonumber\\ 
&\frac{1}{\gamma} \ket{11}\bra{11} + \frac{a}{c}(\frac{1}{\gamma}-\frac{1}{\beta})\ket{10}\bra{11}+\frac{A}{C}(\frac{1}{\gamma} - \frac{1}{\alpha}) \ket{01}\bra{11}+ \frac{aA}{Cc} (1+\frac{1}{\gamma}-\frac{1}{\alpha}-\frac{1}{\beta})\ket{00}\bra{11} 
\end{align}
\item $C\neq 0$, $c=0$. In this case $b\neq0$, otherwise the one particle reduced densities of the state is not full rank. The operators are ($\alpha,\beta,\gamma$ are free parameters):
\begin{align}
O_1  =& \ket{00}\bra{00} + \ket{01}\bra{01} + \alpha\ket{00}\bra{01} + \beta \ket{10}\bra{10}+ \frac{B}{C}(\beta-1)\ket{00}\bra{10} +  \nonumber\\
&\beta \ket{11}\bra{11} + \gamma\ket{10}\bra{11}+\frac{B(\beta - 1)}{C} \ket{01}\bra{11}+ \frac{B \gamma - B \alpha}{C}\ket{00}\bra{11}\\
O_2  =& \ket{00}\bra{00} + \ket{01}\bra{01} - \frac{a}{b}\alpha\ket{00}\bra{01} + \frac{1}{\beta} \ket{10}\bra{10}+ \frac{A}{C}(\frac{1}{\beta}-1)\ket{00}\bra{10} +  \nonumber\\ 
&\frac{1}{\beta} \ket{11}\bra{11} -\frac{a \gamma}{b \beta^2}\ket{10}\bra{11}+(-\frac{A}{C} + \frac{A}{C \beta}) \ket{01}\bra{11}+ \frac{A a \gamma - Aa \gamma/\beta^2}{C b}\ket{00}\bra{11} 
\end{align}
\item $c=0$, $C=0$. In this case $B\neq 0 , b\neq 0$, otherwise the one particle reduced densities of the state is not full rank. The operators are ($\alpha,\beta,\gamma$ are free parameters):
\begin{align}
O_1  =& \ket{00}\bra{00} + \ket{01}\bra{01} + \alpha\ket{00}\bra{01} + \ket{10}\bra{10}+ \beta\ket{00}\bra{10} +  \nonumber\\
& \ket{11}\bra{11}+\alpha\ket{10}\bra{11}+\beta \ket{01}\bra{11}+  \gamma \ket{00}\bra{11}\\
O_2  =& \ket{00}\bra{00} + \ket{01}\bra{01} - \frac{a}{b}\alpha\ket{00}\bra{01} + \ket{10}\bra{10}- \frac{B}{A}\beta\ket{00}\bra{10} +  \nonumber\\ 
& \ket{11}\bra{11} -\frac{a}{b} \alpha \ket{10}\bra{11}-\frac{A}{B} \beta\ket{01}\bra{11}+ \frac{2 A a \alpha \beta - Aa \gamma}{B b}\ket{00}\bra{11} 
\end{align}
\end{itemize}
Using this result, we have checked that if the state has Schmidt rank at least 3, then $O_1$ and $O_2$ can only be product operators. 

To find therefore all possible operators $O$ such that the semi-injective PEPS defined by $(\ket{\phi_A}, O)$ and by $(\ket{\phi_B},\id)$ are the same (supposing they have Schmidt rank at least two along both vertical and horizontal cut), one has to do the following steps:
\begin{enumerate}
\item Transform $\ket{\phi_A}$ with an invertible product  operator $O_1$ to have $\ket{00}$ in the span of its Schmidt vectors in both the upper and lower two particles
\item If the Schmidt rank of $\ket{\phi_A}$ is two along the horizontal cut, then take the two-body operators given above, $O_2\otimes O_3$. Otherwise take $O_2=O_3=\id$.
\item Repeat the previous two steps for the vertical cut, giving an invertible product operator $O_4$ and two-body invertible operators $O_5\otimes O_6$.
\item Find all invertible product transformation $O_7$ such that $\ket{\phi_B} = O_7 \ket{\phi_A}$.
\end{enumerate}
Then all possible operators $O$ are given by $\tilde{O}_1 (O_2\otimes O_3) \tilde{O}_1^{-1} \tilde{O}_4 (O_5\otimes O_6 ) \tilde{O}_4^{-1} \tilde{O}_7$, where $\tilde{O}_i = SO_iS$, and $S$ is the swap operator   defined before \cref{thm:canonicalform}.

\section{G-injective tensors}

In this section, we try to generalize semi-injective PEPS in a way that it also includes G-injective PEPS. We try the obvious generalization: if the semi-injective PEPS $(\ket{\phi},\id)$  has symmetries $O_g$ for $g\in G$ for some group $G$, then $\ket{\phi}$ and $\sum_g O_g$ defines a non-semi-injective PEPS that could be a candidate to include $G$-injective PEPS. We present here, however, an example for such a state, that behaves very different from $G$-injective PEPS.   

Consider the following state: 
  \begin{equation}
  \ket{\Psi} = 
  \begin{tikzpicture}
  \truncstateswo[green,red] (0,0) (4,3)
  \end{tikzpicture}\ ,
  \end{equation}
where the green rectangle is a four-partite GHZ state, and the red circle is $O = \id + Z^{\otimes 4}$. 

We will show that on an $n\times n$ torus, there are at least $2^n$ linearly independent states that are locally indistinguishable from this state. This means that given any local (frustration free) parent Hamiltonian, its ground space is at least $2^n$-fold degenerate. 

To see this, consider states on the torus that are constructed similar to $\ket{\Psi}$, except that some of the four-partite GHZ states $\ket{\phi^+} = 1/\sqrt{2}(\ket{0000}+\ket{1111})$ are changed to $\ket{\phi^{-}}=1/\sqrt{2}(\ket{0000}-\ket{1111})$. Such a state will be depicted schematically as a rectangular grid, with squares colored black at all occurrence of $\ket{\phi^{-}}$. For example, the figure below depicts such a state with one occurrence of $\ket{\phi^{-}}$:
\begin{equation}
\begin{tikzpicture}[baseline=0.5*1.8cm, scale = 0.5]
\draw (0,0) grid (4,4);
\draw[fill=black] (1,2) rectangle (2,3);
\end{tikzpicture}\ .
\end{equation}
We will see that these states are all locally indistinguishable from $\ket{\Psi}$ and that they span an at least $\exp\{n/2\}$-dimensional space. First notice that $\ket{\phi^{-}} = Z \ket{\phi^{+}}$, where $Z$ acts on one of the four particles (any one of them). Due to the special form of $O$, however, if in a $2\times 2$ block all $\ket{\phi^+}$ are changed to $\ket{\phi^-}$, it doesn't change the state: 
\begin{equation}
\ket{\Psi} = 
\begin{tikzpicture}[scale=0.5]
\draw (0,0) grid (4,4);
\end{tikzpicture} = 
\begin{tikzpicture}[scale=0.5]
\draw (0,0) grid (4,4);
\draw[fill=black] (1,1) rectangle (3,3);
\end{tikzpicture}\ .
\end{equation}
In fact, inverting the color of all rectangles in any $2\times 2$ rectangle doesn't change the state. For example,
\begin{equation}
\begin{tikzpicture}[scale = 0.5]
\draw (0,0) grid (4,4);
\draw[fill=black] (1,2) rectangle (2,3);
\end{tikzpicture}
=
\begin{tikzpicture}[scale = 0.5]
\draw (0,0) grid (4,4);
\draw[fill=black] (1,1) rectangle (2,2);
\draw[fill=black] (2,1) rectangle (3,2);
\draw[fill=black] (2,2) rectangle (3,3);
\end{tikzpicture} \ .
\end{equation}
A consequence of this is that a pair of black rectangles in the same column (row) can ``travel'' horizontally (vertically) no matter how far they are separated. As an illustration, let us show how to move two black rectangles in the same column separated by one to the neighboring column:
\begin{equation}
\begin{tikzpicture}[scale = 0.5]
\draw (0,0) grid (4,4);
\draw[fill=black] (1,2) rectangle (2,3);
\draw[fill=black] (1,0) rectangle (2,1);
\end{tikzpicture}
=
\begin{tikzpicture}[scale = 0.5]
\draw (0,0) grid (4,4);
\draw[fill=black] (1,2) rectangle (2,3);
\draw[fill=black] (1,1) rectangle (2,2);
\draw[fill=black] (2,0) rectangle (3,1);
\draw[fill=black] (2,1) rectangle (3,2);
\end{tikzpicture}
=
\begin{tikzpicture}[scale = 0.5]
\draw (0,0) grid (4,4);
\draw[fill=black] (2,2) rectangle (3,3);
\draw[fill=black] (2,0) rectangle (3,1);
\end{tikzpicture}  \ .
\end{equation}
This means that these states are indistinguishable from $\ket{\Psi}$ on any finite (system size independent) region. Inverting the color of all rectangles in any $2\times 2$ rectangle in fact defines an equivalence relation on the colorings of the grid: two colorings are equivalent if and only if they can be transformed to each other by repeatedly inverting the color of all rectangles in $2\times 2$ regions. Equivalent colorings correspond to the same state, whereas inequivalent colorings to perpendicular ones: such states all have the form $(1+Z^{\otimes 4})^{\otimes n} \ket{\phi^\pm}^{\otimes n}$. Expanding this expression, we get a sum of tensor products of $\ket{\phi^+}$ and $\ket{\phi^-}$. Starting from two equivalent colorings, the sum contains the same terms reordered. Starting from inequivalent colorings, all terms differ from each other and thus the states are perpendicular as $\scalprod{\phi^+}{\phi^-}=0$. To see that there are at least $2^n$ equivalence classes, notice that the parity of black rectangles in each column is an invariant. (And for each parity assignment there is a coloring with that parities.)

\bibliography{library}

\end{document}